\documentclass[letterpaper,11pt]{article}

\usepackage[margin=1in]{geometry}
\usepackage{amsfonts}
\usepackage{amsmath}
\usepackage{amssymb}
\usepackage{amsthm}
\usepackage{xspace}
\usepackage{xcolor}
\usepackage{hyperref}
\usepackage{graphicx}
\usepackage{url}
\usepackage{paralist}
\usepackage{verbatim}
\usepackage[final]{pdfpages}
\usepackage{titling}

\hypersetup{
    breaklinks,
    colorlinks=true,
    linkcolor=black,
    citecolor=[rgb]{0,0,0.4},
    filecolor=black,
    urlcolor=black,
    pdftitle={}, pdfauthor={} }

\newcommand{\KTZero}{\ensuremath{\mathsf{KT}_0}\xspace}

\newcommand{\sd}{\textsc{SetDisjointness}\xspace}

\newcommand{\ktzero}{\ensuremath{\mathsf{KT}_0}}
\newcommand{\ktone}{\ensuremath{\mathsf{KT}_1}}

\newcommand{\ktrho}{\ensuremath{\mathsf{KT}_\rho}}
\newcommand{\OT}{\ensuremath{\tilde{O}}}

\newcommand{\mxis}{\text{MaxIS}\xspace}
\newcommand{\mxm}{\text{MaxM}\xspace}
\newcommand{\congest}{\ensuremath{\mathsf{CONGEST}\xspace}}
\newcommand{\local}{\ensuremath{\mathsf{LOCAL}\xspace}}

\newcommand{\lb}{\left}
\newcommand{\rb}{\right}
\newcommand{\set}[1]{\{#1\}}

\def\polylog{\operatorname{polylog}}
\def\poly{\operatorname{poly}}

\newtheorem{theorem}{Theorem}[section]
 \newtheorem{lemma}[theorem]{Lemma}
 \newtheorem{claim}[theorem]{Claim}
 \newtheorem{definition}[theorem]{Definition}
 \newtheorem{remark}[theorem]{Remark}
 \newtheorem{corollary}[theorem]{Corollary}

\usepackage[utf8]{inputenc}

\usepackage{algorithm}
\usepackage{algorithmicx}
\usepackage{algpseudocode}
\algnewcommand{\IIf}[1]{\State\algorithmicif\ #1\ \algorithmicthen}
\algnewcommand{\EndIIf}{\unskip\ \algorithmicend\ \algorithmicif}
\algtext*{EndWhile} \algtext*{EndIf} \algtext*{EndFor} \algdef{SE}[SUBALG]{Indent}{EndIndent}{}{\algorithmicend\ }\algtext*{Indent}
\algtext*{EndIndent}

\usepackage{bbm}
\newcommand{\phaseBound}{\lceil \log(q^{-1}) \rceil}
\newcommand{\E}{\mathbb{E}}
\newcommand{\Ind}{\mathbbm{1}}
\newcommand{\cproba}{100 \log(n)}  \newcommand{\withProba}[1]{#1}

\title{The Message Complexity of Distributed Graph Optimization}
\date{}

\newcommand*\email[1]{\href{mailto:#1}{\url{#1}}}
\author{
Fabien Dufoulon\thanks{Part of the work was done while Fabien Dufoulon was a postdoctoral fellow at the University of Houston in Houston, USA. During that time, F. Dufoulon was supported in part by National Science Foundation (NSF) grants CCF-1540512, IIS-1633720, and CCF-1717075 and U.S.-Israel Binational Science Foundation (BSF) grant 2016419.} \\ Lancaster University \\ 
\email{f.dufoulon@lancaster.ac.uk}\and Shreyas Pai\thanks{Supported in part by Research Council of Finland Grant 334238 and Helsinki Institute for Information Technology HIIT.} \\ Aalto University \\ \email{shreyas.pai@aalto.fi}\and Gopal Pandurangan\thanks{Supported in part by National Science Foundation (NSF) grants CCF-1540512, IIS-1633720, and CCF-1717075 and U.S.-Israel Binational Science Foundation (BSF) grant 2016419.} \\ University of Houston \\ \email{gopal@cs.uh.edu}\and Sriram V.~Pemmaraju\thanks{Partially supported by NSF grant IIS-1955939.} \\ University of Iowa \\ \email{sriram-pemmaraju@uiowa.edu}\and Peter Robinson \\ Augusta University \\ \email{perobinson@augusta.edu}}

\begin{document}

\maketitle

\vspace{-1em}
\begin{abstract}
The {\em message complexity} of a distributed algorithm is the total number of messages  sent by all nodes over the course of the algorithm.
This paper studies the message complexity of distributed algorithms for fundamental graph optimization problems. 
We focus on four classical graph optimization problems: Maximum Matching (\mxm{}), Minimum Vertex Cover (MVC), Minimum Dominating Set (MDS), and Maximum Independent Set (\mxis{}). In the sequential setting, these problems are representative of a wide spectrum of hardness of approximation. While there has been some progress
 in understanding the round complexity of distributed algorithms (for both exact and approximate versions) for these problems, much less is known about their message complexity and its relation with the quality of approximation. 
 We  almost fully quantify  the message complexity of distributed graph optimization by showing the following results:
\begin{enumerate}
\item {\em Cubic regime:} Our first main contribution is showing essentially {\em cubic}, i.e.,  $\tilde{\Omega}(n^3)$ lower bounds\footnote{$\tilde{\Omega}$ and $\tilde{O}$ hide a $1/\polylog{n}$ and $\polylog{n}$ factor respectively.} (where $n$ is the number of
nodes in the graph)  on the message complexity
of distributed {\em exact} computation of  Minimum Vertex Cover (MVC), Minimum Dominating Set (MDS), and Maximum Independent Set (\mxis{}). Our lower bounds apply to any distributed algorithm that runs in polynomial number of rounds (a mild and necessary restriction). Our result is significant since, to the best of our knowledge, this are the first
$\omega(m)$ (where $m$ is the number of edges in the graph) message lower bound known
for distributed computation of such classical graph optimization problems. Our bounds are essentially tight, as all these problems
can be solved trivially using $O(n^3)$ messages in polynomial rounds. All these bounds hold in the standard  \congest{} model  of distributed computation in which messages are of $O(\log n)$ size.

\item {\em Quadratic regime:} In contrast, we show that if we allow {\em approximate} computation then $\tilde{\Theta}(n^2)$ messages are both necessary and sufficient. Specifically, we show that $\tilde{\Omega}(n^2)$ messages are required for {\em constant-factor} approximation algorithms for all four problems. 
For \mxm{} and MVC, these bounds hold for any constant-factor approximation, whereas for MDS and \mxis{} they hold for any approximation factor better than some specific constants. 
These lower bounds hold even in the \local{} model (in which messages can be arbitrarily large) and they even apply to algorithms that take arbitrarily many rounds. 
We show that our lower bounds are essentially tight, by showing that
 if we allow approximation to within an arbitrarily small constant factor, then all these problems can be solved using $\tilde{O}(n^2)$ messages even in the \congest{} model.

\item {\em Linear regime:} We complement the above lower bounds by showing distributed algorithms with  $\tilde{O}(n)$  message complexity that run in polylogarithmic rounds and give constant-factor approximations for all four  problems on \emph{random} graphs. These results imply that almost linear (in $n$) message complexity is achievable on almost all (connected) graphs of every
edge density.
\end{enumerate}
\end{abstract} 
\thispagestyle{empty}

\newpage
\thispagestyle{empty}
\tableofcontents
\newpage
\setcounter{page}{1}

\section{Introduction}
\label{section:introduction}
The focus of this paper is understanding the communication cost of distributively solving  graph optimization problems.
The  communication cost of distributed computation has been studied extensively  in theoretical computer science since the seminal work of Yao \cite{yao79_some_compl_quest_relat_distr}.
This line of work  studies the communication cost of computing boolean functions of the form $f(x,y)$, where the input $x \in \{0,1\}^n$ is given to Alice and the input $y \in \{0,1\}^n$ is given to Bob, and these two players (nodes) 
jointly compute $f(x,y)$ by communicating across a communication link (edge). The \textit{communication complexity} of $f$ is measured by the minimum number of bits exchanged by Alice and Bob to compute $f$. Boolean functions that have been studied extensively include {\em equality}, {\em set disjointness} etc.; we refer to \cite{KushilevitzN:book96,rao_yehudayoff_2020} for a comprehensive treatment.  
We note that the communication complexity of functions on graphs, e.g., connectivity, bipartiteness, maximum matching, etc., has also been studied; see \cite{10.1145/62212.62228,huang_et_al:LIPIcs:2015:4934,DBLP:conf/fsttcs/IvanyosKLSW12,vdurivs1989communication,goel2012communication} for some examples.
In the context of a graph problem, each player gets a {\em portion of an input graph} $G$. The graph may be edge-partitioned into two parts or may be partitioned in some other arbitrary way.

Over the years many extensions and variants of this basic ``two-party'' communication complexity model have been studied.
One important early variant is by Tiwari \cite{TiwariJACM1987} who studied the same problem, but instead of the two players communicating via an edge, the two players communicate via an arbitrary network.
Another important variant is {\em multi-party} communication complexity, introduced in the work of Chandra, Furst and Lipton \cite{ChandraFLSTOC1983}, where $k$ players are provided inputs $x_1, \dots,  x_k \in \{0, 1\}^n$ and these $k$ players want to
compute some joint boolean function $f : (\{0, 1\}^n)^k \rightarrow \{0, 1\}$ with the goal of minimizing the
total communication between the $k$ players. In the multi-party model, the $k$ players are connected by a communication network, which is typically a clique, but arbitrary topologies have also been considered (see for e.g., \cite{WoodruffZhangDIST2013,ChattopadhyayRRFOCS2014} and the references therein).

A related, yet different line of work  has been the study of the \textit{message complexity} in distributed computing (see for e.g., \cite{kuttenjacm15,PaiPPRPODC2021,AwerbuchGPV90,afek1991time,DBLP:journals/talg/Pandurangan0S20,10.1145/3380546}).
In this line of work, we are given an input graph $G$, which {\em also} serves as the communication network. The nodes of this network (which can be viewed as players) communicate along the edges of $G$ to solve problems {\em defined on} $G$. 
The message complexity is simply the total number of messages sent by all nodes over the course of the algorithm. 
Usually, only messages of small size (say, $O(\log n)$ bits) are allowed, hence in such cases, message complexity is essentially the total number of bits communicated (up to a small factor).
A wide variety of graph problems have been studied in this setting, including classical problems such as breadth-first search, minimum spanning tree, minimum cut, maximal independent set, $(\Delta+1)$-coloring, shortest paths, etc., and NP-complete problems such as minimum vertex cover, minimum dominating set, etc. 
A key difference between this line of work and the previously described research on two-party and multi-party communication complexity models is that here the input graph $G$ is {\em also} the communication network and so the structure of $G$ simultaneously determines both the difficulty of the problem and the difficulty of communication needed to solve the problem. 
For example, it may be that a particular problem is easier to solve on a sparse input graph $G$, but the sparsity of $G$ also limits the volume of information that can be exchanged along the edges of $G$.
A second difference is that most prior work on two-party and multi-party communication complexity
by default assumes an {\em asynchronous} communication model. Whereas in  distributed computing, a {\em synchronous} model
of computation, i.e., a model with a global clock, is extensively used.  
While there has been a significant progress in our understanding of communication complexity in the 2-party and multi-party settings, our understanding of the message complexity of distributed computation of graph problems is significantly limited. 
We refer to Section \ref{section:related} for more details comparing and contrasting the above lines of research.

The focus of this paper is gaining a deeper understanding of the \textit{message complexity} of distributed algorithms for fundamental graph optimization problems. 
Besides message complexity,  {\em round complexity} is also a key measure of the performance of distributed algorithms.
While a rich body of literature exists on the round complexity of distributed exact and approximation algorithms for graph optimization problems \cite{Bar-YehudaCSJACM2017,KMWJACM2016,KMWPODC2004,Censor-HillelD18, JiaRS02, DKM19, GK18,PanditPemmarajuPODC2010,BachrachCDELP19,Censor-HillelKP17}, much less is known about the message complexity of these problems and the possibility of tradeoffs between the message complexity and the quality of approximation  that can be achieved for these problems.

We focus on four classical graph optimization problems: Maximum Matching (\mxm), Minimum Vertex Cover (MVC), Minimum Dominating Set (MDS), and Maximum Independent Set (\mxis). In the sequential setting, these problems are representative of a wide spectrum of hardness of approximability. 
\mxm can be solved exactly in polynomial time. MVC has a simple 2-approximation algorithm, but it does not have a better than $1.3606$-approximation~\cite{DinurSafraAnnMath2005}. A simple greedy algorithm provides a $O(\log \Delta)$-approximation to MDS \cite{Vaziranibook}, though it is known that MDS does not have a $(1-\epsilon) \cdot \ln \Delta$-approximation (where $\Delta$ is the maximum degree of the graph) for any $0 < \epsilon < 1$ \cite{DinurSteurerSTOC2014}. Finally, \mxis is known to be even harder; it does not even have an $O(n^{1-\epsilon})$-approximation for any $0 < \epsilon < 1$ \cite{Vaziranibook}. All of these hardness of approximation results are conditional on P $\not=$ NP.

In the standard models of distributed computing such as \local{} and \congest, it is assumed that processors have infinite computational power. This means that hardness of approximation results in the sequential setting do not directly translate to the distributed setting. Hardness of approximation in the distributed setting is, roughly speaking, due to the distance information has to travel or the volume of information that has to travel for nodes to produce a solution that is close enough to optimal.
Researchers are starting to better understand the distributed hardness of approximation from a \emph{round complexity} point of view, but intriguing gaps remain.
For example, there is a $(2+\epsilon)$-approximation algorithm for MVC (even vertex-weighted MVC) running in $O(\log \Delta/(\epsilon \log \log \Delta))$ rounds in the \congest{} model \cite{Bar-YehudaCSJACM2017}. 
The approximation factor was reduced to exactly 2 in \cite{Ben-BasatEKSSIROCCO18}, but at the cost of polylogarithmic factor extra rounds.
These upper bounds are complemented by an $\tilde{\Omega}(n^2)$ \emph{round} lower bound for solving MVC \emph{exactly} in the \congest{} model \cite{Censor-HillelKP17}. Currently, the round complexity of obtaining an $\alpha$-approximation for MVC, for $1 < \alpha < 2$, is unknown.

Even this level of understanding is lacking about the \emph{message complexity} of distributed graph optimization. The key question addressed in this paper is this: \textit{how does the message complexity of fundamental distributed graph optimization problems change as we move from exact algorithms to approximation algorithms?}
We almost fully answer this question and the key takeaway from our results is that there is a sharp separation between the message complexity of exact and approximate solutions. Specifically, we show that $\tilde{\Theta}(n^3)$ messages are necessary and sufficient for the distributed {\em exact} computation of  MVC, MDS, and \mxis for algorithms that runs in a polynomial number of rounds. In contrast, we show that if we allow {\em approximate} computation, for a constant-approximation factor, then $\tilde{\Theta}(n^2)$ messages are both necessary and sufficient for algorithms that run in polynomial rounds for all four problems, \mxm, MVC, MDS, and \mxis. 
We note that focusing on algorithms that run in polynomial rounds is hardly restrictive because any problem can be solved in polynomial rounds in standard distributed computing models (e.g., \congest{} and \local{}) by gathering the entire input
at a single node.

\subsection{Distributed Computing Models}
\label{section:background}

We primarily work in the {\em synchronous}  version of a standard message-passing model of distributed computing known as \congest{} \cite{peleg00}.
In this model, the input is a graph \(G=(V, E)\), $n = |V|$, $m = |E|$, which also serves as the communication network.
Nodes in the graph are processors with unique \texttt{ID}s from a space whose size is polynomial in $n$. 
In the \emph{synchronous} version of this model, it is assumed that all nodes have access to have a common global clock, and   both computation and communication proceed in lockstep, i.e.,  in discrete time steps called \textit{rounds}.\footnote{We note that all our lower bounds also hold in the more general {\em asynchronous} model  where there is no such assumption of a common clock. See Section \ref{section:cubicLowerBounds}.}
In each round, each node (i) receives messages (if any) sent to it in the previous round, (ii) performs local computation based on information it has, and (iii) sends a message (possibly different) to each of its neighbors in the graph.
Processors are assumed to be arbitrarily powerful and can perform arbitrary (e.g., exponential-time) local computations in a round.
We allow only small, i.e., $O(\log n)$-sized messages, to be sent per edge per round. 
Since each \texttt{ID} can be represented with $O(\log n)$ bits, each message in the $\congest$ model is large enough to contain $O(1)$ \texttt{ID}s.
We note that some of our message lower bounds also hold in the less restrictive \local{} model, where messages sent per edge per round can be of arbitrary size.

We primarily work in the standard $\ktzero$ (\textit{\textbf{K}nowledge \textbf{T}ill radius 0}) model, also called the {\em clean network model} \cite{peleg00}, in which nodes have initial local knowledge of only themselves and do not know anything else about the network; specifically, nodes know nothing about their neighbors (e.g., \texttt{ID}s of neighbors).
As we note later, some of our lower bounds even extend to the $\ktone$ model, in which each node has initial knowledge of itself and the \texttt{ID}s of its neighbors.
The point is significant because knowledge of neighbors' \texttt{ID}s can be used in surprising ways to reduce the message complexity of algorithms (see for e.g., the Minimum Spanning Tree (MST) algorithm of  King, Kutten, and Thorup \cite{king15_const_improm_repair_mst_distr}). 
Unless explicitly specified otherwise, all the results we present are in the $\ktzero$ \congest{} model.

\subsection{Our Contributions}
\label{section:results}

Our main results, which are summarized in Table \ref{tb:results}, can be organized into 3 categories.
Column 2 shows essentially tight almost {\em cubic} (i.e., $\tilde{\Omega}(n^3)$) lower bounds in the \congest{} model on the message complexity of computing {\em exact} solutions for MVC, MDS, and \mxis in polynomial number of rounds.\footnote{It is open whether the message complexity of exactly computing \mxm is $\tilde{\Omega}(n^3)$. See Section \ref{sec:conc}.} 
This is significant since, to the best of our knowledge, these are the {\em first $\omega(m)$  message lower bounds} (where $m$ is the number of edges in the graph) known for distributed computation of graph problems in the \congest{} model.
Tight message lower bounds are known for a wide variety of problems in the $\ktzero$ \congest{} model including important global problems such as broadcast, leader election (LE), and minimum spanning tree (MST) \cite{kuttenjacm15} as well as for local symmetry breaking problems such as maximal independent set (MIS), ruling sets, and $(\Delta+1)$-coloring \cite{pai17_symmet_break_conges_model,PaiPPRPODC2021}.
But all of these lower bounds are either of the form $\Omega(m)$ or $\Omega(n^2)$.

Column 3 shows {\em quadratic} lower bounds on the message complexity of {\em constant-factor approximations} for \mxm, MVC, MDS, and \mxis. These bounds hold not just for polynomial-round algorithms, but even for algorithms that use arbitrarily many rounds. Furthermore, they hold not just in the \congest{} model, but also in the \local{} model, in which message sizes can be arbitrarily large. 
These quadratic lower bounds are tight because we are also able to show $\tilde{O}(n^2)$ message upper bounds for constant-approximation algorithms for all four problems, for arbitrarily small constant.
To the best of our knowledge, of the four problems we consider, only MDS has been previously studied from a message complexity perspective. 
In \cite{GotteKSWAlgoSensors2021,GOTTE2023113756}, the authors show an (expected) $O(\log \Delta)$-approximation algorithm to MDS in the $\ktzero$ \congest{} model that uses $O(n^{1.5})$ messages, running in polylogarithmic rounds.
This upper bound result shows that  non-trivial approximation for MDS can be achieved in the $\ktzero$ \congest{} model {\em without} communicating over most edges.
The authors also show a $\tilde{\Omega}(n^{1.5})$ message lower bound for algorithms that yield an $O(1)$-approximation for MDS. Our work significantly improves on this lower bound result by showing that $\Omega(n^2)$ is a lower bound for $5/4-\epsilon$ approximation of MDS. This indicates that message complexity can be quite sensitive to the {\em quality of approximation} for some problems and hence the $\Omega(n^2)$ message lower bounds shown in this paper for approximation algorithms cannot be taken for granted.

Finally, we note that we are able to extend these lower bounds (both cubic and quadratic), which are in the $\ktzero$ model, to the $\ktone$ model also. We present these results in the appendix so as to keep the main text of the paper focused on the $\ktzero$ model.

We complement our lower  bounds by presenting almost-quadratic, i.e.,
$\tilde{O}(n^2)$, message upper bounds for computing $(1 \pm \epsilon)$-approximations to all four problems (Column 4) and essentially linear, i.e.,
$\tilde{O}(n)$, message complexity  algorithms (Column 5) on $G(n,p)$ (Erd\"{o}s-R\'{e}nyi) random graphs\cite{bollobas}
for all four problems that give constant-factor approximations with high probability. 
We now describe the techniques used to obtain our results in more detail.

\begin{table}[tbh]
\renewcommand{\arraystretch}{1.2}
\centering
\noindent\makebox[\textwidth]{\begin{tabular}{|c|c|c|c|c|} \hline
     Problem & Lower Bounds   & Lower Bounds  & Upper Bounds  &  Upper Bounds\\
             & Exact   & Approximate  & Approximate & in Random Graphs \\ \hline
     \mxm{} & Open & $\Omega(\epsilon^{3}n^{2})$ for $\epsilon$-apx & $\tilde{O}(n^2/\epsilon)$ for $(1-\epsilon)$-apx & $\tilde{O}(n)$ for exact \\ \hline
     MVC & $\tilde{\Omega}(n^3)$  & $\Omega(n^2/c)$ for $c$-apx & $\tilde{O}(n^2/\epsilon)$ for $(1+\epsilon)$-apx & $\tilde{O}(n)$ for $(2-o(1))$-apx\\ \hline
MDS & $\tilde{\Omega}(n^3)$ & $\Omega(n^2)$ for $(\frac{5}{4}-\epsilon)$-apx & $\tilde{O}(n^2/\epsilon)$ for $(1+\epsilon)$-apx & $\tilde{O}(n)$ for $(1+o(1))$-apx\\ \hline
     \mxis{} & $\tilde{\Omega}(n^3)$ & $\Omega(n^2)$ for $\left(\frac{1}{2}+\epsilon\right)$-apx  & $\tilde{O}(n^2/\epsilon)$ for $(1-\epsilon)$-apx & $\tilde{O}(n)$ for $(\frac{1}{2} - o(1))$-apx\\ \hline
\end{tabular}
}
\caption{A summary of our message complexity lower and upper bound results. All of these results hold in the synchronous $\ktzero$ \congest{} model, with all the message complexity lower bounds applying to all algorithms that run in polynomial rounds. Additionally, the quadratic lower bounds for approximation algorithms (Column 3) even apply in the $\ktzero$ \local{} model and also to algorithms that take arbitrarily many rounds. The cubic lower bounds for exact algorithms (Column 2) are tight because any problem can be trivially solved in the $\ktzero$ \congest{} model in polynomial rounds using $O(n^3)$ messages by gathering the entire graph topology at a node. Moreover, this gathering algorithm implies that, in {\em random graphs}, all problems can be solved trivially in $\tilde{O}(n^2)$ messages with high probability since such graphs have $O(\log n)$ diameter with high probability. While our focus is not on round complexity, we note that our approximation algorithms for arbitrary graphs take polynomial number of rounds, while those for random graphs take polylogarithmic number of rounds.
For the lower bound for approximate \mxm (Columnn 3), $\epsilon \in (\frac{1}{n^{1/3}}, 1)$; everywhere else the only restriction on $\epsilon$ is $0 < \epsilon < 1$. The lower bound for approximate MVC holds for any $c \ge 1$.
}
\label{tb:results}
\end{table}
\renewcommand{\arraystretch}{1}

\paragraph*{A. Tight Cubic Lower Bounds for Exact Computation:}
The starting point for our cubic message lower bounds is the communication-complexity-based approach in \cite{BachrachCDELP19,Censor-HillelKP17} that is used to show $\tilde{\Omega}(n^2)$ \textit{round} lower bounds for exact MVC and MDS. 
In \cite{Censor-HillelKP17}, the authors present a reduction from the 2-party communication complexity problem \sd{} to MVC. For any positive integer $n$ that is a power of $2$ and bit-vectors $x, y \in \{0, 1\}^{n^2}$, this reduction maps an instance $(x, y)$ of \sd{} to a graph $G_{x,y}$ with $\Theta(n)$ vertices and $\Theta(n^2)$ edges such that
$\sd(x, y) = \mathrm{FALSE}$ iff $G_{x,y}$ has a vertex cover of size at most $4n + 4 \log n - 4$.
Furthermore, $G_{x, y}$ has the property that its vertex set can be partitioned into sets $V_x$ and $V_y$ where the subgraph $G_{x, y}[V_x]$ is determined completely by $x$ (and independently of $y$), the subgraph $G_{x, y}[V_y]$ is completely determined by $y$ (and independently of $x$), and the cut $(V_x, V_y)$ is small, i.e., has $O(\log n)$ edges.
It is then shown that if there is an algorithm $\mathcal{A}$ for solving MVC (exactly) in the $\ktzero$ \congest{} model, then Alice and Bob can solve \sd{} on $x$, $y$ by simulating $\mathcal{A}$.
Specifically, Alice and Bob start by respectively constructing $G_{x, y}[V_x]$ and $G_{x, y}[V_y]$ using their private inputs. They then simulate $\mathcal{A}$ round-by-round, communicating with each other only when algorithm $\mathcal{A}$ sends a message from a node in $V_x$ to $V_y$ (or vice versa).
Since the $(V_x, V_y)$ cut has size $O(\log n)$, this means that if $\mathcal{A}$ runs in $T$ rounds, Alice and Bob can solve MVC on $G_{x, y}$ by communicating $O(T \cdot \log^2 n)$ bits.
Finally, the linear (in length of $x$ and $y$) lower bound on the communication complexity of \sd{} \cite{razborov92_distr_compl_disjoin} (even for randomized, Monte Carlo algorithms)
implies an $\tilde{\Omega}(n^2)$ round lower bound on $T$.
The approach for showing an $\tilde{\Omega}(n^2)$ round lower bound for MDS \cite{BachrachCDELP19} is quite similar, the only difference being the construction of the lower bound graph $G_{x,y}$.

We extend the above approach to obtain an $\tilde{\Omega}(n^3)$ message lower bound using a key new idea. Our idea is to ``stretch'' the 
$(V_x, V_y)$ cut by adding vertex subsets $V_2, V_3, \ldots, V_{\ell-1}$ to the graph $G_{x, y}$. 
Renaming $V_x$ as $V_1$ and $V_y$ as $V_{\ell}$, we then replace the edges in the original cut $(V_x, V_y)$ by edges between $(V_i, V_{i+1})$ for $1 \le i \le \ell-1$. 
The size of each cut $(V_i, V_{i+1})$ is still small, i.e., $O(\polylog(n))$ edges.
The first challenge we overcome is showing that the correctness of the mapping from \sd{} instances $(x, y)$ to MVC instances $G_{x, y}$ is preserved. 
We now explain the motivation for ``stretching'' the cut. Suppose there is an algorithm $\mathcal{A}$ that solves MVC on $G_{x, y}$, while sending only $o(n^2/\polylog(n))$ messages across a constant-fraction of the $\ell-1$ cuts $(V_i, V_{i+1})$. 
Then Alice and Bob can simulate $\mathcal{A}$ with low communication complexity. Specifically, Alice and Bob can coordinate to (randomly) pick one of the low-message cuts $(V_i, V_{i+1})$. Alice starts by constructing the subgraph of $G_{x,y}$ induced by $V_1 \cup V_2 \cup \ldots \cup V_i$ and similarly Bob constructs the subgraph of $G_{x,y}$ induced by $V_{i+1} \cup V_{i+2} \cup \ldots \cup V_\ell$. Alice and Bob can then simulate $\mathcal{A}$ round-by-round, communicating with each other only when $\mathcal{A}$ needs to send a message from $V_i$ to $V_{i+1}$ (or vice versa). 
Since $o(n^2/\polylog(n))$ messages are sent across the $(V_i, V_{i+1})$ cut, this means that Alice and Bob only communicate $o(n^2)$ bits.
Because of the mapping from \sd{} instances to MVC instances, Alice and Bob can use the solution to MVC obtained by simulating $\mathcal{A}$, to solve \sd{} in $o(n^2)$ bits, something that is not possible. 
This implies that algorithm $\mathcal{A}$ necessarily sends $\tilde{\Omega}(n^2)$ messages across a constant-fraction of the $\ell-1$ cuts $(V_i, V_{i+1})$. By setting $\ell = \tilde{\Theta}(n)$, we obtain an $\tilde{\Omega}(n^3)$ message lower bound for $\mathcal{A}$.

The above high-level description glosses over several technical challenges. One of these is the fact that the 2-party communication complexity is asynchronous, i.e., Alice and Bob have no common notion of time, whereas we are interested in proving lower bounds for the \textit{synchronous} $\ktzero$ \congest{} model. However, in such a synchronous model of distributed computing, the \textit{time-encoding} trick can be used to reduce messages. For example, a node can stay silent for many clock ticks and then send a single bit at clock tick $t$ to a neighbor, thereby using just 1 bit of actual information to implicitly convey $\log t$ bits of information.
To overcome this challenge, we first use the above argument in a synchronous version of the 2-party communication model, showing that Alice and Bob can simulate algorithm $\mathcal{A}$ using a small number of bits in the synchronous 2-party communication model. We then appeal to a result from \cite{PanduranganPS20} which shows that in 2-party communication models, synchrony can be used to compress messages, but only by a $\log(r)$-factor for $r$-round algorithms. Applying this result allows us to translate the communication complexity in the synchronous 2-party communication model to communication complexity in the standard 2-party communication model with a logarithmic-factor loss, if we restrict ourselves to algorithms that run in polynomial rounds. 

We end this subsection by summarizing the scope of these cubic lower bounds.
First, they only hold in the \congest{} model, and not in the \local{} model, because the lower bound technique described above relies on edges having low bandwidth. In fact, it is easy to see that any problem can be solved using $O(n^2)$ messages in the \local{} model because a single node can gather the entire graph topology, and broadcast it to all nodes, using $O(n^2)$ messages.
Second, our use of communication complexity techniques to obtain lower bounds in the synchronous setting implies that our cubic lower bounds only hold for algorithms that run in polynomial rounds. 
Third, while the above argument has been sketched in the $\ktzero$ \congest{} model, it can be generalized to work in the $\ktone$ \congest{} model as well. We present this generalized argument in the appendix.

\paragraph*{B. Tight Quadratic Lower Bounds for Approximate Computation:}
Recall that we show quadratic message lower bounds for approximation algorithms not just in the $\ktzero$ \congest{} model, but even in the $\ktzero$ \local{} model, and our bounds hold not just for polynomial-round algorithms, but \textit{unconditionally}, i.e., even for algorithms that use arbitrarily many rounds.  
Unfortunately, communication-complexity-based approaches cannot be used for these types of powerful lower bounds.
Communication complexity reductions typically show a lower bound of, say $\Omega(b)$ bits, on the volume
of information that travels across a cut in the graph in any algorithm for the problem. However, in the \KTZero{} \local{} model, this does not
translate to a message complexity lower bound because there is no upper bound on the bandwidth of an edge
and in fact $b$ or more bits can travel across an edge in a \textit{single} message in the \KTZero{} \local{} model.
Furthermore, as mentioned previously, since information can be encoded in clock ticks, 
communication-complexity-based lower bounds degrade with the number of rounds.
So any message complexity lower bound obtained in the \textit{synchronous} $\ktzero$ \congest{} or \local{} model necessarily only applies to algorithms that are round-restricted.
For these reasons we use approaches different from communication-complexity-based techniques.

Our first technique, which we apply to the \mxm problem, involves showing that finding a large ``planted matching'' in the network is impossible without $\Omega(n)$ of the nodes identifying incident ``planted matching'' edges.
Further, we show using a symmetry argument that identifying a specific edge incident on a node requires messages
to pass over many incident edges.
In general, we show in Theorem~\ref{thm:mm_lb} that there is an inherent dependency between the number of discovered ``planted matching'' edges and the message complexity of any algorithm for approximating a maximum matching.
We believe that this technique could be of independent interest because it can be used to show the difficulty
of identifying other ``planted subgraphs'' in the \KTZero{} model using few messages. This in turn can lead to message  complexity lower bounds in the \KTZero{} \local{} model for other problems.

For the other problems, namely MDS, MVC, and \mxis, we use the so-called \emph{edge-crossing} technique,
that has been used to prove a variety of distributed computing lower bounds (see \cite{KorachMZSICOMP1987,AwerbuchGPV90, kuttenjacm15, abboud17_foolin_views, patt-shamir17_proof_label_schem} for some examples). 
For MVC and \mxis, our constructions build upon the lower bound graphs used in \cite{PaiPPRPODC2021} for proving message complexity lower bounds for MIS and $(\Delta+1)$-coloring.
Our use of this technique for MDS, which heavily borrows from communication-complexity-based lower bound constructions, seems novel. 
Below we sketch the 2-step approach we use to obtain the $\Omega(n^2)$ message lower bound for a $(\frac{5}{4}-\epsilon)$-approximation 
for MDS in the \KTZero{} \local{} model.
\begin{itemize}
    \item[(i)] 
    In \cite{BachrachCDELP19} the authors present a reduction from the 2-party communication complexity problem
    \textsc{SetDisjointness} to MDS and use this to show an $\tilde{\Omega}(n^2)$ round lower bound on computing an exact MDS in the \congest{} model. For their round lower bound argument, they construct a family of lower bound graphs that have a small cut --- of size $O(\polylog(n))$ --- across which $\Omega(n^2)$ bits have to flow. This small cut is needed to translate the lower bound on the number of bits to a round complexity lower bound. But, to show message complexity lower bounds, we do not need a small cut and this provides much greater flexibility in the construction of the lower bound graph family.
    For positive integer $n$, $x, y \in \{0, 1\}^{n^2}$, the construction in \cite{BachrachCDELP19} maps the instance $(x, y)$ of \textsc{SetDisjointness} to a graph $G_{x,y}$ with $n$ vertices and $\Theta(n^2)$ edges. 
    We take advantage of the flexibility mentioned above and extend the lower bound construction in \cite{BachrachCDELP19} to create a relatively large gap in the size of the MDS in graphs $G_{x,y}$ for which $\textsc{SetDisjointness}(x, y) = \mathrm{FALSE}$ versus
    graphs $G_{x,y}$ for which $\textsc{SetDisjointness}(x, y) = \mathrm{TRUE}$\footnote{This relatively large gap is created by forcing small minimum dominating sets in both cases; 4 when $\textsc{SetDisjointness}(x, y) = \mathrm{FALSE}$ and 5 when $\textsc{SetDisjointness}(x, y) = \mathrm{FALSE}$.}.
    However, at this stage this is still a communication-complexity-based reduction, and as observed earlier we cannot obtain unconditional \KTZero{} message complexity lower bounds via this construction.
\item[(ii)] Our goal now is to circumvent the need for a communication-complexity-based reduction, while still using this lower bound graph construction. To achieve this goal, we pick a graph $G = G_{x,y}$, $x, y \in \{0, 1\}^{n^2}$ for which $\textsc{SetDisjointness}(x, y) = \mathrm{TRUE}$. 
    We then show that for many pairs of edges $(e, e')$ in $G$, the graph $G(e,e')$ obtained by ``crossing'' the edges $e$ and $e'$ satisfies the property that $G(e, e') = G_{x', y'}$ for $x', y' \in \{0, 1\}^{n^2}$ where $\textsc{SetDisjointness}(x', y') = \mathrm{FALSE}$. The gap in the MDS sizes mentioned earlier implies that the MDS sizes in $G$ and $G(e, e')$ are relatively different.
    Finally, we rely on the well-known feature of ``edge-crossing'' arguments, which is that an algorithm that does not send messages on $e$ and $e'$ cannot distinguish between $G$ and $G(e, e')$. 
    This leads to the result (see Theorems \ref{thm:kt0-mds-det-lb} and \ref{thm:kt0-mds-random-lb}) that $\tilde{\Omega}(n^2)$ messages are needed to obtain an $(5/4-\epsilon)$-approximation for MDS, for any $\epsilon > 0$, in $\ktzero$ $\local$ model. \end{itemize}

We end this subsection by briefly mentioning our technique for obtaining $\tilde{O}(n^2)$ message upper bounds for $(1 \pm \epsilon)$-approximations for all 4 problems in the $\ktzero$ \congest{} model. 
A ``ball growing'' approach has been widely used in distributed computing for problems such as network decomposition in \cite{LinialSaksCombinatorica1993,AwerbuchPelegFOCS1990,MillerPengXuSPAA2013}.
Combining this approach with local exponential-time computations, Ghaffari, Kuhn, and Maus \cite{GhaffariKMSTOC2017} devised $(1 \pm \epsilon)$-approximations algorithms for covering and packing integer linear programs in the $\ktzero$ \local{} model, running in polylogarithmic rounds.
Since all 4 problems we consider are instances of covering and packing integer linear programs, the results in \cite{GhaffariKMSTOC2017} apply to these problems.
Our contribution is to show that this $\ktzero$ \local{} algorithm can also be implemented in the $\ktzero$ \congest{} model (i.e., using small messages) using only $\tilde{O}(n^2)$ messages, while running in polynomial time.

\paragraph*{C. Tight Linear Bounds for Random Graphs.} The $\Omega(n^2)$ message lower bounds that we show hold  on some specifically constructed graph families. We complement our lower  bounds by presenting essentially linear, i.e.,
$\tilde{O}(n)$, message complexity  algorithms on $G(n,p)$ (Erd\"{o}s-R\'{e}nyi) random graphs\cite{bollobas}
for all four problems that work (even) in the $\ktzero$ \congest{} model and give constant-factor approximations with high probability. Our message bounds are essentially tight, since it is easy to see that $\Omega(n)$ is a message lower bound
for all these problems. Furthermore, all our algorithms are fast, the algorithms for MVC, MDS, and \mxis run in $O(\log^2 n)$ rounds, whereas the \mxm algorithm runs in  $O(1)$ rounds.  
These results apply for all $G(n,p)$ random graphs
above the connectivity threshold, i.e., $p = \Theta(\log n/n)$ (see Section \ref{section:UBRG}). These results imply
that  \emph{almost all} graphs\footnote{Since we show high probability bounds on $G(n,p)$ for every $p$
above the connectivity threshold, one can interpret bounds
on random graphs in a deterministic manner as applying to all graphs (of every edge density), except for a vanishingly small fraction. For example, $G(n,1/2)$ is a uniform distribution on all graphs of size $n$ and our bounds show that almost all graphs admit $\tilde{O}(n)$ message algorithms for the four problems.} of every edge density (above $\Theta(\log n)$) admit very message-efficient (essentially linear) algorithms for   exact (for \mxm{}) or constant-factor approximation (for \mxis{}, MDS, and MVC). 
In other words, this means for the vast majority of graphs one needs to use a small fraction of
the edges to solve these problems. We also show that in general  graphs (Theorem \ref{th:mm}), maximum matching can be solved in  $O(n)$ messages and $O(1)$ rounds in $\ktzero$ \congest{} giving an expected $O((\Delta/\delta)^2)$-factor approximation, where  $\Delta$ and $\delta$  are the maximum and minimum degrees of the graph.

 Our main technical contribution is to show that the randomized greedy MIS algorithm can be implemented in random graphs using $\tilde{O}(n)$ messages and in $O(\log^2 n)$ rounds (where the first bound holds with high probability). This implies constant-factor distributed approximation for \mxis{}, MVC, and MDS within the same bounds. We note that while random graphs (above the connectivity threshold) have low diameter (i.e., $O(\log n)$), our $\tilde{O}(n)$ upper bounds are not exclusively due to this property. To compare, we point out that all of the lower bound graphs we construct for quadratic lower bounds for approximation algorithms have constant diameter.

\subsection{Related Work}
\label{section:related}

Significant progress has been made in understanding and improving the \textit{round} complexity of fundamental
``local'' problems such as MIS, maximal matching, $(\Delta+1)$-coloring, and ruling sets (see e.g., \cite{coloring-book, barenboim14_distr_delta_color_linear_delta_time, barenboim16_local_distr_symmet_break, barenboim12_local_distr_symmet_break, ChangLiPettieSICOMP2020, GhaffariKMSTOC2017, GhaffariSODA2016, RozhonGSTOC2020, GhaffariSODA19, HalldorssonKuhnMausTonoyanSTOC2021, BishtKP13, KothapalliP12, pai17_symmet_break_conges_model, PaiPPRPODC2021}) in both the \local{} and \congest{} models. 
This research on ``local'' problems has nice connections to distributed approximation for graph optimization problems and more recently this has become a highly active area of research. This line of research includes round complexity upper bounds for constant-factor and $(1-\epsilon)$-factor approximation for \mxm \cite{barenboim16_local_distr_symmet_break, Bar-YehudaCensor-HillelGhaffariSchwartzmanPODC2017, FischerDC2020, FischerMUSTOC2022, KMWJACM2016, LotkerPatt-ShamirPettieJACM2015},
constant-factor approximations for MVC \cite{Bar-YehudaCSJACM2017, KMWJACM2016} and logarithmic-approximations for MDS \cite{Censor-HillelD18, JiaRS02, DKM19, GK18, KMWJACM2016, PanditPemmarajuPODC2010}. It also includes round complexity lower bounds for solving MVC, MDS, and \mxis{}, approximately \cite{KMWJACM2016, KMWPODC2004,EfronGKPODC20} as well as exactly \cite{BachrachCDELP19, Censor-HillelKP17}.

We compare and contrast only those results in communication complexity that are relevant to our work. The classical 2-party communication complexity where two parties communicate via an (asynchronous) link has been studied extensively for computing various boolean functions, including equality, set disjointness etc. See the books of \cite{rao_yehudayoff_2020,KushilevitzN:book96} for a detailed treatment. The work of Tiwari \cite{TiwariJACM1987} studies the 2-party communication complexity where the two players are connected
by an arbitrary network. The network is assumed to be asynchronous. Tiwari shows that the lower bound on the communication complexity of {\em deterministic} protocols in a $n$-node network can be $\tilde{\Omega}(n)$ times the standard 2-party communication complexity (where the two
players are connected by a direct link).  The high-level idea of Tiwari's lower bound  is relating  the communication complexity in a network to that of a  single-link setting  by arguing that the two players have to communicate across several vertex-disjoint cuts. The ``stretching technique'' we use to obtain cubic lower bounds uses similar ideas, but we make this work even for {\em randomized algorithms}, and furthermore, we also circumvent the time-encoding trick mentioned earlier, so that our bounds also hold for synchronous algorithms.

Another work that is relevant to ours is that of  Chattopadhyay,  Radhakrishnan, and Rudra \cite{ChattopadhyayRRFOCS2014}
who study multi-party communication complexity where the $k$ players are connected by an arbitrary network (see also
related follow-up works \cite{rudra1,rudra2}).
This work builds on the earlier work of Woodruff and Zhang \cite{WoodruffZhangDIST2013} who study
the same problems, but under the assumption that the network is a clique. In these works, 
in the context of  graph problems, the input graph --- which is {\em different} from the communication network (which 
also has $k$ nodes) ---
is {\em edge-partitioned} among the $k$ players and the goal is to compute some property of the input graph, e.g., whether it is connected or bipartite etc. Chattopadhyay et al.~show that the lower bounds on 
the multi-party communication complexity of such problems on an arbitrary network connecting the $k$ players  is at least $\Omega(k)$ times the same complexity
when the players are connected by a clique. They also  exploit communication over several disjoint cuts to
show their stronger lower bounds. While their results hold also for randomized protocols (unlike the results of Tiwari \cite{TiwariJACM1987}),
their results do not apply to the synchronous setting. It is easy to show that in this setting, one can solve
all their problems in $O(m)$ messages, where $m$ is the number of edges of the communication network.
It is important to stress that the above results hold only when the input graph is {\em edge partitioned}. Providing the input graph using a vertex partition is closer to the distributed computing setting where one can associate vertices of the input graph (and their incident edges) to players. Indeed this is the assumption used in distributed computing models such as the congested clique \cite{HegemanPPSSPODC15, HegemanPTCS15, lotker2005mstJournal, DruckerKOPODC2014} and $k$-machine models \cite{bandyapadhyay18_near_optim_clust,KlauckNPRSODA15,PanduranganRS21,BandyapadhyayIPPTCS2022}.
Generally, lower bounds in the vertex partition setting are harder to show and in fact, 
Drucker, Kuhn, and Oshman \cite{DruckerKOPODC2014} prove that showing non-trivial lower bounds in the congested clique model will imply breakthrough circuit complexity lower bounds.

 \section{Tight Cubic Bounds for Exact Computations}
\label{section:cubicLowerBounds}
In this section we present near cubic, i.e. $\tilde{\Omega}(n^3)$ lower bounds on the message complexity of \ktzero{} \congest{} algorithms that compute exact solutions to MVC and \mxis{} in Section~\ref{section:exact-mvc-lb} and exact solution to MDS in Section~\ref{section:exact-mds-lb}.

We first present a generic framework for proving message complexity lower bounds in \ktzero{} \congest{} model by reduction from $2$-party communication complexity lower bounds. We begin by defining a lower bound graph family which we call an \emph{$\ell$-separated family of lower bound graphs}.
Definition~\ref{def:lb-graph-family} is a generalization of the lower bound graph family defined in \cite{Censor-HillelKP17} which is used in to obtain round complexity lower bounds in the $\congest$ model. In particular, the family defined in \cite{Censor-HillelKP17} is a $2$-separated family of lower bound graphs (i.e. they only consider $\ell=2$).

\begin{definition}[$\ell$-Separated Family of Lower Bound Graphs]\label{def:lb-graph-family}
    Let $f:X \times Y \to \{\mathrm{TRUE},\mathrm{FALSE}\}$ be a function and $P$ be a graph predicate. For an integer $\ell>1$, a family of graphs $\{G_{x,y}=(V,E_{x,y})\mid x \in X, y \in Y\}$ is said to be an \emph{$\ell$-separated family of lower bound graphs w.r.t. $f$ and $P$} if $V$ can be partitioned into $\ell$ disjoint and non-empty subsets $V_1, V_2,\dots, V_{\ell-1}, V_\ell$ such that the following properties hold:
\begin{enumerate}
  \item \label{lb-fw:alice} Only the existence or the weight of edges in $V_1 \times V_1$ depend on $x$;
  \item \label{lb-fw:bob} Only the existence or the weight of edges in $V_\ell \times V_\ell$ depend on $y$;
  \item \label{lb-fw:cuts} For all $1 \le i \le \ell$, the vertices in $V_i$ are only connected to vertices in $V_{i-1} \cup V_i \cup V_{i+1}$ (where $V_0 = V_{\ell+1} = \emptyset$). 
  \item \label{lb-fw:pred} $G_{x,y}$ satisfies the predicate $P$ iff $f(x,y) = \mathrm{TRUE}$.
\end{enumerate}
\end{definition}

We will now show a theorem (see Theorem \ref{thm:general-lb-framework}) which says that the existence of an $n$-vertex $\ell$-separated family of lower bound graphs w.r.t. $f$ and $P$ implies a lower bound of roughly $\ell \cdot CC(f)$ on the message complexity of a \ktzero{} \congest{} algorithm for deciding $P$, where $CC(f)$ is the 2-party communication complexity of the function $f$. We are ignoring many technical details in the previous statement for the sake of intuition, and we will spend the rest of the section adding these details. Note that $CC(f)$ is trivially bounded by $O(n^2)$ since $f(x,y)$ can be decided by evaluating $P$ on the $n$-vertex graph $G_{x,y}$, which can be represented using $n^2$ bits. Therefore, the extra $\ell$ factor crucially allows us to prove $\omega(n^2)$ lower bounds on message complexity.

We will prove Theorem \ref{thm:general-lb-framework} by efficiently simulating a \congest{} algorithm in the 2-party communication complexity model. In the standard 2-party model, there are two entities, usually called Alice and Bob. Alice has an input $x \in X$, unknown to Bob, and Bob has an input $y \in Y$, unknown to Alice. They wish to collaboratively compute a function $f(x, y)$ by following an agreed-upon protocol $\Pi$, which can be possibly randomized with error probability $\varepsilon$. The \emph{communication complexity} of this protocol $CC(\Pi)$ is the number of bits communicated by the two parties for the worst-case choice of inputs $x \in X$ and $y \in Y$. The \emph{deterministic communication complexity} of the function $f$, denoted as $CC^{\mathrm{det}}(f)$ is the minimum communication complexity of the deterministic protocol $\Pi$ that correctly computes $f$. And the \emph{randomized $\varepsilon$-error communication complexity} of the function $f$, denoted as $CC^{\mathrm{rand}}_{\varepsilon}(f)$ is the minimum communication complexity of the randomized protocol $\Pi$ that correctly computes $f$ with error probability at most $\varepsilon$. It is important to notice that this simple model is inherently asynchronous, since it does not provide the two parties with a common clock. 

Since the \congest{} model is synchronous, it is helpful to first simulate the $\congest$ algorithm in the \emph{synchronous 2-party model}, where the two parties also have a common clock. The time interval between two consecutive clock ticks is called a round. The computation proceeds in rounds: at the beginning of each synchronous round, (1) both parties send (possibly different) messages to each other, (2) both parties then receive the messages sent to it in the same round, and (3) both parties perform local computation, which will determine the messages it will send in the next round. The synchronous communication complexity $SCC(\Pi)$ of an $r$-round protocol $\Pi$ is the total number of bits sent by the two parties to compute $f(x,y)$ for the worst-case choice of inputs $x \in X$ and $y \in Y$. Note that if Alice (or Bob) decides to not send a message in a particular round, it does not contribute to the communication complexity, but Bob (or Alice) still receives some information, i.e., the fact that Alice (or Bob) chose to remain silent in this round. The \emph{deterministic $r$-round synchronous communication complexity} of the function $f$, denoted as $SCC^{\mathrm{det}}_r(f)$ is the minimum communication complexity of an $r$-round deterministic protocol $\Pi$ that correctly computes $f$. And the \emph{randomized $\varepsilon$-error $r$-round synchronous communication complexity} of the function $f$, denoted as $SCC^{\mathrm{rand}}_{r,\varepsilon}(f)$ is the minimum communication complexity of the $r$-round randomized protocol $\Pi$ that correctly computes $f$ with error probability at most $\varepsilon$.

We use a known relation between $CC^{\mathrm{det}}(f)$ and $SCC^{\mathrm{det}}_r(f)$, and between $CC^{\mathrm{rand}}_{\varepsilon}(f)$ and $SCC^{\mathrm{rand}}_{r,\varepsilon}(f)$. To do so, we use the Synchronous Simulation Theorem (SST) from~\cite{PanduranganPS20} to convert the synchronous 2-party protocol into an asynchronous 2-party protocol. Note that although \cite{PanduranganPS20} considers more than two parties, it also applies to the 2-party communication complexity setting. The below Lemma \ref{lemma:SST} is a simplified restatement of SST (Theorem 2 in~\cite{PanduranganPS20}), obtained by setting the number of parties to $2$ in both the synchronous and asynchronous models.

\begin{lemma}[Theorem 2 in~\cite{PanduranganPS20}]\label{lemma:SST}
    Let $f:X \times Y \to \{\mathrm{TRUE},\mathrm{FALSE}\}$ be a function that requires $CC^{\mathrm{rand}}_{\varepsilon}(f)$ bits for $\varepsilon$-error randomized protocols in the asynchronous 2-party communication complexity model, and $SCC^{\mathrm{rand}}_{\varepsilon,r}(f)$ bits for $\varepsilon$-error randomized, $r$-round protocols in the synchronous 2-party communication complexity model. The following relation holds (also for the deterministic setting),
    $$SCC^{\mathrm{rand}}_{\varepsilon,r}(f) = \Omega\left(\frac{CC^{\mathrm{rand}}_{\varepsilon}(f)}{1 + \log r}\right).$$
\end{lemma}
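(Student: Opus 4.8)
The plan is to derive this as the $k=2$ specialization of the Synchronous Simulation Theorem of~\cite{PanduranganPS20}, whose content is that any $r$-round synchronous two-party protocol can be converted into an asynchronous one at the cost of an $O(1+\log r)$ multiplicative blowup in communicated bits. Concretely, I would prove the equivalent inequality $CC^{\mathrm{rand}}_{\varepsilon}(f) = O\!\left((1+\log r)\cdot SCC^{\mathrm{rand}}_{\varepsilon,r}(f)\right)$, since rearranging it gives exactly the stated $\Omega$-bound on $SCC^{\mathrm{rand}}_{\varepsilon,r}(f)$. Thus the whole task reduces to the following: given an optimal $r$-round synchronous protocol $\Pi$ computing $f$ with error $\varepsilon$ using $B := SCC^{\mathrm{rand}}_{\varepsilon,r}(f)$ bits, build an asynchronous protocol $\Pi'$ computing $f$ with the same error using $O((1+\log r)B)$ bits.

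The core of the simulation rests on the observation that the \emph{only} resource $\Pi$ uses that is unavailable asynchronously is the shared clock: in the synchronous model a party extracts information simply from the passage of rounds, since ``no message arrived in round $t$'' is itself a meaningful signal (this is the ``time-encoding trick'' noted earlier, whereby a single bit delivered at round $t$ implicitly conveys $\log t$ bits). The idea of $\Pi'$ is to reproduce the synchronous transcript faithfully while making this implicit timing information explicit: whenever $\Pi$ has a party actually send a (non-silent) message at round $t$, in $\Pi'$ that party sends the same message content together with the round number $t$ (equivalently, the gap to its previous message), which costs $O(\log r)$ bits. Crucially, silent rounds incur \emph{no} cost, so the overhead is charged only to genuine message events; since there are at most $B$ such events, summing the tags gives a total of $O((1+\log r)B)$ bits. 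From the tags both parties can reconstruct exactly which rounds were silent, hence can locally replay the synchronous execution and output the same answer as $\Pi$; the error probability and the deterministic guarantee are preserved because $\Pi'$ induces an identical transcript to $\Pi$ on every input and every random string.

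The step I expect to be the main obstacle is making this simulation well-defined, i.e.\ \emph{deadlock-free}, without inflating the overhead from $\log r$ back up to $r$. The danger is circular waiting: a party cannot compute its round-$t$ message until it knows the opponent's behaviour (including silence) in all rounds $<t$, yet silence is precisely what the asynchronous channel cannot reveal on its own. The resolution is to drive the simulation by \emph{speaking rounds}: between two consecutive rounds in which anyone sends, both parties are silent, so once the transcript through the last speaking round is common knowledge, each party can locally compute its own next speaking round from that common knowledge plus its private input and randomness. The key structural fact is that the globally earliest next speaker $t_j=\min(a,b)$ can compute and transmit its round-$t_j$ message purely from the current common knowledge augmented by silence (the opponent being provably silent up to $t_j$), so at least one party can always make progress without waiting. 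Turning this observation into a fully specified, non-blocking protocol whose timing overhead is $O((1+\log r)B)$ rather than $O(r)$ — in particular arguing that the later speaker correctly defers until the earlier speaker's tagged message arrives — is exactly the delicate part carried out in~\cite{PanduranganPS20}. Once it is in place, the inequality above follows, and since nothing in that argument relies on $k>2$, setting $k=2$ yields the lemma as stated (including the deterministic case verbatim).
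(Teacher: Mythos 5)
Your proposal takes essentially the same route as the paper: the paper gives no standalone proof of this lemma, stating it as a simplified restatement of the Synchronous Simulation Theorem (Theorem 2 of \cite{PanduranganPS20}) obtained by setting the number of parties to $2$, which is exactly your reduction. Your additional sketch of how the SST itself works (rearranging to $CC^{\mathrm{rand}}_{\varepsilon}(f) = O((1+\log r)\cdot SCC^{\mathrm{rand}}_{\varepsilon,r}(f))$, round-number tagging costing $O(\log r)$ bits per genuine message, and the speaking-round argument for deadlock freedom) is consistent with the cited proof, and, like the paper, you ultimately defer the delicate non-blocking simulation details to \cite{PanduranganPS20}.
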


We are now ready to state and prove Theorem~\ref{thm:general-lb-framework}. We use the existence of an $\ell$-separated family of lower bound graphs w.r.t.~$f$ and $P$ to show that a \ktzero{} \congest{} algorithm that uses $r$ rounds and $M$ messages implies a synchronous randomized $2$-party protocol that uses $O(r)$ rounds and roughly $O(M/\ell)$ messages. In other words, the existence of the family implies that $SCC^{\mathrm{rand}}_{\varepsilon,r}(f) = O(M/\ell)$. Therefore any lower bound on the synchronous communication complexity gives a lower bound on $M$ that is a factor $\ell$ larger. The synchronous $2$-party protocol is pretty straightforward: Alice and Bob agree on a cut $(V_A, V_B)$ where $V_A = V_1,\dots, V_{i}$ and $V_B = V_{i+1}, \dots, V_\ell$, where $i$ is chosen uniformly at random in $\{1,\dots, \ell-1\}$, and they simulate the \congest{} algorithm, one round at a time, by exchanging the messages sent across this cut. Then we use Lemma~\ref{lemma:SST} to turn this into an asynchronous randomized $2$-party protocol. 

\begin{theorem}\label{thm:general-lb-framework}
    Fix a function $f:X \times Y \to \{\mathrm{TRUE},\mathrm{FALSE}\}$, a predicate $P$, a constant $0 < \delta < 1$, and a positive integer $\ell>1$. Suppose there exists an $\ell$-separated family of lower bound graphs $\{G_{x,y}=(V,E_{x,y})\mid x \in X, y \in Y\}$ w.r.t.~$f$ and $P$. Then any $r$-round deterministic (or randomized with error probability at most constant $0 < \varepsilon < 1$) algorithm for deciding $P$ in the \ktzero{} \congest{} model has message complexity 
\[M = \Omega \left(\frac{(\ell-1)}{\log |V|}\cdot\frac{ CC^{\mathrm{rand}}_{\delta + \varepsilon}(f)}{(1 + \log r)} - \frac{\ell \log \ell}{\log |V|}\right).\]
\end{theorem}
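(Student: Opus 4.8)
The plan is to convert any $r$-round \ktzero{} \congest{} algorithm $\mathcal{A}$ that decides $P$ using $M$ messages into an efficient protocol for $f$ in the \emph{synchronous} $2$-party model, and only then pass to the asynchronous model (where the hard lower bound $CC^{\mathrm{rand}}_{\delta+\varepsilon}(f)$ lives) via Lemma~\ref{lemma:SST}. First I would set up the simulation. Alice and Bob choose an index $i \in \{1,\dots,\ell-1\}$ uniformly at random and set $V_A = V_1 \cup \dots \cup V_i$, $V_B = V_{i+1}\cup\dots\cup V_\ell$. By Property~\ref{lb-fw:alice} the only $x$-dependent edges lie in $V_1\times V_1 \subseteq V_A$, and by Property~\ref{lb-fw:bob} the only $y$-dependent edges lie in $V_\ell\times V_\ell \subseteq V_B$; hence Alice can construct $G_{x,y}[V_A]$ from $x$ and the public structure alone, and symmetrically Bob constructs $G_{x,y}[V_B]$. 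They then simulate $\mathcal{A}$ round by round, each running the nodes on its own side. By Property~\ref{lb-fw:cuts} the only edges crossing the $(V_A,V_B)$ cut join $V_i$ and $V_{i+1}$, so per round the sole information the parties must exchange are the \congest{} messages sent across those edges, each of which is $O(\log|V|)$ bits (including the $O(\log|V|)$ bits needed to name its endpoints). Finally, by Property~\ref{lb-fw:pred}, the distributed output of $\mathcal{A}$ determines $f(x,y)$, which the parties recover with $O(1)$ additional bits.

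Next I would bound the communication. Let $M_i$ denote the number of messages $\mathcal{A}$ sends across the $(V_i,V_{i+1})$ cut. Since the inter-layer edge sets are disjoint for distinct $i$, we have $\sum_{i=1}^{\ell-1} M_i \le M$, so the uniform random choice yields $\E_i[M_i] \le M/(\ell-1)$. To turn this expectation into a worst-case guarantee I would apply Markov's inequality: the parties abort (outputting, say, $\mathrm{FALSE}$) as soon as the number of exchanged cross-cut messages exceeds $\tfrac{M}{\delta(\ell-1)}$, an event of probability at most $\delta$ which both parties detect simultaneously since both track the running total. The resulting synchronous protocol communicates $O\!\left(\tfrac{M}{\delta(\ell-1)}\log|V|\right)$ bits in the worst case, runs in $O(r)$ rounds, and errs with probability at most $\varepsilon + \delta$ (the $\varepsilon$ inherited from $\mathcal{A}$, the $\delta$ from aborting), giving
\[
SCC^{\mathrm{rand}}_{\delta+\varepsilon,\,r}(f) = O\!\left(\frac{M}{\delta(\ell-1)}\log|V| + \log\ell\right),
\]
where the additive $\log\ell$ is a lower-order setup overhead (e.g.\ from fixing/announcing the chosen cut index).

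I would finish by chaining this with Lemma~\ref{lemma:SST}, which gives $SCC^{\mathrm{rand}}_{\delta+\varepsilon,\,r}(f) = \Omega\!\left(CC^{\mathrm{rand}}_{\delta+\varepsilon}(f)/(1+\log r)\right)$. Combining the two estimates and solving for $M$, treating $\delta$ as the fixed constant of the statement, yields
\[
M = \Omega\!\left(\frac{\ell-1}{\log|V|}\cdot\frac{CC^{\mathrm{rand}}_{\delta+\varepsilon}(f)}{1+\log r} \;-\; \frac{\ell\log\ell}{\log|V|}\right),
\]
the subtracted term being the $(\ell-1)/\log|V|$-scaled rearrangement of the $\log\ell$ overhead. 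The main obstacle I anticipate is the interaction with \emph{synchrony}: a direct asynchronous simulation would fail because a synchronous algorithm can encode information in message \emph{timing}, so silence is informative for free in the synchronous $2$-party model but not the asynchronous one. Routing the whole argument through the synchronous model and invoking the $(1+\log r)$-factor loss of Lemma~\ref{lemma:SST} only at the end is exactly what keeps the bound valid for synchronous \congest{} algorithms running in polynomially many rounds. A secondary delicate point is the error accounting: the Markov-based abort threshold must be tuned so that its failure probability $\delta$ adds cleanly to $\mathcal{A}$'s error $\varepsilon$, matching the error parameter of $CC^{\mathrm{rand}}_{\delta+\varepsilon}(f)$, without inflating the communication bound.
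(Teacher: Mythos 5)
Your proposal is correct and follows essentially the same route as the paper's proof: simulate the \congest{} algorithm in the synchronous $2$-party model over a uniformly random cut $(V_1\cup\dots\cup V_i,\,V_{i+1}\cup\dots\cup V_\ell)$, use the disjointness of the $\ell-1$ inter-layer edge sets plus a Markov/counting argument to bound the cross-cut traffic by $M/(\delta(\ell-1))$ with failure probability $\delta$, and only then invoke Lemma~\ref{lemma:SST} to pay the $(1+\log r)$ factor. Your explicit abort rule when the cross-cut message count exceeds the threshold is a slightly more careful rendering of a step the paper handles implicitly (by folding the heavy-cut event into the $\delta+\varepsilon$ error), but it is the same argument, not a different one.
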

\begin{proof}
    Let $\mathcal{A}$ be an $r$-round and $M$-message deterministic (or randomized with error probability $0 < \varepsilon < 1$) \ktzero{} $\congest$ algorithm that decides $P$. We first simulate $\mathcal{A}$ in the synchronous $2$-party communication complexity model in order to find the value of $f(x,y)$ in $(r+1)$-rounds and $\log \ell + (M \log |V|) / \delta(\ell-1)$ bits of communication. This simulation has an error probability of at most $\delta + \varepsilon$, and along with the SST result (Lemma~\ref{lemma:SST}) gives us,
    \[\log \ell + \frac{M \log |V|}{\delta(\ell-1)} = \Omega \left(\frac{CC^{\mathrm{rand}}_{\delta+\varepsilon}(f)}{(1 + \log r)}\right) \implies M = \Omega \left(\frac{\delta (\ell-1)}{\log |V|} \cdot \frac{CC^{\mathrm{rand}}_{\delta+\varepsilon}(f)}{(1 + \log r)} - \frac{\delta \ell \log \ell}{\log |V|} \right)\]
    
    The simulation proceeds as follows: Alice and Bob together create $G_{x,y}$, where Alice is responsible for constructing the edges/weights in $V_1 \times V_1$ and Bob is responsible for constructing the edges/weights in $V_\ell \times V_\ell$. The rest of $G_{x,y}$ does not depend on $x$ and $y$, and is constructed by both Alice and Bob. Alice picks a random number $i \in [1,\ell)$ and sends it to Bob, and they both fix the cut $(V_A, V_B)$ where $V_A = V_1,\dots, V_{i}$ and $V_B = V_{i+1}, \dots, V_\ell$. Then they simulate $\mathcal{A}$ round by round, and in each round only exchange the messages sent over the cut $(V_A,V_B)$. Since the messages are $O(\log |V|)$ bits, we can assume without loss of generality that each message, along with the content, also contains the ID of the source and the destination, with appropriate delimiters, so Alice and Bob know this information for each message.

    Property \ref{lb-fw:cuts} in Definition \ref{def:lb-graph-family} implies that of all the $\ell-1$ cuts that Alice and Bob can fix in $G_{x,y}$, no pair of cuts can have a common edge crossing them. In other words, all the $\ell-1$ cuts have a distinct set of edges crossing them. Since $\mathcal{A}$ uses $M$ messages overall, at most a $\delta$ fraction of these cuts can have at least $M/\delta(\ell-1)$ messages passing through them. Therefore, the probability of the bad event that Alice and Bob fix a cut with at least $M/\delta(\ell-1)$ messages passing through it is at most $\delta$. Moreover, if $\mathcal{A}$ is randomized, the simulation will have error probability at most $\varepsilon$. Therefore, with probability at least $(1-\delta-\varepsilon)$, Alice and Bob will know whether $G_{x,y}$ satisfies $P$ or not, and by property \ref{lb-fw:pred} of Definition \ref{def:lb-graph-family}, they know whether $f(x,y)$ is TRUE or FALSE.

    The simulation of $\mathcal{A}$ requires $r$ synchronous rounds and $(M \log |V|) / \delta(\ell-1)$ bits of communication. And at the beginning they use one round and $\log \ell$ bits to agree on the cut.
\end{proof}

\subsection{Cubic Lower Bounds for Exact MVC and \mxis}\label{section:exact-mvc-lb}
We define an $\ell$-separated lower bound graph family $\{G_{x, y} \mid x \in \{0, 1\}^{k^2}, y \in \{0, 1\}^{k^2}\}$ w.r.t. $f=\sd$ and predicate $P$ which can be decided by computing the MVC (we will describe $P$ more formally later). The \sd function is defined as: $\sd(x,y) = \mathrm{FALSE}$ iff there exists an index $i$ such that $x_i = y_i = 1$. We will assume $k$ is a power of $2$ so that $\log k$ is an integer.
Note that this construction is a generalization of the MVC construction in \cite{Censor-HillelKP17}. More precisely, their construction can be directly obtained from ours by setting $\ell = 2$.

For positive integer $k$, fix $x, y \in \{0, 1\}^{k^2}$. We define the graph $G_{x, y}$ as follows. The vertex set of $G_{x, y}$ is 
$$A_1 \cup A_2 \cup B_1 \cup B_2 \cup C_1 \cup C_2 \cup \dots \cup C_{2\log k}$$
where $A_1 = \{a_1^i \mid 1 \le i \le k\}$, $A_2 = \{a_2^i \mid 1 \le i \le k\}$,
$B_1 = \{b_1^i \mid 1 \le i \le k\}$ and $B_2 = \{b_2^i \mid 1 \le i \le k\}$ are called the ``row vertices''. And
$C_i = \{t_i^j, f_i^j \mid 1 \le j \le \ell\}$ for all $1\le i \le 2\log k$ are called the ``bit gadget vertices''. Therefore, $G_{x,y}$ has $4k+4\ell\log k$ vertices.

We now describe the edges of $G_{x, y}$. The vertices in the sets $A_1$ form a clique, and so do the vertices in the sets $A_2, B_1, B_2$. Assuming $\ell$ is even, the vertices in the set $C_i$ form a cycle with the following order:
$$f_i^1, t_i^2, f_i^3, t_i^4, f_i^5,\dots, t_i^{\ell-2}, f_i^{\ell-1}, t_i^\ell, f_i^\ell, t_i^{\ell-1}, f_i^{\ell-2}, \dots, t_i^5, f_i^4, t_i^3, f_i^2, t_i^1, f_i^1$$ This cycle is also illustrated for the set $C_1$ in Figure \ref{fig:MVC-lb}.
Each vertex $a_1^i \in A_1$ is connected to bit gadgets $C_1, \dots, C_{\log k}$ according to the binary representation of the index $i$. In particular, $a_1^i$ is connected to $t_h^1$ if bit position $h$ of index $i$ is $1$ and to $f_h^1$ if bit position $h$ of index $i$ is $0$. Similarly, each vertex $b_1^i \in B_1$ is connected to $t_h^\ell$ if the bit position $h$ of index $i$ is $1$ and to $f_h^\ell$ if the bit position $h$ of index $i$ is $0$. The vertices in $A_2$ and $B_2$ are connected to $C_{\log k + 1}, \dots, C_{2\log k}$ in a symmetric manner. To be explicit, each vertex $a_2^j \in A_2$ is connected to $t_{h + \log k}^1$ if bit position $h$ of index $j$ is $1$ and to $f_{h + \log k}^1$ if bit position $h$ of index $j$ is $0$. Similarly, each vertex $b_2^j \in B_2$ is connected to $t_{h + \log k}^\ell$ if bit position $h$ of index $j$ is $1$ and to $f_{h + \log k}^\ell$ if bit position $h$ of index $j$ is $0$.

This completes the ``fixed'' edges in the graph, i.e., the edges that do not depend on bit vectors $x$ and $y$.
The edges between $A_1$ and $A_2$ depend on $x$ as follows: $\{a_1^i, a_2^j\}$ is an edge iff $x_{ij} = 0$.
The edges between $B_1$ and $B_2$ depend on $y$ as follows: $\{b_1^i, b_2^j\}$ is an edge iff $y_{ij} = 0$.
The construction is illustrated in Figure~\ref{fig:MVC-lb}.

\begin{figure}
    \centering
    \includegraphics[width=\linewidth]{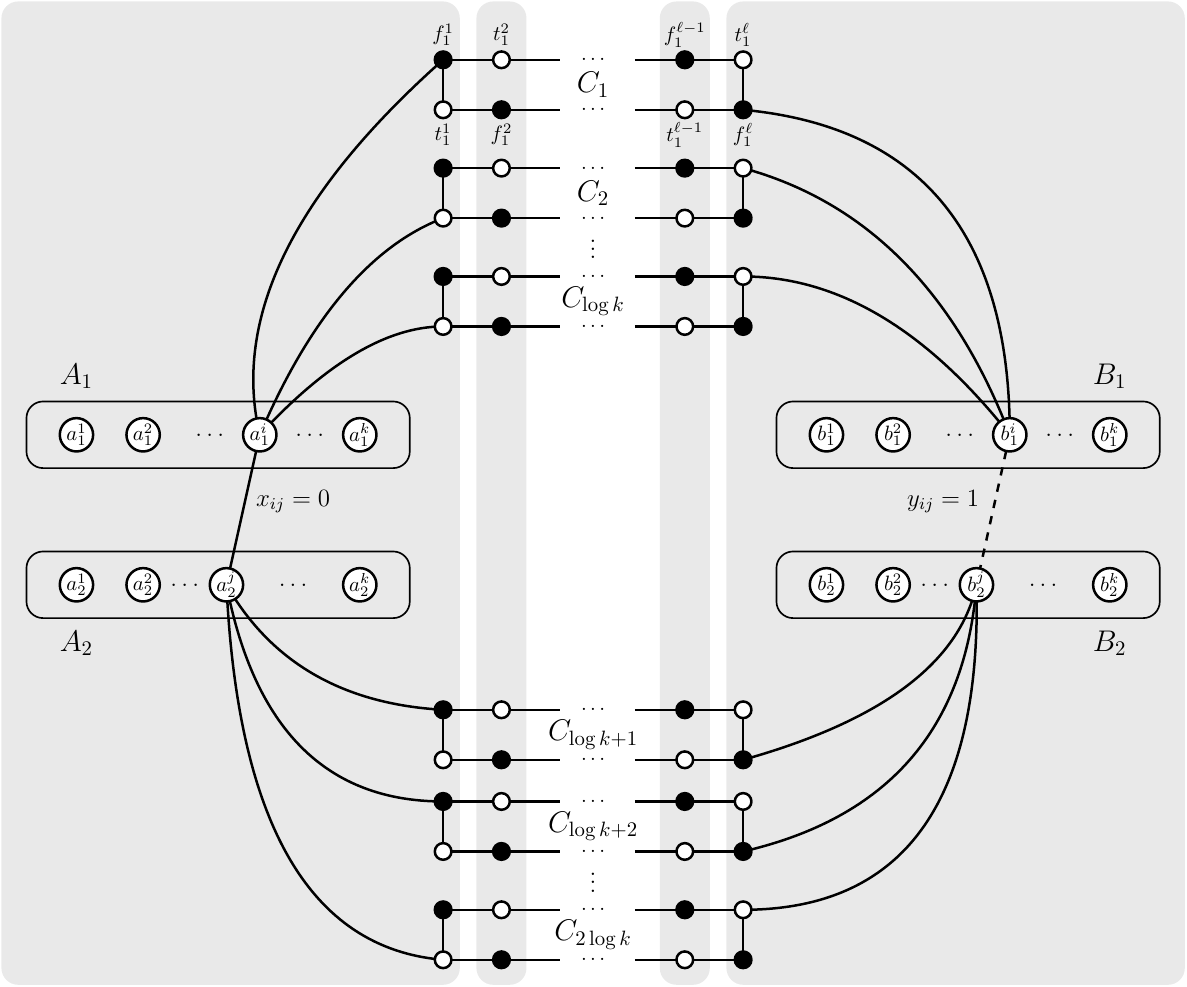}
    \caption{Illustration of one graph $G_{x,y}$ from the $\ell$-separated lower bound graph family we defined to show a cubic message lower bound for exact MVC. Many edges are omitted for the sake of clarity. The gray boxes from left to right denote the sets $V_1, V_2, \dots, V_\ell$ defined in proof of Theorem~\ref{thm:exact-mvc-cubic-lb}.}
    \label{fig:MVC-lb}
\end{figure}

\begin{claim} \label{claim:exact-mvc-gxy-size}
    Any vertex cover of $G_{x,y}$ must contain at least $k-1$ vertices from each set $A_1, A_2, B_1, B_2$ and at least $\ell$ vertices from each set $C_i$, $1 \le i \le 2\log k$.
\end{claim}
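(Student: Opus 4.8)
The plan is to prove the claim one set at a time, using the key structural fact that the internal edges of each listed set can only be covered by vertices lying in that same set. Concretely, I would first record the elementary restriction principle: for any vertex subset $W \subseteq V$ and any vertex cover $S$ of $G_{x,y}$, the set $S \cap W$ is a vertex cover of the induced subgraph $G_{x,y}[W]$. This holds because every edge of $G_{x,y}[W]$ has both endpoints in $W$, so whichever endpoint is guaranteed to lie in $S$ already lies in $S \cap W$. This reduces the claim to lower-bounding the minimum vertex cover size of each relevant induced subgraph.

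For the row-vertex sets, each of $A_1, A_2, B_1, B_2$ induces a clique $K_k$: the only $x$- and $y$-dependent edges run between $A_1$ and $A_2$ (resp.\ $B_1$ and $B_2$) and hence contribute no edge internal to a single set. Since any vertex cover of $K_k$ has size at least $k-1$ (omitting two vertices $u,v$ leaves the edge $\{u,v\}$ uncovered), the restriction principle yields $|S \cap A_1| \ge k-1$, and likewise for $A_2, B_1, B_2$.

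For the bit-gadget sets, I would check that $G_{x,y}[C_i]$ is exactly the $2\ell$-cycle given in the construction. Every edge incident to a $C_i$-vertex that leaves $C_i$ goes to a row vertex through one of the endpoints $t_i^1, f_i^1, t_i^\ell, f_i^\ell$; thus no chords are added and $G_{x,y}[C_i]$ is precisely the even cycle $C_{2\ell}$ on the $2\ell$ vertices $\{t_i^j, f_i^j : 1 \le j \le \ell\}$. An even cycle on $2\ell$ vertices has independence number $\ell$, so its minimum vertex cover has size $2\ell - \ell = \ell$; the restriction principle then gives $|S \cap C_i| \ge \ell$ for every $i$.

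I do not expect a real obstacle here: the three ingredients---the restriction principle, the $K_k$ bound, and the $C_{2\ell}$ bound---are all elementary. The only point that needs a moment of care is confirming that the induced subgraphs are exactly a clique and an even cycle, i.e.\ that the cross edges between the sets do not create additional internal edges; this is immediate from the description of which edges lie within versus between the sets.
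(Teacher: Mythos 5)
Your proposal is correct and is essentially the paper's own argument: the paper likewise observes that each of $A_1, A_2, B_1, B_2$ induces a $k$-clique (forcing at least $k-1$ cover vertices) and that each $C_i$ induces a $2\ell$-cycle (forcing at least $\ell$ cover vertices). The restriction principle and the verification that the cross edges add no internal edges are left implicit in the paper, so your write-up is just a more explicit rendering of the same proof.
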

\begin{proof}
    Since the vertices in $A_1$ (and $A_2, B_1, B_2$) form a clique of size $k$, any vertex cover of $G_{x,y}$ must contain at least $k-1$ vertices from $A_1$ (and $A_2, B_1, B_2$ respectively) And since the bit gadget $C_i$ is a cycle on $2\ell$ vertices, any vertex cover of $G_{x,y}$ must contain at least $\ell$ vertices from $C_i$ to cover all the edges of the cycle.
\end{proof}

\begin{lemma}\label{lem:mvc-exact-lb-pred}
   For $x, y \in \{0,1\}^{k^2}$, if $\emph{\sd}(x, y) = \mathrm{TRUE}$ then the MVC of $G_{x, y}$ has size at least $4k+2\ell\log k-3$, and if $\emph{\sd}(x, y) = \mathrm{FALSE}$ then the MVC of $G_{x, y}$ has size exactly $4k+2\ell\log k-4$.
\end{lemma}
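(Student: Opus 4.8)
The plan is to prove both directions by leaning on Claim~\ref{claim:exact-mvc-gxy-size}, which already supplies the universal lower bound $4(k-1) + 2\ell\log k = 4k + 2\ell\log k - 4$ on any vertex cover of $G_{x,y}$ (summing $k-1$ over the four row cliques and $\ell$ over the $2\log k$ bit gadgets). The whole argument then rests on understanding the two minimum covers of a bit gadget. Since $C_i$ is an even cycle on $2\ell$ vertices, its only covers of size $\ell$ are the two ``alternating'' vertex sets. I would first record the key structural fact: tracing the prescribed cyclic order and numbering positions $1,\dots,2\ell$, the vertex $f_i^1$ sits at position $1$, $t_i^\ell$ at position $\ell$, $f_i^\ell$ at position $\ell+1$, and $t_i^1$ at position $2\ell$; hence (using that $\ell$ is even, so $\ell$ is even and $\ell+1$ is odd) one alternating cover contains exactly $\{f_i^1, f_i^\ell\}$ among the four connector endpoints and the other contains exactly $\{t_i^1, t_i^\ell\}$. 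In other words, a size-$\ell$ cover of $C_i$ encodes a single bit that is presented \emph{identically} at the position-$1$ end (connected to the $A$ side) and the position-$\ell$ end (connected to the $B$ side).

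For the FALSE case I would give an explicit cover meeting the lower bound. Pick an index pair $(i,j)$ with $x_{ij} = y_{ij} = 1$, which exists precisely because $\sd(x,y) = \mathrm{FALSE}$. Put all row vertices into the cover except $a_1^i, a_2^j, b_1^i, b_2^j$, and in each gadget take the size-$\ell$ alternating cover whose bit matches the corresponding bit of $i$ (for $C_1,\dots,C_{\log k}$) or of $j$ (for $C_{\log k+1},\dots,C_{2\log k}$), with ``bit $=1$'' selecting the $t$-cover and ``bit $=0$'' the $f$-cover. I would then verify that every edge is covered: clique and cycle edges are covered since only one vertex is missing from each clique and each gadget uses a genuine cover; the row-to-gadget edges incident on an excluded row vertex are covered because the gadget's bit was chosen to place the matching connector endpoint into the cover; and the only potentially troublesome cross edges $\{a_1^i, a_2^j\}$ and $\{b_1^i, b_2^j\}$ simply do not exist, since $x_{ij} = y_{ij} = 1$. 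This cover has size exactly $4k + 2\ell\log k - 4$, which together with Claim~\ref{claim:exact-mvc-gxy-size} pins the MVC to that value.

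For the TRUE case I would argue by contradiction, assuming a cover $S$ of the minimum conceivable size $4k + 2\ell\log k - 4$. By Claim~\ref{claim:exact-mvc-gxy-size} and a tightness (budget) argument, since the per-part minima already sum to this total, $S$ must take exactly $k-1$ vertices from each row clique --- leaving one excluded vertex $a_1^i, a_2^j, b_1^{i'}, b_2^{j'}$ per clique --- and exactly $\ell$ from each gadget, so every gadget uses one of its two alternating covers. The excluded $a_1^i$ forces, for each bit position $h$, the position-$1$ connector of $C_h$ incident to $a_1^i$ into $S$, which fixes the bit of $C_h$ to bit $h$ of $i$; symmetrically, the excluded $b_1^{i'}$ fixes the bit of $C_h$ (read at its position-$\ell$ end) to bit $h$ of $i'$. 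Since a size-$\ell$ gadget cover carries the same bit at both ends, this forces $i = i'$, and the analogous reasoning on $C_{\log k+1},\dots,C_{2\log k}$ forces $j = j'$. Now the edges $\{a_1^i, a_2^j\}$ and $\{b_1^i, b_2^j\}$ have both endpoints outside $S$, so they must be absent, i.e.\ $x_{ij} = 1$ and $y_{ij} = 1$, contradicting $\sd(x,y) = \mathrm{TRUE}$. Hence no cover of size $4k + 2\ell\log k - 4$ exists, and the MVC is at least $4k + 2\ell\log k - 3$.

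I expect the main obstacle to be the bookkeeping in the structural fact about the two minimum covers of $C_i$ --- namely verifying that the $t$-endpoints are grouped together and the $f$-endpoints are grouped together at the two connector positions (this is exactly where the evenness of $\ell$ and the precise cyclic ordering matter). Everything else is routine once this is established, because the entire ``index-matching'' mechanism of the TRUE case hinges on each gadget carrying a single, consistent bit to both its $A$-side and $B$-side neighbors.
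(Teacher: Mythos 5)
Your proof is correct and takes essentially the same approach as the paper's: the budget-tightness consequence of Claim~\ref{claim:exact-mvc-gxy-size}, the identical explicit alternating-cover construction in the FALSE case, and the same gadget-forcing mechanism (a size-$\ell$ cover of each $2\ell$-cycle must be the all-$t$ or all-$f$ set, so each gadget transmits one consistent bit to both the $A$-side and $B$-side) in the TRUE case. The only organizational difference is that you derive $i=i'$ and $j=j'$ and then contradict via the two missing cross edges, whereas the paper first rules out ($i=i'$ and $j=j'$) using the cross edges and then handles $i\neq i'$ by exhibiting an uncovered edge at $b_1^{i'}$ --- logically the same argument.
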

\begin{proof}
    If $\emph{\sd}(x, y) = \mathrm{TRUE}$ then there are no indices $i,j \in [1,k]$ such that $x_{ij} = y_{ij} = 1$. By Claim~\ref{claim:exact-mvc-gxy-size}, any vertex cover of $G_{x,y}$ must have size at least $4k+2\ell\log k-4$. Let us assume that a vertex cover $U$ of size exactly $4k+2\ell\log k-4$ exists in $G_{x,y}$. By Claim~\ref{claim:exact-mvc-gxy-size}, $U$ must contain at least $k-1$ vertices from each set $A_1, A_2, B_1, B_2$, and because of the restriction on size of $U$, it must contain exactly $k-1$ vertices from each set. Let $a_1^i, a_2^j, b_1^{i'}, b_2^{j'}$ be the vertices not in $U$. Note that we cannot simultaneously have $i=i'$ and $j=j'$ because at least one of the two edges $(a_1^i, a_2^j)$ and $(b_1^i, b_2^j)$ must exist in $G_{x,y}$ and $U$ covers neither of them. 
    
    So let us assume $i \neq i'$.
    The argument for the case $j \neq j'$ is symmetric.
    The edges between $a_1^i$ and $C_1, \dots, C_{\log k}$ are not covered by $a_1^i$, so $U$ must contain $t_h^1$ if bit position $h$ of index $i$ is $1$ and to $f_h^1$ if bit position $h$ of index $i$ is $0$. By Claim~\ref{claim:exact-mvc-gxy-size}, we only have a budget of $\ell$ vertices per cycle, so $U$ must contain $t_h^1, \dots, t_h^\ell$ if bit position $h$ of index $i$ is $1$ and to $f_h^1, \dots, f_h^\ell$ if bit position $h$ of index $i$ is $0$ to cover all the edges in the cycle formed by the vertices in $C_h$. But then $U$ will not cover all the edges incident on $b_1^{i'}$, since $i$ and $i'$ differ on at least one bit position. Therefore, $U$ must have at least one additional vertex in order to be a valid vertex cover, which means that the MVC of $G_{x, y}$ has size at least $4k+2\ell\log k-3$.
    
    If $\emph{\sd}(x, y) = \mathrm{FALSE}$ then there exists $i,j \in [1,k]$ such that $x_{ij} = y_{ij} = 1$. In this case both the edges $(a_1^i, a_2^j)$ and $(b_1^i, b_2^j)$ do not exist in $G_{x,y}$. Therefore the set of vertices $U = (A_1 \setminus \{a_1^i\}) \cup (B_1 \setminus \{b_1^i\}) \cup (A_2 \setminus \{a_2^j\}) \cup (B_2 \setminus \{b_2^j\})$ covers all edges between $A_1, A_2$ and all edges between $B_1, B_2$. 
    
    For all $1 \le h \le \log k$, if bit position $h$ of the binary representation of index $i$ is $1$, we add to $U$ the vertices $t_h^1, t_h^2, \dots, t_h^\ell$ and if bit position $h$ of the binary representation of index $i$ is $0$, we add to $U$ the vertices $f_h^1, f_h^2, \dots, f_h^\ell$. These $\ell\log k$ vertices cover all edges between $a_1^i, b_1^i$ and $C_1, \dots, C_{\log k}$ which were the only edges incident to $A_1, B_1$ that were previously not covered by $U$. Moreover, since we pick every other vertex in each $2\ell$-vertex cycle $C_1, \dots, C_{\log k}$, we also cover all edges incident on $C_1, \dots, C_{\log k}$.

    Symmetrically, for all $1 \le h \le \log k$, if bit position $h$ of the binary representation of index $j$ is $1$, we add to $U$ the vertices $t_{h+\log k}^1, t_{h+\log k}^2, \dots, t_{h+\log k}^\ell$ and if bit position $h$ of the binary representation of index $j$ is $0$, we add to $U$ the vertices $f_{h+\log k}^1, f_{h+\log k}^2, \dots, f_{h+\log k}^\ell$. These $\ell\log k$ vertices cover all edges between $a_2^j, b_2^j$ and $C_{1+\log k}, \dots, C_{2\log k}$ which were the only edges incident to $A_2, B_2$ that were previously not covered by $U$. Moreover, since we pick every other vertex in each $2k$-vertex cycle $C_{1+\log k}, \dots, C_{2\log k}$, we also cover all edges incident on $C_{1+\log k}, \dots, C_{2\log k}$. Hence $U$ is a vertex cover of $G_{x,y}$ of size $4k+2\ell\log k-4$ which is also the MVC of $G_{x,y}$ by Claim~\ref{claim:exact-mvc-gxy-size}.
\end{proof}

\begin{theorem}\label{thm:exact-mvc-cubic-lb}
    For any $0 < \varepsilon < 1/6$, any $\varepsilon$-error randomized Monte-Carlo $r$-round \ktzero{} \congest{} algorithm that computes an MVC or \mxis{} on an $n$-vertex communication graph has message complexity $\tilde{\Omega}(n^3/(1+\log r))$.
\end{theorem}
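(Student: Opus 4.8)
The plan is to instantiate the generic framework of Theorem~\ref{thm:general-lb-framework} with $f=\sd$ (on inputs of length $k^2$) and a predicate $P$ decidable by an exact MVC algorithm, using the graph family $\{G_{x,y}\}$ already constructed. First I would take $P$ to be the predicate ``the minimum vertex cover of $G_{x,y}$ has size at least $4k+2\ell\log k-3$.'' By Lemma~\ref{lem:mvc-exact-lb-pred} this holds iff $\sd(x,y)=\mathrm{TRUE}$, so $\{G_{x,y}\}$ is a candidate $\ell$-separated family w.r.t.\ $f=\sd$ and $P$. An exact MVC algorithm (in which every node learns its membership bit) lets Alice and Bob decide $P$ inside the simulation of Theorem~\ref{thm:general-lb-framework}: each privately counts the cover vertices in the part it simulates, and they exchange these two counts in one extra $O(\log|V|)$-bit message, which is absorbed into the additive overhead already present in the framework.

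Next I would exhibit the partition witnessing the four properties of Definition~\ref{def:lb-graph-family}. Put $V_1 = A_1 \cup A_2 \cup \{t_i^1, f_i^1 : 1 \le i \le 2\log k\}$, $V_\ell = B_1 \cup B_2 \cup \{t_i^\ell, f_i^\ell : 1 \le i \le 2\log k\}$, and $V_j = \{t_i^j, f_i^j : 1 \le i \le 2\log k\}$ for $2 \le j \le \ell-1$; that is, $V_j$ collects the superscript-$j$ ``layer'' of every bit gadget, with the two row cliques appended to the extreme layers. Properties~\ref{lb-fw:alice} and~\ref{lb-fw:bob} hold because the only $x$-dependent edges are the $A_1$--$A_2$ edges (inside $V_1$) and the only $y$-dependent edges are the $B_1$--$B_2$ edges (inside $V_\ell$). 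The crucial Property~\ref{lb-fw:cuts} follows from the zig-zag ordering of each cycle $C_i$: consecutive vertices along the cycle have superscripts differing by at most $1$, so every gadget edge stays within $V_{j-1}\cup V_j \cup V_{j+1}$; moreover $A_1,A_2$ attach only to superscript-$1$ gadget vertices and $B_1,B_2$ only to superscript-$\ell$ ones, so the row vertices introduce no edge leaving their own part. Property~\ref{lb-fw:pred} is exactly the equivalence stated above.

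Then I would balance $n$ against $\ell$. Since $|V| = 4k + 4\ell\log k$, setting $\ell = \Theta(k/\log k)$ (rounded to an even integer, as the cycle construction requires) gives $|V| = \Theta(k)$, hence $k = \Theta(n)$ and $\ell = \Theta(n/\log n)$. Fixing a constant $\delta$ with $\delta+\varepsilon < 1/2$ (possible since $\varepsilon < 1/6$), Razborov's bound gives $CC^{\mathrm{rand}}_{\delta+\varepsilon}(\sd) = \Omega(k^2) = \Omega(n^2)$. Plugging into Theorem~\ref{thm:general-lb-framework},
\[
M = \Omega\left(\frac{\ell-1}{\log|V|}\cdot\frac{CC^{\mathrm{rand}}_{\delta+\varepsilon}(\sd)}{1+\log r} - \frac{\ell\log\ell}{\log|V|}\right) = \Omega\left(\frac{n/\log n}{\log n}\cdot\frac{n^2}{1+\log r}\right) = \tilde{\Omega}\left(\frac{n^3}{1+\log r}\right),
\]
with the subtracted term $\ell\log\ell/\log|V| = O(n/\log n)$ of strictly lower order. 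Finally, for \mxis{} I would use that a vertex set is independent iff its complement is a vertex cover, so the maximum independent set of $G_{x,y}$ has size $|V|$ minus the minimum vertex cover size; an exact \mxis{} algorithm therefore determines the MVC size and decides the very same predicate $P$, yielding the identical bound.

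The step I expect to be most delicate is verifying Property~\ref{lb-fw:cuts}: the entire point of ``stretching'' is that the bit gadgets must route information through $\ell$ layers, each layer forming a separate, vertex-disjoint cut, and this only works because the cycle ordering visits superscripts monotonically up to $\ell$ and then back down, so that every edge lies between adjacent layers. Aligning the partition and the row-vertex attachments exactly with these layers is what makes the $\ell-1$ disjoint cuts appear (and hence what lets the framework multiply the $\Omega(n^2)$ communication bound by a factor of $\ell$); everything else is parameter bookkeeping and a direct appeal to Lemma~\ref{lem:mvc-exact-lb-pred} and Theorem~\ref{thm:general-lb-framework}.
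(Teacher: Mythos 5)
Your proposal is correct and follows essentially the same route as the paper's own proof: the identical layer partition $V_1,\dots,V_\ell$, the same appeal to Lemma~\ref{lem:mvc-exact-lb-pred} for Property~\ref{lb-fw:pred}, the same parameter balancing $k=\Theta(n)$, $\ell=\Theta(n/\log n)$, and the same invocation of Theorem~\ref{thm:general-lb-framework} with Razborov's bound, with the \mxis{} case handled by complementation exactly as in the paper. The only (immaterial) differences are that you phrase the predicate as ``MVC $\ge 4k+2\ell\log k-3$'' rather than ``MVC $=4k+2\ell\log k-4$'' and decide it by having Alice and Bob exchange partial cover counts inside the simulation, whereas the paper converts the MVC algorithm into a $P$-deciding distributed algorithm via an $O(n)$-round, $O(n^2)$-message aggregation; both devices are sound and cost only lower-order terms.
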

\begin{proof}Let $V_1 = A_1 \cup A_2 \cup \{t_i^1, f_i^1 \mid 1 \le i \le 2\log k\}$, $V_\ell = B_1 \cup B_2 \cup \{t_i^\ell, f_i^\ell \mid 1 \le i \le 2\log k\}$ and for $1 < j < \ell$, $V_j = \{t_i^j, f_i^j \mid 1 \le i \le 2\log k\}$. These sets are highlighted in gray in Figure~\ref{fig:MVC-lb}.

    Let $P$ be a predicate which is true for graphs with $4k+4\ell\log k$ vertices that have an MVC of size $4k + 2\ell\log k - 4$. Any \ktzero{} \congest{} algorithm that computes the MVC or \mxis{} of a graph can easily decide $P$ using an extra $O(n)$ rounds and $O(n^2)$ messages, by aggregating the MVC size to a leader and the leader broadcasting the answer to everyone. Note that for any graph with set of vertices $V$, if $C$ is an MVC then $V \setminus C$ is a \mxis{}.
    
    It is easy to see that the lower bound graph family $\{G_{x,y} \mid x,y \in \{0,1\}^{k^2}\}$ described above is an $\ell$-separated lower bound graph family w.r.t. function $f = \sd : \{0,1\}^{k^2} \times \{0,1\}^{k^2} \to \{\mathrm{TRUE}, \mathrm{FALSE}\}$ and predicate $P$. In particular, properties \ref{lb-fw:alice}, \ref{lb-fw:bob}, and \ref{lb-fw:cuts} of  Definition~\ref{def:lb-graph-family} are immediate consequences of the construction and Lemma~\ref{lem:mvc-exact-lb-pred} implies that property \ref{lb-fw:pred} is also satisfied. We substitute $k = n/8$ and $\ell = n/(8\log k)$ so that all graphs in the family have exactly $n$ vertices (for $n$ large enough so that $k$ is a power of $2$ and $\ell$ is an even integer). Therefore, we can apply Theorem~\ref{thm:general-lb-framework} with the parameters $\ell = n/(8\log (n/8))$, and $\delta = 1/6$. We use the well known fact that the $\delta + \varepsilon \le 1/3$-error randomized communication complexity $\sd$ on input size $k^2$ is $\Omega(k^2)$ in order to get that any randomized $\varepsilon$-error \ktzero{} \congest{} algorithm that computes an MVC of an $n$-vertex graph has message complexity \(\Omega \left(\frac{n^3}{(1 + \log r) \cdot \log^2 n}\right).\)
\end{proof}

\subsection{Cubic Lower Bound for Exact MDS}\label{section:exact-mds-lb}

Similar to the previous section, we define an $\ell$-separated lower bound graph family $\{G_{x, y} \mid x \in \{0, 1\}^{k^2}, y \in \{0, 1\}^{k^2}\}$ w.r.t. $f=\sd$ and predicate $P$ which can be decided by computing the MDS. The \sd function is defined as follows: $\sd(x,y) = \mathrm{FALSE}$ iff there exists an index $i$ such that $x_i = y_i = 1$. Note that this construction is an generalization of the MDS construction in \cite{BachrachCDELP19}. More precisely, their construction can be directly obtained from ours by setting $\ell = 2$.
Moreover, this MDS construction is almost exactly the same as the MVC construction described in the previous section, with a few important differences which we now describe.

For positive integer $k$, fix $x, y \in \{0, 1\}^{k^2}$. We will assume $k$ is a power of $2$ so that $\log k$ is an integer. We define the graph $G_{x, y}$ as follows. The vertex set of $G_{x, y}$ is 
$A_1 \cup A_2 \cup B_1 \cup B_2 \cup C_1 \cup C_2 \cup \dots \cup C_{2\log k}$
where $A_1 = \{a_1^i \mid 1 \le i \le k\}$, $A_2 = \{a_2^i \mid 1 \le i \le k\}$,
$B_1 = \{b_1^i \mid 1 \le i \le k\}$, and $B_2 = \{b_2^i \mid 1 \le i \le k\}$ are called the row vertices. And
$C_i = \{t_i^j, f_i^j, u_i^j \mid 1 \le j \le \ell\}$ are called the bit gadget vertices for all $1\le i \le 2\log k$. Therefore, $G_{x,y}$ has $4k+6\ell\log k$ vertices.

We now describe the edges of $G_{x, y}$. There are no edges between the row vertices in $A_1$ (and similarly $A_2, B_1, B_2$). Assuming $\ell$ is even, the bit gadget vertices $C_i$ form a cycle with the following order:
$$f_i^1, u_i^2, t_i^2, f_i^3, u_i^4, t_i^4, f_i^5, \dots, f_i^{\ell-1}, u_i^\ell, t_i^\ell, f_i^\ell, u_i^{\ell-1}, t_i^{\ell-1}, \dots, u_i^5, t_i^5, f_i^4, u_i^3, t_i^3, f_i^2, u_i^1, t_i^1, f_i^1$$ This cycle is also illustrated in  Figure~\ref{fig:MDS-lb-bit-gadget}.

Each vertex $a_1^i \in A_1$ is connected to bit gadgets $C_1, \dots, C_{\log k}$ according to the binary representation of the index $i$. In particular, $a_1^i$ is connected to $t_h^1$ if bit position $h$ of index $i$ is $1$ and to $f_h^1$ if bit position $h$ of index $i$ is $0$. Similarly, each vertex $b_1^i \in B_1$ is connected to $t_h^\ell$ if the bit $h$ of index $i$ is $1$ and to $f_h^\ell$ if the bit position $h$ of index $i$ is $0$. The vertices in $A_2$ and $B_2$ are connected to $C_{\log k + 1}, \dots, C_{2\log k}$ in a symmetric manner. To be explicit, each vertex $a_2^j \in A_2$ is connected to $t_{h + \log k}^1$ if bit position $h$ of index $j$ is $1$ and to $f_{h + \log k}^1$ if bit position $h$ of index $j$ is $0$. Similarly, each vertex $b_2^j \in B_2$ is connected to $t_{h + \log k}^\ell$ if bit position $h$ of index $j$ is $1$ and to $f_{h + \log k}^\ell$ if bit position $h$ of index $j$ is $0$.

This completes the ``fixed'' edges in the graph, i.e., the edges that do not depend on bit vectors $x$ and $y$.
The edges between $A_1$ and $A_2$ depend on $x$ as follows: $\{a_1^i, a_2^j\}$ is an edge iff $x_{ij} = 1$.
The edges between $B_1$ and $B_2$ depend on $y$ as follows: $\{b_1^i, b_2^j\}$ is an edge iff $y_{ij} = 1$.

\begin{figure}
    \centering
    \includegraphics[width=\linewidth]{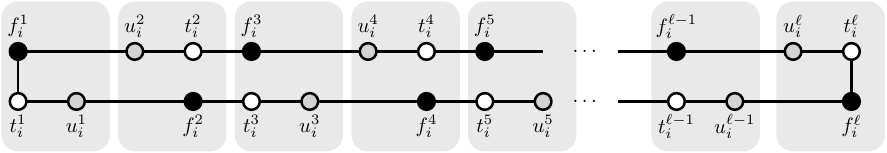}
    \caption{Illustration of one bit gadget $C_i$ in a graph from the $\ell$-separated lower bound graph family we defined to show a cubic message lower bound for exact MDS. The rest of the construction is visually similar to Figure~\ref{fig:MVC-lb}, so we do not illustrate those parts again. But there are important differences that are described in detail at the beginning of Section~\ref{section:exact-mds-lb}}
    \label{fig:MDS-lb-bit-gadget}
\end{figure}

\begin{claim} \label{claim:exact-mds-gxy-size}
    Any dominating set of $G_{x,y}$ must contain at least $\ell$ vertices from each set $C_i$, $1 \le i \le 2\log k$. In other words, a dominating set of $G_{x,y}$ has at least $2\ell \log k$ vertices.
\end{claim}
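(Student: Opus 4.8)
The plan is to prove the per-gadget bound $|D \cap C_i| \ge \ell$ for every dominating set $D$ and every $1 \le i \le 2\log k$, and then sum over the $2\log k$ pairwise-disjoint gadgets to obtain $|D| \ge 2\ell\log k$. Since each $C_i$ is a cycle on $3\ell$ vertices, its domination number is exactly $\ell$; however, this fact alone is not sufficient, because the four vertices $t_i^1, f_i^1, t_i^\ell, f_i^\ell$ carry edges leaving $C_i$ (to the row vertices) and could in principle be dominated ``from outside.'' So the right strategy is to exhibit $\ell$ vertices of $C_i$ whose closed neighborhoods lie entirely inside $C_i$ and are pairwise disjoint; each such closed neighborhood must then be hit by $D$, and disjointness forces $\ell$ distinct members of $D \cap C_i$.

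First I would single out the $\ell$ vertices $u_i^1, \dots, u_i^\ell$ as the centers. The key point is that only $t_i^1, f_i^1, t_i^\ell, f_i^\ell$ are incident to edges leaving $C_i$; every $u_i^j$ is an interior vertex whose only neighbors in $G_{x,y}$ are its two neighbors along the cycle. Reading off the explicitly-given cyclic order, each $u_i^j$ is flanked by $t_i^j$ on one side and an $f$-vertex on the other, so the closed neighborhood $N[u_i^j] = \{f\text{-vertex},\, u_i^j,\, t_i^j\}$ is a block of three consecutive cycle vertices. I would then verify that in the given order the $u$-vertices occur exactly three apart (equivalently, indexing the $3\ell$ cycle positions as $0, 1, \dots, 3\ell-1$ starting from $f_i^1$, the $u$-vertices occupy precisely the positions $\equiv 1 \pmod 3$), so that $N[u_i^1], \dots, N[u_i^\ell]$ tile $V(C_i)$ into $\ell$ disjoint triples.

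With this structure in hand the conclusion is immediate: for each $j$, since $u_i^j$ has no neighbor outside $C_i$ and $N[u_i^j] \subseteq C_i$, the dominating set $D$ must contain at least one vertex of $N[u_i^j]$; as the $\ell$ triples are pairwise disjoint, this yields $|D \cap C_i| \ge \ell$. Summing over the $2\log k$ vertex-disjoint gadgets gives $|D| \ge \sum_{i=1}^{2\log k} |D \cap C_i| \ge 2\ell\log k$.

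I expect the only genuine work to be the bookkeeping in the second paragraph: confirming directly from the intricate permutation defining the cycle that every $u_i^j$ is adjacent to $t_i^j$ and to an $f$-vertex, and that consecutive $u$-vertices are spaced three apart so the triples genuinely partition $C_i$. Equivalently, one checks that the four boundary vertices $t_i^1, f_i^1, t_i^\ell, f_i^\ell$ sit at positions $\equiv 0$ or $2 \pmod 3$ and hence never serve as the center of one of these triples, which is exactly what guarantees that each chosen center is interior. Everything else is routine.
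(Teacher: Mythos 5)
Your proof is correct and follows essentially the same approach as the paper's: both arguments observe that each $u_i^j$ has its closed neighborhood $\{u_i^j, t_i^j, f\text{-vertex}\}$ entirely inside $C_i$, and that these $\ell$ triples are pairwise disjoint, so dominating the $u$-vertices forces $\ell$ distinct vertices per gadget. The paper states this in two sentences where you verify the cycle bookkeeping explicitly, but the underlying argument is identical.
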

\begin{proof}
    For each $C_i$, all the $\ell$ vertices with label $u$ have a unique neighbor with label $t$ and a unique neighbor with label $f$. In order to dominate all vertices with label $u$, we need to pick one of these three vertices. Therefore, each dominating of $G_{x,y}$ has at least $2\ell \log k$ vertices and must contain at least $\ell$ vertices from each $C_i$.
\end{proof}

\begin{lemma}\label{lem:mds-exact-lb-pred}
   For $x, y \in \{0,1\}^{k^2}$, if $\emph{\sd}(x, y) = \mathrm{TRUE}$ then the MDS of $G_{x, y}$ has size at least $2\ell\log k + 3$, and if $\emph{\sd}(x, y) = \mathrm{FALSE}$ then the MDS of $G_{x, y}$ has size at most $2\ell\log k + 2$.
\end{lemma}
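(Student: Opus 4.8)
The plan is to prove the two directions separately, in both cases leaning on Claim~\ref{claim:exact-mds-gxy-size} (which already forces at least $\ell$ vertices in each gadget $C_h$, hence at least $2\ell\log k$ overall) together with a structural understanding of how a gadget can be dominated with \emph{exactly} $\ell$ vertices. The first thing I would establish is this structural fact: since the $\ell$ triples $\{f_h^j,u_h^j,t_h^j\}$ (the closed neighborhoods of the $u$-vertices) partition the $3\ell$-cycle $C_h$, any $\ell$-vertex dominating set of $C_h$ picks exactly one vertex per triple; and tracing the ``connecting'' cycle edges (which always join the $t$-vertex of one triple to the $f$-vertex of the next) shows that picking a $t$ forces its predecessor triple to pick $t$ and picking an $f$ forces its successor to pick $f$. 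Around a cycle this leaves only three $\ell$-vertex dominating sets: all-$t$, all-$f$, or all-$u$. The point is that all-$t$ places the entry vertex $t_h^1$ and exit vertex $t_h^\ell$ in the set, all-$f$ places $f_h^1,f_h^\ell$, and all-$u$ places no entry/exit vertex at all; so an on-budget gadget dominates a row vertex only through an entry/exit vertex whose label matches the gadget's type.

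For the $\sd(x,y)=\mathrm{FALSE}$ direction I would exhibit an explicit dominating set of size $2\ell\log k+2$. Fix $i,j$ with $x_{ij}=y_{ij}=1$. In the bank $C_1,\dots,C_{\log k}$ I pick, in gadget $h$, the all-$t$ set if bit $h$ of $i$ is $0$ and the all-$f$ set if it is $1$ (so the selected entry/exit label at bit $h$ is the complement of $i$'s bit); by the construction this dominates every $a_1^{i'}$ and $b_1^{i'}$ except exactly those with $i'=i$, and symmetrically the bank $C_{\log k+1},\dots,C_{2\log k}$ is chosen to leave undominated exactly $a_2^{j}$ and $b_2^{j}$. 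This uses $2\ell\log k$ vertices and leaves only $a_1^i,b_1^i,a_2^j,b_2^j$ undominated. Adding $a_1^i$ (adjacent to $a_2^j$ since $x_{ij}=1$) and $b_1^i$ (adjacent to $b_2^j$ since $y_{ij}=1$) dominates all four with two further vertices, giving a dominating set of size exactly $2\ell\log k+2$.

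For the $\sd(x,y)=\mathrm{TRUE}$ direction I would argue by contradiction from a dominating set $D$ with $|D|\le 2\ell\log k+2$. By the Claim, $D$ carries at most two ``extra'' resources beyond the forced $2\ell\log k$, counted as row vertices in $D$ plus gadget over-budget. Writing $Z_{A_1},Z_{B_1},Z_{A_2},Z_{B_2}$ for the row vertices left undominated by entry/exit vertices, the structural fact gives that, for on-budget banks, $Z_{A_1}=Z_{B_1}=\{i_1\}$ and $Z_{A_2}=Z_{B_2}=\{j_2\}$ are singletons, with the same index shared across the two rows of each bank. Because the row sets are independent, a row vertex of $D$ can help only its own corner and the diagonally opposite corner (through the $x$- or $y$-edges); so covering $a_1^{i_1},a_2^{j_2}$ by one vertex requires $x_{i_1 j_2}=1$, covering $b_1^{i_1},b_2^{j_2}$ by one vertex requires $y_{i_1 j_2}=1$, and any other assignment needs strictly more than two extra vertices. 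Hence meeting the budget forces $x_{i_1 j_2}=y_{i_1 j_2}=1$, making $(i_1,j_2)$ a common index and contradicting $\sd(x,y)=\mathrm{TRUE}$.

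The step I expect to be the main obstacle is making this last argument airtight across \emph{non-standard} gadgets, i.e.\ ruling out that a cleverer use of the two extra resources evades the contradiction. I would handle it via the deficiency identity $|Z_{A_1}|=\prod_h(\#\text{uncovered entry-values at bit }h)$, with each factor in $\{0,1,2\}$: an all-$u$ gadget is free but doubles a deficiency, while an over-budget gadget can shrink or desynchronize a deficiency only by spending from the budget of two. A short case check then shows that every budget-$2$ configuration either leaves some corner undominated or, exactly as above, forces a common index $(i,j)$ with $x_{ij}=y_{ij}=1$; combined with the fact that over-budget gadgets must still dominate their own cycle (so the Claim's bound is never beaten), this closes all cases and completes the lemma.
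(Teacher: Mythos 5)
Your plan follows the paper's proof quite closely: Claim~\ref{claim:exact-mds-gxy-size} plus a characterization of the $\ell$-vertex dominating sets of each gadget cycle as all-$t$, all-$f$, or all-$u$; an explicit set of size $2\ell\log k+2$ in the FALSE direction (your construction is identical to the paper's and is correct); and a budget-based case analysis in the TRUE direction. Two things you do are actually nicer than the paper: your propagation argument rigorously proves the ``three choices'' fact that the paper only asserts, and the deficiency product $|Z_{A_1}|=\prod_h(\#\text{uncovered entry-values at bit }h)$ is a clean way to fold in all-$u$ gadgets. Your main case (two extra row vertices, every gadget on budget) reproduces the paper's Case 1 correctly, including the key point that the two undominated indices per bank coincide across the two rows, so that one extra vertex per side forces $x_{i_1j_2}=y_{i_1j_2}=1$. (A small inaccuracy: the closed neighborhood of $u_h^j$ is $\{f_h^{j'},u_h^j,t_h^j\}$ for a \emph{different} index $j'$, not $\{f_h^j,u_h^j,t_h^j\}$; this is harmless, since your argument only uses the repeating $(f,u,t)$ pattern and which triples contain the entry/exit vertices.)

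The gap is exactly where you flagged it: the configurations that spend extra budget \emph{inside} gadgets --- the paper's Cases 2 and 3, which are the bulk of its proof --- are never carried out; ``a short case check then shows\dots'' is asserted, not done. The crucial unproven ingredient is the quantitative form of your statement that an over-budget gadget can ``shrink or desynchronize a deficiency only by spending from the budget'': you need (i) that covering both entry vertices $t_h^1,f_h^1$ already requires going over budget (this does follow from your structural fact, since none of all-$t$/all-$f$/all-$u$ contains both), and (ii) the real crux, that a gadget with only $\ell+1$ vertices cannot contain all four of $t_h^1,f_h^1,t_h^\ell,f_h^\ell$ while still dominating its own cycle. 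Without (ii), the case of one row vertex plus one over-budget gadget does not close (the over-budget gadget might kill both the $A_1$ and $B_1$ deficiencies simultaneously, leaving only the two bank-2 corners for the single remaining row vertex), and likewise the zero-row-vertex cases. The good news is that (ii) is provable with your own machinery: the four entry/exit vertices lie in four distinct triples; the triples containing $t_h^1$ and $f_h^1$ are adjacent in the cyclic order of triples, as are those containing $t_h^\ell$ and $f_h^\ell$, so these two pairs cut the cycle of triples into two disjoint arcs; in each arc an $f$-selection propagates forward while a $t$-selection propagates backward, so each arc needs its own doubled triple, but $\ell+1$ vertices supply only one. As written, though, the lower-bound half of the lemma is incomplete, and finishing it amounts to doing essentially what the paper does in its Cases 2 and 3.
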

\begin{proof}
    If $\emph{\sd}(x, y) = \mathrm{TRUE}$ then there are no indices $i,j \in [1,k]$ such that $x_{ij} = y_{ij} = 1$. Let us assume for the sake of contradiction that in this case, $G_{x,y}$ has a dominating set $U$ of size $2\ell\log k + 2$. By Claim~\ref{claim:exact-mds-gxy-size}, $2\ell\log k$ of these vertices are required just to dominate the bit gadget vertices in $C_1, \dots, C_{2\log k}$. Therefore using only two ``extra'' vertices, $U$ needs to dominate all vertices in $A_1, A_2, B_1, B_2$. This means that for all but two sets in $C_1, \dots, C_{2\log k}$, $U$ must contain exactly $\ell$ vertices. Therefore, there are three cases based on the number of extra vertices in $U$ that belong to the bit gadgets $C_1, \dots, C_{2\log k}$.
    
    {\bf Case 1:} (no extra vertices) Here, $U$ contains exactly $\ell$ vertices from each set $C_1, \dots, C_{2\log k}$. For a cycle $C_h$, if we have a budget of $\ell$ vertices to put in $U$, we have three choices: we can pick all vertices labeled $u$, or all vertices labeled $t$, or all vertices labeled $f$. Vertices labeled $u$ do not dominate any vertices outside $C_h$, so we assume that for $C_h$ we pick either all vertices labeled $t$ or all vertices labeled $f$. For all values $h$ such that $1 \le h \le \log k$, if we arbitrarily pick either all vertices labeled $t$ or all vertices labeled $f$ in $C_h$, we will cover all vertices in $A_1 \setminus \{a_1^i\}$ and $B_1 \setminus \{b_1^i\}$, where $i$ is the index whose binary representation has $0$ in bit position $h$ where we picked all vertices labeled $t$ in $C_h$ and $1$ in bit position $h$ where we picked all vertices labeled $f$ in $C_h$. Similarly, we will cover all vertices in $A_2 \setminus \{a_2^j\}$ and $B_2 \setminus \{b_2^j\}$ for some index $1 \le j \le k$. Since there are no indices $i,j \in [1,k]$ such that $x_{ij} = y_{ij} = 1$, one of the two edges $(a_1^i, a_2^j)$ and $(b_1^i, b_2^j)$ does not exist in $G_{x,y}$. We need to dominate these four vertices, by picking at most two vertices in $A_1, A_2, B_1, B_2$ which is impossible.

    {\bf Case 2:} (one extra vertex) Here, $U$ contains $\ell+1$ vertices from some set $C_h$, $1 \le h \le 2\log k$. Let us first assume that $1 \le h \le \log k$, the argument for $h > \log k$ is symmetric. We can use this vertex to include both $t_h^1, f_h^1$ in $U$, thereby dominating all vertices in $A_1$. If we try to dominate the remaining vertices in $C_h$ with $\ell-1$ vertices, it turns out that we still need to choose either $t_h^\ell$ or $f_h^\ell$, but not both, to be a part of $U$. This leaves three undominated vertices $b_1^{i}, a_2^j, b_2^j$ for some indices $i, j$. Similarly, if we include both $t_h^\ell, f_h^\ell$ in $U$ to dominate all vertices in $B_1$, it still leaves three undominated vertices $a_1^{i}, a_2^j, b_2^j$ for some indices $i, j$. We need to add at least two additional vertices in $U$ to dominate these three vertices, which exceeds our size bound. Another possibility is to insert the extra vertex in such a way that $U$ contains $t_h^1, f_h^\ell$ (or $t_h^\ell, f_h^1$) and dominates all vertices in $C_h$. This means that the undominated vertices are $a_1^i, b_1^{i'}, a_2^j, b_2^j$, where $i \neq i'$ and both the edges $(a_1^i, a_2^j)$ and $(b_1^{i'}, b_2^j)$ exist in $G_{x,y}$. But even in this case, we need at least two more vertices in $U$ to dominate these four vertices, which exceeds our size bound. Therefore, even in this case, it is not possible for $U$ to dominate all vertices in $A_1, A_2, B_1, B_2$.
    
    {\bf Case 3:} (two extra vertices) In this case, both the extra vertices cannot belong to the same set $C_h$, as a single $C_h$ can only dominate either $A_1,B_1$ or $A_2,B_2$. For the same reason, one extra vertex must belong to $C_h$ for $1 \le h \le \log k$, and the other must belong to $C_h$ for $h > \log k$. A similar analysis as Case 2 shows us that $U$ cannot dominate all vertices in $A_1, A_2, B_1, B_2$, which is a contradiction. Therefore, the MDS of $G_{x, y}$ has size at least $2\ell\log k + 3$.
    
    If $\emph{\sd}(x, y) = \mathrm{FALSE}$ then there exists $i,j \in [1,k]$ such that $x_{ij} = y_{ij} = 1$. In this case both the edges $(a_1^i, a_2^j)$ and $(b_1^i, b_2^j)$ exist in $G_{x,y}$. In this case we construct a dominating set $U$ as follows: We first add to $U$, the vertices $a_1^i$ and $b_1^i$, to specifically dominate the vertices $a_1^i, a_2^j, b_1^i, b_2^j$.
    
    For all $1 \le h \le \log k$, if bit position $h$ of the binary representation of index $i$ is $0$, we add to $U$ the vertices $t_h^1, t_h^2, \dots, t_h^\ell$ and if bit position $h$ of the binary representation of index $i$ is $1$, we add to $U$ the vertices $f_h^1, f_h^2, \dots, f_h^\ell$. These $\ell\log k$ vertices dominate all vertices in the sets $C_1, \dots, C_{\log k}$. Moreover, these vertices also dominate the vertices in sets $A_1 \setminus \{a_1^i\}$ and $B_1 \setminus \{b_1^i\}$, since all of their indices differ from $i$ in at least one bit position.

    Symmetrically, for all $1 \le h \le \log k$, if bit position $h$ of the binary representation of index $j$ is $0$, we add to $U$ the vertices $t_{h+\log k}^1, t_{h+\log k}^2, \dots, t_{h+\log k}^\ell$ and if bit position $h$ of the binary representation of index $j$ is $1$, we add to $U$ the vertices $f_{h+\log k}^1, f_{h+\log k}^2, \dots, f_{h+\log k}^\ell$. By similar arguments as before, these $\ell\log k$ vertices dominate all vertices in the sets $A_2 \setminus \{a_2^j\}$, $B_2 \setminus \{b_2^j\}$, $C_{1 + \log k}, \dots, C_{2\log k}$.
    
    Hence $U$ is a dominating set of $G_{x,y}$ of size $2\ell\log k + 2$ which implies that the MDS of $G_{x,y}$ has size at most $2\ell\log k + 2$.
\end{proof}

\begin{theorem}\label{thm:exact-mds-cubic-lb}
    For any $0 < \varepsilon < 1/6$, any $\varepsilon$-error randomized Monte-Carlo $r$-round \ktzero{} \congest{} algorithm that computes an MDS on an $n$-vertex communication graph has message complexity $\tilde{\Omega}(n^3/(1+\log r))$.
\end{theorem}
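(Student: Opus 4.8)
The plan is to mirror the proof of Theorem~\ref{thm:exact-mvc-cubic-lb} almost verbatim, instantiating the generic reduction of Theorem~\ref{thm:general-lb-framework} with the MDS lower bound graph family constructed above and the size separation established in Lemma~\ref{lem:mds-exact-lb-pred}. All of the problem-specific work already lives in that lemma, so what remains is to exhibit the right vertex partition, verify the four properties of Definition~\ref{def:lb-graph-family}, and then substitute parameters.

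First I would partition $V$ along the ladder index $j$ (the superscript of the bit-gadget vertices): set $V_1 = A_1 \cup A_2 \cup \{t_i^1, f_i^1, u_i^1 \mid 1 \le i \le 2\log k\}$ and $V_\ell = B_1 \cup B_2 \cup \{t_i^\ell, f_i^\ell, u_i^\ell \mid 1 \le i \le 2\log k\}$, and for each $1 < p < \ell$ set $V_p = \{t_i^p, f_i^p, u_i^p \mid 1 \le i \le 2\log k\}$. Properties~\ref{lb-fw:alice} and~\ref{lb-fw:bob} are immediate, since the only $x$-dependent edges are the $A_1$--$A_2$ edges (contained in $V_1$) and the only $y$-dependent edges are the $B_1$--$B_2$ edges (contained in $V_\ell$).

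The one genuinely new check is Property~\ref{lb-fw:cuts}. The $A$-rows attach only to superscript-$1$ gadget vertices and the $B$-rows only to superscript-$\ell$ gadget vertices, so those edges stay inside $V_1$ and $V_\ell$. For the remaining edges I would trace the prescribed cyclic order of each $C_i$, namely $f_i^1, u_i^2, t_i^2, f_i^3, u_i^4, t_i^4, \ldots$ (Figure~\ref{fig:MDS-lb-bit-gadget}), and confirm that every consecutive pair has superscripts that are either equal or differ by exactly one; hence each $V_p$ neighbors only $V_{p-1} \cup V_p \cup V_{p+1}$, so the family is $\ell$-separated. For Property~\ref{lb-fw:pred} I take $P$ to be the predicate ``the graph has $4k + 6\ell\log k$ vertices and an MDS of size at most $2\ell\log k + 2$,'' which by Lemma~\ref{lem:mds-exact-lb-pred} holds exactly when $\sd(x,y) = \mathrm{FALSE}$; since the randomized communication complexity of a Boolean function is unchanged by negating its output, the $\Omega(k^2)$ bound for \sd{} transfers to this predicate.

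Finally I would observe that any algorithm computing an MDS decides $P$ with only $O(n)$ extra rounds and $O(n^2)$ extra messages (aggregate the computed MDS size to a leader, compare against the threshold $2\ell\log k + 2$, and broadcast one bit), which is negligible against the target bound. Substituting $k = \Theta(n)$ (a power of two) and $\ell = \Theta(n/\log n)$ (even, chosen so that $4k + 6\ell\log k = n$ for large $n$) and applying Theorem~\ref{thm:general-lb-framework} with $\delta = 1/6$ and the $\Omega(k^2) = \Omega(n^2)$ randomized lower bound for \sd{}, the leading term $\frac{\ell-1}{\log n}\cdot\frac{\Omega(n^2)}{1+\log r}$ gives $\Omega\!\left(\frac{n^3}{(1+\log r)\log^2 n}\right)$, while the $\frac{\ell\log\ell}{\log n} = O(n/\log n)$ correction is absorbed. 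This yields the claimed $\tilde\Omega(n^3/(1+\log r))$ message complexity. I expect the cut-structure verification (Property~\ref{lb-fw:cuts}) to be the only step requiring real care, since the MDS gadget's cycle is more intricate than the MVC one because of the added $u$-vertices; everything else is bookkeeping parallel to the MVC proof.
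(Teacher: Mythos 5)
Your proposal is correct and follows essentially the same route as the paper's proof: the identical partition into $V_1, \dots, V_\ell$ along the gadget superscripts, the same predicate $P$ (MDS size at most $2\ell\log k + 2$, decidable with $O(n)$ extra rounds and $O(n^2)$ extra messages via aggregation), an appeal to Lemma~\ref{lem:mds-exact-lb-pred} for Property~\ref{lb-fw:pred}, and the same parameter substitution $k = \Theta(n)$, $\ell = \Theta(n/\log n)$, $\delta = 1/6$ into Theorem~\ref{thm:general-lb-framework} with the $\Omega(k^2)$ bound for \sd. Your explicit trace of the cycle order to verify Property~\ref{lb-fw:cuts}, and your remark that negating the output of \sd{} leaves its randomized communication complexity unchanged (needed since $P$ holds iff $\sd(x,y) = \mathrm{FALSE}$), are careful treatments of details the paper states as immediate.
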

\begin{proof}Let $V_1 = A_1 \cup A_2 \cup \{u_i^1, t_i^1, f_i^1 \mid 1 \le i \le 2\log k\}$, $V_\ell = B_1 \cup B_2 \cup \{u_i^\ell, t_i^\ell, f_i^\ell \mid 1 \le i \le 2\log k\}$ and for $1 < j < \ell$, $V_j = \{u_i^j, t_i^j, f_i^j \mid 1 \le i \le 2\log k\}$.

    Let $P$ be a predicate which is true only for graphs with $4k+6\ell\log k$ vertices that have an MDS size at most $2\ell\log k +2$. Any \ktzero{} \congest{} algorithm that computes the MDS of a graph can easily decide $P$ using an extra $O(n)$ rounds and $O(n^2)$ messages, by aggregating the dominating set size to a leader vertex and the leader broadcasting the answer to everyone.
    
    It is easy to see that the lower bound graph family $\{G_{x,y} \mid x,y \in \{0,1\}^{k^2}\}$ described above is an $\ell$-separated lower bound graph family w.r.t. function $f = \sd : \{0,1\}^{k^2} \times \{0,1\}^{k^2} \to \{\mathrm{TRUE}, \mathrm{FALSE}\}$ and predicate $P$. In particular, properties \ref{lb-fw:alice}, \ref{lb-fw:bob}, and \ref{lb-fw:cuts} of  Definition~\ref{def:lb-graph-family} are immediate consequences of the construction and Lemma~\ref{lem:mds-exact-lb-pred} implies that property \ref{lb-fw:pred} is also satisfied. We substitute $k = n/8$ and $\ell = n/(12\log k)$ so that all graphs in the family have exactly $n$ vertices (for $n$ large enough so that $k$ is a power of $2$ and $\ell$ is an even integer). Therefore, we can apply Theorem~\ref{thm:general-lb-framework} with the parameters $\ell = n/(12\log (n/8))$, and $\delta = 1/6$. We use the well known fact that the $\delta + \varepsilon \le 1/3$-error randomized communication complexity $\sd$ on input size $k^2$ is $\Omega(k^2)$ in order to get that any randomized $\varepsilon$-error \ktzero{} \congest{} algorithm that computes an MDS of an $n$-vertex graph has message complexity \(\Omega \left(\frac{n^3}{(1 + \log r) \cdot \log^2 n}\right).\)
\end{proof}

\begin{remark}
\label{remark:maxm}
One reason why this method does not give a near-cubic lower bound for exact \mxm{} is that we first have to show existence of an $\ell$-separated lower bound graph family with $n$-vertices for $\ell = n/\polylog(n)$ and $k = n^2/\polylog(n)$. However, with this setting of parameters, property \ref{lb-fw:cuts} of Definition~\ref{def:lb-graph-family}, implies that there must exist an index $1 \le i < \ell$ such that the cut $(V_A, V_B)$, where $V_A = V_1,\dots, V_{i}$ and $V_B = V_{i+1}, \dots, V_\ell$, has $\polylog(n)$ edges that cross it. (Otherwise, if every such cut has $\omega(\polylog n)$ edges, it implies $|V_i \cup V_{i+1}| = \omega(\polylog n)$, which is not possible as the number of vertices in the graph becomes larger than $n$.)
And thus, for this family of lower bound graphs, the framework of \cite{Censor-HillelKP17} would imply that any randomized algorithm that computes an exact solution to \mxm{} requires $\tilde{\Omega}(n^2)$ rounds. This is not possible as \cite{KitimuraIzumiITIS2022} show a randomized \ktzero{} \congest{} algorithm that computes an exact \mxm{} in $O(n^{1.5})$ rounds. So the family of lower bound graphs required to show near-cubic lower bound for exact \mxm{} does not exist.
\end{remark} \section{Tight Quadratic Bounds for Approximate Computations}
\label{section:KTZeroLB}

We start this section by proving $\tilde{O}(n^2)$ message complexity upper bounds for $(1 \pm \epsilon)$-approximations for all four problems, \mxm, MVC, MDS, and \mxis. These results serve as a contrast to the cubic lower bounds for exact computation, shown in the previous section.
We then show that these upper bounds are tight, by showing that $\tilde{\Omega}(n^2)$ messages are required for {\em constant-factor} approximation algorithms for all four problems. 
For \mxm{} and MVC, these bounds hold for any constant-factor approximation, whereas for MDS and \mxis{} they hold for any approximation factor better than $5/4$ and $1/2$ respectively. 
These lower bounds hold even in the \local{} model (in which messages can be arbitrarily large) and they apply not just to polynomial-round algorithms, but to algorithms that take arbitrarily many rounds. 

\subsection{Quadratic Upper Bounds for Approximate Computations}

\textbf{Notation:} For any graph $G$, let $\alpha(G)$ denote the size of the largest independent set in $G$. For any node $v$ in $G$ and integer $r \ge 0$, let $B_r(v)$ denote the set of all nodes in $G$ at distance at most $r$ from $v$.

Consider the following sequential algorithm, called the ``ball growing'' algorithm in \cite{GhaffariKMSTOC2017} that gives a $1/(1+\epsilon)$-approximate solution for \mxis{}, for any constant $\epsilon > 0$. 
Let $I$ denote the solution constructed by the algorithm; initialize $I$ to $\emptyset$.
Pick an arbitrary vertex $v_1$ and find a smallest radius $r_1$ such that 
$$\alpha(G[B_{r_1+1}(v_1)]) \le  (1 + \epsilon) \cdot \alpha(G[B_{r_1}(v_1)]).$$ 
Add a maximum-sized independent set of $G[B_{r_1}(v_1)]$ to $I$ and delete $G[B_{r_1+1}(v_1)]$ from the graph.
This completes the first iteration of the algorithm.
For the second iteration, find an arbitrary vertex $v_2$ in the graph that remains and repeat an iteration of ``ball growing'' until a radius $r_2$ is found. Continue these iterations until the graph becomes empty. Note that each radius $r_i = O(\log n/\log (1+\epsilon))$ and the constructed solution $I$ is an independent set of $G$. Furthermore, a simple argument (see \cite{GhaffariKMSTOC2017}) shows that $|I| \ge 1/(1+\epsilon) \cdot \alpha(G)$. 
Note that this is not a polynomial-time algorithm because each $v_i$ needs to compute the exact maximum-sized independent set in $B_r(v_i)$ for different values of $r$.

\begin{lemma}
There is a deterministic $1/(1+\epsilon)$-approximation algorithm for \mxis in the \ktzero\ \congest{} model that uses $\tilde{O}(n^2/\epsilon)$ messages and runs in $O(\poly(m + n))$ rounds.
\end{lemma}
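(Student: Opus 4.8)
The plan is to implement the sequential ``ball growing'' algorithm described above in the \ktzero\ \congest\ model, using one persistent coordination structure together with incremental, carefully-charged gathering of ball topologies at their centers. First I would run a one-time preprocessing phase: since we are in the \ktzero\ model, every node exchanges its ID with each neighbor (costing $O(m)$ messages) so that each node learns its incident edges, and we elect a leader and build a rooted spanning tree $\mathcal{T}$ of the \emph{whole} network. Deterministic leader election and tree construction cost $\tilde{O}(m) = \tilde{O}(n^2)$ messages in \ktzero\ \congest. The crucial point is that $\mathcal{T}$ spans all $n$ nodes and is never torn down: a node that is logically ``deleted'' from the \mxis\ instance still relays coordination messages, so $\mathcal{T}$ stays connected and usable throughout.

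The leader then drives the iterations sequentially. At the start of iteration $i$ a convergecast over $\mathcal{T}$ finds the minimum-ID still-active vertex $v_i$ (playing the role of the ``arbitrary'' vertex of the sequential algorithm) and routes an activation token to it; each such step costs $O(n)$ messages, and since every iteration deletes at least $v_i$ there are at most $n$ iterations, for $O(n^2)$ coordination messages total. Once activated, $v_i$ grows a ball by BFS through the induced subgraph on the currently-active vertices, one layer at a time, incrementally gathering the induced topology of $B_r(v_i)$ at itself by routing each newly discovered vertex together with its incident edges toward $v_i$ along the BFS tree. Because nodes have unbounded computation, $v_i$ evaluates $\alpha(G[B_r(v_i)])$ and $\alpha(G[B_{r+1}(v_i)])$ exactly and detects the smallest radius $r_i$ with $\alpha(G[B_{r_i+1}(v_i)]) \le (1+\epsilon)\,\alpha(G[B_{r_i}(v_i)])$. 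It then computes a maximum independent set of $G[B_{r_i}(v_i)]$, notifies those vertices that they join $I$, marks all of $B_{r_i+1}(v_i)$ as deleted, and returns control to the leader. Correctness of the independence of $I$ and of the $1/(1+\epsilon)$-approximation guarantee is inherited verbatim from the sequential analysis of \cite{GhaffariKMSTOC2017}, since the distributed execution reproduces exactly the same choices.

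For the message bound, the key accounting observation is that the deleted balls $B_{r_i+1}(v_i)$ consume pairwise-disjoint vertex sets, so each edge of $G$ is explored and routed toward a center in at most one iteration (after which at least one of its endpoints is deleted, and it can only trigger an $O(1)$-message ``deleted'' notification thereafter). Every ball has radius at most $r_i = O(\log n / \log(1+\epsilon)) = O(\log n/\epsilon)$, so routing one edge's information to its center costs $O(\log n/\epsilon)$ messages. Writing $E_i$ for the edges routed in iteration $i$, disjointness gives $\sum_i |E_i| = O(m)$, whence the gathering cost is $\sum_i O(|E_i|\cdot r_i) \le O(\log n/\epsilon)\cdot \sum_i |E_i| = \tilde{O}(n^2/\epsilon)$, which dominates the $\tilde{O}(n^2)$ preprocessing and coordination terms. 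Finally, each BFS exploration and each convergecast takes $\poly(n)$ rounds, local MaxIS computation uses no communication rounds, and there are at most $n$ iterations, giving the claimed $O(\poly(m+n))$ round bound.

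I expect the main obstacle to be the message-complexity accounting rather than the correctness (which is immediate from \cite{GhaffariKMSTOC2017}). Specifically, one must argue carefully that vertex-disjointness of the deleted balls lets us charge each edge to a single iteration even though a ball's outermost (radius $r_i+1$) layer is deleted without contributing to the independent set, and that probing to decide whether to keep growing does not re-explore already-deleted regions. The persistent-tree coordination is what keeps the per-iteration overhead at $O(n)$ and avoids the naive $O(mn)=O(n^3)$ cost of repeatedly flooding the remaining graph to locate the next center.
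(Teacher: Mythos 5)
Your proposal is correct and follows essentially the same route as the paper's proof: a one-time spanning tree for leader-driven coordination (at $O(n)$ messages per iteration to select the lowest-ID active center), layer-by-layer ball growing with the induced topology gathered at the center, and a charging argument that uses vertex-disjointness of the deleted balls plus the $O(\log n/\epsilon)$ radius bound to bound the gathering cost by $\tilde{O}(m/\epsilon) = \tilde{O}(n^2/\epsilon)$. The only cosmetic difference is your explicit handling of deleted nodes as relays and the ID-exchange preprocessing, both of which are implicit in the paper's construction.
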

\begin{proof}
We show that the sequential algorithm described above can be implemented in the \ktzero\ \congest{} model, using $\tilde{O}(n^2/\epsilon)$ messages and running in $O(\poly(m + n))$ rounds.

A rooted spanning tree $T$ of $G$ can be constructed using $O(m)$ messages and $O(n)$ rounds \cite{peleg00}. 
Let $T$ be rooted at node $root$. Initially, all nodes in $G$ are \textit{active}. At the start of iteration $i$, the node $root$ initiates a broadcast-echo on $T$ so as to find a node $v_i$ of lowest ID from among the active nodes in $G$. Then $root$ informs $v_i$ that it is the next initiator of the ``ball growing'' algorithm, responsible for initiating iteration $i$. All of this takes $O(n)$ messages because both during the broadcast and during the echo, exactly one message travels along each edge of $T$.
Further, this broadcast-echo procedure takes $O(n)$ rounds.
Since there can be at most $n$ iterations, the portion of the algorithm devoted to finding initiators and informing them, uses
$O(n^2)$ messages and runs in $O(\poly(m + n))$ rounds.

Once the initiator $v_i$ has been identified and informed, it initiates an iteration of ``ball growing'' algorithm. We show that this runs on $O(\poly(m+n))$ rounds and uses $\tilde{O}((n_i + m_i)/\epsilon)$ messages, where $n_i$ is the number of nodes in $G[B_{r_i+1}(v_i)]$ and
$m_i$ is the number of edges incident on nodes in $G[B_{r_i+1}(v_i)]$. 
This means that over all iterations the ``ball growing'' portion of the algorithm uses $\tilde{O}(n^2/\epsilon)$ messages and runs in $O(\poly(m+n))$ rounds.

For any integer $r \ge 0$, suppose that $v_i$ knows $G[B_{r+1}(v_i)]$.
Using (exponential-time) local computation $v_i$ checks if $\alpha(G[B_{r+1}(v_i)]) \le  (1 + \epsilon) \cdot \alpha(G[B_{r}(v_i)])$. If the condition is satisfied, then this iteration of the ``ball growing'' algorithm stops and $r_i$ is set to $r$. 
Furthermore, having computed a maximum-sized independent set $I'$ of $G[B_r(v_i)]$, node $v_i$ informs all nodes in $I'$ that they are part of the solution.
Otherwise, $v_i$ will grow the ball by gathering information so that it knows $G[B_{r+2}(v_i)]$. 
To do this, $v_i$ initiates a broadcast asking all nodes at distance $r+2$ for their incident edges. This broadcast uses $O(|B_{r+2}(v_i)|)$ messages and $O(r)$ rounds. Each node $u \in B_{r+2}(v) \setminus B_{r+1}(v)$ marks itself inactive and then sends its incident edges back to the initiator $v_i$. Since $r = O(\log n/\log (1 + \epsilon)) = O(\log n/\epsilon)$, each edge sent to $v_i$ travels a distance of $O(\log n/\epsilon)$, and thus for any constant $\epsilon > 0$, the message as well as round complexity of this step is $\tilde{O}(\ell/\epsilon)$, where $\ell$ is the number of edges incident on nodes at distance $r+2$ from $v_i$.
This completes the proof.
\end{proof}

While this result is described for \mxis, the authors of \cite{GhaffariKMSTOC2017} also show that this ``ball growing'' algorithm is able to produce a $(1 + \epsilon)$-approximation for MDS. Furthermore, \cite{GhaffariKMSTOC2017} claims that this approach produces a $(1+\epsilon)$
or a $1/(1+\epsilon)$-approximation for any problem that can be expressed as certain type of packing or covering integer linear program. In addition to \mxis and MDS, this framework also includes \mxm and MVC. So we get the following theorem.\footnote{In this theorem statement we use the more convenient $(1-\epsilon)$ rather than $1/(1+\epsilon)$. This is justified by the fact that $1/(1+\epsilon)$ can be written as $1-\epsilon'$, where $\epsilon/2 \le \epsilon' \le \epsilon$.}

\begin{theorem}
There is a deterministic $(1-\epsilon)$-approximation algorithm for \mxis and \mxm in the \ktzero\ \congest{} model that uses $\tilde{O}(n^2/\epsilon)$ messages and runs in $O(\poly(m + n))$ rounds. Similarly, there is a a deterministic $(1+\epsilon)$-approximation algorithm for MDS and MVC in the \ktzero\ \congest{} model that uses $\tilde{O}(n^2/\epsilon)$ messages and runs in $O(\poly(m + n))$ rounds.
\end{theorem}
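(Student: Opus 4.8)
The plan is to reduce the theorem to the already-proved \mxis lemma by observing that its proof is almost entirely problem-agnostic, so that the only problem-specific ingredients are local computations performed by the iteration initiator. First I would cast all four problems in the packing/covering integer-linear-program framework of \cite{GhaffariKMSTOC2017}: \mxis and \mxm are packing (maximization) problems, while MVC and MDS are covering (minimization) problems. For \mxm the LP variables correspond to edges rather than vertices, but this is cosmetic. By the results of \cite{GhaffariKMSTOC2017}, the sequential ``ball growing'' procedure yields a $1/(1+\epsilon)$-approximation for the packing problems and a $(1+\epsilon)$-approximation for the covering problems, and in every case the radius grown in each iteration is $O(\log n/\log(1+\epsilon)) = O(\log n/\epsilon)$.

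Next I would argue that the distributed implementation in the \mxis lemma transfers verbatim. The iteration structure --- build a rooted spanning tree $T$, repeatedly have $root$ locate the lowest-ID active node $v_i$ via broadcast-echo, then have $v_i$ grow a ball by successively gathering $G[B_{r+1}(v_i)], G[B_{r+2}(v_i)], \ldots$ until a stopping condition holds --- is identical across all four problems. The two places where the problem enters are (i) the exponential-time local computation that $v_i$ performs to evaluate the stopping condition on the gathered subgraph, and (ii) the optimal local solution on $G[B_{r_i}(v_i)]$ that $v_i$ folds into the global output. Both are carried out purely locally at $v_i$ once the induced subgraph is in hand, so they cost no additional messages or rounds. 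Hence the same accounting as in the lemma applies: finding and informing initiators costs $O(n^2)$ messages over at most $n$ iterations, and the gathering in iteration $i$ costs $\tilde{O}((n_i+m_i)/\epsilon)$ messages (each of the $O(m_i)$ gathered edges travels distance $O(\log n/\epsilon)$), summing to $\tilde{O}(n^2/\epsilon)$ messages and $O(\poly(m+n))$ rounds overall.

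The step I expect to require the most care is verifying \emph{global feasibility} of the assembled solution for each problem, since this is the only point at which the four problems genuinely differ. For the packing problems this is clean: because each iteration deletes the entire ball $B_{r_i+1}(v_i)$ --- not just $B_{r_i}(v_i)$ --- the local solution chosen in $B_{r_i}(v_i)$ uses only vertices (or, for \mxm, edges) strictly interior to the deleted region, so solutions from distinct iterations are pairwise vertex-disjoint and their union is automatically a valid independent set (resp.\ matching). For the covering problems the boundary is the delicate part: a vertex or edge near the frontier of one ball could in principle be covered by an element outside it, so one must ensure the union of local covers still dominates (resp.\ covers) every vertex (resp.\ edge) of $G$. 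I would resolve this by appealing to the analysis of \cite{GhaffariKMSTOC2017}, whose stopping rule is designed precisely so that the extra boundary-layer cost is charged against the $(1+\epsilon)$ slack while maintaining feasibility; I would simply confirm that the boundary layer $B_{r_i+1}(v_i)\setminus B_{r_i}(v_i)$ included in each local computation suffices to certify coverage of all deleted constraints. With feasibility and the approximation ratios in hand, and the message/round bounds inherited from the lemma, the theorem follows; the final rewriting of $1/(1+\epsilon)$ as $(1-\epsilon')$ for \mxis and \mxm is routine.
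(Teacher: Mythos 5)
Your proposal is correct and follows essentially the same route as the paper: the paper likewise reduces the theorem to the \mxis lemma by noting that the distributed ball-growing implementation is problem-agnostic, invokes the covering/packing ILP framework of \cite{GhaffariKMSTOC2017} to cover \mxm, MVC, and MDS, and handles the $1/(1+\epsilon)$ versus $(1-\epsilon)$ conversion in a footnote exactly as you do. Your added care about global feasibility at ball boundaries (disjointness for packing, charging the boundary layer to the $(1+\epsilon)$ slack for covering) is a detail the paper leaves implicit in its citation of \cite{GhaffariKMSTOC2017}, but it is the same argument, not a different one.
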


\subsection{Unconditional Quadratic Lower Bound for \mxm{} Approximation}
\label{section:KTZeroMaxMLB}

\label{sec:mm_lb}

Next, we show that $\tilde{\Omega}(n^2)$ messages are required to compute a constant-factor approximation for \mxm{}.

\begin{theorem} \label{thm:mm_lb}
  Consider any randomized algorithm for the \mxm{} problem in the \KTZero{} \local{} model. Assume that every matched edge is output by at least one of its endpoints; either by outputting a port number or the ID of the corresponding neighbor.
  If, for some $\epsilon \in \lb(\frac{1}{n^{1/3}},1\rb)$, the algorithm sends at most $\frac{\epsilon^3 n^2}{7^3 \cdot 8} = O\lb(\epsilon^3n^2\rb)$ messages with probability at least $1 - \frac{\epsilon}{7}$, then there exists a graph on $2n$ nodes such that the approximation ratio is at most $\epsilon$ in expectation.
\end{theorem}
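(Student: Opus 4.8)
The plan is to prove the contrapositive: I will exhibit a single distribution over $2n$-node graphs, each with a \emph{planted} perfect matching $M$ of size $n$ (so $\mathrm{OPT}=n$), on which any algorithm sending few messages recovers only an $\le\epsilon$ fraction of $M$ in expectation. The construction must have three features at once: (i) $M$ is the \emph{unique} maximum matching, and the decoy (non-$M$) edges are arranged so that they alone cannot form a matching of size more than $o(\epsilon n)$, forcing any large output matching to consist almost entirely of planted edges; (ii) every node has degree exactly $d := \Theta(\epsilon^2 n)$; and (iii) the planted matching is drawn so that, around each node, its incident $M$-edge is symmetric with its $d-1$ decoy edges. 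The reduction then splits as in the sketch: a large output matching forces $\Omega(\epsilon n)$ endpoints to correctly \emph{identify} their incident $M$-edge, and in the \KTZero{} model each such identification is expensive.

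The heart of the argument is an indistinguishability lemma bounding, for each node $v$, the probability $q_v$ that $v$ outputs its true $M$-edge. Let $s_v$ be the number of edges incident to $v$ across which at least one message travels during the execution. I claim $q_v \le (s_v+1)/d$. The leverage of the \KTZero{} model is that $v$ obtains \emph{no} information about the endpoint on a given port unless a message crosses that specific edge; with unbounded (\local) message size this is the only handle, but it is message \emph{count} that we charge. I will make this rigorous by a coupling/automorphism argument over the random planting: conditioned on the transcript on $v$'s $s_v$ probed edges, the planted edge is equidistributed over the $d-s_v$ unprobed incident edges, because the construction admits an automorphism swapping any two such edges (and the matching role along them) while fixing everything that has communicated with $v$. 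Hence $v$ is correct only if its $M$-edge lies among the $s_v$ probed ports, or it guesses correctly among the remaining $d-s_v$ (probability $1/(d-s_v)$), giving the stated bound. Transporting this local symmetry into a model with arbitrarily large messages and arbitrarily many rounds is the main obstacle, and it is precisely why we charge message count rather than bit complexity.

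It remains to combine and fix parameters. Each message is charged to one edge and hence to two endpoints, so $\sum_v s_v \le 2M_{\mathrm{tot}}$, where $M_{\mathrm{tot}}$ is the total number of messages. The expected number of planted edges in the output is at most $\sum_{e\in M}\Pr[e\ \text{output}] \le \sum_v q_v \le (2M_{\mathrm{tot}}+2n)/d$. On the event $M_{\mathrm{tot}} \le \epsilon^3 n^2/(7^3\cdot 8)$, which holds with probability $\ge 1-\epsilon/7$, substituting $d=\Theta(\epsilon^2 n)$ makes the message term at most $(6/7)\epsilon n$ once the constant in $d$ is tuned; the additive $2n/d = \Theta(\epsilon^{-2})$ ``pure luck'' term is also $O(\epsilon n)$ exactly because $\epsilon > n^{-1/3}$ forces $\epsilon^{-2}\le \epsilon n$, which is the role of that hypothesis. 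Using feature (i) to bound the decoy contribution by $o(\epsilon n)$, the conditional expected output size is at most $(6/7)\epsilon n + o(\epsilon n)$; on the complementary bad event (probability $\le \epsilon/7$) I crudely bound the output by $\mathrm{OPT}=n$, contributing at most $\epsilon n/7$. Summing gives $\E[|ALG|]\le \epsilon n = \epsilon\cdot\mathrm{OPT}$, so the expected approximation ratio is at most $\epsilon$, and matching the hidden constants in $d$ against the factors of $7$ yields exactly the stated message threshold.
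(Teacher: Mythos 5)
Your overall strategy---hide a planted perfect matching behind random port assignments, show that identifying a planted edge requires probing many incident ports, and charge probes to messages---is the same as the paper's. But the construction you need does not exist: your features (i) and (ii) contradict each other. If every one of the $2n$ nodes has degree exactly $d$, then every node is incident to exactly $d-1\ge 1$ decoy edges, so the decoy graph is $(d-1)$-regular with $n(d-1)$ edges. Take a maximum matching of the decoy graph, say of size $k$; its $2k$ endpoints form a vertex cover of the decoy graph (any decoy edge with both endpoints unmatched could be added to the matching), and each such vertex covers at most $d-1$ decoy edges, so $n(d-1)\le 2k(d-1)$, i.e.\ $k\ge n/2$. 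Hence the decoy edges alone always contain a matching of size $n/2$, and the step ``using feature (i) to bound the decoy contribution by $o(\epsilon n)$'' is vacuous: no graph satisfies (i) and (ii) simultaneously. Regularity is exactly what you cannot afford. The paper's construction instead concentrates all decoy edges on two small cliques $C_A, C_B$ of size $\lfloor\gamma n\rfloor$ each (nodes of $N_A$ still have $\approx \gamma n$ decoy ports hiding the valuable edge, but the decoy graph is highly non-regular); then any matching contains at most $|C_A|+|C_B| = 2\gamma n = \Theta(\epsilon n)$ decoy edges---not $o(\epsilon n)$---and the factor-of-$7$ bookkeeping is tuned so that decoy edges, identified valuable edges, and the failure event together sum to at most $\epsilon n$. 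Relatedly, your symmetry feature (iii) (an automorphism exchanging the planted edge at $v$ with a decoy edge) is also in tension with (i): if such an exchange yields an equally likely instance, the exchanged decoy edge is a legitimate matching edge of that instance, which is incompatible with decoy-only matchings being tiny.

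A second, smaller gap: your indistinguishability lemma only covers the port-number output convention. The theorem also allows a node to output the ID of its matched neighbor, and a node can do this without ever localizing the correct port; your bound $q_v\le (s_v+1)/d$ does not address that case. The paper treats it separately, using the fact that the IDs are a uniformly random permutation of $[2n]$, so that from the viewpoint of a node whose valuable edge is unprobed, the partner's ID remains (nearly) uniform over a linear-size candidate set, giving success probability $O(1/n)$ for ID-guessing as well. Once you (a) replace the regular decoy graph by one whose decoy edges are absorbed by a small core, accepting a decoy contribution of $\Theta(\epsilon n)$ rather than $o(\epsilon n)$, and (b) add the ID-guessing case, your argument essentially becomes the paper's proof.
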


In the remainder of this section, we give a class of graphs on which finding a large matching is hard, and then we state the details of the assumed port numbering model and discuss the output specification of a given matching algorithm in this setting. We make use of these definitions when proving Theorem~\ref{thm:mm_lb}.

\subsubsection*{The Lower Bound Graph} \label{sec:mm_lb_graph}

Let $\gamma = \frac{\epsilon}{7}$.
We consider the following $2n$-node graph $G$ consisting of vertex sets $A$ and $B$, where $A=\{u_1,\dots,u_n\}$ and $B=\{v_1,\dots,v_n\}$.
We further partition $A$ into $C_A$ and $N_A$ such that $N_A=\{u_1,\dots,u_{n-\lfloor \gamma n\rfloor}\}$ and $C_A = A \setminus N_A$.
Each node in $N_A$ is connected to all nodes in $C_A$, whereas the nodes in $C_A$ form a clique.
Analogously, we define $C_B$ and $N_B$, and the edges between them.
In addition, we add the set of \emph{valuable edges} $\{\set{u_1,v_1},\dots,\set{u_n,v_n}\}$.
Figure~\ref{fig:mm_lb} depicts an example of this construction.
Notice that the set of valuable edges corresponds to a perfect matching of size $n$ and hence forms an optimum solution for the maximum matching problem.

\begin{figure}[t]
  \centering
  \includegraphics[width=0.8\textwidth]{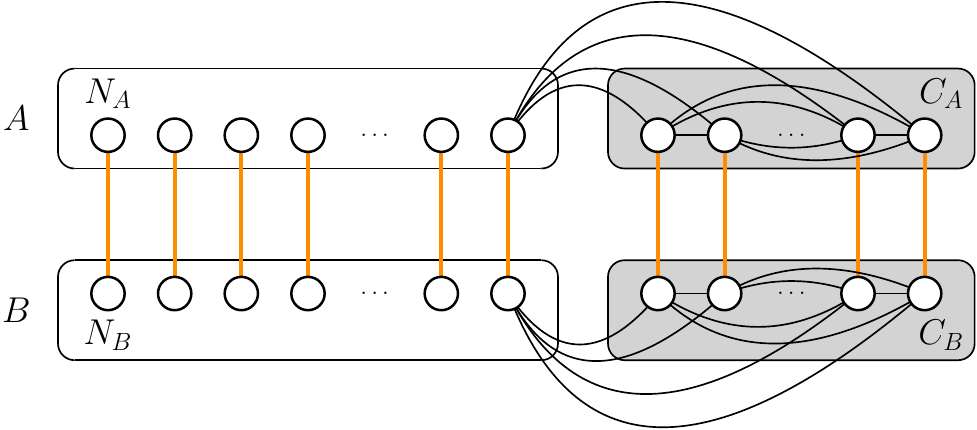}
  \caption{\small
  The lower bound construction for proving Theorem~\ref{thm:mm_lb}.
  There are $2n$ nodes in total equally partitioned into sets $A$ and $B$. Each one of the grey-shaded areas contains $\lfloor \gamma n \rfloor$ nodes that form the cliques $C_A$ and $C_B$, respectively.
  Every node in $N_A = A \setminus C_A$ has an edge to all nodes in $C_A$, and the nodes in $N_B$ and $C_B$ are connected similarly.
  The thick orange edges are the valuable edges that form a perfect matching.
  }
  \label{fig:mm_lb}
\end{figure}

\subsubsection*{The Port Numbering Model}
We consider the standard port numbering model, where the incident edges of a node $u$ are numbered $1,\dots,\deg(u)$.
This means that, in order to send a message across the edge $\set{u,v}$, node $u$ would need to send over some port $p_{(u,v)}$, whereas node $v$ would need to use port $p_{(v,u)}$, for some (possibly distinct) integers $p_{(u,v)} \in [\deg(u)]$\footnote{We use the standard notation $[m] := \set{1,\dots,m}$.} and $p_{(v,u)} \in [\deg(v)]$.

We say that the port $p_{(u,v)}$ is \emph{used} if $u$ sends a message over $p$ or receives a message that was sent on $p_{(v,u)}$; otherwise we say that it is \emph{unused}.
A crucial property of the $\ktzero$ assumption is that, initially, a node $u$ does not know that it is connected to $v$ via $p_{(u,v)}$. However, we assume that $u$ learns that $p_{(u,v)}$ connects to $v$ upon receiving a message directly from $v$.
To obtain a concrete lower bound graph, we fix the node IDs to correspond to a uniformly random permutation of $[2n]$, and, for each node $u$, we choose an assignment of its incident ports to the corresponding endpoints by  independently and uniformly selecting a random permutation of the set $[\deg(u)]$.

There are two standard ways how an algorithm may output a matching in $\ktzero$. The first possibility is that at least one of the endpoints of each matched edge outputs the corresponding port number. The second one is that a node $u$ outputs the ID of some neighbor $v$ to indicate that $\set{u,v}$ is in the matching.
We point out that our lower bound result holds under either output assumption.

\subsubsection*{Proof of Theorem~\ref{thm:mm_lb}} \label{sec:thm_mm_lb}
Consider any algorithm that satisfies the premise of the theorem.
Suppose that it sends at most $\frac{\gamma^3n^2}{8}$ messages with probability at least $1 - \gamma$, and let $\texttt{Sparse}$ denote the event that this happens.

\begin{lemma} \label{lem:mm_lb_existsS}
Let $J \subseteq [n - \lfloor \gamma n\rfloor]$ be the set of indices such that, for all $i \in J$, the edge $\set{u_i,v_i}$ is not part of the computed matching.
  Then $\mathbf{E}\lb[|J|\ \middle|\ \text{\rm\texttt{Sparse}}\rb]
  \ge \lb( 1 - 5\gamma \rb)n.
  $
\end{lemma}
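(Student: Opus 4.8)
The plan is to translate the claim into a statement about how \emph{few} of the valuable edges incident to $N_A$ can possibly be matched, and then to exploit the fact that in the $\ktzero$ model with random port assignments, \emph{identifying} any single valuable edge is expensive. Writing $Z$ for the number of indices $i \in [n-\lfloor\gamma n\rfloor]$ for which $\set{u_i,v_i}$ \emph{is} in the matching, we have $|J| = (n-\lfloor\gamma n\rfloor) - Z$, so it suffices to prove $\mathbf{E}[Z \mid \texttt{Sparse}] \le 4\gamma n$; combined with $n - \lfloor\gamma n\rfloor \ge (1-\gamma)n$ this yields $\mathbf{E}[|J|\mid \texttt{Sparse}] \ge (1-5\gamma)n$. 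The leverage comes entirely from the degrees: every endpoint $u_i,v_i$ with $i \le n-\lfloor\gamma n\rfloor$ has degree exactly $\lfloor\gamma n\rfloor+1$, so its valuable port is hidden among roughly $\gamma n$ indistinguishable clique ports.

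First I would characterize, for a fixed $i$, the only ways the algorithm can place $\set{u_i,v_i}$ into the output. Since at least one endpoint must name the edge, and since under $\ktzero$ a node learns of an incident edge only upon receiving a message across it, the existence of $\set{u_i,v_i}$ can become known to any node (possibly after relaying) only through one of: (a) a message \emph{crossing} the valuable edge $\set{u_i,v_i}$; (b) \emph{elimination} at an endpoint, i.e.\ $u_i$ (or $v_i$) receives identifying messages from all $\lfloor\gamma n\rfloor$ of its clique-neighbours and then outputs the unique remaining port; or (c) a lucky \emph{guess} of the valuable port or of $v_i$'s ID. The neighbour-ID output mode collapses into case (a), since naming $v_i$ as a neighbour requires having received a message directly from it.

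Next I would bound each contribution under the message cap $M \le \gamma^3 n^2/8$ that holds on $\texttt{Sparse}$, using the principle of deferred decisions for the random port permutations: revealing each port of a node only when it is first used, the probability that any given send by $u_i$ (resp.\ $v_i$) lands on its valuable port is at most $1/(\lfloor\gamma n\rfloor+1)$. Summing these hitting probabilities over all messages shows $\mathbf{E}\lb[\#\{\text{crossed valuable edges}\}\cdot\Ind_{\texttt{Sparse}}\rb] \le 2M/(\gamma n) \le \gamma^2 n/4$. For elimination, a node can be eliminated only after receiving $\lfloor\gamma n\rfloor$ distinct identifying messages, so at most $M/(\gamma n) \le \gamma^2 n/8$ nodes can be eliminated on each side, contributing $\le \gamma^2 n/4$; partial elimination followed by a guess is no more efficient per message. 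The pure-guessing contribution is $O\lb(n\cdot \tfrac{1}{\gamma n}\rb)=O(1/\gamma)$, which is $o(\gamma n)$ throughout the range $\epsilon \in (n^{-1/3},1)$ (equivalently $\gamma > \tfrac{1}{7 n^{1/3}}$). Adding these and passing to the conditional expectation via $\mathbf{E}[\,\cdot \mid \texttt{Sparse}] \le \mathbf{E}[\,\cdot\,\Ind_{\texttt{Sparse}}]/\Pr[\texttt{Sparse}]$ with $\Pr[\texttt{Sparse}]\ge 1-\gamma$ gives $\mathbf{E}[Z\mid \texttt{Sparse}]\le \tfrac{7}{6}\lb(\tfrac{\gamma^2 n}{2}+o(\gamma n)\rb) \le 4\gamma n$ since $\gamma < 1/7$.

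The main obstacle is twofold. First, I must rule out clever \emph{elimination} strategies rigorously: the naive intuition ``a message must cross the valuable edge'' is \emph{false}, because a node can instead learn its valuable port by resolving all of the other incident ports; the saving grace is that resolving all $\lfloor\gamma n\rfloor$ clique-ports is itself as costly as crossing the edge, so both mechanisms cost $\Omega(\gamma n)$ messages and the counting survives. Second, and more delicate, is that conditioning on $\texttt{Sparse}$ correlates with the very port randomness that drives the deferred-decision bound. I would sidestep this by always bounding the \emph{unconditional} quantity $\mathbf{E}[\,\cdot\,\Ind_{\texttt{Sparse}}]$ using the \emph{pointwise} cap $M \le \gamma^3 n^2/8$ valid on $\texttt{Sparse}$, and only at the very end dividing by $\Pr[\texttt{Sparse}]$, so that the port revelations are never themselves conditioned on the global sparsity event.
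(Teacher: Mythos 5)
Your proposal is correct in outline but follows a genuinely different route from the paper's proof. The paper argues at the level of \emph{nodes}: on the event \texttt{Sparse} it first uses a Markov-type counting argument to extract a large index set whose endpoints each use at most $\alpha n = \frac{\gamma^2}{\gamma+3}n$ ports, then bounds, for each such index, the probability that the valuable edge is crossed (at most $\frac{2\alpha}{\gamma-\alpha}$, by port-uniformity over the $\ge(\gamma-\alpha)n$ unused ports) and the probability that it is output as a guess (at most $\frac{2}{n}\max\{\frac{1}{\gamma-\alpha},\frac{1}{1-4\gamma}\}$, treating the port-output and ID-output modes separately), and multiplies these factors together. You instead run a global charging argument at the level of \emph{messages}: every matched valuable edge is attributed to a crossing, an elimination, or a guess; crossings and eliminations each cost $\Omega(\gamma n)$ messages out of the budget $\gamma^3n^2/8$, and pure guessing contributes $O(1/\gamma)$ in total. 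Your accounting yields a stronger bound on $\mathbf{E}[Z \mid \texttt{Sparse}]$ (order $\gamma^2 n$ rather than the $O(\gamma n)$ slack the lemma needs), and your handling of the conditioning --- bounding $\mathbf{E}[\,\cdot\,\Ind_{\texttt{Sparse}}]$ via the pointwise message cap and dividing by $\Pr[\texttt{Sparse}]$ only at the end --- is arguably cleaner than the paper's, which conditions on \texttt{Sparse} throughout while invoking port-uniformity.

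Two steps need shoring up. First, ``partial elimination followed by a guess is no more efficient per message'' is doing real work: the precise statement is that a node with $k_i$ resolved ports outputs its valuable port with probability at most $1/(\lfloor\gamma n\rfloor+1-k_i)$, and you must maximize $\sum_i \min\{1,\,1/(\gamma n-k_i)\}$ subject to $\sum_i k_i \le 2M$; convexity of $k\mapsto 1/(\gamma n-k)$ places the maximum at the extreme points, which is exactly your ``full elimination plus pure guessing'' split --- fixable, but it should be spelled out. Second, and more substantively, the claim that the neighbour-ID output mode ``collapses into case (a)'' is false: a node can output an ID it never received (this is precisely a guess), and, worse, information about which $B$-side IDs are \emph{impossible} can leak from valuable edges crossed elsewhere in the graph and be relayed through the cliques, so ID-elimination is a global, not a local, phenomenon. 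Your case (c) does catch ID guessing, but the $O(1/(\gamma n))$ per-pair guess probability you assign to it requires the argument the paper makes explicitly: the IDs form a uniformly random permutation, so $v_i$'s ID remains uniform over the $\ge(1-O(\gamma))n$ $B$-side IDs not pinned down by crossed edges, and the number of crossed edges is itself small by your own budget argument. With these two repairs your proof goes through.
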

\begin{proof}
  Let $\alpha = \frac{\gamma^2}{\gamma+3}$.
  We will first show that there exists a large subset of indices $I \subseteq [n- \lfloor\gamma n\rfloor ]$ such that the following two properties hold, for all $i \in I$:
  \begin{compactenum}
    \item[(a)] $u_i$ and $v_i$ each use at most $\alpha n$ of their incident ports throughout the execution.
    \item[(b)] No message is sent over the valuable edge $\set{u_i,v_i}$.
  \end{compactenum}
  Assume towards a contradiction that there exists a subset $T \subseteq N_A$ such that  more than ${\alpha n}$ incident ports of every node in $T$ are used, and $|T| > \gamma|N_A|$.
Since the number of used ports is a lower bound on the message complexity, it follows that the total number of messages sent is strictly greater than
  \[
    {\alpha\gamma n |N_A|}
    \ge
    \alpha\gamma n (n - \gamma n)
=
    \lb(\alpha\gamma - \alpha\gamma^2\rb) n^2
    \ge
    \frac{\alpha\gamma n^2}{2}
    \ge
    \frac{\gamma^3 n^2}{8},
  \]
  where we have used the fact that $\alpha \ge \tfrac{\gamma^2}{4}$ in the last inequality.
  This, however, contradicts the conditioning on event \texttt{Sparse}, which tells us that there must exist a set $S_A' \subseteq N_A$  of size $(1 - \gamma)|N_A|$ such that, for every $u_i \in S_A'$, the algorithm uses at most $\alpha n$ incident ports.
  By a symmetric argument, we can also show the existence of a set of $S_B' \subseteq B$ with the same properties and size as $S_A'$.

  Let $I'$ be the set of indices $i$ such that $u_i \in S_A'$ and $v_i \in S_B'$, which means that every $i \in I'$ satisfies property (a). 
  Note that
	\begin{align}
	|I'| \ge (1 - 2\gamma)|N_A|. \label{eq:mm_lb_I}
  \end{align}

	We will now identify a large subset of the indices in $I'$ that also satisfy (b), which form the sought subset $I$.
  Consider any $i \in I'$ and the corresponding pair $u_i$ and $v_i$. 
  Let $X_i$ be the indicator random variable that is $1$ if and only if neither $u_i$ nor $v_i$ sends a message over the valuable edge $\set{u_i,v_i}$.
  Node $u_i$ has degree $\lfloor\gamma n\rfloor+1 \ge \gamma n$ and due to condition (a), a set $P$ of at least $(\gamma - \alpha)n$ of its incident ports remain unused  throughout the execution.
  Since the port connections are chosen independently and uniformly at random for each node, it follows that $u_i$'s valuable edge has uniform probability to be connected to any one of $u_i$'s ports, as long as $u_i$ has not yet sent or received a message over its valuable edge.
To see why $u_i$ cannot learn any information about the port of $\set{u_i,v_i}$ from other nodes, observe that any message $\mu$ sent by some node $w$ is a function of the local states of the nodes, which may include all currently-known port assignments, and, conditioned on the fact that all ports in $P$ are unused, it follows that $\mu$ is independent of the distribution of the endpoint connections of $P$.

  By assumption, $u_i$ and $v_i$ send messages over at most $\alpha n$ of their respective incident ports. 
  Hence, $\mathbf{E}\lb[ X_i \!=\! 1\ \middle|\ \texttt{Sparse} \rb] \ge 1 - \frac{2\alpha}{\gamma - \alpha}$.
  It follows that 
 \begin{align}
  \mathbf{E}\lb[ |I|\ \middle|\ \texttt{Sparse} \rb]
	\ge
  \mathbf{E}\lb[ \sum_{i \in I'} X_i\ \middle|\ \texttt{Sparse}\rb] 
  &\ge \lb(1 -\frac{2\alpha}{\gamma - \alpha}\rb)\mathbf{E}\lb[ |I'| \ \middle|\ \texttt{Sparse} \rb] \notag\\
      &\ge \lb(1 - \frac{2\alpha}{\gamma - \alpha}\rb)(1 - 2\gamma)(1 - \gamma)n \tag{\small by \eqref{eq:mm_lb_I}}\\ 
      &\ge \lb(1 - 3\gamma - \frac{2\alpha}{\gamma - \alpha}\rb)n \label{eq:mm_lb_size}
 \end{align}
where $I$ has the property that, for all $i \in I$, nodes $u_i$ and $v_i$ satisfy (a) and (b).

To complete the proof, we need to show that there is only a small probability of $u_i$ or $v_i$ outputting the valuable edge $\set{u_i,v_i}$, for any index $i \in I$.
Let $Y_i$ be the corresponding indicator random variable that is $1$ iff $\set{u_i,v_i}$ is part of the output.
We first consider algorithms where, for each matched edge, at least one of its endpoint nodes outputs the corresponding port number: 
  As argued above, $u_i$ and $v_i$ have at least $(\gamma-\alpha)n$ unused ports respectively, of which they do not know the endpoints, and their respective valuable edge is uniformly distributed among these ports.
  It follows that $u_i$ outputs its valuable edge with probability at most $\frac{1}{(\gamma-\alpha)n}$ and the same holds for $v_i$.
  Thus, $\Pr\lb[ Y_i \!=\! 1 \rb] \le \frac{2}{(\gamma-\alpha)n}$.
Now consider an algorithm where a node outputs the ID of its matching partner, if any.
Consider any $u_i$ ($i \in I$).
Recall that $I$ satisfies (a) and (b) and the assumption that the IDs were assigned uniformly at random, which implies that the valuable edges between $A$ and $B$ correspond to a random matching between the corresponding ID pairs. 
Thus, at any point in the execution, from the point of view of a node $u_i$ ($i \in I$), the probability distribution of the ID of $v_i$ is still uniform over the IDs of nodes $\set{v_j \in B \mid j \in I }$. 
Consequently, \eqref{eq:mm_lb_size} tells us that $u_i$ has probability at most $\frac{1}{(1 - 3\gamma - \frac{2\alpha}{\gamma-\alpha})n}$ to output the correct ID of $v_i$, and a similar argument applies to $v_i$. 

Combining both possible case regarding the output of the algorithm, we obtain that
\begin{align}
 \Pr\lb[ Y_i \!=\! 1 \rb] \le \max\lb\{ \frac{2}{(\gamma-\alpha)n}, \frac{2}{(1 - 3\gamma - \frac{2\alpha}{\gamma-\alpha})n}\rb\}
 \le \frac{2}{n}\max\lb\{ \frac{1}{\gamma-\alpha}, \frac{1}{1 - 4\gamma}\rb\}, \notag \end{align}
where the second inequality follows because $\alpha = \frac{\gamma^2}{\gamma+3}$ ensures $\frac{2\alpha}{\gamma-\alpha}\le\gamma$.

  Let $J \subseteq I$ be the set of indices such that neither $u_i$ nor $v_i$ outputs the valuable edge $\set{u_i,v_i}$.
  Recalling \eqref{eq:mm_lb_size}, we get
  \begin{align}
    \mathbf{E}\lb[ |J|\ \middle|\ \texttt{Sparse} \rb]
      &\ge
      \lb( 1 - 3\gamma - \frac{2\alpha}{\gamma-\alpha} \rb)
      \lb( 1 -  \Pr\lb[ Y_i \!=\! 1 \rb]\rb)n \notag\\
      &\ge
      \lb( 1 - 3\gamma - \frac{2\alpha}{\gamma-\alpha} \rb)
      \lb( 1 -  \frac{2}{n}\max\lb\{ \frac{1}{\gamma-\alpha}, \frac{1}{1 - 4\gamma}\rb\}\rb)n \notag\\
      &\ge
      \lb( 1 - 3\gamma - \frac{2\alpha}{\gamma-\alpha} -  \frac{2}{n}\max\lb\{ \frac{1}{\gamma-\alpha}, \frac{1}{1 - 4\gamma}\rb\}\rb)n \notag\\
      &\ge
      \lb( 1 - 4\gamma - \frac{2\alpha}{\gamma-\alpha} \rb)n 
      \tag{\small since $\gamma \ge \frac{2}{n}\max\lb\{ \frac{1}{\gamma-\alpha}, \frac{1}{1 - 4\gamma}\rb\}$}\\
      &\ge
      \lb( 1 - 5\gamma\rb)n,\notag
  \end{align}
  where, in the last step, we have used the fact that the assumption that $\alpha = \frac{\gamma^2}{\gamma+3}$ implies $\frac{2\alpha}{\gamma-\alpha}\le\gamma$.
\end{proof}

We are now ready to complete the proof of Theorem~\ref{thm:mm_lb}.
Let $M$ be the matching computed by the algorithm and recall that the perfect matching has size $n$. 
We have
\begin{align*}
  \mathbf{E}\lb[ |M| \rb]
  &\le
  \mathbf{E}\lb[ |M|\ \middle|\ \texttt{Sparse}\rb]
  +
  n\cdot\Pr[ \neg\texttt{Sparse}] \\
  &\le
  |C_A| + |N_A| - |J| + \gamma n\\
  &\le
  2\gamma n + \lb(n - \gamma n +1\rb) - \lb( 1 - 5\gamma \rb)n \quad\quad\text{\small (by Lemma~\ref{lem:mm_lb_existsS})}\\
  &\le  7\gamma n,
\end{align*}
which implies the claimed upper bound on the approximation ratio since $\gamma = \frac{\epsilon}{7}$ and the maximum matching has size $n$.

\subsection{Unconditional Quadratic Lower Bound for MDS Approximation}
\label{section:KTZeroMDSLB}

A well known and powerful tool for proving lower bounds in \KTZero{} is the notion of a port preserving crossing which is used to prove message complexity lower bounds in~\cite{AwerbuchGPV90}. We use this tool to prove lower bounds for MDS approximation in this section, and also for MVC and \mxis{} approximation in Sections~\ref{section:kt0-mvc-lb}
and~\ref{section:kt0-maxIS-lb}.

\begin{definition}[Port-Preserving Crossing]\label{def:kt-0-port-preserving-crossing}
    Let $H$ be an arbitrary graph with two edges $e = \{u,v\}$ and $e' = \{u',v'\}$ for distinct nodes $u,u',v,v'$. Let $e$ be connected to $u$ at port $p$ and to $v$ at port $q$, and similarly let $e'$ be connected to $u'$ at port $p'$ and to $v'$ at port $q'$. The port preserving crossing of $e$ and $e'$ is the graph $H_{e,e'}$ which is obtained by removing the edges $e$, $e'$ from $H$ and adding the edge $\{u, u'\}$ connected to $u$ at port $p$ and to $u'$ at port $p'$ and the edge $\{v, v'\}$ connected to $v$ at port $q$ and to $v'$ at port $q'$. See Figure~\ref{fig:crossing} for an illustration of this definition
\end{definition}

\begin{figure}[th]
\begin{center}
\includegraphics[width=0.8\textwidth]{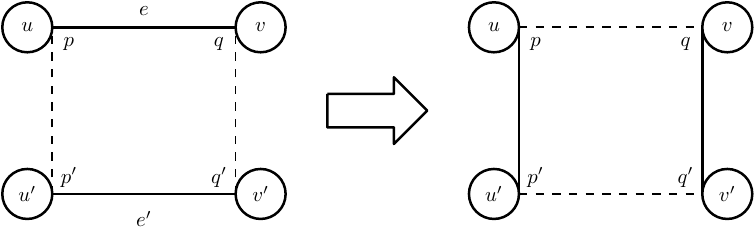}
\end{center}
\caption{Illustration of port preserving crossing in Definition~\ref{def:kt-0-port-preserving-crossing}}
\label{fig:crossing}
\end{figure}

\begin{theorem}\label{thm:kt-0-crossing-similarity}
Let $\mathcal{A}$ be a deterministic \KTZero{} \local{} algorithm. Let $H$ and $H_{e, e'}$ be the two graphs described in Definition~\ref{def:kt-0-port-preserving-crossing} with the same ID assignment. If no messages pass over $e$ and $e'$, $\mathcal{A}$ behaves identically on the graphs $H$ and $H_{e, e'}$.
\end{theorem}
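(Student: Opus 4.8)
The plan is to prove the statement by induction on the round number $t$, showing that in the two executions of $\mathcal{A}$ — one on $H$ and one on $H_{e,e'}$, both under the same ID assignment — every node is in exactly the same local state at the start of every round. Two structural facts drive the argument. First, the port-preserving crossing changes neither the number of ports of any node nor the port numbering: every node $w$ has the identical set of port labels in $H$ and $H_{e,e'}$, and for every port other than port $p$ of $u$, port $q$ of $v$, port $p'$ of $u'$, and port $q'$ of $v'$, that port leads to the \emph{same} endpoint in both graphs. Only these four ``special'' ports have their far endpoints swapped: the endpoint of $u$'s port $p$ changes from $v$ to $u'$, of $v$'s port $q$ from $u$ to $v'$, of $u'$'s port $p'$ from $v'$ to $u$, and of $v'$'s port $q'$ from $u'$ to $v$. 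Second, in the \KTZero{} model a node's state is a deterministic function only of its ID, its number of ports, and the history of messages it has received indexed by port; crucially, it carries no a priori information about which node sits at the other end of any port, and (as stated in the model) it learns that a port connects to $v$ only upon receiving a message sent directly by $v$.

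I would maintain the following invariant, proved by induction on $t$: (i) at the start of round $t$ every node has the same state in both executions, and (ii) through round $t$ no message has been sent over any of the special ports $p,q,p',q'$ in either execution. The base case is immediate, since before any communication each node's state depends only on its ID and port count, both preserved by the crossing, and no messages have been sent. For the inductive step, invariant (i) together with determinism of $\mathcal{A}$ implies that in round $t$ every node computes exactly the same outgoing message on each of its ports in both executions. The hypothesis guarantees that in the $H$-execution no message is ever sent over $e$ or $e'$, i.e.\ over the special ports; since the round-$t$ messages agree across the two executions, no message is sent over a special port in the $H_{e,e'}$-execution either, extending (ii). Considering message delivery at the end of round $t$: on every non-special port the endpoint is identical in $H$ and $H_{e,e'}$, so the delivered message is identical, while on the special ports nothing is delivered in either execution. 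Hence each node receives exactly the same collection of (port, message) pairs in both executions and therefore enters round $t+1$ in the same state, establishing (i) for $t+1$.

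The crux — and the place where both the \KTZero{} assumption and the port-preserving property are indispensable — is the claim that the swapped far endpoints of the four special ports are invisible to $\mathcal{A}$. This is precisely why the hypothesis that no message crosses $e$ or $e'$ is needed: if even a single message were delivered over, say, port $p$ of $u$, then by the model's convention $u$ would learn the ID of its neighbor on that port, which differs between $H$ (namely $v$) and $H_{e,e'}$ (namely $u'$), so $u$'s subsequent behavior could diverge and the coupling would break. Because no such message is ever delivered, the endpoint identity never enters any node's state and the two executions stay perfectly synchronized. I expect this invisibility step to be the main conceptual obstacle, whereas the bookkeeping of the induction is routine. I would finally note that determinism is what licenses the implication ``same state $\Rightarrow$ same outgoing messages''; the analogous claim for randomized algorithms (needed later for the randomized MDS lower bound) would instead couple the two executions on a shared random tape, but the structural heart of the argument is unchanged.
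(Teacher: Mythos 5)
Your proof is correct and takes essentially the same approach as the paper's: a round-by-round induction showing all nodes maintain identical states in both executions, using the port-preserving property for the non-special ports and the no-message hypothesis to keep the swapped endpoints invisible under \KTZero{}. Your explicit second invariant (that no message crosses the special ports in \emph{either} execution, transferred from the $H$-execution to the $H_{e,e'}$-execution) is a slightly more careful bookkeeping of a point the paper's proof treats implicitly, but the argument is the same.
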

\begin{proof}
    We will prove by induction that before each round of algorithm $\mathcal{A}$, the state of each node $x$ in $H$ is identical to the state of the corresponding node $x$ in $H_{e,e'}$. This is true trivially before round $1$ as the state of each node is just its ID and the list of ports. Inductively assume that the claim is true before some round $i \ge 1$. In round $i$, every node $x$ will send the same message through each port in both graphs $H$ and $H_{e,e'}$. Since no message passes through $e$ and $e'$, the nodes $u,v,u',v'$ will not send any message over their corresponding ports $p,q,p',q'$. Since every other edge is identically connected in both graphs, we have that in round $i$, each node $x$ will also receive the same messages through each port in both graphs $H$ and $H_{e,e'}$. So before round $i+1$, the state of each node $x$ will be identical on both graphs $H$ and $H_{e, e'}$.
\end{proof}

For the approximate MDS lower bound, we define a family $\{G_{x, y} \mid x \in \{0, 1\}^{n^2}, y \in \{0, 1\}^{n^2}\}$ of lower bound graphs. 
This construction is inspired by the construction in \cite{BachrachCDELP19}, but with a critical difference, that we highlight below.
For positive integer $n$, let $x, y \in \{0, 1\}^{n^2}$. We will now define a graph $G_{x, y}$ as follows. The vertex set of $G_{x, y}$ is 
$$A_1 \cup A_2 \cup B_1 \cup B_2 \cup C_1 \cup C_2 \cup \{a^*, b^*\}$$
where $A_1 = \{a_1^i \mid 1 \le i \le n\}$, $A_2 = \{a_2^i \mid 1 \le i \le n\}$,
$B_1 = \{b_1^i \mid 1 \le i \le n\}$, $B_2 = \{b_2^i \mid 1 \le i \le n\}$,
$C_1 = \{c_1^i \mid 1 \le i \le n\}$, and $C_2 = \{c_2^i \mid 1 \le i \le n\}$. Therefore, $G_{x,y}$ has $6n+2$ vertices.

We now describe the edges of $G_{x, y}$. The vertices in $C_1$ (and $C_2$) form an $n$-vertex clique.
Each vertex $c_1^i \in C_1$ is connected to all vertices $a_1^j$, $j \not= i$ and to all vertices $b_1^j$, $j \not= i$.
Similarly, each vertex $c_2^i \in C_2$ is connected to all vertices $a_2^j$, $j \not= i$ and to all vertices $b_2^j$, $j \not= i$.
Vertex $a^*$ is connected to all vertices in $A_1$ and vertex $b^*$ is connected to all vertices in $B_1$.
This completes the ``fixed'' edges in the graph, i.e., the edges that do not depend on bit vectors $x$ and $y$.
The edges between $A_1$ and $A_2$ depend on $x$ as follows: $\{a_1^i, a_2^j\}$ is an edge iff $x_{ij} = 1$.
The edges between $B_1$ and $B_2$ depend on $y$ as follows: $\{b_1^i, b_2^j\}$ is an edge iff $y_{ij} = 1$.
The construction is illustrated in Figure~\ref{fig:MDSLBConstruction}.

In \cite{BachrachCDELP19}, the authors use a similar construction to obtain an $\Omega(n^2)$ \textit{round} lower bound for \textit{exact} MDS. Their construction
critically depends on the existence of a small cut (with $O(\poly(\log n))$ edges) across the partition $(V_A, V_B)$ of the vertex set, where $V_A \supseteq A_1 \cup A_2$ and $V_B \supseteq B_1 \cup B_2$. For a message complexity lower bound, we don't need a small cut. In fact, it can be verified that $\Omega(n^2)$ edges connect any $V_A \supseteq A_1 \cup A_2$ and $V_B \supseteq B_1 \cup B_2$ in our lower bound graph. This flexibility plays a key role in our ability to force a constant-sized dominating set in the lower bound graph and obtain a relatively large gap in the MDS sizes between different types of instances, as shown in the following lemma.

\begin{figure}
\begin{center}
\includegraphics[width=\textwidth]{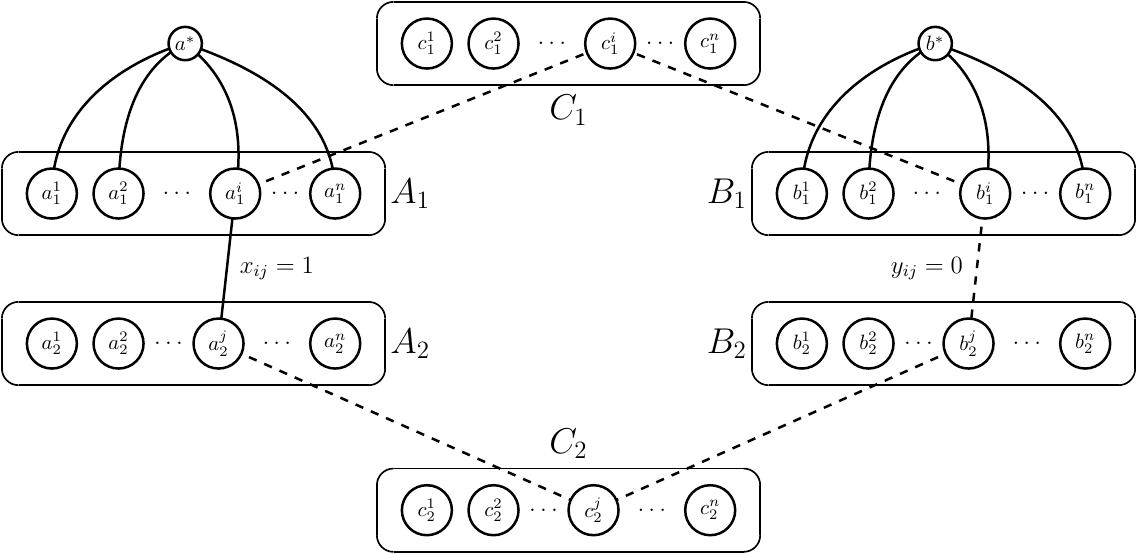}
\end{center}
\caption{Illustration of the graph $G_{x,y}$ used in the MDS lower bound proof. For clarity, many edges are not shown. Dashed line segments represent the absence of edges.}
\label{fig:MDSLBConstruction}
\end{figure}

\begin{lemma}
\label{thm:mdscrossing}
If $\exists (i,j)$ such that $x_{ij} = y_{ij} = 1$ then $G_{x, y}$ has a dominating set of size at most $4$.
If $\nexists (i,j)$ such that $x_{ij} = y_{ij} = 1$ then $G_{x, y}$ has a dominating set of size at least $5$.
\end{lemma}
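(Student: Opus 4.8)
The plan is to prove the two inequalities separately, treating the ``$\le 4$'' direction as a direct construction and the ``$\ge 5$'' direction as an exhaustive but structured case analysis.

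For the upper bound, suppose $x_{ij}=y_{ij}=1$, so that both edges $\{a_1^i,a_2^j\}$ and $\{b_1^i,b_2^j\}$ are present. I claim $D=\{c_1^i,\,c_2^j,\,a_1^i,\,b_1^i\}$ is a dominating set of size $4$. Indeed, $c_1^i$ dominates all of $C_1$ together with $A_1\setminus\{a_1^i\}$ and $B_1\setminus\{b_1^i\}$, and symmetrically $c_2^j$ dominates $C_2$, $A_2\setminus\{a_2^j\}$ and $B_2\setminus\{b_2^j\}$. The only vertices left uncovered are $a_1^i,b_1^i,a^*,b^*,a_2^j,b_2^j$: here $a_1^i$ dominates itself, $a^*$, and (using $x_{ij}=1$) the vertex $a_2^j$, while $b_1^i$ dominates itself, $b^*$, and (using $y_{ij}=1$) the vertex $b_2^j$. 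Checking that every vertex is thereby covered finishes this direction.

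For the lower bound I would first record three ``forced membership'' facts coming from closed neighborhoods: $a^*$ can only be dominated from $R_a:=A_1\cup\{a^*\}$, the vertex $b^*$ only from $R_b:=B_1\cup\{b^*\}$, and every vertex of the clique $C_2$ only from $R_2:=A_2\cup B_2\cup C_2$ (no $1$-side vertex is adjacent to $C_2$). Since these three sets are pairwise disjoint and
\[ V \;=\; R_a \,\sqcup\, R_b \,\sqcup\, R_2 \,\sqcup\, C_1, \]
any dominating set must place at least one vertex in each of $R_a,R_b,R_2$. Hence if $|D|\le 4$ there is at most one ``spare'' vertex, and the only possible distributions of $|D|$ over $(R_a,R_b,R_2,C_1)$ are $(2,1,1,0)$, $(1,2,1,0)$, $(1,1,2,0)$ and $(1,1,1,1)$ (the size-$3$ pattern $(1,1,1,0)$ being subsumed). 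A second elementary observation I would use repeatedly is that a \emph{single} vertex can dominate the entire clique $C_2$ only if it lies in $C_2$ (any $a_2^k$ or $b_2^k$ misses $c_2^k$), and symmetrically for $C_1$.

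The heart of the proof is then to check each distribution and show that it either fails to dominate $G_{x,y}$ or forces a coordinate with $x_{ij}=y_{ij}=1$, contradicting $\nexists(i,j):x_{ij}=y_{ij}=1$. In the two ``clique-split'' patterns $(2,1,1,0)$ and $(1,2,1,0)$ I expect a clean impossibility: once the unique $R_2$-vertex is forced into $C_2$, dominating its two ``private'' neighbors $a_2^\bullet,b_2^\bullet$ and simultaneously all of $B_1$ (resp.\ $A_1$) exhausts the budget, leaving a large set such as $B_1\setminus\{b_1^q\}$ uncovered. The genuinely delicate cases are $(1,1,2,0)$ and $(1,1,1,1)$: here the two private vertices $a_2^k,b_2^l$ left over by the $C_2$-vertex can \emph{only} be reached across the input-dependent edges, which forces the $R_a$- and $R_b$-vertices to be actual row vertices $a_1^p,b_1^q$ with $x_{pk}=1$ and $y_{ql}=1$; covering the remaining row vertices ($A_2\setminus\{a_2^k\}$, $B_1$, $C_1$, etc.) then pins these indices together and yields $x_{pl}=y_{pl}=1$ (respectively $x_{pi}=y_{pi}=1$ in the $(1,1,1,1)$ case), the sought coincidence. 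The main obstacle is exactly this bookkeeping: simultaneously satisfying the ``private-neighbor'' constraints on the $2$-side and the clique/endpoint constraints on the $1$-side with only four vertices, and extracting from their intersection a single common $1$-coordinate of $x$ and $y$. I would organize the argument by always forcing the $C_2$-vertex first, reading off its two private neighbors, and then tracking which row vertices are thereby compelled into $D$.
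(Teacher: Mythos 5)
Your proof is correct, and its upper-bound half coincides with the paper's (the same dominating set $\{a_1^i,c_1^i,b_1^i,c_2^j\}$, verified the same way). The lower-bound half, however, takes a genuinely different route. The paper never forms the partition $V=R_a\sqcup R_b\sqcup R_2\sqcup C_1$ or enumerates occupancy patterns: it first rules out $a^*,b^*$ from any size-$4$ dominating set $S$, so that $S$ contains row vertices $a_1^i,b_1^j$, and then splits on $i=j$ versus $i\neq j$, in the latter case proving that $S$ must meet each of $A_1,A_2,B_1,B_2$ (by separately excluding $C_1$- and $C_2$-vertices) before extracting the common coordinate. Your enumeration of the distributions $(2,1,1,0),(1,2,1,0),(1,1,2,0),(1,1,1,1)$ is exhaustive by construction, absorbs the $a^*/b^*$ discussion into each case rather than a preliminary step, and makes the role of $C_2$ as a third forced region explicit from the start; the paper's index-based split has fewer cases but requires more care to see that nothing is missed (its dismissal of the $a^*\in S$ possibility is quite terse). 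I checked that all four of your cases close as you predict: the clique-split patterns die from an uncovered $B_1\setminus\{b_1^q\}$ (resp.\ $A_1\setminus\{a_1^p\}$), and the other two force the stated coincidences. The one imprecision to fix when writing it up concerns $(1,1,2,0)$: in its surviving subcase $D$ contains \emph{no} $C_2$-vertex at all --- the two $R_2$-vertices must be $a_2^k$ and $b_2^l$ with $k\neq l$, since any subcase containing a $C_2$-vertex leaves $A_1\setminus\{a_1^p\}$ or $B_1\setminus\{b_1^q\}$ uncovered just as in the clique-split patterns --- so the operative forced constraints are not $x_{pk}=1$ and $y_{ql}=1$ but rather $x_{pm}=1$ for all $m\neq k$ (to dominate $A_2\setminus\{a_2^k\}$) and $y_{ml}=1$ for all $m\neq q$ (to dominate $B_1\setminus\{b_1^q\}$), which combine at the entry $(p,l)$ to give $x_{pl}=y_{pl}=1$, exactly the coincidence you name.
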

\begin{proof}
If $\exists (i,j)$ such that $x_{ij} = y_{ij} = 1$, then the edges $\{a_1^i,a_2^j\}$ and $\{b_1^i,b_2^j\}$ exist in $G_{x,y}$ and so the nodes $a_1^i,c_1^i,b_1^i,c_2^j$ form a dominating set of size $4$. 

We aim to show that on the other hand, when $\nexists (i,j)$ such that $x_{ij} = y_{ij} = 1$, the minimum dominating set size is at least $5$. More precisely, we show that no set $S$ of 4 nodes can dominate the lower bound graph $G$.

First, it is clear that there must be at least one node of $S$ in $A_1 \cup \{a^*\}$ to dominate $a^*$. Similarly, there must be at least one node of $S$ in $B_1 \cup \{b^*\}$ to dominate $b^*$. Moreover, if $a^* \in S$ or $b^* \in S$ (and $|S| = 4$) then in addition to the first two nodes, at least 3 more nodes are required to dominate $C_1 \cup B_2 \cup C_2 \cup A_2$. More precisely, one more node is required to dominate $C_1$ entirely and two more nodes are required to dominate $B_2 \cup C_2 \cup A_2$ entirely. 
Hence, we assume $a_1^i \in S$ and $b_1^j \in S$ for some integers $i,j \in \{1,\ldots,n\}$ for the remainder of the proof. 

First, assume $i = j$. Then, a third node in $A_1 \cup C_1 \cup B_1$ is required to dominate $C_1$. As a result, the last node must be in $C_2$ (as any node in $A_2$ or $B_2$ cannot dominate $C_2$ entirely). Let that node be $c_2^p$ for some integer $p \in \{1,\ldots,n\}$. This implies that $S$ is dominating only if both edges $(a_1^i,a_2^p)$ and $(b_1^j,b_2^p)$ are present in $G$ (for $i = j$), which is a contradiction as $x_{ik} = y_{ik} = 1$. 

Second, assume $i \neq j$. Then, no single node of $C_1$ (and thus in fact no single node of $G$) can dominate $A_1' \triangleq A_1 \setminus \{a_1^i\}$ and $B_1' \triangleq B_1 \setminus \{b_1^i\}$. Next, we show the following claim: there must be one node of $S$ in each of the sets $A_1$, $A_2$, $B_1$ and $B_2$. Suppose first that at least one node of $S$ is in $C_1$. Then without loss of generality, one node of $A_1$ is not dominated by the first three nodes of $S$. Thus, the fourth node must be in $A_2$. However, there is a node in $C_2$ that is not dominated by $S$, which is a contradiction. Next, suppose that at least one node of $S$ is in $C_2$ (and thus dominates no node in $A_1$ nor $B_1$). Then, the fourth node cannot dominate both $A_1'$ and $B_1'$, which means $S$ cannot dominate $G$. Finally, the claim follows from the fact that either $A_1'$ or $B_1'$ is not dominated if there is no node of $S$ either in $A_2$ or in $B_2$.

To finish with the case $i \neq j$, note that given the above claim, we can define the nodes of $S$ as $a_1^i$, $b_1^j$, $a_2^p$ and $b_2^q$. Moreover, it must hold that $i \neq j$ and $p \neq q$, otherwise either $C_1$ or $C_2$ is not dominated. In addition, $a_1^i$ must dominate $A_2 \setminus a_2^p$ and $b_2^q$ must dominate $B_1 \setminus b_1^j$.
Since $i \neq j$ and $p \neq q$, this implies that both edges $(a_1^i, a_2^q)$ and edges $(b_1^i,b_2^q)$ must be present in $G$, which is a contradiction as $x_{iq} = y_{iq} = 1$.
\end{proof}

We will pick a member $G$ of this family such that the subgraph $G[A_1 \cup A_2]$ is a fixed $n/2$-regular bipartite graph where for all $1 \le i \le n$, $a_1^i$ is connected to $a_2^j$ for all $j = i, i+1, \dots, (i + n/2-1)$ (if $j$ becomes larger than $n$, we wrap around back to $1$). And the subgraph $G[B_1 \cup B_2]$ is the complement of $G[A_1 \cup A_2]$. We claim that that this graph has a dominating set of size at least 5. This is because $G$ can also be viewed as $G_{x,y}$ for some string $x \in \{0, 1\}^{n^2}$ and $y = \overline{x}$. Since there is no index $(i,j)$ such that $x_{ij} = y_{ij} = 1$, by Lemma~\ref{thm:mdscrossing}, $G$ has a dominating set of size at least 5.

Let $\mathcal{A}$ be a deterministic dominating set algorithm in the \KTZero{} \local{} model that uses $o(n^2)$ messages. Note that $\mathcal{A}$ outputs a dominating set of size at least 5 on $G$. Then, there exists an edge $e = \{a_1^i, a_2^j\}$ in $G$, $a_1^i \in A_1$, $a_2^j \in A_2$ such that no message passes through this edge during the execution of algorithm $\mathcal{A}$.
Let $\overline{N}(a_1^i) \subseteq A_2$ be the subset of $A_2$ containing vertices that are not neighbors of $a_1^i$. Similarly, let $\overline{N}(a_2^j) \subseteq A_1$ be the subset of $A_1$ containing vertices that are not neighbors of $a_2^j$. Note that $|\overline{N}(a_1^i)| = |\overline{N}(a_2^j)| = \frac{n}{2}$, and there are $\Theta(n^2)$ edges in $G$ between $\overline{N}(a_1^i)$ and $\overline{N}(a_2^j)$. Since $\mathcal{A}$ uses $o(n^2)$ messages, there is an edge $e' = \{a_1^p, a_2^q\}$, $a_1^p \in \overline{N}(a_2^j)$, $a_2^q \in \overline{N}(a_1^i)$, such that no message passes over edge $e'$ during the execution of $\mathcal{A}$.  

Let $G_{e, e'}$ be the graph obtained from $G$ by crossing the edges $e = \{a_1^i, a_2^j\}$ and $e' = \{a_2^q, a_1^p\}$ according to Definition~\ref{def:kt-0-port-preserving-crossing}. In particular, $e$ and $e'$ are replaced by edges $\{a_1^i, a_2^q\}$ and $\{a_1^p, a_2^j\}$, with the port-numbering preserved. Note that by Theorem~\ref{thm:kt-0-crossing-similarity}, algorithm $\mathcal{A}$ behaves identically in $G$ and $G_{e, e'}$ and so $\mathcal{A}$ outputs a dominating set of size at least 5 for $G_{e,e'}$ also.
However, $G_{e, e'}$ has a dominating set of size 4. This follows from Lemma~\ref{thm:mdscrossing}, and the fact that $\{b_1^i, b_2^q\}$ (and $\{b_1^p, b_2^j\}$) is an edge in $G_{e, e'}$. 

This means that $\mathcal{A}$ outputs a dominating set of size at least 5 for a graph with a dominating set of size at most 4. Thus, a $(5/4-\epsilon)$-approximation, $\epsilon > 0$, for MDS requires $\Omega(n^2)$ messages.

\begin{theorem}
\label{thm:kt0-mds-det-lb}
For any constant $\epsilon > 0$, any deterministic \KTZero{} \local{} algorithm $\mathcal{A}$ that computes
a $(5/4-\epsilon)$-approximation of MDS on $n$-vertex graphs has $\Omega(n^2)$ message complexity.
\end{theorem}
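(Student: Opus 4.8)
The plan is to run a port-preserving crossing argument, combining the three ingredients already assembled: the crossing operation (Definition~\ref{def:kt-0-port-preserving-crossing}), its indistinguishability guarantee (Theorem~\ref{thm:kt-0-crossing-similarity}), and the domination gap (Lemma~\ref{thm:mdscrossing}). First I would suppose, for contradiction, that $\mathcal{A}$ is a deterministic \KTZero{} \local{} algorithm that $(5/4-\epsilon)$-approximates MDS while sending only $o(n^2)$ messages, and run it on the designated member $G = G_{x,\overline{x}}$ of the family, in which $G[A_1 \cup A_2]$ is the fixed $n/2$-regular bipartite graph and $G[B_1 \cup B_2]$ is its complement. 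Because $y = \overline{x}$, no index pair $(i,j)$ carries an edge on both the $A$- and the $B$-side, so Lemma~\ref{thm:mdscrossing} forces $\mathrm{MDS}(G) \ge 5$; correctness of $\mathcal{A}$ then forces its output $S$ on $G$ to satisfy $|S| \ge 5$.

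Next I would locate two crossable silent edges. Since $G[A_1 \cup A_2]$ has $\Theta(n^2)$ edges while $\mathcal{A}$ sends $o(n^2)$ messages, at most $o(n^2)$ edges can carry any message, so some edge $e = \{a_1^i, a_2^j\}$ sees no message during the execution. Restricting to the non-neighborhoods $\overline{N}(a_1^i) \subseteq A_2$ and $\overline{N}(a_2^j) \subseteq A_1$, each of size $n/2$, there remain $\Theta(n^2)$ edges between them, so again a second silent edge $e' = \{a_1^p, a_2^q\}$ exists with $a_1^p \in \overline{N}(a_2^j)$ and $a_2^q \in \overline{N}(a_1^i)$. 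Forming the crossing $G_{e,e'}$ replaces $e,e'$ by $\{a_1^i, a_2^q\}$ and $\{a_1^p, a_2^j\}$. The step I expect to require the most care is checking that this crossing \emph{plants a common index}: since $a_2^q \in \overline{N}(a_1^i)$ we have $x_{iq} = 0$, hence $\overline{x}_{iq} = 1$, so $\{b_1^i, b_2^q\}$ is already present on the (unchanged) $B$-side; the newly added edge $\{a_1^i, a_2^q\}$ therefore makes $(i,q)$ a common index, and Lemma~\ref{thm:mdscrossing} gives $\mathrm{MDS}(G_{e,e'}) \le 4$.

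Finally I would invoke indistinguishability. By Theorem~\ref{thm:kt-0-crossing-similarity}, since no message crosses $e$ or $e'$, algorithm $\mathcal{A}$ behaves identically on $G$ and $G_{e,e'}$ and hence outputs the very same set $S$ on both. On $G_{e,e'}$ the approximation guarantee gives $|S| \le (5/4-\epsilon)\cdot 4 = 5 - 4\epsilon < 5$, i.e.\ $|S| \le 4$, contradicting $|S| \ge 5$ obtained from the run on $G$. This contradiction shows that no $o(n^2)$-message algorithm can achieve a $(5/4-\epsilon)$-approximation, which is exactly the claimed $\Omega(n^2)$ message lower bound. Beyond the two standard black boxes (crossing indistinguishability and the MDS gap), the only delicate bookkeeping is confirming that the two non-neighborhoods genuinely host $\Theta(n^2)$ candidate edges for $e'$ once $e$ is fixed — so that the $o(n^2)$ message budget is necessarily beaten by a silent candidate — and the index-planting calculation that certifies the drop from $\mathrm{MDS} \ge 5$ to $\mathrm{MDS} \le 4$ under the $A$-side crossing against the complementary $B$-side.
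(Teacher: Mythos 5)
Your proposal is correct and follows essentially the same route as the paper's own proof: the same choice of $G = G_{x,\overline{x}}$ with the $n/2$-regular circulant on the $A$-side and its complement on the $B$-side, the same two-stage selection of silent edges $e$ and $e'$ via the $\Theta(n^2)$ edges between the non-neighborhoods, the same index-planting observation that $x_{iq}=0$ forces $\{b_1^i,b_2^q\}$ to exist so that crossing creates a common index, and the same contradiction between $|S|\ge 5$ and $|S|\le(5/4-\epsilon)\cdot 4<5$ via Theorem~\ref{thm:kt-0-crossing-similarity} and Lemma~\ref{thm:mdscrossing}.
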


\begin{theorem}
\label{thm:kt0-mds-random-lb}
For any constant $\epsilon > 0$, any randomized Monte-Carlo \KTZero{} \local{} algorithm $\mathcal{A}$ that computes a $(5/4-\epsilon)$-approximation of MDS on $n$-vertex graphs with constant error probability $\delta < 1/2$ has $\Omega(n^2)$ message complexity.
\end{theorem}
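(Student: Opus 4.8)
The plan is to reduce the randomized statement to the deterministic edge-crossing argument of Theorem~\ref{thm:kt0-mds-det-lb} by fixing the algorithm's coin tosses and then averaging over them. I would reuse the \emph{same} hard instance $G = G_{x,\overline{x}}$ from that proof, together with its fixed ID and port assignment, so that $\mathrm{MDS}(G) \ge 5$ by Lemma~\ref{thm:mdscrossing}. Assume for contradiction that $\mathcal{A}$ is a Monte-Carlo $(5/4-\epsilon)$-approximation with error probability at most $\delta < 1/2$ whose expected message complexity on $G$ is $o(n^2)$. For each random string $R$, let $\mathcal{A}_R$ denote the resulting deterministic algorithm and $M(R)$ its message count, and for a true $A$-edge $f = \{a_1^s, a_2^t\}$ (one with $x_{st}=1$) set $q_f = \Pr_R[\mathcal{A}_R \text{ sends a message over } f]$. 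Since the number of distinct edges carrying a message is at most $M(R)$, summing over the $\Theta(n^2)$ true $A$-edges gives $\sum_f q_f \le \mathbb{E}_R[M(R)] = o(n^2)$.

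Next I would isolate a single valid crossing pair that is rarely used. Call a true $A$-edge \emph{good} if $q_f < t$, where $t := (1-2\delta)/3$ is a positive constant; by Markov's inequality at most $o(n^2)/t = o(n^2)$ true $A$-edges are bad. Recall from the deterministic proof that $(e,e') = (\{a_1^i,a_2^j\},\{a_1^p,a_2^q\})$ is a valid crossing pair when both are true edges with $x_{iq}=x_{pj}=0$ (so $a_2^q \in \overline{N}(a_1^i)$ and $a_1^p \in \overline{N}(a_2^j)$) and $i\ne p$, $j \ne q$; the explicit $n/2$-regular structure of $G[A_1\cup A_2]$ guarantees $\Theta(n^4)$ such pairs. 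Since at most $o(n^2)$ edges are bad, at most $2\cdot o(n^2)\cdot\Theta(n^2)=o(n^4)$ valid pairs touch a bad edge, so there is a valid pair $(e,e')$ consisting of two good edges. For this pair, $\Pr_R[\mathcal{A}_R \text{ uses } e \text{ or } e'] \le q_e + q_{e'} < 2t$, so the event $E$ that no message crosses $e$ or $e'$ satisfies $\Pr[E] \ge 1-2t = \tfrac{1+4\delta}{3} > 2\delta$, where the last inequality uses $\delta < 1/2$.

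Finally I would derive the contradiction via a coupling on $R$. Crossing $e,e'$ produces $G_{e,e'}$, and exactly as in Theorem~\ref{thm:kt0-mds-det-lb} the new edge $\{a_1^i,a_2^q\}$ together with $y_{iq}=\overline{x}_{iq}=1$ makes $(i,q)$ a common $1$-index, so $\mathrm{MDS}(G_{e,e'}) \le 4$ by Lemma~\ref{thm:mdscrossing}. Run $\mathcal{A}$ on $G$ and on $G_{e,e'}$ under the same string $R$. For each fixed $R$ the algorithm $\mathcal{A}_R$ is deterministic, so Theorem~\ref{thm:kt-0-crossing-similarity} applies: on the event $E$ the two executions are identical and return the same output $O$. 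If $|O|\le 4$ then $O$ cannot dominate $G$ (as $\mathrm{MDS}(G)\ge 5$), an error on $G$; if $|O|\ge 5$ then $O$ exceeds $(5/4-\epsilon)\cdot 4 < 5$, an error on $G_{e,e'}$. Hence $E$ is contained in the union of the two error events, giving $2\delta \ge \Pr[\text{err on }G] + \Pr[\text{err on }G_{e,e'}] \ge \Pr[E] > 2\delta$, a contradiction; therefore $\mathcal{A}$ must send $\Omega(n^2)$ messages.

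The main obstacle is the step that commits to one valid pair $(e,e')$ which is simultaneously rarely messaged and yields a FALSE-type instance: unlike the deterministic case, one cannot first find a silent $e$ and then search for $e'$, because the set of silent edges depends on $R$, so the pair must be fixed before $R$ is revealed. This forces the two-sided counting (valid good pairs are abundant precisely because bad edges number only $o(n^2)$) and the choice of budget $t=(1-2\delta)/3$ tuned so that $\Pr[E]$ strictly beats the $2\delta$ error allowance. A secondary point to handle carefully is that Theorem~\ref{thm:kt-0-crossing-similarity} is a statement about deterministic executions, so it must be invoked per fixed $R$ and then combined with the coupling, rather than applied to the randomized algorithm as a whole.
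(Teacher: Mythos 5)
Your proposal is correct, and it proves the theorem by a genuinely different route than the paper. The paper invokes Yao's minimax principle: it defines a hard input distribution (the base graph $G$ with probability $1/2$, the remaining mass uniform over the family $\mathcal{F}$ of validly crossed graphs $G_{e,e'}$), and then shows that any \emph{deterministic} algorithm sending at most $\alpha n^2$ messages must, because it is silent on most $A_1$--$A_2$ edges of $G$, produce the same (size-$\ge 5$) output on a $1-O(\alpha)$ fraction of $\mathcal{F}$ and hence err on far more than a $2\delta$ fraction of that family. You instead work directly with the randomized algorithm: you average over the coin flips to define per-edge messaging probabilities $q_f$ on the fixed instance $G$, use Markov plus the $\Theta(n^4)$-vs-$o(n^4)$ count of valid crossing pairs to exhibit a \emph{single, pre-committed} pair $(e,e')$ of rarely-used edges, and then run a coupling on shared randomness, applying the crossing-similarity theorem per fixed coin string $R$ so that on the event $E$ (no message on $e$ or $e'$) the two executions agree and the common output must be wrong on $G$ or on $G_{e,e'}$; the union bound $\Pr[E] \ge (1+4\delta)/3 > 2\delta$ closes the argument. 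You also correctly diagnosed the one structural obstacle this route faces — the pair must be fixed before the randomness is revealed, so one cannot first find a silent $e$ and then search for $e'$ as in the deterministic proof — and your two-sided "good edge" counting resolves it. What each approach buys: yours is self-contained (no Yao), yields a lower bound on the \emph{expected} message complexity rather than just the worst case over coins, and handles the entire range $\delta < 1/2$ with explicit constants, whereas the paper's accounting gives $\delta < 1/2 - O(\alpha)$ and then shrinks $\alpha$; the paper's version, in exchange, is shorter given Yao as a black box and leaves behind a reusable distributional (deterministic-algorithm) statement. Both proofs rest on the same three pillars: the MDS gap lemma (Lemma~\ref{thm:mdscrossing}), the port-preserving crossing theorem (Theorem~\ref{thm:kt-0-crossing-similarity}), and the abundance of valid crossing pairs in the fixed $n/2$-regular instance.
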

\begin{proof}
By Yao's minimax theorem~\cite{Yao77_probab}, it suffices to show a lower bound on the message complexity of a deterministic algorithm under a hard distribution $\mu$.

The hard distribution $\mu$ consists of the graph $G$ with probability $1/2$ and the remaining probability is distributed uniformly over the set $\mathcal{F}$ which contains all graphs of the form $G_{e, e'}$ where $e = \{a_1^i, a_2^j\}, e'=\{a_1^p, a_2^q\}$ for $1 \le i \le n, i < p, j \neq q$ such that $e \in G, e' \in G, \{a_1^i, a_2^q\} \notin G,$ and $\{a_1^p, a_2^j\} \notin G$. 

Note that a deterministic algorithm cannot produce an incorrect output on $G$ as it will have error probability greater than $\delta$. Therefore, the deterministic algorithm outputs a dominating set of size at least 5 on $G$.
So we now need to show that any deterministic algorithm that produces an incorrect output on at most a $2\delta$ fraction of graphs in $\mathcal{F}$ has $\Omega(n^2)$ message complexity. So suppose for sake of contradiction that there exists such an algorithm that outputs a $(5/4 - \epsilon)$-approximate dominating set with message complexity $\alpha n^2$ for some small constant $\alpha$. 

In $G$, the number of edges between $A_1$ and $A_2$ is $n^2/2$. Therefore, the algorithm can only send messages on $2\alpha$ fraction of the edges between $A_1$ and $A_2$. Recall our previous argument that for each edge $e$ between $A_1$ and $A_2$, there are $\Theta(n^2)$ edges $e'$ such that $G_{e,e'} \in \mathcal{F}$. Therefore, for each edge $e$ between $A_1$ and $A_2$, there are $1 - O(\alpha)$ fraction of edges $e'$ such that the algorithm does not send a message over $e'$, and $G_{e,e'} \in \mathcal{F}$. We consider all pairs of edges $e,e'$ such that $G_{e,e'} \in \mathcal{F}$ and the algorithm does not send any message over $e, e'$. Hence, it is easy to see that the number of such graphs is at least $(1 - O(\alpha))|\mathcal{F}|$ where the $O(\alpha)$ here hides a larger constant.

The proof of Theorem~\ref{thm:kt0-mds-det-lb} says that the deterministic algorithm outputs the same solution as it would for $G$ on all these graphs $G_{e,e'}$, which is a dominating set of size $5$. But the minimum dominating set in all these graphs has size at most $4$, which means the output of the algorithm is not a $(5/4 - \epsilon)$-approximation.

The algorithm is therefore incorrect on at least $(1 - O(\alpha))$ fraction of graphs in $\mathcal{F}$. This is a contradiction if $2\delta < 1 - O(\alpha)$, that is, $\delta < 1/2 - O(\alpha)$. This can be made arbitrarily close to $1/2$ by choosing a small enough constant $\alpha$.
\end{proof}

\subsection{Unconditional Quadratic Lower Bound for MVC Approximation}
\label{section:kt0-mvc-lb}
Here we show that for any parameter $c \ge 1$, any randomized \KTZero{} \local{} $c$-approximation algorithm for the minimum vertex cover (MVC) problem uses $\Omega(n^2/c)$ messages for some $n$-vertex graph.

Define a graph $G = (V, E)$ where $V$ is divided into three parts $X, Y, Z$ such that $|X| = |Z| = t$ and $|Y| = t/(4c)$. We add all possible edges between $X$ and $Y$ and all possible edges between $Y$ and $Z$. We then add a copy $G' = (V', E')$ of $G$ (where the three parts of $V'$ are $X', Y'$ and $Z'$). Note that $G$ and $G'$ have exactly $t^2/(2c)$ edges each. We will call $G \cup G'$ the \emph{base graph}, using which we create the lower bound graphs. Let $n = |V \cup V'| = 4t + t/(2c)$, thus $t = 2cn/(8c + 1)$.

Each node in the base graph is assigned a unique ID in the range $[1,n]$, and the ports of each node are also assigned in some arbitrary way.

We create a \emph{crossed graph} $G_{e, e'}$ by starting with the base graph $G \cup G'$ and then replacing edges $e = \{y,z\}$ and $e' = \{x',y'\}$ with edges $\{y, y'\}$ and $\{z, x'\}$ in a port preserving manner according to Definition~\ref{def:kt-0-port-preserving-crossing}. The base graph and crossed graph are illustrated in Figure~\ref{fig:MVCLBConstruction}. Note that the ID and port assignments to the nodes remain unchanged in the crossed graph.

The base and crossed graphs are similar to the lower bound graphs used in \cite{PaiPPRPODC2021} to prove message complexity lower bounds for MIS and $(\Delta+1)$-coloring (here $|Y| = |Y'| = t$). The motivation for shrinking the size of $Y$ (and $Y'$) is to ensure that in any approximate vertex cover of $G \cup G'$, there are lots of vertices that are not in the cover. This is made precise in the following claim.

\begin{figure}[t]
\begin{center}
\includegraphics[width=0.8\textwidth]{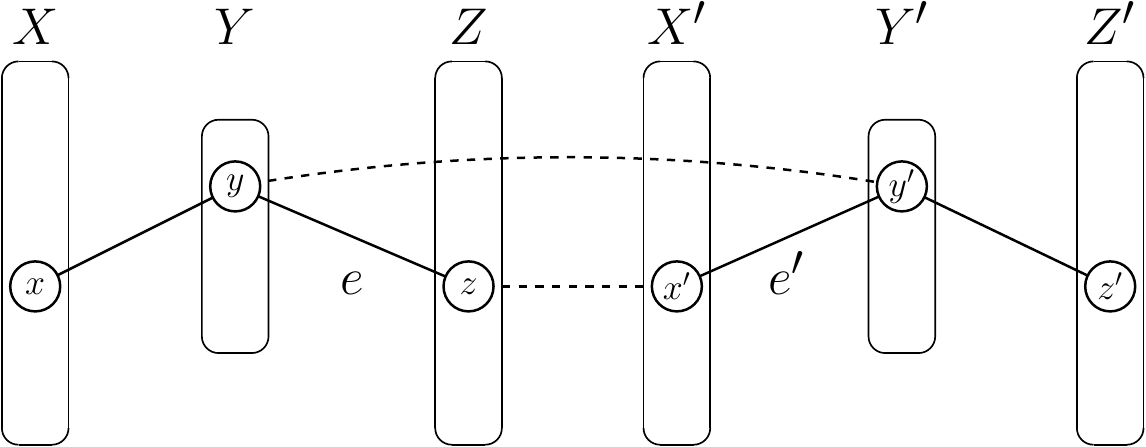}
\end{center}
\caption{Illustration of the graphs $G \cup G'$, and $G_{e,e'}$ used in the MVC lower bound proof.}
\label{fig:MVCLBConstruction}
\end{figure}

\begin{claim}\label{claim:kt0-mvc-size}
Any $c$-approximate vertex cover in $G \cup G'$ has size at most $t/2$.
\end{claim}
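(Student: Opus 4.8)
The plan is to pin down the exact size of the minimum vertex cover of the base graph $G \cup G'$ and then simply invoke the definition of a $c$-approximation. First I would observe that $G$ is \emph{precisely} the complete bipartite graph with parts $Y$ on one side and $X \cup Z$ on the other: by construction every edge joins $Y$ to $X$ or $Y$ to $Z$, so each edge has exactly one endpoint in $Y$, and all such pairs are present. Consequently $Y$ is a vertex cover of $G$ of size $|Y| = t/(4c)$.

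Next I would argue this is optimal. Since $|X \cup Z| = 2t \ge t/(4c) = |Y|$ (using $c \ge 1$), there is a matching of size $|Y|$ obtained by pairing each $y \in Y$ with a distinct neighbor in $X \cup Z$. Any vertex cover must contain at least one endpoint of each of these $|Y|$ vertex-disjoint edges, so every vertex cover of $G$ has size at least $t/(4c)$. Hence the minimum vertex cover of $G$ has size exactly $t/(4c)$, and the identical argument applied to the disjoint copy $G'$ shows its minimum vertex cover also has size $t/(4c)$.

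Because $G$ and $G'$ are vertex-disjoint, a vertex cover of $G \cup G'$ is exactly a union of a vertex cover of $G$ and a vertex cover of $G'$, so the optimum splits across the two components and the minimum vertex cover of $G \cup G'$ has size
\[
\frac{t}{4c} + \frac{t}{4c} = \frac{t}{2c}.
\]
By definition, any $c$-approximate vertex cover of $G \cup G'$ has size at most $c$ times this optimum, namely at most $c \cdot \frac{t}{2c} = \frac{t}{2}$, which is the claimed bound.

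There is no genuine obstacle here; the only points that need care are the two structural facts I have isolated above, both immediate from the construction: that $G$ really is complete bipartite (so that the matching lower bound applies and no cover can be cheaper than $|Y|$), and that the optimum of the disjoint union is additive across its two components. The deliberate shrinking of $Y$ to size $t/(4c)$ is exactly what makes the optimum small ($t/(2c)$) so that even a $c$-approximation stays at most $t/2$, leaving a constant fraction of $X \cup Z \cup X' \cup Z'$ outside the cover — the property that the subsequent crossing argument will exploit.
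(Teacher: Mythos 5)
Your proof is correct, and its overall structure matches the paper's: both pin down the optimum vertex cover of $G \cup G'$ as exactly $t/(2c)$ (witnessed by $Y \cup Y'$) and then multiply by $c$. The only real difference is in how the lower bound $\mathrm{OPT} \ge t/(2c)$ is justified. The paper uses a counting argument: $G \cup G'$ has $t^2/c$ edges and maximum degree $2t$, so any cover needs at least $(t^2/c)/(2t) = t/(2c)$ vertices. You instead exhibit a matching of size $|Y|$ in each component (pairing each $y \in Y$ with a distinct neighbor in $X \cup Z$, which exists since the bipartite graph is complete and $2t \ge t/(4c)$) and invoke the fact that vertex-disjoint edges need distinct cover vertices, together with additivity of the optimum over the two disjoint components. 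Both arguments are one-liners and yield the same bound; the matching argument is marginally more robust (it would survive deleting edges as long as a large matching remains, whereas the degree-counting bound is tight here only because every vertex of $Y$ has full degree $2t$), while the paper's counting argument avoids needing the component decomposition and the matching existence check. You also correctly flag the purpose of shrinking $|Y|$ to $t/(4c)$, which is exactly the role this claim plays in the subsequent crossing argument.
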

\begin{proof}
This follows from the fact that the optimal vertex cover in $G \cup G'$ is just $Y \cup Y'$ (and its size is $t/(2c)$). There can be no vertex cover of size $< t/(2c)$ because there are $t^2/c$ edges to cover and the maximum degree a vertex has in $G \cup G'$ is $2t$.
\end{proof}

Now suppose (to obtain a contradiction) that there is a deterministic \KTZero{} \local{} algorithm $\mathcal{A}$ that computes a $c$-approximate vertex cover $C$ in $G \cup G'$ using $o(t^2/c)$ messages.

\begin{claim}\label{claim:kt0-mvc-triplets}
There exist $y \in Y$, $z \in Z \setminus C$, $x' \in X' \setminus C$ and $y' \in Y'$ such that no message passes over edges $\{y, z\}$ and $\{x',y'\}$ in algorithm $\mathcal{A}$.
\end{claim}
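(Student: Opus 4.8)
The plan is to prove the claim by a direct counting (pigeonhole) argument that exploits the fact that the computed cover $C$ is small while the message budget is subquadratic. First I would invoke Claim~\ref{claim:kt0-mvc-size}, which guarantees that the $c$-approximate cover satisfies $|C| \le t/2$. Since $|Z| = |X'| = t$, this immediately yields $|Z \setminus C| \ge t/2$ and $|X' \setminus C| \ge t/2$; these two ``reservoirs'' of uncovered vertices are where I will find the required endpoints $z$ and $x'$.

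Next I would count the relevant edges and compare the count against the message budget. Because $Y$ and $Z$ induce a complete bipartite graph, every $z \in Z$ is incident to exactly $|Y| = t/(4c)$ edges of the form $\{y,z\}$, and these edge sets are disjoint for distinct $z$. Hence the number of $Y$--$Z$ edges with an endpoint in $Z \setminus C$ is at least $(t/2)\cdot(t/(4c)) = t^2/(8c)$. An edge can carry a message only if the algorithm actually sends over it, so at most $o(t^2/c)$ edges carry any message. For $t$ large enough this is strictly smaller than $t^2/(8c)$, so at least one $Y$--$Z$ edge $\{y,z\}$ with $z \in Z \setminus C$ and $y \in Y$ carries no message. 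Applying the symmetric argument to the complete bipartite graph between $X'$ and $Y'$ (each $x' \in X' \setminus C$ being incident to $|Y'| = t/(4c)$ edges) produces an $X'$--$Y'$ edge $\{x',y'\}$ with $x' \in X' \setminus C$ and $y' \in Y'$ over which no message passes. Since one edge lies in $G$ and the other in $G'$, they are automatically distinct, giving the four desired vertices $y, z, x', y'$.

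I do not expect a genuine obstacle here; the only point to verify is that the budget $o(t^2/c)$ is indeed dominated by the edge count $t^2/(8c)$, and this is precisely the role of shrinking $Y$ to size $t/(4c)$: it forces any $c$-approximate cover to have size at most $t/2$, leaving a constant fraction of both $Z$ and $X'$ outside $C$ and thereby keeping the two edge counts quadratic. This claim then feeds directly into the crossing step that follows: crossing $\{y,z\}$ with $\{x',y'\}$ creates the edge $\{z,x'\}$, which $C$ cannot cover because $z \notin C$ and $x' \notin C$, so the indistinguishability of $G \cup G'$ and $G_{e,e'}$ (Theorem~\ref{thm:kt-0-crossing-similarity}) yields the sought contradiction.
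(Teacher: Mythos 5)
Your proof is correct and follows essentially the same route as the paper: both invoke Claim~\ref{claim:kt0-mvc-size} to get $|Z \setminus C| \ge t/2$ and $|X' \setminus C| \ge t/2$, and then apply a counting/pigeonhole argument showing that if every $Y$--$(Z\setminus C)$ edge (respectively $(X'\setminus C)$--$Y'$ edge) carried a message, the algorithm would send $\Omega(t^2/c)$ messages, contradicting the $o(t^2/c)$ budget. The only cosmetic difference is that you count edges grouped by $z \in Z \setminus C$ while the paper groups them by $y \in Y$; both yield the same bound of $t^2/(8c)$.
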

\begin{proof}
  By Claim~\ref{claim:kt0-mvc-size}, there must be at least $t/2$ nodes in $Z$ and at least $t/2$ nodes in $X'$ that are not in $C$. And for at least one $z \in Z \setminus C$, there must be at least one $y \in Y$ such that no message passes over edge $\{y, z\}$. Otherwise, for every $y \in Y$ at least $t/2$ edges incident on $y$ have a message passing over the edge. This implies that $\Omega(t^2/c)$ edges in $G \cup G'$ have been used to send a message which is a contradiction. Similarly, for at least one $x' \in X' \setminus C$, there must be at least one $y' \in Y'$ such that no message passes over edges $\{x', y'\}$.
\end{proof}

\begin{lemma}\label{lemma:kt0-mvc-correctness}
Let $y \in Y$, $z \in Z \setminus C$, $x' \in X' \setminus C$ and $y' \in Y'$ be four nodes such that no message passes over the edges $e = \{y, z\}$ and $e' = \{x', y'\}$ in algorithm $\mathcal{A}$. Then $\mathcal{A}$ cannot compute a correct vertex cover on $G_{e,e'}$. 
\end{lemma}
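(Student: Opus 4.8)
The plan is to invoke the port-preserving-crossing indistinguishability (Theorem~\ref{thm:kt-0-crossing-similarity}) to force $\mathcal{A}$ to produce the \emph{same} output set on the base graph $G \cup G'$ and on the crossed graph $G_{e,e'}$, and then exhibit a single edge of $G_{e,e'}$ that this output fails to cover. Notice that we will not need the approximation factor $c$ at all: the output will not even be a valid vertex cover of $G_{e,e'}$.

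First I would record exactly what the crossing does. By Definition~\ref{def:kt-0-port-preserving-crossing} (as instantiated in the construction of $G_{e,e'}$), replacing $e = \{y,z\}$ and $e' = \{x',y'\}$ introduces the two new edges $\{y,y'\}$ and $\{z,x'\}$ with their port numbers preserved, while leaving every other edge, ID, and port assignment untouched. In particular, $\{z,x'\}$ is an edge of $G_{e,e'}$, and the two graphs carry the same ID assignment as required by Theorem~\ref{thm:kt-0-crossing-similarity}.

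Next, since by hypothesis no message travels over $e$ or $e'$ during the run of $\mathcal{A}$ on $G \cup G'$, Theorem~\ref{thm:kt-0-crossing-similarity} applies: $\mathcal{A}$ behaves identically on $G \cup G'$ and on $G_{e,e'}$, so every node ends in the same final local state, and hence the reported vertex cover is the same set $C$ in both executions. Finally I would use the way $z$ and $x'$ were selected: we are given $z \in Z \setminus C$ and $x' \in X' \setminus C$, so neither endpoint of the new edge $\{z,x'\}$ lies in $C$. Therefore $C$ leaves $\{z,x'\}$ uncovered in $G_{e,e'}$, even though $\mathcal{A}$ outputs exactly $C$ there; thus $\mathcal{A}$ fails to compute a correct vertex cover of $G_{e,e'}$, which is the lemma.

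The only point requiring care is the step from ``behaves identically'' to ``outputs the same set $C$'': I would make explicit that a node's membership in the reported cover is a function of its final local state, so identical states across the two graphs force identical per-node outputs and hence the identical global set $C$. Beyond this, there is no real obstacle, as the argument is a standard edge-crossing indistinguishability combined with a one-edge covering failure; locating suitable $y,z,x',y'$ is handled separately in Claim~\ref{claim:kt0-mvc-triplets} and is not needed inside this lemma.
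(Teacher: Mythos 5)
Your proposal is correct and follows essentially the same route as the paper's proof: apply Theorem~\ref{thm:kt-0-crossing-similarity} to conclude that $z$ and $x'$ remain outside the output set $C$ on $G_{e,e'}$, and then observe that the crossed edge $\{z,x'\}$ is left uncovered. Your added remark that identical final local states force identical per-node outputs is a reasonable explicitation of a step the paper leaves implicit, but it is not a substantive departure.
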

\begin{proof}
Theorem~\ref{thm:kt-0-crossing-similarity} implies that both $z$ and $x'$ which are not in $C$ when $\mathcal{A}$ is run on $G \cup G'$ continue to not be in $C$ when $\mathcal{A}$ is run on $G_{e, e'}$. But, now there is an edge between $z$ and $x'$ and this is a violation of the correctness of algorithm $\mathcal{A}$. 
\end{proof}

Claim~\ref{claim:kt0-mvc-triplets} implies the existence of $e$ and $e'$ assumed in Lemma~\ref{lemma:kt0-mvc-correctness}. Thus $\mathcal{A}$ must use $\Omega(t^2/c)$ messages when run on $G \cup G'$. The following theorem formally states this lower bound.

\begin{theorem}
For any constant $c \ge 1$, any deterministic \KTZero{} \local{} algorithm $\mathcal{A}$ that computes a $c$-approximation of MVC on $n$-vertex graphs has $\Omega(n^2/c)$ message complexity.
\end{theorem}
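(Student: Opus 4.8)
The plan is to argue by contradiction, assembling the machinery already established above: Claims~\ref{claim:kt0-mvc-size} and~\ref{claim:kt0-mvc-triplets}, Lemma~\ref{lemma:kt0-mvc-correctness}, and the crossing-indistinguishability result Theorem~\ref{thm:kt-0-crossing-similarity}. Suppose $\mathcal{A}$ is a deterministic \KTZero{} \local{} $c$-approximation algorithm for MVC that, when run on the base graph $G \cup G'$ (with its fixed ID and port assignment), sends $o(t^2/c)$ messages, and let $C$ be the vertex cover it outputs. The goal is to manufacture a crossed graph on which $\mathcal{A}$ behaves identically yet is forced to be wrong.

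First I would invoke Claim~\ref{claim:kt0-mvc-size} to bound $|C| \le t/2$, so that at least $t/2$ nodes of $Z$ and at least $t/2$ nodes of $X'$ lie outside $C$; this is precisely the abundance of uncovered endpoints that the deliberately shrunken set $|Y| = t/(4c)$ is engineered to guarantee. Feeding this into Claim~\ref{claim:kt0-mvc-triplets} — whose counting argument converts the message budget $o(t^2/c)$ into the existence of a message-free edge incident to an uncovered $Z$-vertex (and symmetrically an uncovered $X'$-vertex) — produces nodes $y \in Y$, $z \in Z \setminus C$, $x' \in X' \setminus C$, $y' \in Y'$ such that no message crosses $e = \{y,z\}$ or $e' = \{x',y'\}$.

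I would then form the port-preserving crossing $G_{e,e'}$, which inserts the edge $\{z,x'\}$ between two vertices that are both outside $C$. Because no message traverses $e$ or $e'$, Theorem~\ref{thm:kt-0-crossing-similarity} forces $\mathcal{A}$ to produce the identical output $C$ on $G_{e,e'}$; but Lemma~\ref{lemma:kt0-mvc-correctness} observes that $C$ leaves the freshly created edge $\{z,x'\}$ uncovered, so $\mathcal{A}$ fails to output a valid vertex cover on $G_{e,e'}$, contradicting its assumed correctness. Hence $\mathcal{A}$ must send $\Omega(t^2/c)$ messages on $G \cup G'$. To finish, I would convert this into the stated bound by recalling that $t = 2cn/(8c+1) = \Theta(n)$ uniformly for every constant $c \ge 1$, which gives $t^2/c = \Theta(n^2/c)$ and therefore the claimed $\Omega(n^2/c)$ lower bound.

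The final assembly is short precisely because the real work sits in the supporting statements, so the main obstacle is genuinely located there rather than in the last step: it is the counting in Claim~\ref{claim:kt0-mvc-triplets}, which must leverage the guaranteed pool of $\Omega(t)$ uncovered vertices supplied by Claim~\ref{claim:kt0-mvc-size} to extract a \emph{crossable} pair of message-free edges with both relevant endpoints outside $C$. A secondary bookkeeping point I would be careful about is verifying that $t = \Theta(n)$ with constants independent of $c$, so that the bound reads cleanly as $n^2/c$; once Theorem~\ref{thm:kt-0-crossing-similarity} is in hand, the crossing and contradiction steps are immediate.
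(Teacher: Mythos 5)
Your proposal is correct and follows essentially the same route as the paper: it assembles Claim~\ref{claim:kt0-mvc-size}, Claim~\ref{claim:kt0-mvc-triplets}, Lemma~\ref{lemma:kt0-mvc-correctness}, and Theorem~\ref{thm:kt-0-crossing-similarity} into the same contradiction argument, including the observation that $t = 2cn/(8c+1) = \Theta(n)$ with constants independent of $c$. No gaps to report.
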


We extend this deterministic lower bound to randomized Monte-Carlo algorithms.
\begin{theorem}
For any constant $c \ge 1$, any randomized Monte-Carlo \KTZero{} \local{} algorithm $\mathcal{A}$ that computes a $c$-approximation of MVC on $n$-vertex graphs with constant error probability $0 \le \delta < 1/12 - o(1)$ has $\Omega(n^2/c)$ message complexity.
\end{theorem}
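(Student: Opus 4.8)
The plan is to derive the randomized bound from the deterministic one via Yao's minimax principle, exactly paralleling the randomized MDS argument of Theorem~\ref{thm:kt0-mds-random-lb}. I would define a hard distribution $\mu$ supported on two kinds of instances: the base graph $G \cup G'$ with probability $1/2$, and a uniform distribution over the family $\mathcal{F}$ of crossed graphs $G_{e,e'}$ with the remaining probability $1/2$, where $\mathcal{F}$ collects the port-preserving crossings of a $Y$-$Z$ edge $e=\{y,z\}$ with an $X'$-$Y'$ edge $e'=\{x',y'\}$. By Yao's theorem~\cite{Yao77_probab} it then suffices to show that every deterministic \KTZero{} \local{} algorithm whose message complexity on the base graph is $o(t^2/c)$ errs on more than a $\delta$ fraction of $\mu$.

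First I would observe that such a deterministic algorithm must be correct on the base graph: since $G \cup G'$ carries probability $1/2 > \delta$, any algorithm that outputs an invalid or non-$c$-approximate cover there already blows the error budget. Hence it outputs a valid $c$-approximate cover $C$, and Claim~\ref{claim:kt0-mvc-size} forces $|C| \le t/2$, so at least $t/2$ vertices of $Z$ and at least $t/2$ vertices of $X'$ lie outside $C$.

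The heart of the argument is a counting step that lower-bounds the error-inducing crossings. I would call a pair $(e,e')$ \emph{bad} if $z \in Z \setminus C$, $x' \in X' \setminus C$, and no message crosses either $e$ or $e'$ during the execution on $G \cup G'$. For every bad pair, Theorem~\ref{thm:kt-0-crossing-similarity} shows the algorithm behaves identically on $G \cup G'$ and $G_{e,e'}$, so it again outputs $C$, which by the reasoning of Lemma~\ref{lemma:kt0-mvc-correctness} fails to cover the new edge $\{z,x'\}$ and is therefore incorrect on $G_{e,e'}$. Since messages are sent over only an $o(1)$ fraction of the $t^2/(4c)$ edges in each of the two edge classes, a constant fraction of the $Y$-$Z$ edges incident to $Z \setminus C$, and symmetrically of the $X'$-$Y'$ edges incident to $X' \setminus C$, remain message-free; because membership in $\mathcal{F}$ factors as a product over the two edge classes, this forces a constant fraction of $\mathcal{F}$ to be bad. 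Weighted by the probability $1/2$ of landing in $\mathcal{F}$, the total $\mu$-error then exceeds $\delta$ whenever $\delta < 1/12 - o(1)$, contradicting the error budget; the $o(1)$ slack absorbs the additive lower-order terms (the $+1$ in the crossed-graph optimum, the vanishing message fraction, and $1/t$ corrections).

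The main obstacle I anticipate is the bookkeeping in this counting step: one must simultaneously track the fraction of message-free edges in each class and the constraint $z, x' \notin C$, then verify that the product of the two resulting constant fractions, discounted by the factor $1/2$ for $\mathcal{F}$, stays strictly above $\delta$ for the claimed threshold. A secondary technical point is handling message complexity as a random variable under Yao: I would assume a worst-case (or high-probability) message bound, so that fixing the algorithm's coins yields a deterministic algorithm whose message complexity on the base graph is still $o(t^2/c)$, letting the deterministic counting argument above go through verbatim.
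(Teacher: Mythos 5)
Your proposal is correct and follows essentially the same route as the paper's proof: Yao's minimax principle with the identical hard distribution (base graph $G \cup G'$ with probability $1/2$, uniform over the crossed family $\mathcal{F}$ otherwise), correctness on the base graph forcing $|C| \le t/2$, and a count of message-free crossed pairs on which the algorithm outputs a non-cover. The only difference is in the bookkeeping: the paper counts per vertex of $Y$ and $Y'$ (discarding vertices that send more than $\delta t$ messages), whereas you count globally over the two edge classes and exploit the product structure of $\mathcal{F}$ — your version goes through and in fact yields the slightly better threshold $\delta < 1/8 - o(1)$ rather than $1/12 - o(1)$.
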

\begin{proof}
By Yao's minimax theorem~\cite{Yao77_probab}, it suffices to show a lower bound on the message complexity of a deterministic algorithm under a hard distribution $\mu$.

The hard distribution $\mu$ consists of the base graph $G \cup G'$ with probability $1/2$ and the remaining probability is distributed uniformly over the set $\mathcal{F}$, where $\mathcal{F} = \{G_{e, e'} \mid e = \{y, z\}, e'=\{x', y'\}, y \in Y, z \in Z, x' \in X', y' \in Y'\}$. Note that $|\mathcal{F}| = t^4/(16c^2)$, since there are $t$ choices for each $z$ and $x$, and $t/(4c)$ choices for each $y$ and $y'$. Note that a deterministic algorithm cannot produce an incorrect output on $G \cup G'$ as it will have error probability greater than $\delta$. Therefore, we now need to show that any deterministic algorithm that produces an incorrect output on at most a $2\delta$ fraction of graphs in $\mathcal{F}$ has $\Omega(t^2/c)$ message complexity. So suppose for sake of contradiction that there exists such an algorithm that outputs a $c$-approximate vertex cover $C$ with message complexity $o(t^2/c)$

There can only be $o(t)$ vertices in $Y$, and $o(t)$ vertices in $Y'$ that send more than $\delta t$ messages. For each of the $(1 - o(1))t/(4c)$ nodes $y \in Y$ there are at least $(1 - \delta)t$ nodes $z \in Z$, and consequently at least $(1/2 - \delta)t$ nodes $z \in Z \setminus C$ (as $|C| \le t/2$), such that no message is sent over edge $e = \{y,z\}$. Similarly, for each of the $(1 - o(1))t/(4c)$ nodes $y' \in Y'$ there are at least $(1/2 - \delta)t$ nodes $x' \in X' \setminus C$, such that no message is sent over edge $e' = \{x',y'\}$. The proof of Lemma~\ref{lemma:kt0-mvc-correctness} says that the deterministic algorithm is incorrect on all these graphs $G_{e,e'}$. 

The algorithm is incorrect on at least $(1 - o(1))^2(1/2 - \delta)^2t^4/(16c^2)$ graphs in $\mathcal{F}$, that is, on at least a $(1/4 - \delta - o(1))$ fraction of graphs in $\mathcal{F}$. This is a contradiction if $2\delta < (1/4-\delta - o(1))$ or in other words, if $\delta < 1/12 - o(1)$. \end{proof}

\subsubsection{Unconditional Quadratic Lower Bound for Maximal Matching}
\label{section:kt-0-max-matching}
A maximal matching immediately gives us a $2$-approximation to MVC if we pick both end points of each matched edge into the cover. The maximality condition implies that all edges are covered, and the $2$-approximation guarantee is obtained by the observation that the optimal solution needs to pick at least one node to cover each edge of the matching. Since $2$-approximate MVC requires $\Omega(n^2)$ messages, we get the same lower bound for maximal matching.\footnote{This argument yields the same lower bound for exact \mxm{}, but it fails for approximation. This is because we obtain a contradiction to maximality, and an approximate matching need not be maximal.}

\begin{theorem}
Any randomized Monte-Carlo \KTZero{} \local{} algorithm $\mathcal{A}$ computing a maximal matching with constant error probability $0 \le \delta < 1/8 - o(1)$ where each matched edge is output by at least one of its end points has $\Omega(n^2)$ message complexity.
\end{theorem}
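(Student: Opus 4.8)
The plan is to reuse the minimum vertex cover lower bound construction of Section~\ref{section:kt0-mvc-lb} with the approximation parameter fixed to $c = 2$, and to combine the port-preserving crossing argument with the \emph{maximality} of the output rather than with edge coverage. As observed in the text, the set $C$ obtained by taking both endpoints of every matched edge of a maximal matching is a $2$-approximate vertex cover; this lets me recycle Claim~\ref{claim:kt0-mvc-size} to conclude that $|C| \le t/2$, and hence that at least $t/2$ vertices of $Z$ and at least $t/2$ vertices of $X'$ are left unmatched. The reason this approach yields a lower bound for \emph{maximal} matching (and, as the footnote observes, for exact \mxm{}) but not for approximate matching is that the contradiction I derive is a violation of maximality, a property that an approximate matching need not possess.

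First I would invoke Yao's minimax theorem~\cite{Yao77_probab} together with the hard distribution $\mu$ from the randomized MVC proof: the base graph $G \cup G'$ with probability $1/2$, and the uniform distribution over the crossed family $\mathcal{F} = \{G_{e,e'}\}$ with the remaining probability. It then suffices to lower bound the message complexity of a deterministic algorithm $\mathcal{A}$ that errs on at most a $2\delta$ fraction of $\mathcal{F}$. Since $G \cup G'$ carries probability $1/2 > \delta$, $\mathcal{A}$ must be correct on it, and therefore outputs a genuine maximal matching $M$ whose endpoint set $C$ satisfies $|C| \le t/2$ as above.

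Next, assuming for contradiction that $\mathcal{A}$ sends only $o(t^2)$ messages on $G \cup G'$, I would count the crossings on which $\mathcal{A}$ must fail, exactly as in the MVC argument. For a suitably small constant threshold, all but an $o(1)$ fraction of the vertices of $Y$ (resp.\ $Y'$) carry a message on only an $o(1)$ fraction of their incident $Z$-edges (resp.\ $X'$-edges); hence for $(1-o(1))|Y|$ choices of $y$ there are at least $(1/2 - o(1))t$ unmatched $z \in Z \setminus C$ with no message crossing $e = \{y,z\}$, and symmetrically at least $(1/2 - o(1))t$ unmatched $x' \in X' \setminus C$ with no message crossing $e' = \{x',y'\}$. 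Each such pair yields a distinct $G_{e,e'} \in \mathcal{F}$, and since $z$ and $x'$ are unmatched the edges $e,e'$ are themselves unmatched, so removing them leaves $M$ a valid matching of $G_{e,e'}$. By the crossing-similarity theorem (Theorem~\ref{thm:kt-0-crossing-similarity}), $\mathcal{A}$ produces the identical output $M$ on $G_{e,e'}$; but $G_{e,e'}$ contains the new edge $\{z,x'\}$ whose two endpoints are both unmatched, so $M$ is not maximal there and $\mathcal{A}$ errs. Multiplying the two independent counts and dividing by $|\mathcal{F}|$ shows that $\mathcal{A}$ errs on at least a $1/4 - o(1)$ fraction of $\mathcal{F}$, which exceeds $2\delta$ whenever $\delta < 1/8 - o(1)$, the desired contradiction. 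Finally, $t = \Theta(n)$ turns the $\Omega(t^2)$ message bound into $\Omega(n^2)$.

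I expect the main obstacle to be the bookkeeping that simultaneously (i) keeps $z$ and $x'$ unmatched so that the crossing genuinely breaks maximality, (ii) guarantees that the edge set $M$ survives as a valid but now non-maximal matching of $G_{e,e'}$, and (iii) pushes the error fraction strictly above $2\delta$. The delicate quantitative point is choosing the per-vertex message threshold to be $o(1)\cdot t$ rather than $\delta t$ (the latter is what costs the MVC proof a worse constant), so that the two independent edge counts each retain a $(1/2 - o(1))t$ factor and their product clears the $1/4$ threshold needed to reach $\delta < 1/8 - o(1)$.
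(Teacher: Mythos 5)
Your proposal is correct, but it is worth pointing out that it is not quite the proof the paper gives, and in one respect it does more than the paper does. The paper's own proof of this theorem is a two-sentence reduction: taking both endpoints of every matched edge turns a maximal matching into a $2$-approximate vertex cover (an uncovered edge is exactly an edge with both endpoints unmatched, i.e.\ a maximality violation), and then it simply invokes the $\Omega(n^2)$ lower bound for $2$-approximate MVC. You instead re-run the entire machinery of Section~\ref{section:kt0-mvc-lb} directly on the matching problem: the same base and crossed graphs with $c=2$, the same hard distribution for Yao's theorem, the same use of Theorem~\ref{thm:kt-0-crossing-similarity}, but with the correctness violation being maximality (the crossed edge $\{z,x'\}$ joins two unmatched vertices, while $M$ survives as a valid matching of $G_{e,e'}$ since $e,e'\notin M$) rather than coverage. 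The payoff of this extra work is the constant in the error-probability bound: the paper's randomized MVC theorem is only proved for $\delta < 1/12 - o(1)$, so a black-box application of it yields the maximal-matching lower bound only for $\delta < 1/12 - o(1)$, which falls short of the $\delta < 1/8 - o(1)$ claimed in the statement. Your refinement --- taking the per-vertex message threshold to be $o(t)$ (for instance $\sqrt{M}$ where $M = o(t^2)$ is the assumed message bound, so that the number of heavy vertices of $Y$ is at most $M/\sqrt{M} = o(t)$ while light vertices have only an $o(1)$ fraction of their incident edges carrying messages) instead of the $\delta t$ threshold used in the MVC proof --- makes both counts $(1/2 - o(1))t$, pushes the error fraction to $1/4 - o(1)$, and thus actually delivers the stated $1/8$. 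One small wording slip to fix: early on you say ``suitably small constant threshold,'' which would only give a constant (not $o(1)$) loss in each count and would land you at a worse constant; your later formulation, threshold $o(1)\cdot t$, is the correct one and is the one your final computation uses.
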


\subsection{Unconditional Quadratic Lower Bound for \mxis{} Approximation}
\label{section:kt0-maxIS-lb}
For the \mxis{} approximation lower bound we use the same base graph $G \cup G'$ as the MVC lower bound but we set $|Y| = |Y'| = \epsilon t$. The crossed graph $G_{e,e'}$ obtained by crossing the edges $e$ and $e'$ in a port preserving manner according to Definition~\ref{def:kt-0-port-preserving-crossing}. 

Suppose that there is a deterministic $(1/2+\epsilon)$-approximation algorithm $\mathcal{A}$ for \mxis{} that produces a solution (independent set) $S$ on input $G \cup G'$. In order to obtain a contradiction, we assume that $\mathcal{A}$ uses $o(n^2)$ messages. 

\begin{claim}
Any $(1/2 + \epsilon)$-approximate independent set $S$ in $G \cup G'$ is such that $|S \cap Z| \ge \epsilon \cdot t$ and $|S \cap X'| \ge \epsilon \cdot t$.
\end{claim}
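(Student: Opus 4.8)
The plan is to read off the independence number of the base graph, feed it into the approximation guarantee, and then use the fact that the middle layers $Y,Y'$ are tiny to push a linear number of solution vertices into $Z$ and $X'$.

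First I would compute $\alpha(G\cup G')$. Inside $G$ the sets $X,Y,Z$ are independent and $Y$ is completely joined to $X\cup Z$, so any independent set that contains a vertex of $Y$ lies entirely in $Y$ and has at most $|Y|=\epsilon t$ vertices, whereas $X\cup Z$ is itself independent of size $2t$; hence $\alpha(G)=2t$, and since $G$ and $G'$ are vertex-disjoint, $\alpha(G\cup G')=4t$. Writing $V$ and $V'$ for the two copies, a $(1/2+\epsilon)$-approximate independent set therefore has $|S|\ge (1/2+\epsilon)\cdot 4t = 2t+4\epsilon t$, while $|S\cap V|\le\alpha(G)=2t$ and $|S\cap V'|\le 2t$.

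Next I would eliminate the middle layers and localize the surplus. From $|S|\ge 2t+4\epsilon t$ and $|S\cap V'|\le 2t$ we get $|S\cap V|\ge 4\epsilon t$, and symmetrically $|S\cap V'|\ge 4\epsilon t$. Since an independent set meeting $Y$ has at most $\epsilon t<4\epsilon t$ vertices in $V$, this forces $S\cap Y=\emptyset$, and likewise $S\cap Y'=\emptyset$, so $S\subseteq X\cup Z\cup X'\cup Z'$. To get the per-part bound I would argue by contradiction: if $|S\cap Z|<\epsilon t$, then, bounding the contribution of the remaining parts, $|S\cap V|=|S\cap X|+|S\cap Z|$ becomes too small for $|S|$ to reach $2t+4\epsilon t$, a contradiction; the symmetric argument on the $G'$ side gives $|S\cap X'|\ge\epsilon t$.

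The step I expect to be the main obstacle is precisely this last per-part estimate. Because $X$ and $Z$ play interchangeable roles in $G$ (and $X',Z'$ in $G'$), straightforward counting only yields a \emph{combined} surplus bound of the form $|S\cap Z|+|S\cap X'|\ge 4\epsilon t$, rather than a guarantee of $\epsilon t$ vertices in each designated part. Closing this gap is where the construction's built-in asymmetry must be used: $Z$ and $X'$ are exactly the parts whose valuable edges $\set{y,z}$ and $\set{x',y'}$ will be crossed in the lower-bound argument, so the quantitative slack in the sizes of $X,Z,X',Z'$ relative to the $(1/2+\epsilon)$ threshold has to be arranged so that no near-optimal independent set can avoid either $Z$ or $X'$. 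Once $|S\cap Z|\ge\epsilon t$ and $|S\cap X'|\ge\epsilon t$ are established, they supply the $\Theta(\epsilon^2 t^2)$ candidate crossing pairs $(z,x')$ needed to defeat the $o(n^2)$ message budget in the subsequent port-preserving crossing argument.
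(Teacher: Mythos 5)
Your opening steps are correct and essentially match the paper: $\alpha(G\cup G')=4t$, so $|S|\ge(1/2+\epsilon)\cdot 4t=(2+4\epsilon)t$, and any independent set meeting $Y$ (or $Y'$) is too small on that side, forcing $S\subseteq X\cup Z\cup X'\cup Z'$. (The paper reaches the same conclusion by noting that an independent set containing a $Y$-vertex is contained in $Y\cup X'\cup Z'$, of size $(2+\epsilon)t$, i.e.\ only a $(1/2+\epsilon/4)$-approximation.)

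The last step, however, is a genuine gap, and you located it but misdiagnosed its resolution. No counting argument can give the per-part bound, because the claim as literally stated (for fixed labels) is \emph{false}: $S=X\cup X'\cup Z'$ is an independent set of size $3t\ge(2+4\epsilon)t$ for $\epsilon\le 1/4$, yet $S\cap Z=\emptyset$. For the same reason, your proposed fix --- arranging the construction's ``built-in asymmetry'' so that no near-optimal independent set can avoid $Z$ or $X'$ --- cannot work: there is no such asymmetry. Swapping $X\leftrightarrow Z$ (and $X'\leftrightarrow Z'$) is an automorphism of the base graph, so any independent set avoiding $Z$ has a mirror image avoiding $X$, and no choice of sizes can break this. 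The paper's proof does the opposite and \emph{uses} the symmetry: by pigeonhole, at least three of the four sets $X,Z,X',Z'$ must each contribute at least $\epsilon t$ vertices (if two contributed fewer, then $|S|<(2+2\epsilon)t$, which is only a $(1/2+\epsilon/2)$-approximation); hence at least one of $\{X,Z\}$ and at least one of $\{X',Z'\}$ contributes $\ge\epsilon t$, and one relabels without loss of generality, via the swap automorphism, so that these are $Z$ and $X'$. The claim is thus to be read ``up to swapping,'' which is harmless downstream since the crossed edges can be chosen on whichever side of each copy the solution is dense. Note also that the combined bound $|S\cap Z|+|S\cap X'|\ge 4\epsilon t$ you fall back on would \emph{not} suffice for the crossing argument: Claim~\ref{claim:kt0-maxIS-edges} needs both intersections, not just their sum, to have size $\Omega(\epsilon t)$, since the message-counting contradiction there is governed by the smaller of the two.
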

\begin{proof}
It is easy to see that the largest independent set in $G \cup G'$ is $X \cup Z \cup X' \cup Z'$ of size $4t$.
We first claim that $S \subseteq X \cup Z \cup X' \cup Z'$ because, even if a single node of $Y$ (or $Y'$) is in $S$, it will not allow us to put any node of $X \cup Z$ (or $X' \cup Z'$) in $S$. In which case, the largest independent set we can create is $Y \cup X' \cup Z'$ (or $Y' \cup X \cup Z$) which has size $(2 + \epsilon)t$, which is only a $(1/2 + \epsilon/4)$-approximation.

If any two sets out of $X, Z, X', Z'$ contribute less than $\epsilon t$ nodes to $S$, then $|S| < (2 + 2\epsilon)t$, and $S$ is at most a $(1/2 + \epsilon/2)$-approximation. Therefore at least three sets must contribute at least $\epsilon t$ nodes to $S$. Now we can claim without loss of generality that $|S \cap Z| \ge \epsilon \cdot t$ and $|S \cap X'| \ge \epsilon \cdot t$ by swapping the sets $X, Z$ or $X', Z'$ as required.
\end{proof}

\begin{claim}\label{claim:kt0-maxIS-edges}
There exist $y \in Y$, $z \in S \cap Z$, $x' \in S \cap X'$, $y \in Y$ such that no message passes over edges $\{y, z\}$ and $\{x, y\}$ in algorithm $\mathcal{A}$.
\end{claim}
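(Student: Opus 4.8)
The plan is to prove Claim~\ref{claim:kt0-maxIS-edges} by a direct counting (pigeonhole) argument, exactly mirroring the proof of Claim~\ref{claim:kt0-mvc-triplets} from the MVC section. The only inputs I need are the preceding claim, which guarantees $|S \cap Z| \ge \epsilon t$ and $|S \cap X'| \ge \epsilon t$, the fact that $|Y| = |Y'| = \epsilon t$ in this construction, and the standing assumption that $\mathcal{A}$ sends only $o(n^2)$ messages. Since $n = (4 + 2\epsilon)t = \Theta(t)$ with $\epsilon$ a fixed constant, an $o(n^2)$ message bound is the same as an $o(t^2)$ bound.

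First I would focus on the unprimed side. Because $G$ contains every edge between $Y$ and $Z$ (the bipartite graph there is complete), each vertex of $Y$ is adjacent to every vertex of $S \cap Z$; hence the number of distinct edges with one endpoint in $Y$ and the other in $S \cap Z$ is exactly $|Y| \cdot |S \cap Z| \ge (\epsilon t)(\epsilon t) = \epsilon^2 t^2$. Suppose, for contradiction, that a message passes over every one of these edges. Then $\mathcal{A}$ would send at least $\epsilon^2 t^2 = \Omega(n^2)$ messages, contradicting the assumption that it uses $o(n^2)$ messages. Therefore at least one such edge $\{y, z\}$, with $y \in Y$ and $z \in S \cap Z$, carries no message.

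Next I would apply the identical argument on the primed side, now counting the $|Y'| \cdot |S \cap X'| \ge \epsilon^2 t^2$ edges between $Y'$ and $S \cap X'$ (again a complete bipartite collection of distinct edges), to obtain an edge $\{x', y'\}$ with $x' \in S \cap X'$ and $y' \in Y'$ over which no message passes. (I read the second occurrence of ``$y \in Y$'' in the statement as the typo $y' \in Y'$.) The four nodes $y, z, x', y'$ then satisfy the claim.

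I do not anticipate a genuine obstacle here; the argument is a clean pigeonhole count and parallels Claim~\ref{claim:kt0-mvc-triplets}. The only points worth stating carefully are that $\epsilon$ is treated as a constant so that $\epsilon^2 t^2 = \Omega(n^2)$, and that the edges counted on each side are genuinely distinct (each belongs to a single complete bipartite block), so that ``one message per edge'' really forces $\Omega(n^2)$ total messages. These two edges $e = \{y,z\}$ and $e' = \{x',y'\}$ are precisely the ones that will be crossed via Definition~\ref{def:kt-0-port-preserving-crossing} in the subsequent indistinguishability step, so it is important that $z \in S$ and $x' \in S$, which I have preserved throughout.
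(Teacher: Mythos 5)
Your proof is correct and takes essentially the same approach as the paper: a pigeonhole count of the $|Y|\cdot|S\cap Z| \ge \epsilon^2 t^2 = \Theta(n^2)$ complete-bipartite edges on each side against the $o(n^2)$ message bound, using the preceding claim and $|Y|=|Y'|=\epsilon t$. The only cosmetic difference is that you establish a message-free edge on each side separately, while the paper negates the conjunction and handles the resulting disjunction in one count; the substance is identical, and your reading of the statement's typos ($y' \in Y'$, $\{x',y'\}$) matches the paper's intent.
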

\begin{proof}
If this is not the case, then either for every $y \in Y$, all edges $\{y, z\}$ for $z \in S \cap Z$ have a message sent over them, or for every $y' \in Y'$ all edges $\{x', y'\}$ for $x' \in S \cap X'$ have a message sent over them. This implies that at least $\epsilon^2 \cdot t^2 = \Theta(n^2)$ edges have a message sent over them, contradicting the assumption that $\mathcal{A}$ has $o(n^2)$ message complexity.
\end{proof}

\begin{lemma}\label{lemma:kt0-maxIS-correctness}
Let $y \in Y$, $z \in S \cap Z$, $x' \in S \cap X'$, $y \in Y$ be four nodes such that no message passes over edges $e = \{y, z\}$ and $e' = \{x, y\}$ in algorithm $\mathcal{A}$. Then $\mathcal{A}$ cannot compute a correct independent set on $G_{e,e'}$. 
\end{lemma}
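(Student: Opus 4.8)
The plan is to mirror the argument used for MVC in Lemma~\ref{lemma:kt0-mvc-correctness}, replacing ``not in the cover'' with ``in the independent set.'' First I would record the effect of the crossing: applying Definition~\ref{def:kt-0-port-preserving-crossing} to $e = \{y, z\}$ and $e' = \{x', y'\}$ removes these two edges and, with the port-preserving reconnection used throughout this section, inserts the new edges $\{z, x'\}$ and $\{y, y'\}$. The crucial new edge is $\{z, x'\}$, which now joins a vertex of $Z$ to a vertex of $X'$ in $G_{e,e'}$.

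Next I would invoke Theorem~\ref{thm:kt-0-crossing-similarity}. Since by hypothesis no message is sent over $e$ or $e'$ during the execution of $\mathcal{A}$ on the base graph $G \cup G'$, and the crossing preserves all port numbers and ID assignments, the theorem guarantees that $\mathcal{A}$ behaves identically on $G \cup G'$ and on $G_{e,e'}$. In particular, every node produces the same output in both executions, so the independent set $S$ returned on $G \cup G'$ is exactly the set returned on $G_{e,e'}$. Because $z \in S \cap Z$ and $x' \in S \cap X'$, both $z$ and $x'$ belong to the output set on $G_{e,e'}$ as well.

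Finally I would derive the contradiction: in $G_{e,e'}$ the pair $z, x'$ is joined by the new edge $\{z, x'\}$, yet $S$ contains both of its endpoints, so $S$ is not an independent set of $G_{e,e'}$. Hence $\mathcal{A}$ cannot output a correct (indeed, any feasible) independent set on $G_{e,e'}$, which is precisely the claim. I do not expect a genuine obstacle here, since the argument is a direct transcription of the MVC proof; the only points deserving care are confirming that the crossing produces the edge $\{z, x'\}$ between the two chosen independent-set vertices (rather than, say, pairing $z$ with $y'$), and noting that it is exactly the port-preserving and ID-preserving nature of the crossing that licenses the indistinguishability supplied by Theorem~\ref{thm:kt-0-crossing-similarity}.
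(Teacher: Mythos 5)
Your proof is correct and follows essentially the same route as the paper's: invoke Theorem~\ref{thm:kt-0-crossing-similarity} to conclude that $\mathcal{A}$ produces identical output on $G \cup G'$ and $G_{e,e'}$, then observe that $z$ and $x'$, both retained in the output set, are joined by the new edge $\{z,x'\}$ in $G_{e,e'}$, contradicting independence. Your extra care in verifying that the port-preserving crossing indeed pairs $z$ with $x'$ (rather than with $y'$) is a fine addition but not a departure from the paper's argument.
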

\begin{proof}
By Theorem~\ref{thm:kt-0-crossing-similarity}, algorithm $\mathcal{A}$ produces the same output on $G_{e, e'}$ and $G \cup G'$. This means that $z$ and $x'$ are both declared to be in the independent set in $G_{e,e'}$, but this is a violation to independence since they are neighbors in $G_{e,e'}$.
\end{proof}

Claim~\ref{claim:kt0-maxIS-edges} implies the existence of $e$ and $e'$ assumed in Lemma~\ref{lemma:kt0-maxIS-correctness}. Thus $\mathcal{A}$ must use $\Omega(t^2)$ messages when run on $G \cup G'$. The following theorem formally states this lower bound.

\begin{theorem}
For any constant $\epsilon > 0$, any deterministic \KTZero{} \local{} algorithm $\mathcal{A}$ that computes
a $(1/2+\epsilon)$-approximation of \mxis{} on $n$-vertex graphs has $\Omega(n^2)$ message complexity.
\end{theorem}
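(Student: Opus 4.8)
The plan is to argue by contradiction, using the port-preserving edge-crossing paradigm of Theorem~\ref{thm:kt-0-crossing-similarity} and assembling the three facts already established for the base graph $G \cup G'$ with $|Y| = |Y'| = \epsilon t$: the preceding claim that every $(1/2+\epsilon)$-approximate independent set $S$ satisfies $|S \cap Z| \ge \epsilon t$ and $|S \cap X'| \ge \epsilon t$; the silent-edge existence fact of Claim~\ref{claim:kt0-maxIS-edges}; and the indistinguishability statement of Lemma~\ref{lemma:kt0-maxIS-correctness}. Concretely, I would suppose towards a contradiction that $\mathcal{A}$ is a deterministic \KTZero{} \local{} algorithm computing a $(1/2+\epsilon)$-approximation of \mxis{} while sending only $o(n^2)$ messages, run it on $G \cup G'$, and let $S$ be the independent set it outputs.

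First I would invoke the structural claim to guarantee that $S$ contains at least $\epsilon t$ vertices of $Z$ and at least $\epsilon t$ vertices of $X'$; this is precisely why $|Y|$ was shrunk to $\epsilon t$, since it forces any near-half-optimal solution to draw heavily from both $Z$ and $X'$ rather than hiding inside the $Y$-side. Next I would apply Claim~\ref{claim:kt0-maxIS-edges}: there are at least $|Y|\cdot|S \cap Z| \ge \epsilon^2 t^2 = \Theta(n^2)$ edges $\{y,z\}$ with $y \in Y$ and $z \in S \cap Z$, so if all of them carried a message the message count would already be $\Theta(n^2)$, contradicting the supposed $o(n^2)$ bound; hence at least one such edge $e = \{y,z\}$ is silent, and symmetrically at least one edge $e' = \{x',y'\}$ with $x' \in S \cap X'$ is silent. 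Crossing these edges yields $G_{e,e'}$, and Lemma~\ref{lemma:kt0-maxIS-correctness}---itself a consequence of Theorem~\ref{thm:kt-0-crossing-similarity}---shows that $\mathcal{A}$ produces the identical output $S$ on $G_{e,e'}$; but in $G_{e,e'}$ the selected vertices $z$ and $x'$ have become adjacent, so $S$ is no longer independent, contradicting correctness. Therefore $\mathcal{A}$ must send $\Omega(t^2)$ messages on $G \cup G'$.

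The only remaining step is bookkeeping: since $n = 4t + 2\epsilon t = \Theta(t)$, the bound $\Omega(t^2)$ is $\Omega(n^2)$, which gives the theorem. I do not expect a serious obstacle in the assembly itself, because the three supporting results do all the heavy lifting; the genuinely delicate point---already handled in the structural claim---was the choice $|Y| = \epsilon t$, which is what converts the approximation guarantee into a \emph{linear} number of forced selections in each of $Z$ and $X'$ and thereby supplies the quadratic stock of silent crossable edges. For constant $\epsilon$ this stock is $\Theta(n^2)$ and dominates any $o(n^2)$ set of used edges, so the contradiction is immediate and the deterministic lower bound follows. A matching randomized Monte-Carlo version would then be obtained by the same Yao's minimax distribution argument used for the randomized MVC and MDS bounds, placing half the mass on $G \cup G'$ and the rest uniformly over the family of crossed graphs.
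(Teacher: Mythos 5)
Your proposal is correct and takes essentially the same route as the paper's own proof: it assembles the structural claim forcing $|S \cap Z| \ge \epsilon t$ and $|S \cap X'| \ge \epsilon t$, the silent-edge existence claim, and the port-preserving crossing lemma (via Theorem~\ref{thm:kt-0-crossing-similarity}) into the identical contradiction, with the same bookkeeping $n = \Theta(t)$. The only cosmetic difference is that you locate the two silent edges by counting each side of the construction independently, whereas the paper phrases the counting as an either/or disjunction; this changes nothing of substance.
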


\begin{theorem}
For any constant $\epsilon > 0$, any randomized Monte-Carlo \KTZero{} \local{} algorithm $\mathcal{A}$ that computes a $(1/2 + \epsilon)$-approximation of \mxis{} on $n$-vertex graphs with constant error probability $\delta < \epsilon^2/8 - o(1)$ has $\Omega(n^2)$ message complexity.
\end{theorem}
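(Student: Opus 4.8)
The plan is to follow exactly the Yao's-minimax template already used for the randomized MVC and MDS lower bounds, instantiated with the \mxis{} base graph (where $|Y|=|Y'|=\epsilon t$, so that $t=\Theta(n)$) and the crossing correctness lemma (Lemma~\ref{lemma:kt0-maxIS-correctness}). By Yao's minimax theorem~\cite{Yao77_probab} it suffices to exhibit a hard input distribution $\mu$ under which every deterministic algorithm either errs with probability exceeding $\delta$ or sends $\Omega(t^2)=\Omega(n^2)$ messages. I would take $\mu$ to place probability $1/2$ on the base graph $G\cup G'$ and spread the remaining $1/2$ uniformly over the family $\mathcal{F}=\{G_{e,e'}\mid e=\{y,z\},\, e'=\{x',y'\},\, y\in Y,\, z\in Z,\, x'\in X',\, y'\in Y'\}$, so that $|\mathcal{F}|=\epsilon^2 t^4$. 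Since any deterministic algorithm with error at most $\delta<1/2$ cannot err on the probability-$1/2$ atom $G\cup G'$, it must output there a valid $(1/2+\epsilon)$-approximate independent set $S$, which by the preceding claim satisfies $|S\cap Z|\ge\epsilon t$ and $|S\cap X'|\ge\epsilon t$.

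I would then argue by contradiction, assuming a deterministic algorithm that errs on at most a $2\delta$ fraction of $\mathcal{F}$ (the conditional error bound, since $G\cup G'$ carries half the mass) yet uses only $o(t^2)$ messages. A counting step, directly analogous to the randomized MVC proof, controls the heavy vertices: at most $o(t^2)/(\delta t)=o(t)$ nodes of $Y$ can have more than $\delta t$ incident messages, leaving $(1-o(1))\epsilon t$ light nodes $y\in Y$; for each such $y$, at most $\delta t$ of its $Z$-neighbors carry a message, so at least $|S\cap Z|-\delta t\ge(\epsilon-\delta)t$ vertices $z\in S\cap Z$ satisfy that no message crosses $\{y,z\}$. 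The symmetric statement holds for $Y'$ and $S\cap X'$. Each resulting quadruple $(y,z,x',y')$ determines a distinct $G_{e,e'}\in\mathcal{F}$ on which, by Theorem~\ref{thm:kt-0-crossing-similarity} and Lemma~\ref{lemma:kt0-maxIS-correctness}, the algorithm reproduces the same output $S$ and thereby declares the now-adjacent pair $z,x'$ both independent -- an incorrect output. Multiplying the counts of valid $(y,z)$ and $(x',y')$ pairs gives at least $(1-o(1))^2\epsilon^2(\epsilon-\delta)^2 t^4$ bad graphs, i.e.\ a $(1-o(1))(\epsilon-\delta)^2$ fraction of $\mathcal{F}$, and the contradiction is reached once $2\delta<(1-o(1))(\epsilon-\delta)^2$.

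The main obstacle -- really the only quantitative subtlety -- is verifying that the stated threshold $\delta<\epsilon^2/8-o(1)$ indeed secures this last inequality. Since $\delta<\epsilon^2/8\le\epsilon/8$ forces $\epsilon-\delta>7\epsilon/8$, one gets $(\epsilon-\delta)^2>\tfrac{49}{64}\epsilon^2>\tfrac12\epsilon^2$, while $2\delta<\tfrac14\epsilon^2$; hence $2\delta<(\epsilon-\delta)^2$ with enough slack to absorb the $(1-o(1))$ factor for large $n$, completing the contradiction and the $\Omega(n^2)$ bound. The one place demanding genuine care is ensuring that the independence constraint ($z\in S\cap Z$, $x'\in S\cap X'$) and the no-message constraint can be satisfied \emph{simultaneously} for a constant fraction of $\mathcal{F}$; this is exactly what combining the $|S\cap Z|\ge\epsilon t$ (and $|S\cap X'|\ge\epsilon t$) guarantees with the light-vertex count delivers, and checking that the two constraints are independent enough not to erode the $(\epsilon-\delta)^2$ fraction is the crux of the argument.
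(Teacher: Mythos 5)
Your proposal is correct and follows essentially the same route as the paper's proof: Yao's minimax with the identical hard distribution (mass $1/2$ on $G\cup G'$, the rest uniform over $\mathcal{F}$ with $|\mathcal{F}|=\epsilon^2 t^4$), the same light/heavy-vertex counting in $Y$ and $Y'$, and the same invocation of the crossing lemma to produce incorrect outputs on a constant fraction of $\mathcal{F}$. The only difference is cosmetic: you keep the per-vertex count as $(\epsilon-\delta)t$ where the paper rounds it to $(\epsilon/2)t$, and your closing arithmetic still lands on the same threshold $\delta<\epsilon^2/8-o(1)$.
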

\begin{proof}
By Yao's minimax theorem~\cite{Yao77_probab}, it suffices to show a lower bound on the message complexity of a deterministic algorithm under a hard distribution $\mu$.

The hard distribution $\mu$ consists of the base graph $G \cup G'$ with probability $1/2$ and the remaining probability is distributed uniformly over the set $\mathcal{F}$, where $\mathcal{F} = \{G_{e, e'} \mid e = \{y, z\}, e'=\{x', y'\}, y \in Y, z \in Z, x' \in X', y' \in Y'\}$. Note that $|\mathcal{F}| = \epsilon^2 t^4$, since there are $t$ choices for each $z$ and $x$, and $\epsilon t$ choices for each $y$ and $y'$. Note that a deterministic algorithm cannot produce an incorrect output on $G \cup G'$ as it will have error probability greater than $\delta$. Therefore, we now need to show that any deterministic algorithm that produces an incorrect output on at most a $2\delta$ fraction of graphs in $\mathcal{F}$ has $\Omega(t^2)$ message complexity. So suppose for sake of contradiction that there exists such an algorithm that outputs a $(1/2 + \epsilon)$-approximate independent set $S$ with message complexity $o(t^2)$

There can only be $o(t)$ vertices in $Y$, and $o(t)$ vertices in $Y'$ that send more than $\delta t$ messages. For each of the $(1 - o(1))\epsilon t$ nodes $y \in Y$ there are at least $(1 - \delta)t$ nodes $z \in Z$, and consequently at least $(\epsilon/2)t$ nodes $z \in Z \cap S$ (as $|Z \cap S| \ge \epsilon t$, and $\delta < \epsilon/2$), such that no message is sent over edge $e = \{y,z\}$. Similarly, for each of the $(1 - o(1))\epsilon t$ nodes $y' \in Y'$ there are at least $(\epsilon/2)t$ nodes $x' \in X' \cap S$, such that no message is sent over edge $e' = \{x',y'\}$. The proof of Lemma~\ref{lemma:kt0-maxIS-correctness} says that the deterministic algorithm is incorrect on all these graphs $G_{e,e'}$. 

The algorithm is incorrect on at least $(1 - o(1))^2(1/2)^2 \epsilon^4 t^4$ graphs in $\mathcal{F}$, that is, on at least a $(\epsilon^2/4 - o(1))$ fraction of graphs in $\mathcal{F}$. This is a contradiction if $2\delta < \epsilon^2/4 - o(1)$. 
\end{proof}

 \section{Near-Linear Bounds for Random Graphs}
\label{section:UBRG}

We first provide a message- and round-efficient randomized greedy MIS in $G(n,p)$ random graphs, in Subsection \ref{subsec:greedyMIS}.\footnote{These are graphs for which each (possible) edge $e = \{x,y\} \in V^2$ occurs independently with probability $0 < p < 1$ (where $p$ may be a function of $n$).} With it, we give distributed algorithms using only $\tilde{O}(n)$ messages to compute with high probability constant-factor approximations for MaxIS, MDS, MVC and MaxM in $G(n,p)$ random graphs, in Subsections \ref{subsec:approxInRandomGraphs} and \ref{subsec:maxMatchingApproxRG}.

\paragraph*{Preliminaries}
We state some well-known Chernoff bounds as well as a basic (probabilistic) claim about the maximum and minimum degree of random graphs. Note that the claim's probabilistic guarantees depend on the randomly chosen graph, or more precisely, on the indicator variables $\{\Ind_e\}_{e \in V^2}$ that denote whether edge $e$ occurs in $G$.
Later in this section, probabilities are taken over both the natural probability space of the randomly chosen graph and the natural probability space of the algorithm's random choices.

\begin{lemma}[Chernoff Bounds \cite{Upfalbook}]
\label{lem:ChernoffBound}
Let $X_1,\ldots,X_k$ be independent $\{0,1\}$ random variables. Let $X$ denote the sum of the random variables, $\mu$ the sum's expected value and $\mu_L, \mu_H$ be any value respectively smaller and greater than $\mu$. Then,
\begin{enumerate}
    \item $\Pr[X \leq (1-\delta) \mu_L] \leq \exp(- \delta^2 \mu_L / 2)$  for $0 \leq \delta \leq 1$,
    \item $\Pr[X \geq (1+\delta) \mu_H] \leq \exp(- \delta^2 \mu_H / 3)$ for $0 \leq \delta \leq 1$,
    \item $\Pr[X \geq (1+\delta) \mu_H] \leq \exp(- \delta^2 \mu_H / (2+\delta))$ for $\delta \geq 1$.
\end{enumerate}

\end{lemma}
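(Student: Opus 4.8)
The three inequalities are the standard multiplicative Chernoff bounds for a sum of independent Bernoulli variables, and since they are cited verbatim from \cite{Upfalbook} the paper itself need only invoke that reference. Were I to prove them from scratch, the plan is the classical exponential-moment (Chernoff) method followed by a simplification of the resulting expressions into the stated forms. First I would set up the generic tail bound. Writing $X = \sum_{i=1}^k X_i$ with $\Pr[X_i = 1] = p_i$, so that $\mu = \sum_i p_i$, for any $t > 0$ Markov's inequality applied to $e^{tX}$ gives $\Pr[X \ge a] \le e^{-ta}\,\E[e^{tX}]$. By independence, $\E[e^{tX}] = \prod_i \E[e^{tX_i}] = \prod_i(1 + p_i(e^t - 1))$, and the termwise estimate $1 + z \le e^z$ yields $\E[e^{tX}] \le \exp(\mu(e^t - 1))$.

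For the two upper-tail bounds (parts 2 and 3), I would note that $e^t - 1 > 0$ when $t > 0$, so $\mu$ may be replaced by any larger value $\mu_H \ge \mu$ to obtain $\Pr[X \ge a] \le \exp(-ta + \mu_H(e^t - 1))$. For the lower tail (part 1) I would repeat the argument with $t < 0$: Markov on $e^{tX}$ gives $\Pr[X \le a] \le e^{-ta}\,\E[e^{tX}] \le \exp(-ta + \mu(e^t - 1))$, and since now $e^t - 1 < 0$ and $\mu \ge \mu_L$, the exponent is bounded by $-ta + \mu_L(e^t - 1)$. Thus all three cases reduce to optimizing a single parameter $t$.

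Next I would carry out that optimization. For the upper tail, choosing $t = \ln(1+\delta)$ and $a = (1+\delta)\mu_H$ collapses the bound to the canonical form
\[\Pr[X \ge (1+\delta)\mu_H] \le \left(\frac{e^\delta}{(1+\delta)^{1+\delta}}\right)^{\mu_H}.\]
For the lower tail, taking $t = \ln(1-\delta)$ (negative for $0 < \delta \le 1$) and $a = (1-\delta)\mu_L$ gives
\[\Pr[X \le (1-\delta)\mu_L] \le \left(\frac{e^{-\delta}}{(1-\delta)^{1-\delta}}\right)^{\mu_L}.\]
These are the standard base inequalities from which all three stated forms follow.

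The remaining work is to bound each canonical expression by the claimed exponential, which after taking logarithms becomes a one-variable inequality. Part 1 reduces to $-\delta - (1-\delta)\ln(1-\delta) \le -\delta^2/2$ for $0 \le \delta \le 1$, which is transparent from the series $-\delta - (1-\delta)\ln(1-\delta) = -\sum_{m \ge 2} \delta^m/(m(m-1))$, whose leading term is exactly $-\delta^2/2$ and whose remaining terms are all negative. Part 2 reduces to $\delta - (1+\delta)\ln(1+\delta) \le -\delta^2/3$ for $0 \le \delta \le 1$, and part 3 to $\delta - (1+\delta)\ln(1+\delta) \le -\delta^2/(2+\delta)$ for $\delta \ge 1$; each of these I would verify by checking equality at $\delta = 0$ and comparing first derivatives (e.g.\ for part 3, the difference $g(\delta) = (1+\delta)\ln(1+\delta) - \delta - \delta^2/(2+\delta)$ has $g(0)=0$ and $g'(\delta) = \ln(1+\delta) - \delta(\delta+4)/(2+\delta)^2 \ge 0$ on the relevant range). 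I expect this last step to be the only genuine obstacle: the three scalar inequalities are elementary but fiddly, and part 3 in particular — with the $\delta^2/(2+\delta)$ denominator over the unbounded range $\delta \ge 1$ — requires the most careful derivative bookkeeping to confirm that the sign is correct throughout.
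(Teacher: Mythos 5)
Your proposal is correct: the paper offers no proof of this lemma at all---it is quoted directly from the cited textbook \cite{Upfalbook}---and your sketch reproduces the standard exponential-moment argument used there, including the right observation that the sign of $e^t-1$ lets one replace $\mu$ by $\mu_H$ in the upper tail and by $\mu_L$ in the lower tail. The three scalar inequalities you reduce to are all true as stated (for part 3 the bound in fact holds for all $\delta \ge 0$, verifiable via $g''(\delta) = \tfrac{1}{1+\delta} - \tfrac{8}{(2+\delta)^3} \ge 0$), so nothing further is needed.
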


\begin{claim}
\label{claim:randomGraphMaxDegree}
The maximum degree of $G$ is $O(\max\{np,\log n\})$ with high probability.
Additionally, if $p \geq 16 \log(n)/n$, the minimum degree of $G$ is $\Omega(np)$ with high probability.
\end{claim}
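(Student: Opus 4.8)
The plan is to apply the Chernoff bounds of Lemma~\ref{lem:ChernoffBound} to the degree of each individual vertex and then take a union bound over all $n$ vertices. First I would fix a vertex $v$ and observe that $\deg(v) = \sum_{w \neq v} \Ind_{\{v,w\}}$ is a sum of $n-1$ independent $\{0,1\}$ random variables, each equal to $1$ with probability $p$, so that $\E[\deg(v)] = (n-1)p$. This is the only structural fact needed; everything else is tail estimation.

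For the maximum degree bound, I would set $\mu_H = \max\{np, \log n\}$, which is an upper bound on $\E[\deg(v)] = (n-1)p$, and apply the upper-tail bound (part 3 of Lemma~\ref{lem:ChernoffBound}, valid for $\delta \geq 1$) with the constant $\delta = 4$. This gives $\Pr[\deg(v) \geq 5\mu_H] \leq \exp\!\left(-\tfrac{16}{6}\mu_H\right) = \exp\!\left(-\tfrac{8}{3}\mu_H\right) \leq n^{-8/3}$, where the last step uses $\mu_H \geq \log n$. A union bound over the $n$ vertices then shows that with probability at least $1 - n^{-1}$ every vertex has degree at most $5\max\{np,\log n\} = O(\max\{np,\log n\})$. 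Note that the two regimes $np \geq \log n$ and $np < \log n$ are handled uniformly, since $\mu_H = \max\{np,\log n\}$ is a common upper bound on the mean in both cases, so no separate case analysis on the edge density is required.

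For the minimum degree, I would work in the regime $p \geq 16\log n/n$, under which $\E[\deg(v)] = (n-1)p \geq 16\log n\,(1 - \tfrac1n)$. Applying the lower-tail bound (part 1, valid for $0 \le \delta \le 1$) with $\delta = \tfrac12$ and $\mu_L = (n-1)p$ yields $\Pr\!\left[\deg(v) \leq \tfrac12 (n-1)p\right] \leq \exp\!\left(-\tfrac{(n-1)p}{8}\right)$. For all $n \geq 4$ we have $(n-1)p \geq 12\log n$, so this tail probability is at most $n^{-3/2}$. A union bound over the $n$ vertices then gives that, with probability at least $1 - n^{-1/2}$, every vertex has degree at least $\tfrac12 (n-1)p = \Omega(np)$, as claimed.

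The calculations are routine, so the only point deserving care is the choice of constants that makes each per-vertex tail probability smaller than $n^{-1}$ by a polynomial factor, which is exactly what keeps the union bound over $n$ vertices below a vanishing quantity. In particular, the constant $16$ in the hypothesis $p \geq 16\log n/n$ is precisely what forces the minimum-degree exponent to exceed $\log n$ by the needed margin; with a substantially smaller constant the $\delta = \tfrac12$ bound would only yield a per-vertex failure probability of order $n^{-1}$, and the union bound would degrade to a constant rather than a high-probability guarantee.
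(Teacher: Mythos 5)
Your proof is correct and follows essentially the same approach as the paper's: a per-vertex Chernoff bound on the degree (a sum of $n-1$ independent Bernoulli$(p)$ variables) followed by a union bound over all $n$ vertices. The only differences are cosmetic --- you unify the paper's two-case analysis for the maximum degree by taking $\mu_H = \max\{np,\log n\}$ as a common upper bound on the mean, and your constants in the minimum-degree tail differ slightly --- both treatments are sound.
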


\begin{proof}
For any node $v \in V$, let $d_v$ denote the degree of $v$ in $G$. Note that $d_v$ is a random variable (that depends only on the randomly chosen graph). Moreover, it is the sum of $n-1$ independent, identically distributed Bernoulli random variables with parameter $p$. Thus, $\E[d_v] = (n-1)p \leq np$. Consequently, it follows from the Chernoff bound (see Lemma \ref{lem:ChernoffBound}) that for any $\delta \geq 0$, $\Pr[d_v \geq (1+\delta) np] \leq \exp(- \delta^2 np / (2+\delta))$.

In a first time, we assume that $p \geq \log(n)/n$, which implies that $np \geq \log n$. Hence,  $\Pr[d_v \geq (1+\delta) np] \leq  \exp(- \delta^2 \log n / (2+\delta))$. Setting $\delta = 4$, $\Pr[d_v \geq 5 np] \leq 1/n^2$.
Next, assume that $p \leq \log(n)/n$ and take $\delta = c \log(n)/(np)$ (for some constant $c > 1$). Since $\delta \geq 1$, rewriting the Chernoff bound gives $\Pr[d_v \geq 2 \delta np] \leq \exp(- \delta np / 3)$. Replacing the value of $\delta$ in the expression results in  $\Pr[d_v \geq 2 c \log n] \leq \exp(- c \log n / 3)$. Thus, for $c = 6$, $\Pr[d_v \geq 12 \log n] \leq 1/n^2$.
Therefore, by a union bound on the nodes of $G$, the maximum degree is upper bounded by $\max\{5np,12 \log n\}$ with probability at least $1-1/n$. Greater values of $\delta$ allow for better probability bounds, say $1-1/n^2$.

Finally, assume $p \geq 16 \log(n)/n$ and note that $\E[d_v] = (n-1)p \geq np/2$ for $n \geq 2$. Hence, it follows from the Chernoff bound (see Lemma \ref{lem:ChernoffBound}) that for any $\delta \geq 0$, $\Pr[d_v \leq (1-\delta) np/2] \leq \exp(- \delta^2 np / 4) \leq \exp(- 4 \delta^2 \log(n))$. Setting $\delta = 1/2$, $\Pr[d_v \leq np/4] \leq 1/n^2$.
\end{proof}

The above claim implies that for $p = O(\log(n)/n)$, the communication graph has at most $O(n \log n)$ edges w.h.p. Hence, any well-known fast MIS algorithm (say, randomized greedy MIS) takes $O(\log n)$ rounds and uses only $O(n \log^2 n)$ messages. As a result, from now on, we consider that $p = \Omega(\log(n)/n)$.

\subsection{Randomized Greedy MIS in \texorpdfstring{$O(\log^2 n)$}{O(log^2 n)} Rounds and \texorpdfstring{$\tilde{O}(n)$}{Õ(n)} Messages}
\label{subsec:greedyMIS}

The (randomized greedy) MIS algorithm works in $O(\log n)$ phases  (where $n$ is known to all nodes initially). More precisely, there are $\phaseBound+1$ phases, where $q = \cproba / p n$ is a parameter (fixed to ensure correctness with high probability in the analysis) and $0 < q \leq 1$. Initially, all nodes start undecided. As the algorithm progresses, more and more nodes become decided by either joining the MIS or having a neighboring node join the MIS.

The first $\phaseBound$ phases each consist of $15$ iterations, where each iteration takes $O(\log n)$ rounds. Each iteration decreases the number of undecided nodes by a constant factor, and each phase decreases the number of undecided nodes by half. On the other hand, the final phase consists of a single $O(\log n)$ round iteration. During that iteration, all undecided nodes---at most $O(\log(n)/p)$ of them---run an MIS algorithm and become decided. Since nodes have maximum degree $O(np)$ w.h.p. (see Claim \ref{claim:randomGraphMaxDegree}), the last phase uses only $\tilde{O}(n)$ messages w.h.p.

Next, we describe in more detail one iteration, say $j$, of phase $i \in \{1,\ldots,\phaseBound\}$.\footnote{If nodes only know some polynomial upper bound $N$ on $n$, the algorithm can be adapted by setting $q = (100 \log N) / (p N)$. The analysis remains mostly the same. Indeed, although some of the early phases with small probability $q_i$ provide no progress, they also use few messages---more precisely, $\tilde{O}(n)$ messages.} In the first round, each (up to now) undecided node becomes active with probability $q_i = 2^{i-1} q$ --- which should result with high probability in $O(\log(n) / p)$ active nodes --- and active nodes send a message to all neighboring nodes. In the second round, nodes (that received a message) answer whether they are active or not. Hence, after the first two rounds, active nodes know for each incident edge whether it leads to an active node or not.
For the next $T_{MIS}(n) = O(\log n)$ rounds (where $T_{MIS}(n)$ is an upper bound on the runtime of distributed randomized greedy MIS over $n$-node graphs), the active nodes (denoted by $A_{i,j}$) execute the distributed randomized greedy MIS algorithm on $G[A_{i,j}]$.\footnote{We use $G[U]$ to denote the subgraph of $G$ induced by set $U \subset V$.}
Let us denote the computed MIS by $M_{i,j}$. Crucially, computing $M_{i,j}$ does not depend on the edges $E_{i,j}$ between active and non-active nodes. More precisely, active nodes ignore any information they may have on $E_{i,j}$, such as which and how many incident edges are in $E_{i,j}$, when executing distributed randomized greedy MIS on $G[A_{i,j}]$.
Finally, in the phase's last round, nodes in $M_{i,j}$ send a message to all of their neighbors (not only the active ones) and become decided (as part of the MIS). Nodes that receive a message in this round also become decided (as nodes with a neighbor in the MIS).

\paragraph*{Analysis} We denote, for any phase $i \in \{1, \ldots, \phaseBound + 1\}$, by $V_i$ the set of undecided nodes at the start of phase $i$ and by $n_i = |V_i|$ the size of $V_i$. Moreover, for any iteration $j \in \{1,\ldots,15\}$ of phase $i \leq \phaseBound$, we denote by $V_{i,j}$ the set of undecided nodes at the start of iteration $j$ and by $n_{i,j} = |V_{i,j}|$ the size of $V_{i,j}$.

We start by bounding the number of active nodes in each iteration (see Lemma \ref{lem:activeSubgraphSize}), as well as by bounding the maximum degree on the subgraph induced by these active nodes (see Lemma \ref{lem:activeSubgraphDegree}).

\begin{lemma}
\label{lem:activeSubgraphSize}
For any iteration $j \in \{1,\ldots,15\}$ of phase $i \in \{1, \ldots, \phaseBound\}$, if $n_{i,j} \leq n/2^{i-1}$ then $|A_{i,j}| \leq 125 \log(n)/p$ \withProba{with probability at least $1-1/(120 n^2)$}. Additionally, if $n_{i,j} \geq n/2^{i}$ then $|A_{i,j}| \geq 35 \log(n)/p$ \withProba{with probability at least $1-1/(60 n^2)$}.
\end{lemma}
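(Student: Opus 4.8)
The plan is to condition on the set $V_{i,j}$ of nodes still undecided at the start of iteration $j$ of phase $i$, and then observe that, given $V_{i,j}$, the active set $A_{i,j}$ is obtained by retaining each of the $n_{i,j}$ undecided nodes independently with probability $q_i = 2^{i-1}q$. Each node's activation coin is flipped independently of the graph and of all other nodes, so conditioned on $V_{i,j}$ the variable $|A_{i,j}| = \sum_{v \in V_{i,j}} \Ind[v \text{ active}]$ is a sum of $n_{i,j}$ independent Bernoulli$(q_i)$ indicators with $\E[\,|A_{i,j}| \mid V_{i,j}\,] = n_{i,j}\,q_i$. Since $q = \cproba/(pn)$, a direct computation yields the two mean estimates I need: under the hypothesis $n_{i,j} \le n/2^{i-1}$ of the first part, $\E[\,|A_{i,j}| \mid V_{i,j}\,] \le (n/2^{i-1})\cdot 2^{i-1}q = nq = 100\log(n)/p =: \mu_H$, whereas under the hypothesis $n_{i,j} \ge n/2^{i}$ of the second part, $\E[\,|A_{i,j}| \mid V_{i,j}\,] \ge (n/2^{i})\cdot 2^{i-1}q = nq/2 = 50\log(n)/p =: \mu_L$.

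Both parts then follow from the Chernoff bounds of Lemma~\ref{lem:ChernoffBound}, applied with $\mu_H$ as an upper estimate and $\mu_L$ as a lower estimate of the mean (this is precisely why that lemma is stated for arbitrary $\mu_H \ge \mu$ and $\mu_L \le \mu$, rather than for the exact mean). For the upper tail I would take $\delta = 1/4$, so that $(1+\delta)\mu_H = 125\log(n)/p$ matches the claimed threshold and the second Chernoff bound gives $\Pr[\,|A_{i,j}| \ge 125\log(n)/p\,] \le \exp(-\mu_H/48)$. For the lower tail I would take $\delta = 3/10$, so that $(1-\delta)\mu_L = 35\log(n)/p$ matches the claim and the first Chernoff bound gives $\Pr[\,|A_{i,j}| \le 35\log(n)/p\,] \le \exp(-9\mu_L/200)$. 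Using $p \le 1$ to bound $\mu_H \ge 100\log n$ and $\mu_L \ge 50\log n$, both exponents are at least a constant times $\log n$ with that constant strictly above $2$ (indeed close to $3$ when $\log$ denotes the base-two logarithm used throughout), so the two failure probabilities fall below $1/(120 n^2)$ and $1/(60 n^2)$ respectively for all sufficiently large $n$.

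Finally, since each tail estimate holds for every fixed realization of the history consistent with the stated hypothesis on $n_{i,j}$ — and depends on that realization only through the single parameter $n_{i,j}$, monotonically — the bound survives removal of the conditioning and yields the two statements of the lemma as written.

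The step I expect to require the most care is the first one: justifying that, conditioned on $V_{i,j}$, the activations really are mutually independent Bernoulli trials with parameter $q_i$. This amounts to checking that the only randomness entering the passage from $V_{i,j}$ to $A_{i,j}$ is the fresh, mutually independent activation coins, and in particular that $V_{i,j}$ — a complicated function of the random graph and of all prior iterations' random choices — is measurable with respect to the past and can be frozen before the current coins are flipped. Once this independence is set up cleanly, the mean computations and the two Chernoff applications are entirely routine.
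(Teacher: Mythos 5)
Your proposal is correct and follows essentially the same route as the paper's proof: the same representation of $|A_{i,j}|$ as a sum of independent Bernoulli$(q_i)$ indicators over $V_{i,j}$, the same mean estimates $nq = 100\log(n)/p$ and $nq/2 = 50\log(n)/p$, and the same Chernoff applications with $\delta = 1/4$ and $\delta = 3/10$ yielding the thresholds $125\log(n)/p$ and $35\log(n)/p$ and the stated failure probabilities for large enough $n$. The only difference is that you make explicit the conditioning on $V_{i,j}$ and its removal, which the paper treats implicitly (and addresses more carefully only in the subsequent lemma via deferred decisions); this is a welcome clarification, not a deviation.
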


\begin{proof}
Assume that for iteration $j$ of phase $i$, $n_{i,j} \leq n/2^{i-1}$.
The indicator random variables $\{\Ind[v \in A_{i,j}]\}_{v \in V_{i,j}}$ are independent, identically distributed Bernoulli random variables with parameter $q_i$.
Moreover, their sum is the number of active nodes $|A_{i,j}|$: i.e., $|A_{i,j}| = \sum_{v \in V_{i,j}} \Ind[v \in A_{i,j}]$. The expected number of active nodes is $\E[|A_{i,j}|] = n_{i,j} q_i \leq \cproba/p$. Thus, the Chernoff bound (see Lemma \ref{lem:ChernoffBound}) shows that for any $0 \leq \delta \leq 1$:
\[ \Pr[|A_{i,j}| \geq (1+\delta) \cproba/p] \leq \exp(- \delta^2 \cproba / (3 p)) \leq \exp(- \delta^2 \cproba / 3) \]
The second inequality holds since $0 < p < 1$. Setting $\delta = 1/4$, we get $\Pr[|A_{i,j}| \geq 125 \log(n)/p] \leq \exp(- 100 \log(n)/48) \leq 1/(120 n^2)$, for large enough $n$.

Next, consider $n_{i,j} \geq n/2^{i}$. Then, the expected number of active nodes is $\E[|A_{i,j}|] = n_{i,j} q_i \geq 50 \log(n) / p$. Thus, the Chernoff bound shows that for any $0 \leq \delta \leq 1$:
\[ \Pr[|A_{i,j}| \leq (1-\delta) 50 \log(n)/p] \leq \exp(- \delta^2 50 \log(n) / (2p)) \leq \exp(- \delta^2 25 \log n) \]
Set $\delta = 3/10$. Then, $\Pr[|A_{i,j}| \leq 35 \log(n)/p] \leq \exp(- \frac{9}{4} \log(n)) \leq 1/(60 n^2)$ for large enough $n$.
\end{proof}

\begin{lemma}
\label{lem:activeSubgraphDegree}
For any iteration $j \in \{1,\ldots,15\}$ of phase $i \in \{1, \ldots, \phaseBound\}$, if $n_{i,j} \leq n/2^{i-1}$ then the maximum degree of $G[A_{i,j}]$ is $O(\log n)$ with high probability; in fact, it is at most $175 \log n$ \withProba{with probability at least $1-1/(60 n^2)$. }
\end{lemma}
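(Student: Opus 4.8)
The plan is to show that, conditioned on the set $A_{i,j}$ of active nodes, the induced subgraph $G[A_{i,j}]$ is distributed exactly as an Erd\"{o}s--R\'{e}nyi graph $G(|A_{i,j}|,p)$, and then to bound the maximum degree of such a graph via a Chernoff bound. Concretely, I would first invoke Lemma~\ref{lem:activeSubgraphSize}: since $n_{i,j} \le n/2^{i-1}$, we have $|A_{i,j}| \le 125\log(n)/p$ with probability at least $1 - 1/(120n^2)$. Conditioning on this size bound together with the $G(|A_{i,j}|,p)$ reduction, any fixed active node $v$ has degree in $G[A_{i,j}]$ distributed as $\mathrm{Bin}(|A_{i,j}|-1, p)$, whose expectation is at most $(|A_{i,j}|-1)p \le 125\log n$. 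A Chernoff bound (Lemma~\ref{lem:ChernoffBound}, item 2) with $\mu_H = 125\log n$ and $1+\delta = 175/125 = 7/5$ then gives $\Pr[\deg_{G[A_{i,j}]}(v) \ge 175\log n] \le \exp(-\delta^2 \mu_H/3) \le n^{-c}$ for a constant $c > 3$; a union bound over the at most $n$ choices of $v$, combined with the failure probability of the size bound, yields the claimed $1 - 1/(60n^2)$.

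The crux of the argument --- and the step I expect to be the main obstacle --- is justifying that $G[A_{i,j}]$ is a fresh copy of $G(|A_{i,j}|,p)$ conditioned on $A_{i,j}$. The key structural observation is that every node that becomes active is \emph{decided} by the end of its iteration: the set $M_{i,j}$ is a maximal independent set of $G[A_{i,j}]$, so each active node is either in $M_{i,j}$ (and joins the MIS) or has a neighbor in $M_{i,j}$ which, in the last round, broadcasts to all of its neighbors (so the active node is dominated and decided). Consequently, every undecided node at the start of iteration $(i,j)$ --- in particular every node of $V_{i,j} \supseteq A_{i,j}$ --- has never been active in any earlier iteration and, being undecided, is adjacent to no previously selected MIS node.

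I would then argue via the principle of deferred decisions that every edge with both endpoints in $V_{i,j}$ has never been queried by the algorithm. Indeed, communication occurs only (i) from an active node to all its neighbors in the first round of an iteration, (ii) among active nodes during the MIS subroutine, and (iii) from an MIS node to all its neighbors in the last round. Since both endpoints of an edge inside $V_{i,j}$ were never active and are adjacent to no MIS node, no message ever traverses such an edge, so its presence or absence has influenced neither the execution nor the determination of $V_{i,j}$: membership of a node $v$ in $V_{i,j}$ depends only on the activation coins and on edges between $v$ and active/decided nodes, never on edges internal to $V_{i,j}$. Hence, conditioned on $\{V_{i,j} = S\}$ for any realizable $S$, the edges inside $S$ remain i.i.d.\ Bernoulli$(p)$; and since the activation producing $A_{i,j} \subseteq S$ uses independent coins, conditioning further on $A_{i,j}$ leaves the edges inside $A_{i,j}$ i.i.d.\ Bernoulli$(p)$, establishing the reduction. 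The main delicacy is making this deferred-decisions bookkeeping precise --- in particular, verifying that revealing the identity of the undecided set $V_{i,j}$ does not implicitly reveal any edge internal to $V_{i,j}$.
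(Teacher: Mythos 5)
Your proposal is correct and follows essentially the same route as the paper's proof: a deferred-decisions argument showing that edges internal to $V_{i,j}$ (and hence $A_{i,j}$) were never used and so remain i.i.d.\ Bernoulli$(p)$, followed by conditioning on the size bound from Lemma~\ref{lem:activeSubgraphSize} and a Chernoff bound with the same parameters ($\mu_H = 125\log n$, $\delta = 2/5$) and a union bound over nodes. Your explicit observation that every active node is decided by the end of its iteration is exactly the structural fact the paper uses (stated there as ``prior active nodes become decided''), so there is no gap.
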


\begin{proof}
Choosing the communication graph determines the value of all indicator variables $\{\Ind[(v,u) \in E]\}_{v,u \in V}$ (denoting whether edge $(v,u)$ is part of the communication graph). Thus, these variables are decided before the algorithm even starts.
However, we shall use the principle of deferred decisions \cite{Upfalbook} to argue that in the analysis, the indicator random variables $\{\Ind[(v,u) \in E]\}_{v,u \in A_{i,j}}$ can be revealed after the set of active nodes $A_{i,j}$ is decided.

More concretely, prior to iteration $j$ of phase $i$, the algorithm sends messages only on edges incident to (prior) active nodes and these (prior) active nodes become decided---as a MIS node or the neighbor of one. Hence, prior to iteration $j$ (of phase $i$), the algorithm sends no messages over any edges between nodes in $V_{i,j}$. As a result, by the principle of deferred decisions, all indicator random variables $\{\Ind[(v,u) \in E]\}_{v,u \in V_{i,j}}$ can be revealed after $V_{i,j}$ is decided.
Moreover, nodes in $V_{i,j}$ become active (i.e., join $A_{i,j}$) independently of any information on the communication graph.
Hence, by the principle of deferred decisions, the variables $\{\Ind[(v,u) \in E]\}_{v,u \in A_{i,j}}$ can be revealed after the set of active nodes $A_{i,j}$.

By Lemma \ref{lem:activeSubgraphSize}, $|A_{i,j}| \leq 125 \log(n) / p$ \withProba{with probability at least $1-1/(120 n^2)$}. We condition on this for the remainder of the proof. Consider an arbitrary $v \in A_{i,j}$. The degree of $v$ in $G[A_{i,j}]$, now denoted by $d_{i,j}(v)$, satisfies $d_{i,j}(v) = \sum_{u \in A_{i,j}, u \neq v} \Ind[(v,u) \in E]$. Since the indicator variables $\{\Ind[(v,u) \in E]\}_{v,u \in A_{i,j}}$ are independent, identically distributed Bernoulli random variables with parameter $p$, the expectation of $d_{i,j}(v)$ is $\E[d_{i,j}(v)] \leq |A_{i,j}| p \leq 125 \log n$.
We can show, using the Chernoff bound (see Lemma \ref{lem:ChernoffBound}), that for any $0 \leq \delta \leq 1$, $\Pr[d_{i,j}(v) \geq (1+\delta) 125 \log n] \leq \exp(- \delta^2 125 \log(n) / 3)$. Setting $\delta = 4/10$, we get $\Pr[d_{i,j}(v) \geq 175  \log(n)] \leq \exp(- 20 \log(n) / 3) \leq 1/(120 n^3)$, for large enough $n$. Finally, we obtain the second half of the lemma statement by a union bound on the nodes of $A_{i,j}$ and adding the probability that $|A_{i,j}| > 125 \log(n)/p$.
\end{proof}

Given the previous two lemmas, we can now show that each phase decreases the number of undecided nodes by at least half.

\begin{lemma}
\label{lem:progress}
Consider some phase $i \in \{1, \ldots, \phaseBound\}$ such that $n_{i,j} \geq n/2^{i}$. Then, at least half of the nodes of $V_i$ become decided \withProba{with probability at least $1-1/n^2$}.
\end{lemma}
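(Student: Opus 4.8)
The plan is to reduce the per-phase guarantee to a per-iteration progress bound and then chain it over the $15$ iterations of the phase. Throughout I would carry along the inductive invariant $n_i \le n/2^{i-1}$, so that for every iteration $j$ satisfying the hypothesis $n_{i,j}\ge n/2^{i}$ we automatically have the two-sided bound $n/2^{i}\le n_{i,j}\le n_i\le n/2^{i-1}$. This is exactly what is needed to invoke \emph{both} halves of Lemma~\ref{lem:activeSubgraphSize} (yielding $35\log(n)/p\le |A_{i,j}|\le 125\log(n)/p$) and Lemma~\ref{lem:activeSubgraphDegree} (yielding maximum degree at most $175\log n$ in $G[A_{i,j}]$), each of which fails only with probability $O(1/n^2)$. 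I would condition on the state at the start of iteration $j$ and analyze that single iteration, later taking a union bound over the constantly many iterations.

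The heart of one iteration is a lower bound on the maximal independent set $M_{i,j}$ the active nodes compute. Since $M_{i,j}$ is maximal in a graph on $|A_{i,j}|\ge 35\log(n)/p$ vertices of maximum degree at most $175\log n$, the greedy bound gives
\[
  |M_{i,j}|\ \ge\ \frac{|A_{i,j}|}{1+175\log n}\ \ge\ \frac{1}{6p}.
\]
The crux is then a deferred-decisions step, in the exact spirit of the proof of Lemma~\ref{lem:activeSubgraphDegree}: no message has ever crossed an edge between two currently-undecided nodes, and the active nodes run greedy MIS on $G[A_{i,j}]$ while \emph{ignoring} all edges to non-active nodes, so the edges joining $M_{i,j}$ to the non-active undecided nodes remain unrevealed. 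Hence, conditioned on $A_{i,j}$ and $M_{i,j}$, every non-active undecided node $v$ is adjacent to each node of $M_{i,j}$ independently with probability $p$, and these events are mutually independent across distinct $v$ (their edge-sets to $M_{i,j}$ are disjoint). Consequently
\[
  \Pr[\,v \text{ has no neighbor in } M_{i,j}\,]\ =\ (1-p)^{|M_{i,j}|}\ \le\ (1-p)^{1/(6p)}\ \le\ e^{-1/6},
\]
so each such $v$ becomes decided with probability at least $1-e^{-1/6}>\tfrac17$.

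Next I would convert this into a high-probability constant-fraction decrease of $n_{i,j}$. The set of newly decided nodes contains all of $A_{i,j}$ (every active node is dominated by $M_{i,j}$ inside $G[A_{i,j}]$) together with the non-active undecided nodes dominated by $M_{i,j}$. Writing $m=n_{i,j}-|A_{i,j}|$ for the non-active count, the decided count is at least $|A_{i,j}|+X$, where $X$ is a sum of $m$ independent Bernoulli variables of parameter at least $1-e^{-1/6}$. When $m=\Omega(\log n)$, a Chernoff bound (Lemma~\ref{lem:ChernoffBound}) gives $X\ge\tfrac12(1-e^{-1/6})m$ with failure probability $O(1/n^2)$, so the decided count is at least $\tfrac12(1-e^{-1/6})\,n_{i,j}$; when instead $m$ is only $O(\log n)$, the $\ge 35\log(n)/p$ active nodes are already a constant fraction of $n_{i,j}\le 2C\log n$ and are decided deterministically. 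Either way, one iteration leaves at most $(1-c)\,n_{i,j}$ undecided nodes for an absolute constant $c>0$ (one checks $(1-c)^{15}<\tfrac12$), with failure probability $O(1/n^2)$. The step I expect to require the most care is precisely this one: justifying the independence of the domination events via deferred decisions, and then reconciling the concentration across all regimes of $p$ and $n_{i,j}$—in particular handling the dense/few-non-active regime by the deterministic active-node contribution rather than by Chernoff.

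Finally I would chain the iterations. As long as $n_{i,j}\ge n/2^{i}$ the per-iteration bound applies, giving $n_{i,j+1}\le(1-c)n_{i,j}$; since $(1-c)^{15}<\tfrac12\le \tfrac{n/2^{i}}{n_i}$, the undecided count cannot remain above $n/2^{i}$ for all $15$ iterations, so at some iteration it drops below $n/2^{i}$ and stays there. A union bound over the $15$ iterations and the $O(1)$ bad events per iteration (each of probability $O(1/n^2)$) keeps the total failure probability below $1/n^2$. Thus, with probability at least $1-1/n^2$, at most $n/2^{i}$ nodes remain undecided at the end of phase $i$; combined with the invariant $n_i\le n/2^{i-1}$ this is exactly the claimed halving of $V_i$ (tight when $n_i=n/2^{i-1}$) and maintains the invariant for phase $i+1$.
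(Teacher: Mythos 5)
Your proposal is correct and takes essentially the same route as the paper's proof: a per-iteration argument that lower-bounds $|M_{i,j}| = \Omega(1/p)$ via Lemmas~\ref{lem:activeSubgraphSize} and~\ref{lem:activeSubgraphDegree}, a deferred-decisions step making the domination events of non-active undecided nodes independent with success probability $1-(1-p)^{|M_{i,j}|} \ge 1 - e^{-\Omega(1)}$, a Chernoff bound, and chaining over the $15$ iterations. The only substantive differences are bookkeeping: the paper avoids your two-regime case split on $m$ by running the stochastic-dominance/Chernoff argument over all $n_{i,j} \ge 50\log(n)/p$ undecided nodes at once (active nodes contribute deterministically), while your explicit monotonicity treatment of a mid-phase drop below $n/2^i$ --- after which you conclude $n_{i+1} \le n/2^i$ rather than literally $n_{i+1} \le n_i/2$ --- handles an edge case that the paper's straight $(0.95)^{15}$ chaining glosses over.
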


\begin{proof}
Below, we show that in each iteration $j \in \{1,\ldots,15\}$ of phase $i$, the number of undecided nodes at the end of the iteration is at most $0.95 n_{i,j}$ \withProba{with probability at least $1-1/(15n^2)$}. The lemma statement follows straightforwardly, since the number of undecided nodes at the end of phase $i$ is at most $(0.95)^{15} n_i \leq n_i/2$ \withProba{with probability at least $1-1/n^2$}.

For the remainder of the proof, consider some iteration $j$ of phase $i$.
First, we show that any MIS on $G[A_{i,j}]$ contains at least $1/(5 p)$ nodes \withProba{with probability at least $1-1/(30n^2)$}. Indeed, suppose by contradiction that there exists a MIS $M$ on $G[A_{i,j}]$ with strictly less than $1/(5 p)$ nodes. By Lemma \ref{lem:activeSubgraphDegree}, the maximum degree of $G[A_{i,j}]$ is $175 \log n$ \withProba{with probability at least $1-1/(60 n^2)$} and thus $M$ covers strictly less than $175 \log(n) / (5 p) = 35 \log(n)/ p$ nodes in $G[A_{i,j}]$. However, $|A_{i,j}| \geq 35 \log(n)/ p$ \withProba{with probability at least $1-1/(60 n^2)$} by Lemma \ref{lem:activeSubgraphSize}. Hence, \withProba{with probability at least $1-1/(30 n^2)$}, there exists a node in $A_{i,j}$ with no neighbors in $M$, leading to a contradiction.

Next, we show the following claim: at least $0.05 n_{i,j}$ nodes become decided \withProba{with probability at least $1-1/(15n^2)$} at the end of iteration $j$. (Or in other words, the number of undecided nodes is at most $0.95 n_{i,j}$ \withProba{with probability at least $1-1/(15n^2)$} at the end of iteration $j$.) 

Let $U_{i,j} = V_{i,j} \setminus A_{i,j}$. As shown in the proof of Lemma \ref{lem:activeSubgraphDegree}, the indicator random variables $\{\Ind[(v,u) \in E]\}_{v \in A_{i,j}, u \in U_{i,j}}$ can be revealed after $V_{i,j}$ is decided. Nodes in $V_{i,j}$ become active independently of any information on the communication graph. Moreover, $M_{i,j}$ is computed independently of any information on $E_{i,j} = \{(v,u)\}_{v \in A_{i,j}, u \in U_{i,j}}$. Hence, by the principle of deferred decisions, the variables $\{\Ind[(v,u) \in E]\}_{v \in A_{i,j}, u \in U_{i,j}}$ can be revealed after $A_{i,j}$ and $M_{i,j}$ are decided.
Consequently, let us fix $M_{i,j}$ (without revealing $E_{i,j}$) and condition on $|M_{i,j}| \geq 1/(5 p)$, which holds \withProba{with probability at least $1-1/(30 n^2)$} (as shown above).
Consider the indicator random variables $\{\Ind_v\}_{v \in V_{i,j}}$ denoting that $v$ is in $M_{i,j}$ or has at least one neighbor in $M_{i,j}$. Then, the number of decided nodes at the end of iteration $j$ is the random variable $X = \sum_{v \in V_{i,j}} \Ind_v$. Note that $\Pr[\Ind_v = 1] = 1$ for all nodes $v \in A_{i,j}$ and that $\Pr[\Ind_v = 1] = 1 - (1-p)^{|M_{i,j}|} \geq 1- \exp(-p |M_{i,j}|) \geq 1- e^{-1/5}$ for all nodes $v \in  U_{i,j}$ (where we now reveal $E_{i,j}$). Hence, variable $X$ stochastically dominates the sum of $n_{i,j}$ independent, identically distributed random variable with parameter $1- e^{-1/5}$ and $\E[X] \geq n_{i,j} (1- e^{-1/5})$. Following which, the Chernoff bound (see Lemma \ref{lem:ChernoffBound}) implies that for any $0 \leq \delta \leq 1$,
\[\Pr[X \leq (1-\delta) n_{i,j} (1- e^{-1/5})] \leq \exp(- \delta^2 n_{i,j} (1- e^{-1/5})/2)
\]
Since $n_{i,j} \geq n/2^i \geq n/2^{\phaseBound} \geq nq/2 = 50 \log(n)/p$ and $0 < p < 1$,
\[\Pr[X \leq (1-\delta) n_{i,j} (1- e^{-1/5})] \leq \exp(- \delta^2 50 \log(n) (1- e^{-1/5})/2)
\]
Setting $\delta = 7/10$, we get $\delta^2 50 (1- e^{-1/5})/2 > 2$ and $(1-\delta) (1- e^{-1/5}) \geq 0.05$. Hence, $\Pr[X \leq 0.05 n_{i,j}] \leq 1/(30 n^2)$, for large enough $n$.
Since $|M_{i,j}| \geq 1/(5 p)$ fails \withProba{with probability at most $1/(30 n^2)$}, we get the desired probability bound.
\end{proof}

Finally, we prove that the proposed MIS algorithm is message- and round-efficient.

\begin{theorem}
In $G(n,p)$ random graphs, MIS can be solved with high probability
with $O(\log^2 n)$ round complexity and $\tilde{O}(n)$ message complexity (w.h.p.).
\end{theorem}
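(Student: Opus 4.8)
The plan is to verify three things for the algorithm described above: (i) it terminates with every node decided, (ii) its round complexity is $O(\log^2 n)$, and (iii) its message complexity is $\tilde{O}(n)$, all with high probability. The round bound is immediate from the structure: there are $\phaseBound + 1 = O(\log n)$ phases (since $q^{-1} = pn/\cproba \le n$), each of the first $\phaseBound$ phases runs $15$ iterations and each iteration runs $O(\log n)$ rounds, while the final phase is a single $O(\log n)$-round MIS computation; multiplying gives $O(\log^2 n)$ rounds. Correctness (that the output is an MIS) will follow from two observations. First, the reported set is independent: within an iteration $M_{i,j}$ is an independent set of $G[A_{i,j}]$ and hence of $G$, and whenever a node joins the MIS it notifies all its neighbors in the iteration's last round, so those neighbors become decided and can never be active (and thus never join the MIS) in a later iteration. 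Second, the set is maximal, because every decided node either joined the MIS or received such a notification from a neighbor that did, so no undecided node survives to be added.

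Thus the crux of correctness is termination, i.e.\ that every node is decided by the end of the final phase. First I would establish, by induction on $i$, the invariant that at the start of phase $i$ the number of undecided nodes satisfies $n_i \le n/2^{i-1}$; the base case $n_1 = n$ is trivial. For the inductive step, within phase $i$ the undecided count is monotonically non-increasing and at most $n_i \le n/2^{i-1}$, so the upper-bound half of Lemma~\ref{lem:activeSubgraphSize} applies throughout; if the count ever drops below $n/2^i$ we are already done for this phase, and otherwise the lower-bound half of Lemma~\ref{lem:activeSubgraphSize} applies, so Lemma~\ref{lem:progress} guarantees that phase $i$ halves $n_i$, giving $n_{i+1} \le n_i/2 \le n/2^i$ and maintaining the invariant. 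After the $\phaseBound$-th phase we therefore have at most $n/2^{\phaseBound} \le nq = \cproba/p = O(\log n / p)$ undecided nodes, exactly the quantity the single-iteration final phase is designed to finish off. Since each of Lemmas~\ref{lem:activeSubgraphSize}, \ref{lem:activeSubgraphDegree}, and \ref{lem:progress} fails with probability $O(1/n^2)$ per iteration and there are only $O(\log n)$ iterations, a union bound (also absorbing the high-probability degree guarantees of Claim~\ref{claim:randomGraphMaxDegree}) shows all of these events hold simultaneously w.h.p.

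It remains to bound the messages, which I expect to be the main obstacle, since it requires combining all three structural lemmas with the density assumption $p = \Omega(\log n/n)$. I would account per iteration. The activation round, the acknowledgement round, and the final notification round each have only the $O(\log n/p)$ active (resp.\ MIS) nodes sending, and by Claim~\ref{claim:randomGraphMaxDegree} each such node has degree $O(np)$, contributing $O(\log n/p)\cdot O(np) = O(n\log n)$ messages. The internal randomized greedy MIS on $G[A_{i,j}]$ runs for $T_{MIS}(n) = O(\log n)$ rounds; by Lemma~\ref{lem:activeSubgraphSize} there are $O(\log n/p)$ active nodes and by Lemma~\ref{lem:activeSubgraphDegree} each has degree $O(\log n)$ inside $G[A_{i,j}]$, so each round sends $O(\log^2 n/p)$ messages and the subroutine sends $O(\log^3 n/p)$ in total; using $1/p = O(n/\log n)$ this is $O(n\log^2 n) = \tilde{O}(n)$. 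Summing the $O(n\log^2 n)$ per-iteration cost over the $O(\log n)$ iterations yields $\tilde{O}(n)$, and the final phase --- where $O(\log n/p)$ nodes of degree $O(np)$ run MIS for $O(\log n)$ rounds --- contributes a further $O(n\log^2 n) = \tilde{O}(n)$. All degree and active-set bounds used here hold w.h.p.\ by the same union bound as above, so the total message complexity is $\tilde{O}(n)$ w.h.p., completing the proof.
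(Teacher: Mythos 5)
Your proposal is correct and follows essentially the same route as the paper's proof: the same induction $n_i \le n/2^{i-1}$ on undecided nodes via Lemma~\ref{lem:progress}, the same use of Lemma~\ref{lem:activeSubgraphSize}, Lemma~\ref{lem:activeSubgraphDegree}, and Claim~\ref{claim:randomGraphMaxDegree} to bound edges incident to active nodes, and the same union bound over the $O(\log n)$ iterations. Your per-iteration message accounting for the internal greedy-MIS subroutine is slightly more conservative (charging every internal edge in every round, giving $O(n\log^3 n)$ overall rather than the paper's $O(n\log^2 n)$), but this still yields the claimed $\tilde{O}(n)$ bound.
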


\begin{proof}
The correctness is straightforward. In each iteration of the first $\phaseBound$ phases, we compute an independent set (w.h.p.) and neighboring nodes become decided. In the final phase, we compute a maximal independent set (w.h.p.) on the subgraph induced by the remaining undecided nodes. The union of all independent sets form a maximal independent set of the communication graph with high probability.

Next, let us bound the round and message complexity. The round complexity bound is easily obtained. As for the message complexity bound, we first upper bound the number of undecided nodes in each phase. With this, we upper bound the number of active nodes and their incident edges in each iteration of the first $\phaseBound$ phases. Since only active nodes send messages (over all incident edges) in these phases, this bounds that first half of the message complexity. As for the final phase, the upper bound on the number of undecided nodes allows us to directly show that there remains few undecided nodes at the start of the final phase, bound their incident edges and thus also bound the second half of the message complexity.

To do so, we first show by induction that the following claim holds \withProba{with probability at least $1-1/(2n)$}: for any phase $i \in \{1, \ldots, \phaseBound+1\}$, $n_i \leq n/2^{i-1}$. This inequality holds trivially for the first phase. As for the induction step, consider some phase $i \in \{1, \ldots, \phaseBound\}$ for which the induction hypothesis holds. Then, if $n_i < n/2^{i}$, it holds straightforwardly that $n_{i+1} \leq n/2^{i}$ and the induction hypothesis holds for phase $i+1$. Otherwise, if $n_i \geq n/2^{i}$, Lemma \ref{lem:progress} implies that the number of undecided nodes decreases by half---that is, $n_{i+1} \leq n/2^{i}$---\withProba{with probability at least $1-1/n^2$}. Thus, the induction step fails \withProba{with probability at least $1-1/n^2$}, and hence by union bound the statement fails \withProba{with probability at most $1/(2n)$}, for large enough $n$.
In the remainder of the proof, we condition on the above claim.

For any iteration $j \in \{1,\ldots,15\}$ of phase $i \in \{1, \ldots, \phaseBound\}$, the above claim implies that $n_i \leq n/2^{i-1}$. In which case, Lemma \ref{lem:activeSubgraphSize} implies that there are at most $|A_{i,j}| = O(\log(n) / p)$ active nodes \withProba{with probability at least $1-1/n^2$}. Moreover, all nodes---and thus active nodes---have maximum degree $O(np)$ \withProba{with probability at least $1-1/n^2$}, by Claim \ref{claim:randomGraphMaxDegree}. Hence, there are at most $O(n \log n)$ edges incident to $A_{i,j}$ \withProba{with probability at least $1-2/ n^2$}.
As for the final phase $i = \phaseBound + 1$, the above claim implies that $n_i \leq n/2^{i-1} = n q / 2 = 50 \log(n) / p$. Since all nodes have maximum degree $O(np)$ \withProba{with probability at least $1-1/n^2$} (see Claim \ref{claim:randomGraphMaxDegree}), there are at most $O(n \log n)$ edges incident to undecided nodes in the final phase \withProba{with probability at least $1-1/n^2$}.
Finally, a union bound over all iterations of all phases, and adding in the error probability of the above claim, implies that the message complexity is $O(n \log^2 n)$ \withProba{with probability at least $1-1/n$}, for large enough $n$.
\end{proof}

\begin{lemma}
The proposed MIS algorithm implements randomized greedy MIS.
\end{lemma}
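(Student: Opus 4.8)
The plan is to prove that the output of the phased algorithm is distributed identically to $\mathrm{greedyMIS}(G,\rho)$, the maximal independent set produced by the sequential randomized greedy procedure in which every vertex $v$ draws an independent rank $\rho_v \sim \mathrm{Unif}[0,1]$ and vertices are committed in increasing order of rank (so $v$ enters the MIS iff no lower-ranked neighbor has already entered). Since this output is a deterministic function of $\rho$, it suffices to exhibit a coupling between the algorithm's internal randomness and a single rank assignment under which the two outputs agree in distribution.

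The key technical ingredient is a \emph{self-reducibility} (prefix) property of $\mathrm{greedyMIS}$: for any threshold $\tau$ let $S_\tau = \{v : \rho_v \le \tau\}$ and let $M_\tau$ be the greedy MIS of the induced subgraph $G[S_\tau]$ computed with the same ranks. First I would show $M_\tau \subseteq \mathrm{greedyMIS}(G,\rho)$ and, because $M_\tau$ is maximal in $G[S_\tau]$, that $S_\tau \subseteq N[M_\tau]$; consequently every vertex left undecided after deleting $N[M_\tau]$ has rank strictly larger than $\tau$. An easy induction on the rank order then yields the decomposition $\mathrm{greedyMIS}(G,\rho) = M_\tau \sqcup \mathrm{greedyMIS}(G - N[M_\tau],\, \rho|_{V \setminus N[M_\tau]})$; that is, running greedy on a rank-prefix, committing it, and deleting its closed neighborhood is exactly a partial execution of the global greedy process, and the residual ranks (conditioned on exceeding $\tau$) are again i.i.d.\ uniform after rescaling.

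With this in hand I would connect the rank-prefix view to the algorithm's activation step through the elementary observation that, under i.i.d.\ uniform ranks, the prefix $\{v \in U : \rho_v \le \theta\}$ of a vertex set $U$ is distributed exactly as the set obtained by including each vertex of $U$ independently with probability $\theta$ --- which is precisely how $A_{i,j}$ is formed with $\theta = q_i$. Moreover the greedy MIS the subroutine computes on $G[A_{i,j}]$ (recall it ignores the edges $E_{i,j}$ to inactive nodes, so it genuinely computes $\mathrm{greedyMIS}(G[A_{i,j}])$) plays the role of $M_\tau$, and the algorithm deletes exactly $N[M_{i,j}]$, matching the deletion of $N[M_\tau]$. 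I would then argue by induction on $|V(G)|$: one progress-making iteration replaces the undecided graph $H$ by $H' = H - N[M_{i,j}]$ with strictly fewer vertices, commits $M_{i,j}$, and --- by the prefix property and the subset/prefix coupling --- leaves a residual instance whose ranks are again i.i.d.\ uniform; applying the inductive hypothesis to $H'$ with the remaining activation schedule shows that the phased output on $H$ equals $M_{i,j} \sqcup \mathrm{greedyMIS}(H')$, which the decomposition identifies with $\mathrm{greedyMIS}(H,\rho)$. Iterations with an empty active set are no-ops leaving $H$ unchanged, and termination is guaranteed by the final phase (effectively $\theta = 1$, activating all remaining nodes and computing a maximal set), so the induction closes.

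The step I expect to be the main obstacle is making the self-reducibility argument fully rigorous together with the claim that the residual ranks remain i.i.d.\ uniform and schedule-independent across iterations: the algorithm uses fresh, independent randomness in every iteration rather than one global rank vector, so the coupling must be rebuilt at each step, and one must check that deleting the (possibly higher-ranked) neighbors of $M_{i,j}$ early does not corrupt later decisions. The prefix property handles exactly this --- those deleted neighbors are vertices that global greedy would also exclude, since each has a committed lower-ranked neighbor, and excluded vertices never influence subsequent greedy choices --- but verifying this interaction carefully, and confirming that the distribution of the residual instance does not depend on the particular thresholds $q_1, q_2, \dots$ chosen, is the crux of the argument.
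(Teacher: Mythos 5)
Your proposal is correct, but it takes a genuinely different route from the paper. The paper's proof is a single global coupling via deferred decisions: since each node becomes active in at most one iteration, one can pre-sample, independently for each node, the (phase, iteration) pair in which it would first flip heads --- pair $(1,1)$ with probability $q$, pair $(1,2)$ with probability $(1-q)q$, and so on --- together with a fresh uniform tie-breaking ID in $[1,n^3]$; the lexicographic order on pairs refined by these IDs is then a uniformly random order on $V$ (w.h.p., given ID uniqueness), and the algorithm's execution is observed to be exactly sequential greedy with respect to that order, since within each iteration the active nodes run randomized greedy among themselves and the iteration's MIS nodes then silence all their neighbors. You instead prove distributional equality by induction over iterations: you isolate the prefix/self-reducibility property of greedy MIS (greedy on a rank prefix, delete its closed neighborhood, recurse on the residual with rescaled i.i.d.\ ranks equals global greedy), observe that Bernoulli$(q_i)$ activation is exactly a rank prefix, and rebuild the coupling with fresh randomness at every step. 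Both arguments hinge on the same two facts --- activation equals a rank prefix, and prefix-greedy is consistent with global greedy --- but the paper packages them as one explicit realization-level coupling, whereas you make the consistency step an explicit lemma and handle the algorithm's per-iteration fresh randomness through conditional distributions. The paper's version is shorter and yields a per-execution identification with an explicit random order; yours is longer but makes rigorous precisely the step the paper dismisses as ``clear,'' and the distributional statement you prove is all that is needed downstream, where the lemma is used to import bounds on the size of the sequential randomized greedy MIS on random graphs. One small caveat: for your induction to close at the last phase, you must assume (as the paper implicitly does) that the MIS algorithm run on the remaining undecided nodes in the final phase is itself randomized greedy on that induced subgraph.
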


\begin{proof}
    We first define a uniformly random ordering of the nodes and then argue that the proposed algorithm implements the sequential greedy MIS algorithm with respect to that order.
    
Note that by the algorithm definition, each node $v \in V$ becomes active in at most one iteration (and one phase). Then, fix such a phase and iteration pair in advance for all nodes; for node $v \in V$, denote the pair by $p(v)$. More concretely, each node $v \in V$ independently chooses the pair $p(v) = (1,1)$ with probability $q$, $p(v) = (1,2)$ with probability $(1-q) q$, and so on.
    The chosen phase and iteration pairs (through their lexicographical ordering) induce a partial ordering on all nodes, with the property that each node gets independent and equal probability to choose any given rank in that ordering. Furthermore, within each set $P_{i,j}$ of nodes that have the same phase and iteration pair $(i,j)$, assume nodes choose IDs independently and uniformly at random in $[1,n^3]$ --- these are unique with high probability --- and order correspondingly. The resulting (combined) ordering of $V$ is a uniformly random node ordering with high probability.

    Finally, it is clear that the proposed algorithm implements sequential greedy MIS with respect to that order.
Indeed, within each iteration active nodes run the distributed greedy MIS algorithm (with IDs chosen independently and uniformly at random in $[1,n^3]$) and in the last round, nodes that have entered the MIS in that iteration inform all of their neighbors.
\end{proof}

\subsection{Solving Approximation Problems in \texorpdfstring{$O(\log^2 n)$}{O(log^2 n)} Rounds and \texorpdfstring{$\tilde{O}(n)$}{Õ(n)} Messages}
\label{subsec:approxInRandomGraphs}

In this subsection, we consider random graphs in the connectivity regime, in particular we consider $G(n,p)$ random graphs with $p \geq 40 (\log n) / n$.\footnote{For sparse random graphs, $\tilde{O}(m) = \tilde{O}(n)$ messages suffice to solve \mxm, MVC, MDS, and \mxis (with time encoding).}
For these graphs, the distributed randomized greedy MIS algorithm from the previous subsection solves (with high probability) $(2+o(1))$-approximate maximum independent set and $(1+o(1))$-approximate minimum dominating set.
This follows from well-known results regarding random graphs that we state below. (More concretely, see Grimmett and McDiarmid \cite{GM75} and Frieze and McDiarmid \cite{FM97Survey} for the independent set bounds, and  Glebov, Liebenau and Szab\'{o} \cite{GLS15} as well as Wieland and Godbole \cite{WG01} for the dominating set bound.)

\begin{lemma}[\cite{GM75,FM97Survey, GLS15, WG01}]
    For any random graph $G(n,p)$ in the connectivity regime, let $\alpha$ be the size of the maximum independent set, $\gamma$ the size of the minimum dominating set and $\sigma$ be the size of the independent set computed by the sequential randomized greedy MIS algorithm. 
    
    Then, it holds with high probability that $\alpha \sim 2 \log_{1/(1-p)} np $, $\gamma = (1+o(1))\ln(np)/p \geq (1+o(1)) \log_{1/(1-p)} np$ and $\sigma \sim \log_{1/(1-p)} np$.
\end{lemma}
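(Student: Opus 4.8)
These are classical facts about the random graph $G(n,p)$, so the plan is to assemble them from standard first/second moment arguments together with a direct analysis of the greedy process, exactly as developed in the cited works \cite{GM75,FM97Survey,GLS15,WG01}. Throughout let $b = 1/(1-p)$, and recall we are in the range $p = \Omega(\log n/n)$, so $np = \Omega(\log n)$.

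For the independence number $\alpha$ I would bound it from both sides by the moment method. Let $X_k$ count the independent sets of size $k$, so $\E[X_k] = \binom{n}{k}(1-p)^{\binom{k}{2}}$. Using $\binom{n}{k} \le (en/k)^k$ and taking logarithms, one checks $\E[X_k] \to 0$ once $k \ge (2+o(1))\log_b(np)$, which gives the upper bound $\alpha \le (2+o(1))\log_b(np)$ w.h.p.\ by Markov's inequality. For the matching lower bound I would apply the second moment method to $X_k$ at $k = (2-o(1))\log_b(np)$: the main task is to show that the contribution to $\E[X_k^2]$ from pairs of $k$-sets with nonempty intersection is negligible compared to $\E[X_k]^2$, after which Paley--Zygmund gives $\Pr[X_k>0]\to 1$. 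Together these pin down $\alpha \sim 2\log_b(np)$.

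For the size $\sigma$ of the greedy independent set I would analyze the sequential randomized greedy process directly. Maintaining the set $R$ of vertices not yet adjacent to a chosen vertex, a fixed vertex survives in $R$ after $j$ additions with probability about $(1-p)^j$, so $\E[|R|]\approx n(1-p)^j$ and the process stops at $j=\sigma$ with $n(1-p)^\sigma = \Theta(1)$, i.e.\ $\sigma \sim \log_b(np)$. Rigor comes from showing $|R|$ stays close to its expected trajectory throughout, via Azuma/Chernoff-type concentration (equivalently the differential-equation method). Note $\sigma \sim \tfrac12\alpha$, which is precisely what yields the $(1/2-o(1))$-approximation for \mxis. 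For the domination number $\gamma$, the upper bound follows from the covering structure: each vertex dominates about $np$ others, so a greedy set-cover argument uses about $\tfrac{n}{np}\ln(np) = \ln(np)/p$ vertices, made rigorous by concentration of the undominated count at each step. The matching lower bound $\gamma \ge (1-o(1))\ln(np)/p$ is the delicate part and is where I would lean most heavily on \cite{GLS15,WG01}. Finally the stated inequality $(1+o(1))\ln(np)/p \ge (1+o(1))\log_b(np)$ is immediate from $\ln(1/(1-p)) = -\ln(1-p) \ge p$.

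I expect the main obstacle to be twofold. First, the second-moment overlap estimate for $\alpha$ and, especially, the lower bound for $\gamma$ are genuinely nontrivial in the sparse regime, since there the naive first-moment threshold does not match $\ln(np)/p$ and more careful concentration machinery is needed. Second, and cutting across all three estimates, is ensuring that the implicit $o(1)$ error terms are uniform over the entire connectivity regime $p = \Omega(\log n/n)$ rather than hiding a dependence on $p$; the cited results must be quoted in a form valid down to the connectivity threshold, not merely for constant $p$.
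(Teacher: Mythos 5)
The first thing to note is that the paper does not actually prove this lemma: it is imported wholesale from the cited references (Grimmett--McDiarmid \cite{GM75} and Frieze--McDiarmid \cite{FM97Survey} for $\alpha$ and $\sigma$, Glebov--Liebenau--Szab\'o \cite{GLS15} and Wieland--Godbole \cite{WG01} for $\gamma$). So your plan --- first/second moment for $\alpha$, trajectory analysis for the greedy process, greedy covering plus a cited lower bound for $\gamma$ --- is exactly the approach of the sources the paper leans on, and your first-moment upper bound for $\alpha$, your derivation of $\ln(np)/p \ge \log_{1/(1-p)}(np)$ from $-\ln(1-p)\ge p$, and your decision to defer the sparse-regime second-moment computation and the $\gamma$ lower bound to the citations are all sound and consistent with what the paper does.

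However, your heuristic for $\sigma$ contains a genuine error. The stopping rule ``$n(1-p)^{\sigma} = \Theta(1)$'' gives $\sigma \sim \log_{1/(1-p)} n$, not $\log_{1/(1-p)}(np)$, and in the connectivity regime $p = \Theta(\log n / n)$ these differ by a factor of order $\ln n/\ln\ln n$: one has $\log_{1/(1-p)} n = \Theta(n)$ while $\log_{1/(1-p)}(np) = \Theta(n\ln\ln n/\ln n)$. Indeed, $\sigma \sim \log_{1/(1-p)} n$ would exceed $\alpha \sim 2\log_{1/(1-p)}(np)$, which is impossible since the greedy output is itself an independent set. The mistake is neglecting the removal of the chosen vertex itself: the pool satisfies $\E[r_{j+1}] = (r_j - 1)(1-p)$, whose continuous solution is $r(j) \approx (n + 1/p)(1-p)^j - 1/p$, and this vanishes at $j \approx \log_{1/(1-p)}(np+1)$, not at $\log_{1/(1-p)} n$. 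Equivalently, the approximation $r_j \approx n(1-p)^j$ is only meaningful while $r_j \gg 1/p$; the leading-order count of greedy steps comes from the phase in which the pool decays from $n$ down to roughly $1/p$, which takes $(1+o(1))\log_{1/(1-p)}(np)$ steps, after which the endgame (pool of size $O(1/p)$, each step removing only $O(1)$ vertices) contributes at most $O(1/p) = o\bigl(\log_{1/(1-p)}(np)\bigr)$ further vertices because $\ln(np)\to\infty$ throughout the connectivity regime. With this correction your Azuma/differential-equation plan does yield $\sigma \sim \log_{1/(1-p)}(np)$; carried out literally as written, it proves a false statement, and the falsity matters downstream since the $(1/2-o(1))$-approximation guarantee for \mxis{} rests on the ratio $\sigma/\alpha$.
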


Note that $G(n,p)$ random graphs in the connectivity regime are also known to admit perfect matchings \cite{ER66} when $n$ is even and near-perfect matchings when $n$ is odd. As a result, any vertex cover must contain at least $n/2$ or $(n-1)/2$ nodes. Taking the complement of the MIS computed by our round- and message-efficient distributed MIS algorithm, we obtain a $(2-o(1))$-approximate solution to minimum vertex cover. We summarize our approximation results in the claim below.

\begin{lemma}
In $G(n,p)$ random graphs (with $p \geq 40(\log n)/n$), $(1/2-o(1))$-approximate MaxIS, $(1+o(1))$-approximate MDS and $(2-o(1))$-approximate MVC can be solved (w.h.p.) with $\tilde{O}(n)$ messages and $O(\log^2 n)$ rounds in $\ktzero$ $\congest$.
\end{lemma}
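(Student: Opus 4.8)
The plan is to run the distributed randomized greedy MIS algorithm of Subsection~\ref{subsec:greedyMIS} exactly once and then read off all three approximate solutions from its output, incurring no additional communication. Let $M$ denote the independent set it produces. By the lemma stating that the algorithm implements randomized greedy MIS, the distribution of $M$ coincides (w.h.p.) with that of the sequential randomized greedy MIS, so by the size lemma of \cite{GM75,FM97Survey,GLS15,WG01} we have $|M| = \sigma \sim \log_{1/(1-p)} np$ with high probability. The theorem of Subsection~\ref{subsec:greedyMIS} already guarantees that the computation finishes in $O(\log^2 n)$ rounds and sends $\tilde{O}(n)$ messages w.h.p.; since each of the three outputs below is determined locally from whether a node lies in $M$, the round and message bounds carry over verbatim. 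It then remains only to verify the three approximation guarantees and to union-bound the relevant high-probability events.

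For \mxis{} I would simply output $M$. Comparing $\sigma \sim \log_{1/(1-p)} np$ against the maximum independent set size $\alpha \sim 2\log_{1/(1-p)} np$ gives $|M| \ge (1/2 - o(1))\,\alpha$ w.h.p., i.e.\ a $(1/2 - o(1))$-approximation. For MDS I would again output $M$: since $M$ is a \emph{maximal} independent set it is automatically a dominating set, and comparing $\sigma$ with $\gamma \ge (1 + o(1))\log_{1/(1-p)} np$ yields $|M| \le (1+o(1))\,\gamma$, a $(1+o(1))$-approximation. For MVC I would output the complement $V \setminus M$, which is a vertex cover because $M$ is independent. Its size is $n - |M|$, so I need two further ingredients: an upper bound $\sigma = o(n)$ and a lower bound on the optimum.

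The first ingredient follows from $-\ln(1-p) \ge p$, which gives $\sigma \le (1+o(1))\,\ln(np)/p$; since $(\ln np)/p$ is decreasing in $p$ whenever $np > e$, its maximum over $p \ge 40(\log n)/n$ is attained at $p = 40(\log n)/n$, where it equals $O\!\left(n\log\log n / \log n\right) = o(n)$. For the lower bound I would invoke the fact that $G(n,p)$ in the connectivity regime admits a (near-)perfect matching \cite{ER66}, so every vertex cover has size at least $(n-1)/2$. Combining, the approximation ratio of $V\setminus M$ is $(n - \sigma)/\mathrm{OPT} \le 2(n - o(n))/(n-1) = 2 - o(1)$. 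The main thing to be careful about is the bookkeeping of high-probability events: the correctness of the implementation lemma, the complexity theorem, and each of the three size estimates all hold w.h.p., so a single union bound over this constant-size collection preserves the overall w.h.p.\ guarantee. I do not expect a serious obstacle here; the only genuinely quantitative step is confirming $\sigma = o(n)$, which is what lets the MVC ratio reach $2 - o(1)$ rather than some larger constant.
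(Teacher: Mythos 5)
Your proof is correct and takes essentially the same route as the paper: run the distributed randomized greedy MIS algorithm once, combine the implementation lemma with the cited bounds on $\alpha$, $\gamma$, and $\sigma$, use maximality of the MIS for domination and the (near-)perfect matching fact \cite{ER66} to lower-bound the optimal vertex cover, and output the complement of the MIS for MVC. The only difference is that you explicitly verify $\sigma = o(n)$ (via monotonicity of $\ln(np)/p$), a detail the paper leaves implicit in its $(2-o(1))$ claim.
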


\subsection{Maximum Matching}
\label{subsec:maxMatchingApproxRG}

\subsubsection{Perfect Matching in $\tilde{O}(n)$ Rounds and $\tilde{O}(n)$ Messages}

In $G(n,p)$ random graphs with $p \geq c\log n/n$ (for some fixed constant $c$), it is well-known that there exists an
Hamiltonian cycle. Thus a perfect matching (or near perfect matching if $n$ is odd)  exists
in such graphs. The Hamiltonian cycle can also be constructed by using the rotation algorithm due
to Angluin and Valiant \cite{valiant} (see also \cite{Upfalbook}). The rotation algorithm takes $O(n \log n)$ steps
in the sequential setting.  The algorithm always maintains a path. Initially
the path consists of a single (arbitrary) node. One step of the algorithm consists of  extending the \emph{head} of the path (which is the last vertex in the path) by doing 
a simple random walk. Specifically, if the path is $v_0, \dots v_k$, then a random incident edge of the head ($v_k$), say 
$(v_k, v_i)$ is
chosen to extend the path. If $v_i$ is a vertex that does not belong to the part, we continue the process with 
by extending the path to  $v_0, \dots v_k, v_{k+1}$ with the new head $v_{k+1} = v_i$.
Otherwise, if $v_i$ belongs to the path, say $v_i= v_j$, we perform a {\em rotation} as follows:
the new path becomes $v_0 \dots, v_j, v_k, \dots, v_{j+1}$, with $v_{j+1}$ being the new head. (In the case then $v_{i} = v_{k-1}$, there is no change in the head of the path). It is known that the rotation algorithm
finds an Hamiltonian cycle in $G(n,p)$ random graphs with $p \geq c \log n/n$ (e.g., constant $c =40 $) with high
probability \cite{Upfalbook}. It can be shown  that the rotation algorithm can be implemented as a distributed algorithm
in a straightforward way (in $\ktzero$) that takes $\tilde{O}(n)$ messages and $\tilde{O}(n)$ rounds (see the
Distributed rotation algorithm of \cite{icdcsreza}).
Once the Hamiltonian cycle is constructed, it is easy to extract a perfect (or near-perfect) matching using
$O(n)$ messages and $O(n)$ rounds. Thus we can state the following theorem.

\begin{theorem}
\label{thm:rgmm}
There is a distributed algorithm that, with high probability, computes a perfect (or near-perfect) matching in
 $G(n,p)$ random graphs (with $p \geq 40(\log n)/n$) $\tilde{O}(n)$ messages and $\tilde{O}(n)$ rounds in $\ktzero$ $\congest$.
\end{theorem}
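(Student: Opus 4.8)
The plan is to realize the rotation (Angluin--Valiant) algorithm as a \ktzero{} \congest{} distributed procedure that builds a Hamiltonian path/cycle in the random graph, and then to peel off a (near-)perfect matching. Since the statement is really an assembly of known pieces, I would structure the proof in three stages: (i) invoke the classical result that the rotation algorithm finds a Hamiltonian cycle in $G(n,p)$ with $p \ge 40\log n / n$ with high probability in $O(n\log n)$ sequential steps; (ii) argue that each rotation step can be carried out distributedly in $\tilde O(1)$ rounds and $\tilde O(1)$ messages, so the whole construction costs $\tilde O(n)$ rounds and $\tilde O(n)$ messages; and (iii) show the cheap postprocessing that extracts a matching from the cycle.

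\textbf{The distributed rotation step.} First I would maintain, as a distributed data structure, the current path $v_0,\dots,v_k$, where the head $v_k$ holds a token. In each step the head performs one step of a simple random walk: it selects a uniformly random incident edge $(v_k,v_i)$ and sends a probe across it. Because we are in \ktzero{}, the head does not a priori know whether $v_i$ already lies on the path; so $v_i$ responds with a bit indicating whether it currently belongs to the path. If $v_i$ is new, we simply advance the head, which costs $O(1)$ messages. If $v_i=v_j$ is an interior vertex, we perform the rotation that reverses the segment $v_{j+1},\dots,v_k$ and installs $v_{j+1}$ as the new head. The only subtlety is that the reversal must be executed by relaying a control message along the affected suffix of the path; I would route this message hop-by-hop along the path (each vertex knows its two path-neighbors, the predecessor and successor pointers), flipping the orientation pointers as it goes, and deliver the head token to $v_{j+1}$. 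Since each path edge is traversed $O(1)$ times per rotation and there are $O(n\log n)$ rotation steps with high probability, the amortized cost is the crux; I would appeal to the known distributed implementation (the distributed rotation algorithm of \cite{icdcsreza}) to bound the total over all steps by $\tilde O(n)$ messages and $\tilde O(n)$ rounds, rather than reproving the amortization from scratch.

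\textbf{Closing the cycle and extracting the matching.} Once the path spans all $n$ vertices, a single edge between the two endpoints closes it into a Hamiltonian cycle; existence of such an edge with high probability follows from the same Angluin--Valiant analysis, and checking it costs $O(1)$ messages. To obtain a matching, traverse the cycle once and select alternate edges: a single token walk around the cycle labeling edges as ``in'' or ``out'' of the matching produces a perfect matching when $n$ is even and a near-perfect matching (one unmatched vertex) when $n$ is odd, using $O(n)$ messages and $O(n)$ rounds. Combining the three stages and taking a union bound over the high-probability events (Hamiltonicity of $G(n,p)$ above the threshold, correctness of the rotation walk, and the concentration needed to keep the total work $\tilde O(n)$) yields the claimed $\tilde O(n)$ message and $\tilde O(n)$ round bounds in the \ktzero{} \congest{} model.

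\textbf{Main obstacle.} The delicate point is the amortized message/round accounting for the rotations: a naive bound charges the full length of the reversed suffix to each rotation, which could be as large as $\Theta(n)$ per step and $\Theta(n^2\log n)$ overall. The saving comes from the fact that, over the randomness of $G(n,p)$, the walk mixes quickly and the total reversed length summed over all $O(n\log n)$ steps stays $\tilde O(n)$ with high probability; I expect to lean entirely on the cited distributed rotation analysis of \cite{icdcsreza} for this bound rather than rederive it, since reproving the concentration of total rotation work is exactly the technically heavy part and is not the contribution being claimed here.
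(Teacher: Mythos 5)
Your proposal takes essentially the same route as the paper: the paper's own proof likewise describes the Angluin--Valiant rotation process \cite{valiant,Upfalbook}, cites the distributed rotation algorithm of \cite{icdcsreza} for the $\tilde{O}(n)$-message, $\tilde{O}(n)$-round \ktzero{} \congest{} implementation, and then extracts an alternating matching from the resulting Hamiltonian cycle with $O(n)$ messages and rounds. Since you explicitly lean on that same citation for the cost of the rotations, your argument lands exactly where the paper's does.

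However, the implementation sketch and the conjectured saving in your ``main obstacle'' paragraph should not survive into a final write-up, because both are wrong. Relaying a control message hop-by-hop along the reversed suffix costs $\Theta(\text{suffix length})$ messages per rotation; conditioned on the probe hitting the path, the probed position is essentially uniform along it, so once the path has length $\Theta(n)$ the expected suffix length is $\Theta(n)$, and since most of the $\Theta(n\log n)$ steps occur when the path is nearly spanning, the total relayed work is $\Theta(n^2\log n)$ --- no mixing argument can make the total reversed length $\tilde{O}(n)$, so that conjecture is false. The way a rotation becomes cheap is to observe that it changes the path, viewed as an undirected edge set, by exactly two edges: add $\{v_j,v_k\}$, delete $\{v_j,v_{j+1}\}$, and hand the head token to $v_{j+1}$; no physical reversal is needed, so $O(1)$ messages per step suffice. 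The genuine difficulty, which is what the cited distributed rotation algorithm handles, is that $v_j$ must know which of its two path-neighbors lies on the head side --- cutting the tail-side edge instead disconnects the structure into a cycle plus a path --- i.e., maintaining orientation information under reversals without ever traversing them. A further small slip: the two endpoints of the spanning path are adjacent only with probability $p = o(1)$, so the cycle cannot be closed by assuming that edge exists; in Angluin--Valiant the closure happens by continuing rotations until the head probes the tail, which is already accounted for in the $O(n\log n)$-step bound.
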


\subsubsection{Constant Factor Approximation in \texorpdfstring{$O(1)$}{O(1)} Rounds and \texorpdfstring{$O(n)$}{O(n)} Messages}

We present a simple message-efficient algorithm in $\ktzero$ that takes $O(n)$ message, runs in $O(1)$ rounds and
gives a constant factor approximation in regular (and almost regular) graphs and random graphs (which are almost
regular with high probability). Actually, we present a somewhat more general result that
gives an expected $O((\Delta/\delta)^2)$-factor approximation to the maximum matching in an arbitrary graph, where  $\Delta$ and $\delta$ are the maximum
and minimum degrees of the graph. As a corollary of this result, we obtain  an expected constant factor approximation
in regular (or almost regular) graphs as well as random graphs. Moreover, we show that this constant factor approximation holds also with high probability in regular graphs and random graphs.

\paragraph{The algorithm}
Let $G =(V,E)$ be a graph on $n$ nodes with maximum degree $\Delta$ and minimum degree $\delta$.
The following is the algorithm, executed by each node $u \in V$. Let $\alpha$ be a parameter (fixed in the analysis).
\begin{enumerate}
\item $u$ independently, with probability $\alpha$,  chooses a random incident edge  and proposes this
edge  to the other endpoint (say $v$).
\item $v$ accepts this edge, if no other incident edge of $v$ is proposed by $v$ or its neighbors (except $u$).
\item $v$ conveys its acceptance to $u$ and the edge $(u,v)$ is included in the matching.
\end{enumerate}

\paragraph{Analysis}
We show the following theorem.

\begin{theorem}
\label{th:mm}
The above algorithm  computes an expected $O(r^2)$-approximation of maximum matching in a graph $G$,
where  $r = \Delta/\delta$ (ratio of the maximum and minimum degree of the graph). The algorithm takes
constant rounds and $O(n)$ messages (deterministically). Moreover, if $r = o(n^{1/4}/\log n)$, then the approximation factor
holds with high probability.
\end{theorem}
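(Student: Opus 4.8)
The plan is to prove the four assertions in turn: that the output is a valid matching, the deterministic $O(n)$ message and $O(1)$ round bounds, the expected $O(r^2)$ ratio, and finally the high-probability version. Throughout I would fix the parameter to $\alpha = \delta/(2\Delta)$.

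\emph{Matching and complexity.} First I would observe that each node proposes at most one edge (Step 1) and, by the acceptance rule, accepts at most one edge; indeed, if a node accepts $\{u,v\}$ then \emph{no} other incident edge of it was even proposed. A short case analysis then shows no node can be incident to two output edges, so the output is a matching. Since each node sends at most one proposal and at most one acceptance, at most $2n$ messages are sent no matter how the coins fall, giving the deterministic $O(n)$ bound; the three communication steps give $O(1)$ rounds.

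\emph{Expected ratio.} Condition on $u$ proposing a fixed incident edge $\{u,v\}$, which occurs with probability $\tfrac{\alpha}{d(u)}$. The edge enters $M$ exactly when $v$ sees no competing proposal on another incident edge. The two independent obstructions are (i) $v$ proposing a different edge, with probability at most $\alpha$, and (ii) some neighbor $w\neq u$ proposing $\{w,v\}$, which is avoided with probability $\prod_{w}(1-\alpha/d(w)) \ge (1-\tfrac{1}{2\Delta})^{\Delta}$. With $\alpha=\delta/(2\Delta)$ both factors are bounded below by constants (using $\delta\le\Delta$ and $d(w)\ge\delta$), so $v$ accepts with probability at least $1/4$. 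Hence each edge $\{u,v\}$ lies in $M$ with probability at least $\tfrac{\alpha}{4\,d(u)} \ge \tfrac{\delta}{8\Delta^2}$, and summing over the $|E|\ge n\delta/2$ edges gives $\E[|M|] \ge \tfrac{n\delta^2}{16\Delta^2} = \Omega(n/r^2)$. Since the optimum matching has size at most $n/2$, this yields the claimed expected $O(r^2)$-approximation.

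\emph{High probability.} The crux, and the step I expect to be the main obstacle, is upgrading the expectation bound to a high-probability bound via the bounded-difference (McDiarmid) inequality applied to the independent per-node random choices (whether a node is active, and which incident edge it picks). The key structural fact to establish is that $|M|$ is $O(1)$-Lipschitz in each coordinate: resampling one node $u$'s choice alters the proposal set only on edges incident to $u$, hence changes acceptance decisions only at $u$ and at the $\le 2$ recipients of $u$'s proposal in the two configurations; because all acceptances are computed simultaneously from the proposals, flipping which single edge these nodes accept does \emph{not} cascade, so $|M|$ changes by only a constant. With $c_u=O(1)$ and $\sum_u c_u^2=O(n)$, McDiarmid gives $\Pr\!\big[|M|\le \tfrac12\E[|M|]\big] \le \exp\!\big(-\Omega(\E[|M|]^2/n)\big)=\exp(-\Omega(n/r^4))$, which is $n^{-\omega(1)}$ precisely when $r=o(n^{1/4}/\log n)$. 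On that event $|M|=\Omega(n/r^2)$, so the $O(r^2)$ ratio holds with high probability. I would close by noting that for regular or near-regular graphs $r=O(1)$, giving a constant-factor approximation, and that Claim~\ref{claim:randomGraphMaxDegree} makes $r=O(1)$ hold w.h.p.\ on $G(n,p)$, so the same guarantee transfers to random graphs.
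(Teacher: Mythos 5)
Your proposal is correct and follows essentially the same route as the paper's proof: a per-edge inclusion probability of $\Omega(\alpha/\Delta)$ (you via an independence/product bound, the paper via a union bound over competing proposals), linearity of expectation against the trivial $n/2$ optimum, and then McDiarmid's bounded-difference inequality over the independent per-node choices with an $O(1)$ Lipschitz constant to get the high-probability statement under $r = o(n^{1/4}/\log n)$. Your Lipschitz justification (no cascading of acceptance decisions beyond $u$ and the two proposal recipients) is in fact spelled out more carefully than in the paper; the only nitpick is that ``precisely when'' overstates matters, since $r=o(n^{1/4}/\log n)$ is sufficient but not necessary for the exponential bound to be superpolynomially small.
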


\begin{proof}
It is clear that the edges included is indeed a matching and the algorithm takes 2 rounds and uses $O(n)$ messages.
We next bound the approximation given by the algorithm.

Fix an edge $(u,v)$. Let the indicator random variable $X_{u,v}$ be 1 if the edge is included in  the matching, otherwise 0. We compute $E[X_{u,v}] = \Pr(X_{u,v} =1)$.
The edge $(u,v)$ will be included in the matching if it is proposed by $u$ (say)
and no other edge that is incident on $v$ is proposed. The probability that $u$ proposes edge $(u,v)$ is $\alpha/d(u)$.
We next upper bound the probability that some edge incident on $v$ is proposed by a neighbor other than $u$. This probability is at most
$\sum_{w\in N(v) \setminus \{u\}} \frac{\alpha}{d(w)} \leq \frac{\Delta\alpha}{\delta} = r\alpha$,
where the sum is over all neighbors of $v$ (including $v$ and excluding $u$).
Thus, the probability that no edge  incident on $v$ is proposed is at least  $1 - r\alpha = 1/2$, if $\alpha$
is chosen as $1/2r$.
Since the random choices made by $u$ and $v$'s other neighbors (including $v$) are independent, we have
 $\Pr(X_{u,v} = 1) \geq \frac{\alpha}{2d(u)}  \geq \frac{\alpha}{2\Delta}$.
 Hence by linearity of expectation, the expected number of edges added to the matching is
 at least $\frac{\alpha}{2\Delta} (n\delta/2) \geq n/8r^2$.
 Since a maximum matching can be of size at most $n/2$, this gives an expected approximation factor
 of $O(r^2)$.

 Finally, we give a concentration result on the  number of edges added to the matching when $r$ is small. 
 The number of edges added to the matching is a function that depends on $n$ {\em independent} random variables $\{X_v\}_{v\in V}$, where $X_v$ is the random choice made by node $v$. This function, call it $f$, satisfies the Lipschitz condition with bound 2,
 since changing the value of any random variable changes the value of the function by at most 2. This is because
 changing the random choice of $X_v$ for any $v$ can affect the matched edge incident on $v$ and one other neighbor of
 $v$.  Thus one can apply the McDiarmid's (bounded difference) inequality\cite{Upfalbook}[Theorem 13.7] to show the following concentration
 bound:
 $$\Pr(|f-E[f]| \geq t) \leq 2e^{-2t^2/4n}.$$
 Since $E[f] \geq n/8r^2$, setting $t = n/16r^2$, we have
 $$\Pr(f < n/16r^2)  \leq 2e^{-n/512r^4} = O(1/n),$$
 if $r = o(n^{1/4}/\log n)$, and thus an approximation factor of $O(r^2)$ with high probability.
\end{proof}

\begin{corollary}
The above algorithm gives, with high probability, a constant factor approximation to the maximum matching problem
in regular graphs and in $G(n,p)$ random graphs (with $p \geq 16(\log n)/n$) in $O(1)$ rounds and $O(n)$ messages in $\ktzero$.
\end{corollary}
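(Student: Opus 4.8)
The plan is to derive the corollary directly from Theorem~\ref{th:mm}, whose guarantee is parameterized by the degree ratio $r = \Delta/\delta$. The whole strategy is to argue that $r = O(1)$ for both families of graphs, which collapses the $O(r^2)$-approximation factor to a constant, and which simultaneously guarantees the side condition $r = o(n^{1/4}/\log n)$ required for the high-probability part of Theorem~\ref{th:mm}. The round and message complexities ($O(1)$ rounds and $O(n)$ messages) then carry over verbatim, since they hold deterministically in Theorem~\ref{th:mm}.

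For regular graphs the argument is immediate: here $\Delta = \delta$, so $r = 1$, and Theorem~\ref{th:mm} already yields a constant-factor approximation with high probability (over the algorithm's random choices) in $O(1)$ rounds and $O(n)$ messages. For $G(n,p)$ random graphs with $p \geq 16(\log n)/n$, I would first invoke Claim~\ref{claim:randomGraphMaxDegree}, which states that with high probability the maximum degree is $O(\max\{np,\log n\})$ and the minimum degree is $\Omega(np)$. Since $p \geq 16(\log n)/n$ forces $np \geq 16\log n$, we have $\max\{np,\log n\} = \Theta(np)$, so $\Delta = O(np)$ and $\delta = \Omega(np)$, whence $r = \Delta/\delta = O(1)$ with high probability. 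In particular $r = O(1) = o(n^{1/4}/\log n)$, so the premise of the high-probability part of Theorem~\ref{th:mm} is comfortably satisfied.

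The one point to treat with care is that the two high-probability statements live in \emph{different} probability spaces: Claim~\ref{claim:randomGraphMaxDegree} is over the random choice of the graph, whereas the concentration bound inside Theorem~\ref{th:mm} is over the algorithm's internal randomness for a \emph{fixed} graph. I would therefore condition on the high-probability ``almost-regular'' event $\mathcal{E}$ from Claim~\ref{claim:randomGraphMaxDegree}; on $\mathcal{E}$ the instance is a fixed graph with $r = O(1)$, and applying Theorem~\ref{th:mm} to that fixed instance gives a constant-factor approximation with high probability over the algorithm's coins. A union bound combining the failure probability of $\mathcal{E}$ with the failure probability of the concentration bound then yields the overall high-probability guarantee for the approximation factor.

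I do not anticipate a genuine obstacle, since the corollary is essentially a specialization of Theorem~\ref{th:mm}. The only thing that must be done correctly is the bookkeeping of the two independent sources of randomness (graph versus algorithm) through the union bound, and checking that the degree bounds of Claim~\ref{claim:randomGraphMaxDegree} indeed collapse to $\Theta(np)$ in the connectivity regime $p \geq 16(\log n)/n$, so that $r$ is bounded by an absolute constant rather than merely by some slowly growing quantity.
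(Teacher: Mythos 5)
Your proposal is correct and follows essentially the same route as the paper: the paper also derives the corollary by noting $r=1$ for regular graphs and invoking Claim~\ref{claim:randomGraphMaxDegree} to get $r = O(1)$ with high probability for $G(n,p)$ with $p \geq 16(\log n)/n$, then applying Theorem~\ref{th:mm}. Your extra care in conditioning on the almost-regularity event and union-bounding over the two sources of randomness is a sound (and slightly more explicit) treatment of a point the paper leaves implicit.
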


\begin{proof}
Since $r=1$ in regular graphs, the result is immediate from Theorem \ref{th:mm}.  In $G(n,p)$ random graphs,  as shown in
Claim \ref{claim:randomGraphMaxDegree}, for $p \geq 16(\log n)/n$,  $r = O(1)$ with high probability
and hence the result again follows from Theorem \ref{th:mm}.
\end{proof}

 \section{Conclusion and Open Problems}
\label{sec:conc}

In this work, we almost fully quantify the message complexity of  four fundamental graph optimization problems ---
\mxm, MVC, MDS, and \mxis. These problems represent a spectrum of hardness of approximation in the sequential setting, ranging from \mxm that is exactly solvable in polynomial-time to \mxis that is 
hard to approximate even to a $O(n^{1-\epsilon})$-factor for any $\epsilon > 0$. 
We have shown that $\tilde{\Omega}(n^3)$ messages are needed to solve MVC, MDS, and \mxis exactly in the $\ktzero$ \congest{} model.
The message complexity of exact \mxm is an intriguing open question and as we point out in Remark \ref{remark:maxm}, the lower bound technique we use to obtain the cubic bounds for MVC, MDS, and \mxis, cannot be used for \mxm. Furthermore, there has been recent progress on improving the round complexity of exact \mxm \cite{KitimuraIzumiITIS2022}, though it is not clear if techniques from this line of work can be used to obtain $o(n^3)$ message algorithms for exact \mxm.
Another set of open questions relate to our quadratic lower bounds.
For MDS and \mxis, our lower bounds are for constant-factor approximations, for specific constants. 
Can these lower bounds  be extended to any $\alpha$-approximation algorithm? Such lower bounds would be a function of the graph size as well as $\alpha$ (similar to our lower bounds for \mxm and MVC).
For MDS, such a general lower bound would have to account for the $O(n^{1.5})$ message upper bound for $O(\log \Delta)$-approximation for MDS \cite{GotteKSWAlgoSensors2021,GOTTE2023113756}.

\newpage

\bibliographystyle{plain}
\bibliography{references,references1}

\begin{thebibliography}{10}

\bibitem{abboud17_foolin_views}
Amir Abboud, Keren Censor-Hillel, Seri Khoury, and Christoph Lenzen.
\newblock Fooling views: A new lower bound technique for distributed
  computations under congestion.
\newblock {\em CoRR}, 2017.

\bibitem{afek1991time}
Yehuda Afek and Eli Gafni.
\newblock Time and message bounds for election in synchronous and asynchronous
  complete networks.
\newblock {\em SIAM Journal on Computing}, 20(2):376--394, 1991.

\bibitem{valiant}
Dana Angluin and Leslie~G. Valiant.
\newblock Fast probabilistic algorithms for hamiltonian circuits and matchings.
\newblock {\em J. Comput. Syst. Sci.}, 18(2):155--193, 1979.

\bibitem{AwerbuchPelegFOCS1990}
B.~Awerbuch and D.~Peleg.
\newblock Sparse partitions.
\newblock In {\em Proceedings [1990] 31st Annual Symposium on Foundations of
  Computer Science}, pages 503--513 vol.2, 1990.

\bibitem{AwerbuchGPV90}
Baruch Awerbuch, Oded Goldreich, David Peleg, and Ronen Vainish.
\newblock A trade-off between information and communication in broadcast
  protocols.
\newblock {\em J. {ACM}}, 37(2):238--256, 1990.

\bibitem{BachrachCDELP19}
Nir Bachrach, Keren Censor{-}Hillel, Michal Dory, Yuval Efron, Dean
  Leitersdorf, and Ami Paz.
\newblock Hardness of distributed optimization.
\newblock In {\em {PODC}}, pages 238--247, 2019.

\bibitem{bandyapadhyay18_near_optim_clust}
Sayan Bandyapadhyay, Tanmay Inamdar, Shreyas Pai, and Sriram~V. Pemmaraju.
\newblock {Near-Optimal Clustering in the k-machine model}.
\newblock In {\em {ICDCN}}, pages 15:1--15:10, 2018.

\bibitem{BandyapadhyayIPPTCS2022}
Sayan Bandyapadhyay, Tanmay Inamdar, Shreyas Pai, and Sriram~V. Pemmaraju.
\newblock Near-optimal clustering in the \emph{k}-machine model.
\newblock {\em Theor. Comput. Sci.}, 899:80--97, 2022.

\bibitem{Bar-YehudaCensor-HillelGhaffariSchwartzmanPODC2017}
Reuven Bar-Yehuda, Keren Censor-Hillel, Mohsen Ghaffari, and Gregory
  Schwartzman.
\newblock Distributed approximation of maximum independent set and maximum
  matching.
\newblock In {\em Proceedings of the ACM Symposium on Principles of Distributed
  Computing}, PODC '17, page 165–174, New York, NY, USA, 2017. Association
  for Computing Machinery.

\bibitem{Bar-YehudaCSJACM2017}
Reuven Bar{-}Yehuda, Keren Censor{-}Hillel, and Gregory Schwartzman.
\newblock A distributed {(2} + {\(\epsilon\)})-approximation for vertex cover
  in {$O(\log \Delta / \epsilon \log\log \Delta)$} rounds.
\newblock {\em J. {ACM}}, 64(3):23:1--23:11, 2017.

\bibitem{coloring-book}
Leonid Barenboim and Michael Elkin.
\newblock {\em Distributed Graph Coloring: Fundamentals and Recent
  Developments}.
\newblock Synthesis Lectures on Distributed Computing Theory. Morgan {\&}
  Claypool Publishers, 2013.

\bibitem{barenboim14_distr_delta_color_linear_delta_time}
Leonid Barenboim, Michael Elkin, and Fabian Kuhn.
\newblock {Distributed $(\Delta+1)$-Coloring in Linear (in $\Delta$) Time}.
\newblock {\em {SIAM} J. Comput.}, 43(1):72--95, 2014.

\bibitem{barenboim12_local_distr_symmet_break}
Leonid Barenboim, Michael Elkin, Seth Pettie, and Johannes Schneider.
\newblock {The Locality of Distributed Symmetry Breaking}.
\newblock In {\em 53rd Annual {IEEE} Symposium on Foundations of Computer
  Science, {FOCS} 2012, New Brunswick, NJ, USA, October 20-23, 2012}, pages
  321--330, 2012.

\bibitem{barenboim16_local_distr_symmet_break}
Leonid Barenboim, Michael Elkin, Seth Pettie, and Johannes Schneider.
\newblock {The Locality of Distributed Symmetry Breaking}.
\newblock {\em J. {ACM}}, 63(3):20:1--20:45, 2016.

\bibitem{Ben-BasatEKSSIROCCO18}
Ran Ben{-}Basat, Guy Even, Ken{-}ichi Kawarabayashi, and Gregory Schwartzman.
\newblock {A Deterministic Distributed 2-Approximation for Weighted Vertex
  Cover in $O(\log n \log \Delta / \log^2\log\Delta)$ Rounds}.
\newblock In Zvi Lotker and Boaz Patt{-}Shamir, editors, {\em Structural
  Information and Communication Complexity - 25th International Colloquium,
  {SIROCCO} 2018}, pages 226--236, 2018.

\bibitem{BishtKP13}
Tushar Bisht, Kishore Kothapalli, and Sriram~V. Pemmaraju.
\newblock Brief announcement: Super-fast t-ruling sets.
\newblock In {\em {ACM} Symposium on Principles of Distributed Computing,
  {PODC} '14, Paris, France, July 15-18, 2014}, pages 379--381. {ACM}, 2014.

\bibitem{bollobas}
Béla Bollobás.
\newblock {\em Random Graphs}.
\newblock Cambridge Studies in Advanced Mathematics. Cambridge University
  Press, 2 edition, 2001.

\bibitem{Censor-HillelD18}
Keren Censor{-}Hillel and Michal Dory.
\newblock Distributed spanner approximation.
\newblock In {\em Proceedings of the 2018 {ACM} Symposium on Principles of
  Distributed Computing, {PODC} 2018, Egham, United Kingdom, July 23-27, 2018},
  pages 139--148, 2018.

\bibitem{Censor-HillelKP17}
Keren Censor{-}Hillel, Seri Khoury, and Ami Paz.
\newblock Quadratic and near-quadratic lower bounds for the {CONGEST} model.
\newblock In {\em 31st International Symposium on Distributed Computing, {DISC}
  2017, October 16-20, 2017, Vienna, Austria}, pages 10:1--10:16, 2017.

\bibitem{ChandraFLSTOC1983}
Ashok~K. Chandra, Merrick~L. Furst, and Richard~J. Lipton.
\newblock Multi-party protocols.
\newblock In {\em Proceedings of the Fifteenth Annual ACM Symposium on Theory
  of Computing}, STOC '83, page 94–99, New York, NY, USA, 1983. Association
  for Computing Machinery.

\bibitem{ChangLiPettieSICOMP2020}
Yi-Jun Chang, Wenzheng Li, and Seth Pettie.
\newblock Distributed {$(\Delta+1)$-Coloring} via ultrafast graph shattering.
\newblock {\em SIAM Journal on Computing}, 49(3):497--539, 2020.

\bibitem{icdcsreza}
Soumyottam Chatterjee, Reza Fathi, Gopal Pandurangan, and Nguyen~Dinh Pham.
\newblock Fast and efficient distributed computation of hamiltonian cycles in
  random graphs.
\newblock In {\em 38th {IEEE} International Conference on Distributed Computing
  Systems, {ICDCS} 2018, Vienna, Austria, July 2-6, 2018}, pages 764--774.
  {IEEE} Computer Society, 2018.

\bibitem{rudra1}
Arkadev Chattopadhyay, Michael Langberg, Shi Li, and Atri Rudra.
\newblock Tight network topology dependent bounds on rounds of communication.
\newblock In Philip~N. Klein, editor, {\em Proceedings of the Twenty-Eighth
  Annual {ACM-SIAM} Symposium on Discrete Algorithms, {SODA} 2017, Barcelona,
  Spain, Hotel Porta Fira, January 16-19}, pages 2524--2539. {SIAM}, 2017.

\bibitem{ChattopadhyayRRFOCS2014}
Arkadev Chattopadhyay, Jaikumar Radhakrishnan, and Atri Rudra.
\newblock Topology matters in communication.
\newblock In {\em 2014 IEEE 55th Annual Symposium on Foundations of Computer
  Science}, pages 631--640, 2014.

\bibitem{rudra2}
Arkadev Chattopadhyay and Atri Rudra.
\newblock The range of topological effects on communication.
\newblock In Magn{\'{u}}s~M. Halld{\'{o}}rsson, Kazuo Iwama, Naoki Kobayashi,
  and Bettina Speckmann, editors, {\em Automata, Languages, and Programming -
  42nd International Colloquium, {ICALP} 2015, Kyoto, Japan, July 6-10, 2015,
  Proceedings, Part {II}}, volume 9135 of {\em Lecture Notes in Computer
  Science}, pages 540--551. Springer, 2015.

\bibitem{DKM19}
Janosch Deurer, Fabian Kuhn, and Yannic Maus.
\newblock Deterministic distributed dominating set approximation in the
  {CONGEST} model.
\newblock In {\em Proceedings of the 2019 ACM Symposium on Principles of
  Distributed Computing (PODC)}, 2019.

\bibitem{DinurSafraAnnMath2005}
Irit Dinur and Samuel Safra.
\newblock On the hardness of approximating minimum vertex cover.
\newblock {\em Annals of Mathematics}, 162(1):439--485, 2005.

\bibitem{DinurSteurerSTOC2014}
Irit Dinur and David Steurer.
\newblock Analytical approach to parallel repetition.
\newblock In {\em Proceedings of the Forty-Sixth Annual ACM Symposium on Theory
  of Computing}, STOC '14, page 624–633, New York, NY, USA, 2014. Association
  for Computing Machinery.

\bibitem{DruckerKOPODC2014}
Andrew Drucker, Fabian Kuhn, and Rotem Oshman.
\newblock {On the Power of the Congested Clique Model}.
\newblock In {\em Proceedings of the 2014 ACM Symposium on Principles of
  Distributed Computing}, PODC '14, pages 367--376, New York, NY, USA, 2014.
  ACM.

\bibitem{vdurivs1989communication}
P~{\v{D}}uri{\v{s}} and Pavel Pudl{\'a}k.
\newblock On the communication complexity of planarity.
\newblock In {\em Fundamentals of Computation Theory: International Conference
  FCT'89 Szeged, Hungary, August 21--25, 1989 Proceedings 7}, pages 145--147.
  Springer, 1989.

\bibitem{EfronGKPODC20}
Yuval Efron, Ofer Grossman, and Seri Khoury.
\newblock Beyond alice and bob: Improved inapproximability for maximum
  independent set in congest.
\newblock In {\em Proceedings of the 39th Symposium on Principles of
  Distributed Computing}, PODC '20, page 511–520, New York, NY, USA, 2020.
  Association for Computing Machinery.

\bibitem{10.1145/3380546}
M.~Elkin.
\newblock A simple deterministic distributed mst algorithm with near-optimal
  time and message complexities.
\newblock {\em J. ACM}, 67(2), apr 2020.

\bibitem{ER66}
P.~Erdős and A~Rényi.
\newblock On the existence of a factor of degree one of a connected random
  graph.
\newblock {\em Acta Mathematica Academiae Scientiarum Hungaricae},
  17:359–368, 1966.

\bibitem{FischerDC2020}
Manuela Fischer.
\newblock Improved deterministic distributed matching via rounding.
\newblock {\em Distributed Computing}, 33, 2020.

\bibitem{FischerMUSTOC2022}
Manuela Fischer, Slobodan Mitrovi\'{c}, and Jara Uitto.
\newblock Deterministic $(1+\epsilon)$-approximate maximum matching with
  poly$(1/\epsilon)$ passes in the semi-streaming model and beyond.
\newblock In {\em Proceedings of the 54th Annual ACM SIGACT Symposium on Theory
  of Computing}, STOC 2022, page 248–260, New York, NY, USA, 2022.
  Association for Computing Machinery.

\bibitem{FM97Survey}
Alan Frieze and Colin McDiarmid.
\newblock Algorithmic theory of random graphs.
\newblock {\em Random Structures \& Algorithms}, 10(1-2):5--42, 1997.

\bibitem{GhaffariSODA2016}
Mohsen Ghaffari.
\newblock An improved distributed algorithm for maximal independent set.
\newblock In {\em Proceedings of the Twenty-Seventh Annual ACM-SIAM Symposium
  on Discrete Algorithms}, SODA ’16, page 270–277, USA, 2016. Society for
  Industrial and Applied Mathematics.

\bibitem{GhaffariSODA19}
Mohsen Ghaffari.
\newblock Distributed maximal independent set using small messages.
\newblock In {\em Proceedings of the Thirtieth Annual {ACM-SIAM} Symposium on
  Discrete Algorithms, {SODA} 2019, San Diego, California, USA, January 6-9,
  2019}, pages 805--820. {SIAM}, 2019.

\bibitem{GK18}
Mohsen Ghaffari and Fabian Kuhn.
\newblock Derandomizing distributed algorithms with small messages: Spanners
  and dominating set.
\newblock In {\em 32nd International Symposium on Distributed Computing, {DISC}
  2018, New Orleans, LA, USA, October 15-19, 2018}, pages 29:1--29:17, 2018.

\bibitem{GhaffariKMSTOC2017}
Mohsen Ghaffari, Fabian Kuhn, and Yannic Maus.
\newblock On the complexity of local distributed graph problems.
\newblock In {\em Proceedings of the 49th Annual ACM SIGACT Symposium on Theory
  of Computing}, STOC 2017, page 784–797, New York, NY, USA, 2017.
  Association for Computing Machinery.

\bibitem{GLS15}
Roman Glebov, Anita Liebenau, and Tibor Szab\'{o}.
\newblock On the concentration of the domination number of the random graph.
\newblock {\em SIAM Journal on Discrete Mathematics}, 29(3):1186--1206, 2015.

\bibitem{goel2012communication}
Ashish Goel, Michael Kapralov, and Sanjeev Khanna.
\newblock On the communication and streaming complexity of maximum bipartite
  matching.
\newblock In {\em Proceedings of the twenty-third annual ACM-SIAM symposium on
  Discrete Algorithms}, pages 468--485. SIAM, 2012.

\bibitem{GotteKSWAlgoSensors2021}
Thorsten G\"{o}tte, Christina Kolb, Christian Scheideler, and Julian Werthmann.
\newblock Beep-and-sleep: Message and energy efficient set cover.
\newblock In {\em Algorithms for Sensor Systems: 17th International Symposium
  on Algorithms and Experiments for Wireless Sensor Networks, ALGOSENSORS 2021,
  Lisbon, Portugal, September 9–10, 2021, Proceedings}, page 94–110,
  Berlin, Heidelberg, 2021. Springer-Verlag.

\bibitem{GM75}
G.~R. Grimmett and C.~J.~H. McDiarmid.
\newblock On colouring random graphs.
\newblock {\em Mathematical Proceedings of the Cambridge Philosophical
  Society}, 77(2):313–324, 1975.

\bibitem{GOTTE2023113756}
Thorsten Götte, Christina Kolb, Christian Scheideler, and Julian Werthmann.
\newblock Beep-and-sleep: Message and energy efficient set cover.
\newblock {\em Theoretical Computer Science}, 950:113756, 2023.

\bibitem{10.1145/62212.62228}
Andr\'{a}s Hajnal, Wolfgang Maass, and Gy\"{o}rgy Tur\'{a}n.
\newblock On the communication complexity of graph properties.
\newblock In {\em Proceedings of the Twentieth Annual ACM Symposium on Theory
  of Computing}, STOC '88, page 186–191, New York, NY, USA, 1988. Association
  for Computing Machinery.

\bibitem{HalldorssonKuhnMausTonoyanSTOC2021}
Magn\'{u}s~M. Halld\'{o}rsson, Fabian Kuhn, Yannic Maus, and Tigran Tonoyan.
\newblock Efficient randomized distributed coloring in congest.
\newblock In {\em Proceedings of the 53rd Annual ACM SIGACT Symposium on Theory
  of Computing}, STOC 2021, page 1180–1193, New York, NY, USA, 2021.
  Association for Computing Machinery.

\bibitem{HegemanPPSSPODC15}
James~W. Hegeman, Gopal Pandurangan, Sriram~V. Pemmaraju, Vivek~B. Sardeshmukh,
  and Michele Scquizzato.
\newblock {Toward Optimal Bounds in the Congested Clique: Graph Connectivity
  and MST}.
\newblock In {\em Proceedings of the 2015 ACM Symposium on Principles of
  Distributed Computing}, PODC '15, pages 91--100, New York, NY, USA, 2015.
  ACM.

\bibitem{HegemanPTCS15}
James~W. Hegeman and Sriram~V. Pemmaraju.
\newblock Lessons from the congested clique applied to mapreduce.
\newblock {\em Theor. Comput. Sci.}, 608:268--281, 2015.

\bibitem{huang_et_al:LIPIcs:2015:4934}
Zengfeng Huang, Bozidar Radunovic, Milan Vojnovic, and Qin Zhang.
\newblock {Communication Complexity of Approximate Matching in Distributed
  Graphs}.
\newblock In Ernst~W. Mayr and Nicolas Ollinger, editors, {\em 32nd
  International Symposium on Theoretical Aspects of Computer Science (STACS
  2015)}, volume~30 of {\em Leibniz International Proceedings in Informatics
  (LIPIcs)}, pages 460--473, Dagstuhl, Germany, 2015. Schloss
  Dagstuhl--Leibniz-Zentrum fuer Informatik.

\bibitem{DBLP:conf/fsttcs/IvanyosKLSW12}
G{\'{a}}bor Ivanyos, Hartmut Klauck, Troy Lee, Miklos Santha, and Ronald
  de~Wolf.
\newblock New bounds on the classical and quantum communication complexity of
  some graph properties.
\newblock In Deepak D'Souza, Telikepalli Kavitha, and Jaikumar Radhakrishnan,
  editors, {\em {IARCS} Annual Conference on Foundations of Software Technology
  and Theoretical Computer Science, {FSTTCS} 2012, December 15-17, 2012,
  Hyderabad, India}, volume~18 of {\em LIPIcs}, pages 148--159. Schloss
  Dagstuhl - Leibniz-Zentrum f{\"{u}}r Informatik, 2012.

\bibitem{JiaRS02}
Lujun Jia, Rajmohan Rajaraman, and Torsten Suel.
\newblock An efficient distributed algorithm for constructing small dominating
  sets.
\newblock {\em Distributed Computing}, 15(4):193--205, 2002.

\bibitem{king15_const_improm_repair_mst_distr}
Valerie King, Shay Kutten, and Mikkel Thorup.
\newblock Construction and impromptu repair of an mst in a distributed network
  with o(m) communication.
\newblock In {\em Proceedings of the 2015 ACM Symposium on Principles of
  Distributed Computing}, PODC '15, pages 71--80, New York, NY, USA, 2015. ACM.

\bibitem{KitimuraIzumiITIS2022}
Naoki Kitamura and Taisuke Izumi.
\newblock A subquadratic-time distributed algorithm for exact maximum matching.
\newblock {\em IEICE Transactions on Information and Systems},
  E105.D(3):634--645, 2022.

\bibitem{KlauckNPRSODA15}
Hartmut Klauck, Danupon Nanongkai, Gopal Pandurangan, and Peter Robinson.
\newblock Distributed computation of large-scale graph problems.
\newblock In {\em Proceedings of the Twenty-Sixth Annual {ACM-SIAM} Symposium
  on Discrete Algorithms, {SODA} 2015, San Diego, CA, USA, January 4-6, 2015},
  pages 391--410. {SIAM}, 2015.

\bibitem{KorachMZSICOMP1987}
E.~Korach, S.~Moran, and S.~Zaks.
\newblock The optimality of distributive constructions of minimum weight and
  degree restricted spanning trees in a complete network of processors.
\newblock {\em SIAM J. Comput.}, 16(2):231--236, 1987.

\bibitem{KothapalliP12}
Kishore Kothapalli and Sriram~V. Pemmaraju.
\newblock Super-fast 3-ruling sets.
\newblock In {\em {IARCS} Annual Conference on Foundations of Software
  Technology and Theoretical Computer Science, {FSTTCS} 2012, December 15-17,
  2012, Hyderabad, India}, volume~18 of {\em LIPIcs}, pages 136--147, 2012.

\bibitem{KMWPODC2004}
Fabian Kuhn, Thomas Moscibroda, and Roger Wattenhofer.
\newblock What cannot be computed locally!
\newblock In {\em Proceedings of the Twenty-Third Annual ACM Symposium on
  Principles of Distributed Computing}, PODC ’04, page 300–309, New York,
  NY, USA, 2004. Association for Computing Machinery.

\bibitem{KMWJACM2016}
Fabian Kuhn, Thomas Moscibroda, and Roger Wattenhofer.
\newblock Local computation: Lower and upper bounds.
\newblock {\em J. ACM}, 63(2), 2016.

\bibitem{KushilevitzN:book96}
Eyal Kushilevitz and Noam Nisan.
\newblock {\em Communication Complexity}.
\newblock Cambridge University Press, New York, NY, USA, 1997.

\bibitem{kuttenjacm15}
Shay Kutten, Gopal Pandurangan, David Peleg, Peter Robinson, and Amitabh
  Trehan.
\newblock On the complexity of universal leader election.
\newblock {\em J. {ACM}}, 62(1):7:1--7:27, 2015.

\bibitem{LinialSaksCombinatorica1993}
Nathan Linial and Michael Saks.
\newblock Low diameter graph decompositions.
\newblock {\em Combinatorica}, 13, 1993.

\bibitem{lotker2005mstJournal}
Zvi Lotker, Boaz Patt{-}Shamir, Elan Pavlov, and David Peleg.
\newblock {Minimum-Weight Spanning Tree Construction in $O(\log \log n)$
  Communication Rounds}.
\newblock {\em SIAM Journal on Computing}, 35(1):120--131, 2005.

\bibitem{LotkerPatt-ShamirPettieJACM2015}
Zvi Lotker, Boaz Patt-Shamir, and Seth Pettie.
\newblock Improved distributed approximate matching.
\newblock {\em J. ACM}, 62(5), nov 2015.

\bibitem{MillerPengXuSPAA2013}
Gary~L. Miller, Richard Peng, and Shen~Chen Xu.
\newblock Parallel graph decompositions using random shifts.
\newblock In {\em Proceedings of the Twenty-Fifth Annual ACM Symposium on
  Parallelism in Algorithms and Architectures}, SPAA '13, page 196–203, New
  York, NY, USA, 2013. Association for Computing Machinery.

\bibitem{Upfalbook}
Michael Mitzenmacher and Eli Upfal.
\newblock {\em Probability and computing: randomization and probabilistic
  techniques in algorithms and data analysis}.
\newblock Cambridge university press, 2017.

\bibitem{pai17_symmet_break_conges_model}
Shreyas Pai, Gopal Pandurangan, Sriram~V. Pemmaraju, Talal Riaz, and Peter
  Robinson.
\newblock {Symmetry Breaking in the Congest Model: Time- and Message-Efficient
  Algorithms for Ruling Sets}.
\newblock In {\em {DISC}}, pages 38:1--38:16, 2017.

\bibitem{PaiPPRPODC2021}
Shreyas Pai, Gopal Pandurangan, Sriram~V. Pemmaraju, and Peter Robinson.
\newblock Can we break symmetry with o(m) communication?
\newblock In Avery Miller, Keren Censor{-}Hillel, and Janne~H. Korhonen,
  editors, {\em {PODC} '21: {ACM} Symposium on Principles of Distributed
  Computing, Virtual Event, Italy, July 26-30, 2021}, pages 247--257. {ACM},
  2021.

\bibitem{PanditPemmarajuPODC2010}
Saurav Pandit and Sriram~V. Pemmaraju.
\newblock Rapid randomized pruning for fast greedy distributed algorithms.
\newblock In Andr{\'{e}}a~W. Richa and Rachid Guerraoui, editors, {\em
  Proceedings of the 29th Annual {ACM} Symposium on Principles of Distributed
  Computing, {PODC} 2010, Zurich, Switzerland, July 25-28, 2010}, pages
  325--334. {ACM}, 2010.

\bibitem{PanduranganPS20}
Gopal Pandurangan, David Peleg, and Michele Scquizzato.
\newblock Message lower bounds via efficient network synchronization.
\newblock {\em Theoretical Computer Science}, 810:82--95, 2020.
\newblock Special issue on Structural Information and Communication Complexity.

\bibitem{DBLP:journals/talg/Pandurangan0S20}
Gopal Pandurangan, Peter Robinson, and Michele Scquizzato.
\newblock A time- and message-optimal distributed algorithm for minimum
  spanning trees.
\newblock {\em {ACM} Trans. Algorithms}, 16(1):13:1--13:27, 2020.

\bibitem{PanduranganRS21}
Gopal Pandurangan, Peter Robinson, and Michele Scquizzato.
\newblock On the distributed complexity of large-scale graph computations.
\newblock {\em {ACM} Trans. Parallel Comput.}, 8(2):7:1--7:28, 2021.

\bibitem{patt-shamir17_proof_label_schem}
Boaz Patt{-}Shamir and Mor Perry.
\newblock Proof-labeling schemes: Broadcast, unicast and in between.
\newblock In {\em Stabilization, Safety, and Security of Distributed Systems -
  19th International Symposium, {SSS} 2017, Boston, MA, USA, November 5-8,
  2017, Proceedings}, pages 1--17, 2017.

\bibitem{peleg00}
David Peleg.
\newblock {\em Distributed Computing: A Locality-Sensitive Approach}.
\newblock Society for Industrial and Applied Mathematics, 2000.

\bibitem{rao_yehudayoff_2020}
Anup Rao and Amir Yehudayoff.
\newblock {\em Communication Complexity: and Applications}.
\newblock Cambridge University Press, 2020.

\bibitem{razborov92_distr_compl_disjoin}
Alexander~A. Razborov.
\newblock On the distributional complexity of disjointness.
\newblock {\em Theor. Comput. Sci.}, 106(2):385--390, 1992.

\bibitem{RobinsonSODA21}
Peter Robinson.
\newblock Being fast means being chatty: The local information cost of graph
  spanners.
\newblock In D{\'{a}}niel Marx, editor, {\em Proceedings of the 2021 {ACM-SIAM}
  Symposium on Discrete Algorithms, {SODA} 2021, Virtual Conference, January 10
  - 13, 2021}, pages 2105--2120. {SIAM}, 2021.

\bibitem{RozhonGSTOC2020}
V{\'a}clav Rozho{\v{n}} and Mohsen Ghaffari.
\newblock Polylogarithmic-time deterministic network decomposition and
  distributed derandomization.
\newblock In {\em Proceedings of the 52nd Annual ACM SIGACT Symposium on Theory
  of Computing}, pages 350--363, 2020.

\bibitem{TiwariJACM1987}
Prasoon Tiwari.
\newblock Lower bounds on communication complexity in distributed computer
  networks.
\newblock {\em J. ACM}, 34(4):921–938, oct 1987.

\bibitem{Vaziranibook}
Vijay~V. Vazirani.
\newblock {\em Approximation Algorithms}.
\newblock Springer-Verlag, Berlin, Heidelberg, 2001.

\bibitem{WG01}
Ben Wieland and Anant~P. Godbole.
\newblock On the domination number of a random graph.
\newblock {\em The Electronic Journal of Combinatorics [electronic only]}, 8,
  2001.

\bibitem{WoodruffZhangDIST2013}
David~P. Woodruff and Qin Zhang.
\newblock When distributed computation is communication expensive.
\newblock {\em Distributed Comput.}, 30(5):309--323, 2017.

\bibitem{Yao77_probab}
A.~C. Yao.
\newblock Probabilistic computations: Toward a unified measure of complexity.
\newblock In {\em 18th Annual Symposium on Foundations of Computer Science
  (sfcs 1977)}, pages 222--227, 1977.

\bibitem{yao79_some_compl_quest_relat_distr}
Andrew Chi-Chih Yao.
\newblock Some complexity questions related to distributive
  computing(preliminary report).
\newblock In {\em Proceedings of the Eleventh Annual ACM Symposium on Theory of
  Computing}, STOC '79, pages 209--213, New York, NY, USA, 1979. ACM.

\end{thebibliography}

\newpage

\appendix
\section{Lower Bounds in \texorpdfstring{\ktone{} \congest{}}{KT-1 CONGEST}}
\label{section:KTOneLB}

Message complexity is sensitive to initial local knowledge of nodes.
For any integer $\rho > 0$, in the \ktrho{} model
each node $v$ is provided initial  knowledge of (i) the \texttt{ID}s of all nodes at distance at most $\rho$ from $v$ and (ii) the neighborhood of every vertex at distance at most $\rho-1$ from $v$.

The two most commonly used models of initial knowledge in distributed algorithms are $\ktzero$ (which we considered in the main body of the paper) and $\ktone$.
In the $\ktone$  model, each node has initial knowledge of itself and the \texttt{ID}s of its neighbors. Note that only knowledge of the \texttt{ID}s of neighbors is assumed, not other information such as the degree of the neighbors.
In \ktone{} \congest{}, we can solve \textit{any} problem using $\OT(n)$ messages, provided exponentially many rounds are allowed (see Appendix A in \cite{RobinsonSODA21}). Therefore, one cannot expect to obtain the unconditional lower bounds we proved in Section~\ref{section:KTZeroLB} for \ktzero{}.

We circumvent this obstacle and prove lower bounds in two distinct settings: (1) where algorithms are \emph{time-bounded}, where the time complexity is bounded by $\poly(n)$ and (2) where algorithms are \emph{comparison-based}, where nodes can only perform comparison operations on the IDs that constitute their \ktone{} information.

\subsection{Lower Bounds for Time-Bounded Algorithms}
\label{section:KTOneLB-Time-Restricted}
We begin by showing a generalization of Theorem~\ref{thm:general-lb-framework} to the \ktrho{} \congest{} model. The proof is also very similar, but we need to account for the \ktrho information in the simulation.

\begin{theorem}\label{thm:kt-rho-general-lb-framework}
    Fix a function $f:X \times Y \to \{\mathrm{TRUE},\mathrm{FALSE}\}$, a predicate $P$, a constant $0 < \delta < 1$, and a positive integers $\rho \ge 1$ and $\ell$ such that $\ell \ge 2\rho$. Suppose there exists an $\ell$-separated family of lower bound graphs $\{G_{x,y}=(V,E_{x,y})\mid x \in X, y \in Y\}$ w.r.t.~$f$ and $P$. Then any $r$-round deterministic (or randomized with error probability at most constant $0 < \varepsilon < 1$) algorithm for deciding $P$ in the \ktrho{} \congest{} model has message complexity 
\[M = \Omega \left(\frac{(\ell-2\rho+1)}{\log |V|}\cdot\frac{ CC^{\mathrm{rand}}_{\delta + \varepsilon}(f)}{(1 + \log r)} - \frac{\ell \log \ell}{\log |V|}\right).\]
\end{theorem}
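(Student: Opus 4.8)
The plan is to follow the proof of Theorem~\ref{thm:general-lb-framework} almost verbatim, again simulating the \ktrho{} \congest{} algorithm $\mathcal{A}$ in the synchronous $2$-party model and then invoking the Synchronous Simulation Theorem (Lemma~\ref{lemma:SST}). As before, Alice owns the $x$-dependent edges in $V_1 \times V_1$, Bob owns the $y$-dependent edges in $V_\ell \times V_\ell$, and both know the remaining ``fixed'' part of $G_{x,y}$ (the vertex set, the ID assignment, and all edges not touching $V_1\times V_1$ or $V_\ell\times V_\ell$). They agree on a cut index $i$, set $V_A = V_1 \cup \dots \cup V_i$ and $V_B = V_{i+1} \cup \dots \cup V_\ell$, and simulate $\mathcal{A}$ round by round, exchanging only the messages $\mathcal{A}$ sends across the single cut between $V_i$ and $V_{i+1}$, which by Property~\ref{lb-fw:cuts} is the only edge set leaving $V_A$. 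The counting argument is then identical: the candidate cuts have pairwise disjoint edge sets, so if $\mathcal{A}$ sends $M$ messages in total and there are $K$ candidate cuts, at most a $\delta$ fraction of them can carry at least $M/(\delta K)$ messages, whence a uniformly chosen candidate cut is light with probability at least $1-\delta$; correctness on the chosen predicate carries over from Property~\ref{lb-fw:pred} unchanged.

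The one genuinely new point is that in \ktrho{} a node does not start with only its own ID: node $u \in V_j$ is initialized with the IDs of all nodes within distance $\rho$ and the full adjacency of all nodes within distance $\rho-1$. For Alice to simulate $V_A$ she must reconstruct these initial states without knowing $y$, and symmetrically for Bob. The key observation is that, because edges connect only consecutive blocks (Property~\ref{lb-fw:cuts}), the graph distance from $V_j$ to $V_1$ is at least $j-1$ and to $V_\ell$ at least $\ell-j$. Hence the $y$-dependent edges (which live inside $V_\ell\times V_\ell$) can enter the \ktrho{} view of $u\in V_j$ only when $j \ge \ell-\rho+1$, and the $x$-dependent edges only when $j \le \rho$. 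Thus the initial state of every node in $V_{\rho+1}\cup\dots\cup V_{\ell-\rho}$ is fixed (common knowledge), the states of $V_1,\dots,V_\rho$ depend only on $x$ (known to Alice), and those of $V_{\ell-\rho+1},\dots,V_\ell$ depend only on $y$ (known to Bob).

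Consequently I would restrict the agreed cut index to the range $\rho \le i \le \ell-\rho$, which is non-empty precisely because $\ell \ge 2\rho$ and contains exactly $\ell-2\rho+1$ indices. For such an $i$, the largest block in $V_A$ has index $i \le \ell-\rho$, so by the threshold above no node of $V_A$ has a $y$-dependent initial state and Alice can reconstruct all of $V_A$ from $x$ and the fixed template; symmetrically the smallest block in $V_B$ has index $i+1 \ge \rho+1$, so no node of $V_B$ is $x$-dependent and Bob can reconstruct all of $V_B$. These $\ell-2\rho+1$ cuts still have pairwise disjoint edge sets, so running the first paragraph's argument with $K = \ell-2\rho+1$ yields a synchronous $\varepsilon'$-error protocol ($\varepsilon' = \delta+\varepsilon$) using $\log \ell + M\log|V|/(\delta(\ell-2\rho+1))$ bits in $O(r)$ rounds, namely $SCC^{\mathrm{rand}}_{\delta+\varepsilon,r}(f) = O\!\left(\log \ell + M\log|V|/(\delta(\ell-2\rho+1))\right)$. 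Lemma~\ref{lemma:SST} then converts this into the claimed lower bound on $M$ after absorbing the constant $\delta$.

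The main obstacle is the second paragraph: rigorously certifying that Alice and Bob can reconstruct the relevant initial \ktrho{} states from their private inputs plus the fixed template. This requires a careful distance accounting that separates Property~\ref{lb-fw:alice}/\ref{lb-fw:bob} (which edges depend on the inputs) from the reachability bounds imposed by Property~\ref{lb-fw:cuts}, including the subtle point that a boundary node of $V_A$ may have a view reaching $\rho$ blocks into $V_{i+1},\dots,V_{i+\rho}$ on Bob's side yet remain input-independent, since for $i \le \ell-\rho$ any reachable $V_\ell$-vertex lies at distance exactly $\rho$ and so only its fixed ID, not its $y$-adjacency, is revealed. Once this ``constructibility'' lemma is in place, the remainder is a mechanical re-run of the proof of Theorem~\ref{thm:general-lb-framework} with $\ell-1$ replaced by $\ell-2\rho+1$.
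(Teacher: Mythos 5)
Your proposal is correct and follows essentially the same route as the paper's proof: simulate $\mathcal{A}$ in the synchronous $2$-party model, restrict the random cut index to $[\rho,\ell-\rho]$ (yielding $\ell-2\rho+1$ edge-disjoint candidate cuts), run the same $\delta$-fraction counting argument, and finish with Lemma~\ref{lemma:SST}. In fact, your distance-accounting justification that nodes in $V_A$ (resp.~$V_B$) have $y$-independent (resp.~$x$-independent) \ktrho{} initial states—including the observation that a $V_\ell$-vertex at distance exactly $\rho$ reveals only its fixed ID and not its $y$-dependent adjacency—is spelled out more carefully than in the paper, which asserts this in a single sentence.
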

\begin{proof}
    Let $\mathcal{A}$ be an $r$-round and $M$-message deterministic (or randomized with error probability $0 < \varepsilon < 1$) \ktrho{} $\congest$ algorithm that decides $P$. We first simulate $\mathcal{A}$ in the synchronous $2$-party communication complexity model in order to find the value of $f(x,y)$ in $(r+1)$-rounds and $\log \ell + (M \log |V|) / \delta(\ell-2\rho+1)$ bits of communication. This simulation has an error probability of at most $\delta + \varepsilon$, and along with the SST result (Lemma~\ref{lemma:SST}) gives us,
    \[\log \ell + \frac{M \log |V|}{\delta(\ell-2\rho+1)} = \Omega \left(\frac{CC^{\mathrm{rand}}_{\delta+\varepsilon}(f)}{(1 + \log r)}\right) \implies M = \Omega \left(\frac{\delta(\ell-2\rho+1)}{\log |V|} \cdot \frac{CC^{\mathrm{rand}}_{\delta+\varepsilon}(f)}{(1 + \log r)} - \frac{\delta \ell \log \ell}{\log |V|} \right)\]
    
    The simulation proceeds as follows: Alice and Bob together create $G_{x,y}$, where Alice is responsible for constructing the edges/weights in $V_1 \times V_1$ and Bob is responsible for constructing the edges/weights in $V_\ell \times V_\ell$. The rest of $G_{x,y}$ does not depend on $x$ and $y$, and is constructed by both Alice and Bob. Alice picks a random number $i \in [\rho,\ell-\rho]$ and sends it to Bob, and they both fix the cut $(V_A, V_B)$ where $V_A = V_1,\dots, V_{i}$ and $V_B = V_{i+1}, \dots, V_\ell$. Then they simulate $\mathcal{A}$ round by round, and in each round only exchange the messages sent over the cut $(V_A,V_B)$. The sets $V_1, \dots, V_{\rho}$ ensure that the \ktrho{} information required by Bob to simulate $\mathcal{A}$ depends at most on the IDs of nodes in $V_1$, and hence do not depend on Alice's input $x$. Similarly, the sets $V_{\ell-\rho+1}, \dots, V_{\ell}$ ensure that the \ktrho{} information required by Alice to simulate $\mathcal{A}$ depends at most on the IDs of nodes in $V_{\ell}$, and hence do not depend on Bob's input $y$. Since the messages are $O(\log |V|)$ bits, we can assume without loss of generality that each message contains the ID of the source and the destination, so Alice and Bob know the recipient of each message.

    Property \ref{lb-fw:cuts} in Definition \ref{def:lb-graph-family} implies that of all the $\ell-2\rho+1$ cuts that Alice and Bob can fix in $G_{x,y}$, no pair of cuts can have a common edge crossing them. In other words, all the $\ell-2\rho+1$ cuts have a distinct set of edges crossing them. Since $\mathcal{A}$ uses $M$ messages overall, at most $\delta$ fraction of these cuts can have at least $M/\delta(\ell-2\rho+1)$ messages passing through them. Therefore, the probability of the bad event that Alice and Bob fix a cut with at least $M/\delta(\ell-2\rho+1)$ messages passing through it is at most $\delta$. Moreover, if $\mathcal{A}$ is randomized, the simulation will have error probability at most $\varepsilon$. Therefore, with probability at least $(1-\delta-\varepsilon)$, Alice and Bob will know whether $G_{x,y}$ satisfies $P$ or not, and by property \ref{lb-fw:pred} of Definition \ref{def:lb-graph-family}, they know whether $f(x,y)$ is TRUE or FALSE.

    The simulation of $\mathcal{A}$ requires $r$ synchronous rounds and $(M \log |V|) / \delta(\ell-2\rho+1)$ bits of communication. And at the beginning they use one round and $\log \ell$ bits to agree on the cut.
\end{proof}

\begin{remark}
    This theorem implies that the cubic lower bounds for exact MVC, \mxis{}, and MDS presented in Section~\ref{section:cubicLowerBounds} also hold for the $\ktrho{}$ \congest{} model where $\rho = c \cdot n$ for any constant $c < 1$.
\end{remark}

\subsubsection{Lower bound for MDS approximation}
We will use the lower bound graph family $G_{x,y}$ described in Section~\ref{section:KTZeroMDSLB} is a $2$-separated lower bound graph family w.r.t. function $f=\sd{}$ on binary strings of length $n^2$, and predicate $P$ which is true for graphs with $6n+2$ vertices having a dominating set of size at least $5$. Recall that the \sd function is defined as: $\sd(x,y) = \mathrm{FALSE}$ iff there exists an index $i$ such that $x_i = y_i = 1$. The corollary below follows from Theorem~\ref{thm:mdscrossing} and the definition of \sd{}.

\begin{corollary}
\label{cor:mdscc}
For $x, y \in \{0,1\}^{n^2}$, if\ \ $\emph{\sd}(x, y) = \mathrm{FALSE}$ then $G_{x, y}$ has a dominating set of size at most 4, and
if\ \ $\emph{\sd}(x, y) = \mathrm{TRUE}$ then $G_{x, y}$ has a dominating set of size at least 5.
\end{corollary}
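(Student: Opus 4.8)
The plan is to obtain Corollary~\ref{cor:mdscc} as a direct translation of Lemma~\ref{thm:mdscrossing} into the language of the \sd{} predicate; no new graph-theoretic argument is required. First I would recall the definition of \sd{} used in this construction: $\sd(x,y) = \mathrm{FALSE}$ holds precisely when there is some index at which both input strings carry a $1$. Since the $n^2$-bit strings $x$ and $y$ are indexed by pairs $(i,j)$, this condition reads exactly as ``there exist $i,j \in \{1,\dots,n\}$ with $x_{ij} = y_{ij} = 1$.'' Dually, $\sd(x,y) = \mathrm{TRUE}$ is equivalent to the nonexistence of such a pair.

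With this equivalence in hand, the two claims of the corollary fall out immediately from the two halves of Lemma~\ref{thm:mdscrossing}. The hypothesis $\sd(x,y) = \mathrm{FALSE}$ is the same as $\exists\,(i,j)$ with $x_{ij} = y_{ij} = 1$, which by Lemma~\ref{thm:mdscrossing} yields a dominating set of size at most $4$ (explicitly, $\{a_1^i, c_1^i, b_1^i, c_2^j\}$). Symmetrically, $\sd(x,y) = \mathrm{TRUE}$ is the same as $\nexists\,(i,j)$ with $x_{ij} = y_{ij} = 1$, which by Lemma~\ref{thm:mdscrossing} forces every dominating set to have size at least $5$.

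There is essentially no obstacle here: all of the real work---constructing the size-$4$ dominating set in the first case, and the exhaustive case analysis (on how the ``extra'' vertices are distributed among the $C$-gadgets, and on whether the indices of the uncovered row vertices coincide) that rules out any size-$4$ set in the second case---was already carried out in the proof of Lemma~\ref{thm:mdscrossing}. The only point worth a moment's care is checking that the predicate $P$ attached to this family (``$G_{x,y}$ has a minimum dominating set of size at least $5$'') is aligned with the $\mathrm{TRUE}$/$\mathrm{FALSE}$ bookkeeping of \sd{} in the correct direction, so that the corollary feeds correctly into Theorem~\ref{thm:kt-rho-general-lb-framework} as the $\ell = 2$ base case of the \ktone{} reduction. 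A single-line sanity check on the direction of the equivalence completes the argument.
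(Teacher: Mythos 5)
Your proposal is correct and takes essentially the same route as the paper: the paper derives Corollary~\ref{cor:mdscc} exactly as you do, noting that $\sd(x,y)=\mathrm{FALSE}$ means, by definition, that there exist $i,j$ with $x_{ij}=y_{ij}=1$, so both halves follow immediately from the two halves of Lemma~\ref{thm:mdscrossing}. The paper's entire justification is the one-line remark that the corollary ``follows from Theorem~\ref{thm:mdscrossing} and the definition of \sd,'' which is precisely your argument, including your sanity check that the $\mathrm{TRUE}$/$\mathrm{FALSE}$ direction aligns with the predicate $P$ used in the \ktone{} reduction.
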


Let $V_1 = A_1 \cup A_2 \cup C_1 \cup C_2$, $V_2 = B_1 \cup B_2$. With this partition, it is easy to see that we satisfy properties (1) and (2) of Definition~\ref{def:lb-graph-family}, property (3) is automatically satisfied since $\ell=2$, and Corollary~\ref{cor:mdscc} shows that property (4) is also satisfied. We can now invoke Theorem~\ref{thm:kt-rho-general-lb-framework} with $\ell=2, \rho = 1, \delta = 1/6$ and use the fact that $CC^{\mathrm{rand}}_{1/3}(\sd) = \Omega(n^2)$ for inputs of size $n^2$ implies the following lower bound. 

\begin{theorem}\label{thm:mdscc}
For any $0 < \epsilon < 1/6$, the message complexity of an $r$-round \ktone{} \congest{} model algorithm that solves $(5/4-\epsilon)$-approximate MDS on $n$-node networks is lower bounded by $\Omega\left(n^2/(\log r \cdot \log n)\right)$.
\end{theorem}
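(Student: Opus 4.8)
The plan is to route the entire argument through $2$-party communication complexity, since in the \ktone{} model the unconditional edge-crossing technique is unavailable (every problem admits an $\tilde{O}(n)$-message solution when exponentially many rounds are allowed). Concretely, I would reuse the MDS lower bound graph family $\{G_{x,y}\}$ of Section~\ref{section:KTZeroMDSLB} as a $2$-separated family of lower bound graphs with respect to $f = \sd$ on length-$n^2$ inputs, and then feed it into the \ktrho{} simulation framework of Theorem~\ref{thm:kt-rho-general-lb-framework} with the smallest admissible parameters $\ell = 2$ and $\rho = 1$ (so that $\ell \ge 2\rho$ holds, with equality).

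First I would fix the predicate $P$ to be true exactly for the $(6n+2)$-vertex graphs whose minimum dominating set has size at least $5$, and argue that any $(5/4-\epsilon)$-approximate MDS algorithm decides $P$ at the cost of $O(n)$ extra rounds and $O(n^2)$ extra messages: run the algorithm, aggregate the size $k$ of the returned dominating set to a leader, and output whether $k \ge 5$. This is correct because a size-$4$ minimum dominating set forces the returned set to have size at most $(5/4-\epsilon)\cdot 4 = 5-4\epsilon < 5$, hence at most $4$, whereas a minimum dominating set of size at least $5$ forces the returned set to have size at least $5$. By Corollary~\ref{cor:mdscc} (a restatement of Lemma~\ref{thm:mdscrossing}), $G_{x,y}$ has a dominating set of size at least $5$ iff $\sd(x,y) = \mathrm{TRUE}$, which establishes property~\ref{lb-fw:pred} of Definition~\ref{def:lb-graph-family}.

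Next I would check the remaining properties for the partition $V_1 = A_1 \cup A_2 \cup C_1 \cup C_2 \cup \{a^*\}$ and $V_2 = B_1 \cup B_2 \cup \{b^*\}$. Property~\ref{lb-fw:alice} holds since the only $x$-dependent edges run between $A_1$ and $A_2$, both inside $V_1$; property~\ref{lb-fw:bob} holds symmetrically for the $y$-dependent edges between $B_1$ and $B_2$ inside $V_2$; and property~\ref{lb-fw:cuts} is vacuous for $\ell = 2$. Having confirmed that $\{G_{x,y}\}$ is a $2$-separated family, I would invoke Theorem~\ref{thm:kt-rho-general-lb-framework} with $\ell = 2$, $\rho = 1$, $\delta = 1/6$, and $\log|V| = \Theta(\log n)$. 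Since the Monte-Carlo error keeps $\delta + \varepsilon \le 1/3$, the classical bound $CC^{\mathrm{rand}}_{1/3}(\sd) = \Omega(n^2)$ on length-$n^2$ inputs applies. Substituting $\ell - 2\rho + 1 = 1$ and $\ell\log\ell = O(1)$ yields $M = \Omega\bigl(n^2/(\log n\,(1+\log r)) - 1/\log n\bigr) = \Omega\bigl(n^2/(\log n \cdot \log r)\bigr)$, as claimed.

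The step I expect to be the real crux --- and the only place where \ktone{} behaves differently from \ktzero{} --- is confirming that the endpoint players in the simulation can construct the initial $1$-hop (neighbor-ID) knowledge of their vertices without seeing the other party's private input. This is exactly the role of the constraint $\ell \ge 2\rho$: the buffer blocks $V_1,\dots,V_\rho$ and $V_{\ell-\rho+1},\dots,V_\ell$ must absorb all neighborhood information each side needs. At $\rho = 1$ this reduces to observing that every cross edge between $V_1$ and $V_2$ (the $C_1$--$B_1$ and $C_2$--$B_2$ edges) is fixed and that node IDs are assigned independently of $x$ and $y$, so Alice's and Bob's \ktone{} views of their halves are input-independent across the cut. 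Because a single cut is all that $\ell = 2$ affords, the argument saturates at a quadratic rather than cubic bound, matching the $\tilde{\Omega}(n^2)$ approximation lower bound rather than the $\tilde{\Omega}(n^3)$ exact lower bound.
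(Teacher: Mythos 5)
Your proposal is correct and follows essentially the same route as the paper: the same graph family $\{G_{x,y}\}$ from Section~\ref{section:KTZeroMDSLB} viewed as a $2$-separated family w.r.t.\ \sd{} and the ``MDS size at least $5$'' predicate, plugged into Theorem~\ref{thm:kt-rho-general-lb-framework} with $\ell=2$, $\rho=1$, $\delta=1/6$, and the $\Omega(n^2)$ randomized communication bound for \sd{}. You are in fact slightly more careful than the paper's own write-up, which omits $a^*$ and $b^*$ from its stated partition and leaves implicit both the approximation-to-decision reduction (the gap $(5/4-\epsilon)\cdot 4 < 5$) and the check that each player's \ktone{} view across the cut is independent of the other's input.
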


\subsubsection{Lower Bound for \mxis{} Approximation}
The authors of \cite{BachrachCDELP19} present an $n$-vertex $2$-separated lower bound graph family w.r.t function $f=\sd{}$ on binary strings of length $O(n^2)$, and predicate $P$ which can be determined by a $(7/8 + \epsilon)$-approximate \mxis{}. Theorem~\ref{thm:kt-rho-general-lb-framework}, with $\ell=2, \rho = 1, \delta = 1/6$, and using the fact that $CC^{\mathrm{rand}}_{1/3}(\sd) = \Omega(n^2)$ for inputs of size $O(n^2)$, it implies the following lower bound.

\begin{theorem}\label{thm:maxiscc}
For any $0 < \epsilon < 1/6$, the message complexity of an $r$-round \ktone{} \congest{} model algorithm that solves $(7/8 + \epsilon)$-approximate \mxis{} on $n$-node networks is lower bounded by $\Omega\left(n^2/(\log r \cdot \log n)\right)$.
\end{theorem}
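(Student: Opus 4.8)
The plan is to instantiate the general reduction framework of Theorem~\ref{thm:kt-rho-general-lb-framework} with the \mxis{} lower-bound graph family of \cite{BachrachCDELP19}, taking $\ell = 2$ and $\rho = 1$, in exact parallel with the MDS-approximation argument behind Theorem~\ref{thm:mdscc}. First I would recall the $2$-separated family $\{G_{x,y} \mid x, y \in \{0,1\}^{O(n^2)}\}$ of \cite{BachrachCDELP19}: an $n$-vertex family whose vertex set partitions as $V_1 \cup V_2$, where the internal edges of $V_1$ depend only on $x$, the internal edges of $V_2$ depend only on $y$, and the maximum independent set size exhibits a multiplicative $8/7$ gap dictated by the value of $\sd(x,y)$. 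The crucial fact I would import from \cite{BachrachCDELP19} is that this gap is wide enough that \emph{any} $(7/8+\epsilon)$-approximate independent set, after comparing its cardinality against the appropriate threshold, reveals whether $\sd(x,y)$ is TRUE or FALSE.

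Next I would set the predicate $P$ to assert that the \mxis{} of the input crosses that threshold, and check that a $(7/8+\epsilon)$-approximate \mxis{} algorithm decides $P$ at negligible overhead: once the algorithm halts, a leader collects the size of the computed solution along a spanning tree and broadcasts the resulting one-bit verdict, costing only $O(n)$ extra rounds and $O(n^2)$ extra messages, exactly as in the proof of Theorem~\ref{thm:exact-mvc-cubic-lb}; these lower-order terms vanish into the final bound. I would then confirm the four conditions of Definition~\ref{def:lb-graph-family}: properties~\ref{lb-fw:alice} and~\ref{lb-fw:bob} follow from the $x$-/$y$-dependence of the internal edges, property~\ref{lb-fw:cuts} holds trivially since $\ell = 2$ admits no intermediate set, and property~\ref{lb-fw:pred} is precisely the gap of the first step combined with the definition of $P$.

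With the family in place I would apply Theorem~\ref{thm:kt-rho-general-lb-framework} with $\ell = 2$, $\rho = 1$, $\delta = 1/6$. Here $\ell - 2\rho + 1 = 1$, $|V| = O(n)$, and the subtracted term $\ell \log \ell / \log|V| = O(1/\log n)$ is negligible, so the framework yields a message bound of order $CC^{\mathrm{rand}}_{1/6 + \epsilon}(\sd) / (\log n \,(1 + \log r))$. Since $\epsilon < 1/6$ guarantees $1/6 + \epsilon < 1/3$, and randomized communication complexity only increases as the allowed error shrinks, I may substitute $CC^{\mathrm{rand}}_{1/6+\epsilon}(\sd) \ge CC^{\mathrm{rand}}_{1/3}(\sd) = \Omega(n^2)$ on inputs of length $O(n^2)$~\cite{razborov92_distr_compl_disjoin}, which gives the claimed $\Omega(n^2/(\log r \cdot \log n))$.

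The real work sits entirely in the first step --- confirming that the \cite{BachrachCDELP19} construction genuinely realizes the $8/7$ \mxis{} gap, so that its decision predicate is settled by a \emph{$(7/8+\epsilon)$-approximation} rather than only by an exact solver. Once that gap is granted, the rest is a routine substitution into an already-established theorem, and the only subtlety worth flagging is the error accounting: the framework charges the communication complexity of \sd{} at error level $\delta + \epsilon$, which is exactly why I fix $\delta = 1/6$ and insist on $\epsilon < 1/6$, so that $\delta + \epsilon$ remains below the $1/3$ barrier past which the standard randomized \sd{} bound no longer applies.
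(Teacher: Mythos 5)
Your proposal is correct and follows essentially the same route as the paper: the paper likewise invokes the $2$-separated \mxis{} lower-bound family of \cite{BachrachCDELP19}, applies Theorem~\ref{thm:kt-rho-general-lb-framework} with $\ell=2$, $\rho=1$, $\delta=1/6$, and uses $CC^{\mathrm{rand}}_{1/3}(\sd)=\Omega(n^2)$ on inputs of length $O(n^2)$. Your write-up merely makes explicit the details the paper leaves implicit (the predicate $P$, the verification of Definition~\ref{def:lb-graph-family}, and the $\delta+\epsilon<1/3$ error accounting), all of which are handled correctly.
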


\subsection{Lower Bounds for Comparison Based Algorithms}
\label{section:KTOneLB-Comparison}

Comparison-based algorithms were formally defined by Awerbuch et al.~\cite{AwerbuchGPV90}. In comparison-based algorithms, the algorithm executed by each node contains two types of variables: \emph{ID-type} variables and \emph{ordinary} variables.
In the \ktone{} \congest\ model, the ID-type variables at a node $v$ will store the IDs of $v$ and the neighbors of $v$.
Nodes can send a constant number of ID-type variables each messages, since messages in the \congest{} model are restricted to be \(O(\log n)\) bits. The local computations at any node involve operations of the following two forms only:
\begin{enumerate}
  \item Compare two ID-type variables \(I_{i}, I_{j}\) and store the result of the comparison in an ordinary variable.
  \item Perform arbitrary computations on ordinary variables and storing the result in another ordinary variable.
\end{enumerate}

\subsubsection{Technical Preliminaries}
\label{section:ktone-technical-preliminaries}
We state key definitions and notation from Awerbuch et al.~ \cite{AwerbuchGPV90} and describe the lower bound graph construction from~\cite{PaiPPRPODC2021}, which we will use in our proofs of the message lower bounds in \ktone{} \congest{} model for comparison based algorithms.

\begin{definition}
[Executions]
We denote the execution of a \congest\ algorithm (or protocol) \(\mathcal{A}\) on a graph \(G(V, E)\) with an ID-assignment \(\phi\) by \(EX(\mathcal{A}, G, \phi)\). This execution contains (i) the messages sent and received by the nodes in each round and (ii) a snapshot of the local state of each node in each round.
We denote the state of a node \(v\) in the beginning of round \(i\) of the execution \(EX(\mathcal{A}, G, \phi)\) by \(L_i(EX, v)\).
\end{definition}

The \emph{decoded representation} of an execution is obtained by replacing each occurrence of an ID value \(\phi(v)\) by \(v\) in the execution. This decoded representation allows us to define a similarity of executions. We denote the decoded representations of all messages sent during round \(i\) of an execution \(EX(\mathcal{A}, G, \phi)\) as \(h_i(EX(\mathcal{A}, G, \phi))\).

\begin{definition}
[Similar executions]
Two executions of a \congest\ algorithm \(\mathcal{A}\) on graphs \(G_{1}(V, E_{1})\) and \(G_{2}(V, E_{2})\) with ID-assignments \(\phi_{1}\) and \(\phi_{2}\) respectively are \emph{similar} if they have the same decoded representation. Likewise, we say that two messages are \emph{similar} if their decoded representations are the same.
\end{definition}

\begin{definition}
[Utilized Edge]
An edge \(e=\{u, v\}\) is utilized if any one of the following happens during the course of the algorithm:
(i)  a message is sent along \(e\),
(ii) the node $u$ sends or receives ID $\phi(v)$, or
(iii) the node $v$ sends or receives ID $\phi(u)$.
\end{definition}

By definition, the number of utilized edges is an upper bound on the number of edges along which a message sent. Using a charging argument, Awerbuch et al.~\cite{AwerbuchGPV90} show that the number of utilized edges is also upper bounded by a constant times the number of edges along which a message sent. We restate their claim here.

\begin{lemma}[Lemma 3.4 of \cite{AwerbuchGPV90}]
\label{lem:utilization-message-complexity}
Let \(m_u\) denote the number of utilized edges in an execution \(EX(\mathcal{A}, G, \phi)\). Then the message complexity of the execution is \(\Omega(m_u)\).
\end{lemma}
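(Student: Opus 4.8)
The plan is to prove the bound via a \emph{charging argument}: I will assign to each utilized edge a message that ``witnesses'' its utilization, and then show that no single message is charged by more than a constant number of edges. Together these two facts give $m_u = O(M)$, where $M$ is the number of messages sent in $EX(\mathcal{A}, G, \phi)$, which is precisely the claim $M = \Omega(m_u)$.

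First I would recall the three ways an edge $e = \{u,v\}$ can become utilized and match each to a message. For an edge utilized under clause (i), a message is by definition sent along $e$, so I charge $e$ to (any one of) the message(s) transmitted over it. For an edge utilized under clause (ii) or (iii), some endpoint---say $u$---sends or receives the \texttt{ID} $\phi(v)$ of its neighbor $v$. If $u$ \emph{sends} $\phi(v)$, then $\phi(v)$ appears inside some message $\mu$ emitted by $u$, and I charge $e$ to $\mu$; if $u$ \emph{receives} $\phi(v)$, then $\phi(v)$ appears inside some message $\mu$ delivered to $u$, and I again charge $e$ to $\mu$. Thus every utilized edge is charged to at least one message, so $m_u$ is at most the total number of (edge, message) charges.

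The crux is bounding the number of charges borne by a single message $\mu$, say sent from $s$ to $t$. The edge $\{s,t\}$ is the only edge $\mu$ can witness under clause (i). For clauses (ii)/(iii), the key point is that $\mu$ is an $O(\log n)$-bit \congest{} message while each \texttt{ID} occupies $\Omega(\log n)$ bits, so $\mu$ contains only $O(1)$ distinct \texttt{ID}s. Any edge charged to $\mu$ through a transmitted \texttt{ID} must have one endpoint equal to $s$ or $t$ and its other endpoint's \texttt{ID} appearing in $\mu$; hence there are at most $2 \cdot O(1) = O(1)$ such edges. Therefore each message is charged $O(1)$ times.

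Summing over all $M$ messages gives $m_u \le \sum_{\mu} O(1) = O(M)$, and rearranging yields $M = \Omega(m_u)$, as required. I expect the only delicate point to be the bookkeeping in the final step---ensuring that the ``$u$ receives $\phi(v)$'' case is attributed to the message carrying $\phi(v)$ without double-counting, and that a utilized edge witnessed by several messages is charged to just one of them. However, the per-message charge is $O(1)$ regardless of how such ties among witnessing messages are broken, so the argument is robust; the essential ingredient that makes it go through is the $O(1)$-\texttt{ID}s-per-message property enforced by the \congest{} bandwidth restriction.
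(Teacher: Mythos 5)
Your proposal is correct and takes essentially the same route as the paper: the paper does not reprove this lemma but explicitly defers to Awerbuch et al., describing their proof as a charging argument that bounds the number of utilized edges by a constant times the number of messages sent. Your edge-to-message charging, with the key observation that each \congest{} message carries only $O(1)$ \texttt{ID}s and hence can witness at most $O(1)$ utilized edges (one via clause (i) and $O(1)$ each at the sender and receiver via clauses (ii)/(iii)), is exactly that argument.
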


\paragraph*{Lower Bound Graph:} Consider the base graph $G \cup G'$ described in Section~\ref{section:kt0-mvc-lb}, except that $|Y| = |Y'| = t$. This is the same base graph used in~\cite{PaiPPRPODC2021} to show message complexity lower bounds for comparison-based \ktone{} \congest{} algorithms that solve MIS and $(\Delta + 1)$-coloring. 

For each $v \in V$, let $v'$ denote the corresponding copy in $V'$. For $x \in X, y \in Y$ and $z \in Z$, let $e = \{y, z\}$ and $e' = \{x', y'\}$. We create a crossed graph $G_{e, e'}$ by starting with the base graph $G \cup G'$ and then replacing edges $e$ and $e'$ by edges $\{y, y'\}$ and $\{z, x'\}$.

The authors in~\cite{PaiPPRPODC2021} construct an ID assignment $\phi$ for the vertices $V$ of $G$, and a ``shifted'' ID assignment $\phi'_{e,e'}$ for the vertices $V'$ of $G'$ that produce similar executions for any comparison-based \ktone{} \congest{} algorithm. This is stated as the following lemma.

\begin{lemma}[Lemma 2.8 of~\cite{PaiPPRPODC2021}]
\label{lem:id-assignment-similarity}
Consider an arbitrary vertex $y \in Y$ and an arbitrary pair of edges $e =\{y, z\}$, $z \in Z$ and $e' = \{x',y'\}$, $x' \in X'$. For any comparison-based algorithm $\mathcal{A}$ in the \ktone{} \congest{} model, the executions \(EX_G = EX(\mathcal{A}, G, \phi)\) and \(EX_{G'} = EX(\mathcal{A}, G', \phi'_{e,e'})\) are similar.
\end{lemma}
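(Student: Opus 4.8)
The plan is to exploit the fact that $G'$ is an exact copy of $G$, so the map $\pi : V \to V'$ sending each vertex $v$ to its copy $v'$ is a graph isomorphism from $G$ to $G'$. I would use $\pi$ to identify the two vertex sets, so that after decoding both executions live over a common symbol set and ``same decoded representation'' is meaningful. Because $\mathcal{A}$ is comparison-based, the only thing it can extract from IDs is the outcome of pairwise comparisons; hence the two executions will be similar as soon as $\phi$ and $\phi'_{e,e'}$ are \emph{order-isomorphic through} $\pi$, meaning that for all $u,v \in V$ we have $\phi(u) < \phi(v)$ if and only if $\phi'_{e,e'}(u') < \phi'_{e,e'}(v')$. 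This order-isomorphism within $V'$ is precisely the property that the ``shifted'' assignment $\phi'_{e,e'}$ of \cite{PaiPPRPODC2021} is built to guarantee; the local adjustments that make $\phi'_{e,e'}$ depend on $(e,e')$ only affect comparisons \emph{between} a $V$-ID and a $V'$-ID, which never occur in the uncrossed graphs $G$ and $G'$. So the first step is simply to record these two facts: $\pi$ is a graph isomorphism, and $\phi,\phi'_{e,e'}$ induce the same order through $\pi$.

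The heart of the argument is then a round-by-round induction establishing that for every round $i$ and every node $v$, the decoded state $L_i(EX_G, v)$ coincides with $L_i(EX_{G'}, v')$, and consequently $h_i(EX_G) = h_i(EX_{G'})$. For the base case, the initial \ktone{} state of $v$ is $\phi(v)$ together with $\{\phi(u) : u \in N_G(v)\}$; decoding replaces each $\phi(u)$ by $u$, giving $\{v\} \cup N_G(v)$, while the decoded initial state of $v'$ is $\{v'\} \cup N_{G'}(v')$, and these agree under $\pi$ precisely because $\pi$ is an isomorphism. For the inductive step I would invoke the comparison-based restriction directly: each message $v$ sends is produced by comparing pairs of its ID-type variables and then running arbitrary computation on ordinary variables. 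By the inductive hypothesis $v$ and $v'$ hold corresponding ID-type variables, and by the order-isomorphism every such comparison yields the same Boolean result at $v$ and at $v'$; hence the ordinary variables evolve identically and the messages emitted by $v$ and $v'$ have the same decoded representation. Since $\pi$ carries each edge of $G$ to the matching edge of $G'$, these messages reach corresponding endpoints, so the decoded states again coincide at the start of round $i+1$.

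I expect the main obstacle to be the bookkeeping of \ktone{} knowledge combined with IDs learned dynamically: a node accumulates ID-type variables both from its initial neighbor list and from the contents of the messages it receives, and I must maintain the invariant that $v$ stores the raw value $\phi(u)$ as an ID-type variable in $EX_G$ if and only if $v'$ stores $\phi'_{e,e'}(u')$ in $EX_{G'}$. This invariant is exactly what makes every comparison agree at $v$ and $v'$, yet it is itself preserved only because the messages received in the previous round were similar; consequently the induction on states and the induction on messages must be carried out simultaneously rather than in sequence. Once this mutual induction is established, $h_i(EX_G) = h_i(EX_{G'})$ holds for every round $i$, so the two executions have identical decoded representations and are therefore similar, which is the claim.
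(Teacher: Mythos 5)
Your proposal is correct, and it is essentially the argument behind this lemma: note that the paper itself gives no proof here (the lemma is imported verbatim from \cite{PaiPPRPODC2021}), and the proof in that source proceeds exactly as you describe --- the copy map $\pi$ is a graph isomorphism, the shifted assignment $\phi'_{e,e'}$ is order-isomorphic to $\phi$ through $\pi$, and a round-by-round induction on decoded states and decoded messages (using that a comparison-based node can only branch on comparison outcomes of its ID-type variables) shows the two executions have identical decoded representations. The only step you could not verify blindly, the order-isomorphism of $\phi$ and $\phi'_{e,e'}$, is indeed a property guaranteed by the construction in \cite{PaiPPRPODC2021}, so your reconstruction matches the intended proof.
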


More importantly, \cite{PaiPPRPODC2021} also show that the ID assignment $\psi_{e,e'} = \phi \cup \phi'_{e,e'}$ produces two similar executions $EX = EX(\mathcal{A}, G \cup G', \psi_{e,e'})$ and $EX_{e,e'} = EX(\mathcal{A}, G_{e,e'}, \psi_{e,e'})$ for any comparison-based \ktone{} \congest{} algorithm $\mathcal{A}$. This is stated as the following corollary.

\begin{corollary}[Corollary 2.7 of \cite{PaiPPRPODC2021}]
\label{cor:utilization-similarity}
Suppose that during the execution \(EX\) neither of the edges \(e = \{y,z\}\) and \(e' = \{x',y'\}\) are utilized, for some vertices \(x \in X\), \(y \in Y\), and \(z \in Z\). Then the executions \(EX\) and \(EX_{e,e'}\) are similar and furthermore in \(EX_{e,e'}\), no messages are sent through the edges \(\{y,y'\}\) and \(\{x',z\}\).
\end{corollary}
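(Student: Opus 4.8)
The plan is to prove Corollary~\ref{cor:utilization-similarity} by a round-by-round induction that mirrors the \ktzero{} crossing argument of Theorem~\ref{thm:kt-0-crossing-similarity}, upgraded to accommodate the neighbor-\texttt{ID} information of the \ktone{} model and the comparison-based restriction. The starting observation is that $G \cup G'$ is disconnected into the two components $G$ and $G'$, and that the assignment $\psi_{e,e'} = \phi \cup \phi'_{e,e'}$ restricts to $\phi$ on $G$ and to $\phi'_{e,e'}$ on $G'$. Hence the combined execution $EX$ decomposes as the independent union of $EX_G = EX(\mathcal{A},G,\phi)$ and $EX_{G'} = EX(\mathcal{A},G',\phi'_{e,e'})$, which by Lemma~\ref{lem:id-assignment-similarity} are similar under the copy correspondence $v \leftrightarrow v'$. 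The crossing replaces only the four ports formerly carrying $e = \{y,z\}$ and $e' = \{x',y'\}$; every other edge, together with its port numbers at both endpoints, is wired identically in $G \cup G'$ and $G_{e,e'}$.

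First I would set up the induction on the round index $i$, asserting two statements simultaneously: (a) through round $i$ every node holds the same \emph{decoded} local state in $EX$ and in $EX_{e,e'}$; and (b) in each of the first $i$ rounds of $EX_{e,e'}$ no message is sent over either crossing edge $\{y,y'\}$ or $\{x',z\}$. For the inductive step it therefore suffices to control the four crossed ports. By the non-utilization hypothesis, in $EX$ none of the ports at $y,z,x',y'$ associated with $e$ or $e'$ ever carries a message; by the induction hypothesis the decoded states agree through round $i$, so in round $i+1$ each node issues the same decoded messages on the same ports in both executions. In particular $y$ stays silent on its $e$-port (now leading to $y'$) and $y'$ stays silent on its $e'$-port (now leading to $y$), which gives (b) for $\{y,y'\}$, and symmetrically $z$ and $x'$ stay silent on the ports forming $\{x',z\}$. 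Since no message crosses these edges and all remaining ports are wired identically, every node receives exactly the same decoded multiset of messages on each port, so the decoded states again coincide before round $i+2$, closing the induction. Combined with Lemma~\ref{lem:utilization-message-complexity}, the fact that these swapped ports stay silent is precisely what later converts non-utilization into the message lower bound.

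The main obstacle is the base case, and more generally the \ktone{} neighbor information at the four crossing endpoints, because there the decoded initial states are \emph{not} literally equal: at its $e$-port node $y$ stores the identifier $\phi(z)$ in $EX$ but $\phi'_{e,e'}(y')$ in $EX_{e,e'}$ (and symmetrically at $z$, $x'$, $y'$). I would argue that this discrepancy is invisible to $\mathcal{A}$. Since $e$ and $e'$ are never utilized, these identifiers are never sent and never received, so the only way they could influence the computation is through internal \emph{comparisons} against other \texttt{ID}-type variables held at the same node. Here the comparison-based assumption is essential: the shifted assignment $\phi'_{e,e'}$ is constructed in~\cite{PaiPPRPODC2021} (this order-compatibility is exactly what underpins Lemma~\ref{lem:id-assignment-similarity}) so that $\phi(z)$ and $\phi'_{e,e'}(y')$ occupy the same position in the local order seen by $y$, and likewise at the other three endpoints. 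Consequently every comparison a node performs returns the same truth value in both executions, so no ordinary variable — and hence no message or output — can ever distinguish $EX$ from $EX_{e,e'}$. This neutralizes the offending \texttt{ID} variables and reduces the claim to the purely structural induction above, yielding both that $EX$ and $EX_{e,e'}$ are similar and that the crossing edges $\{y,y'\}$ and $\{x',z\}$ carry no messages, as required.
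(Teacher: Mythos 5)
Your proposal is essentially correct, but note first that the paper you were asked to match contains no proof of this statement at all: Corollary~\ref{cor:utilization-similarity} is imported verbatim from \cite{PaiPPRPODC2021} (together with the base/crossed graph construction and the shifted ID assignment $\phi'_{e,e'}$), so the only fair comparison is against the argument in that cited work --- and your reconstruction does follow the same route: a round-by-round induction maintaining (a)~indistinguishability of states and (b)~silence on the swapped ports, with the \ktone{} neighbor-ID discrepancy at the four crossing endpoints neutralized by the comparison-based restriction. Two points deserve care. First, your invariant (a) as literally stated (``same decoded local state'') is false at $y,z,x',y'$, since the decoded neighbor lists genuinely differ ($z$ versus $y'$, etc.); you notice this and weaken it to ``equal up to the offending ID-type variables, all of whose comparisons agree,'' which is the correct repair, but that weakened statement is what the induction must actually carry, including the observation that non-utilization in $EX$ plus the induction hypothesis is what prevents those ID variables from ever being \emph{sent} in $EX_{e,e'}$ (otherwise their differing values would leak). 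Second, your attribution of the needed order-compatibility to Lemma~\ref{lem:id-assignment-similarity} is slightly off: that lemma asserts similarity of the two \emph{component} executions, whereas what your argument needs is the cross-component property that $\phi'_{e,e'}(y')$ and $\phi(z)$ (and likewise $\phi'_{e,e'}(x')$ and $\phi(y)$) are adjacent in the combined ID order, so that no ID either node can ever hold separates them. This is an additional design property of the construction --- indeed it is the reason $\phi'_{e,e'}$ is parametrized by the crossing pair $(e,e')$ --- and it does not follow formally from the statement of the lemma; since you are forced to import it from \cite{PaiPPRPODC2021} anyway, this is an acceptable dependence for a blind proof, but it should be cited as a property of the construction rather than as a consequence of the lemma.
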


\subsubsection{Lower bound for MVC approximation}
We use the base graph $G \cup G'$ and the crossed graph $G_{e,e'}$ described in Section~\ref{section:ktone-technical-preliminaries}, but set $|Y| = |Y'| = t/(2c)$. And we use the same ID assignment $\psi_{e,e'}$ described in the previous section. Note that the ID assignment was defined assuming that the sizes of $Y$ and $Y'$ is $t$, but we have $|Y| = |Y'| = t/(2c)$. To fix this, we can pretend that $Y$ and $Y'$ each contain $t - t/(2c)$ extra isolated dummy nodes. Therefore, the ID assignment $\psi_{e,e'}$ is well defined, and the only difference is that the nodes in $X, Z$ (and $X',Z'$) do not have the IDs of the dummy nodes in $Y$ (and $Y'$) as part of their initial knowledge. So the proofs of Lemma~\ref{lem:id-assignment-similarity} and Corollary~\ref{cor:utilization-similarity} still hold for the modified base and crossed graphs.

\begin{claim}\label{claim:kt1-mvc-size}
Any $c$-approximate vertex cover in $G$ has size at most $t/2$.
\end{claim}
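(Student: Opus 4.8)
The plan is to follow the proof of Claim~\ref{claim:kt0-mvc-size} essentially verbatim, adjusting only for the fact that here $|Y| = t/(2c)$ rather than $t/(4c)$, and that the statement concerns the single copy $G$ rather than $G \cup G'$. The heart of the argument is that $Y$ is itself a vertex cover of $G$: by construction every edge of $G$ joins $X$ to $Y$ or $Y$ to $Z$, so each edge has an endpoint in $Y$. This gives $\mathrm{OPT}(G) \le |Y| = t/(2c)$, and therefore any $c$-approximate vertex cover $C$ of $G$ satisfies $|C| \le c \cdot \mathrm{OPT}(G) \le c \cdot \tfrac{t}{2c} = \tfrac{t}{2}$, which is exactly the claimed bound.

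Concretely, the steps I would carry out are: first, observe that $G$ has no edges inside $X$, inside $Z$, or between $X$ and $Z$, so $Y$ covers all of $E(G)$ and hence $\mathrm{OPT}(G) \le t/(2c)$; second, plug this into the definition of a $c$-approximation to conclude $|C| \le t/2$. Only the upper bound on $\mathrm{OPT}(G)$ is logically required for the claim. For completeness, and to match the \ktzero{} proof, I would also record that $\mathrm{OPT}(G) = t/(2c)$ exactly: $G$ contains $|X|\,|Y| + |Y|\,|Z| = t^2/c$ edges while its maximum degree is $2t$ (the $Y$-vertices being adjacent to all of $X \cup Z$), so any set of fewer than $t/(2c)$ vertices covers strictly fewer than $t^2/c$ edges and cannot be a cover.

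There is no genuine difficulty in this claim; it is a direct counting argument. The only point demanding care is the recomputation of the two relevant parameters after resizing $Y$: the edge count $t^2/c$ and the maximum degree $2t$ must be recorded for the new value $|Y| = t/(2c)$ (the construction imported from \cite{PaiPPRPODC2021} used $|Y| = t$). As in the \ktzero{} analysis, shrinking $Y$ is precisely the device that guarantees an approximate cover leaves a linear-sized portion of $Z$ (and, by the symmetric statement for the copy $G'$, of $X'$) outside the cover --- the structural fact that the ensuing port-preserving crossing argument will exploit.
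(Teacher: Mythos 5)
Your proposal is correct and follows essentially the same argument as the paper's proof: $Y$ itself covers every edge of $G$, so $\mathrm{OPT}(G) \le |Y| = t/(2c)$ and any $c$-approximate cover has size at most $c \cdot t/(2c) = t/2$. Your remark that the matching lower bound $\mathrm{OPT}(G) \ge t/(2c)$ (via the edge count $t^2/c$ and maximum degree $2t$) is not logically needed for the claim is also accurate; the paper states it only to pin down the optimum exactly, just as you do in your completeness remark.
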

\begin{proof}
This follows from the fact that the optimal vertex cover in $G$ is just $Y$ (and its size is $t/(2c)$). There can be no vertex cover of size $< t/(2c)$ because there are $t^2/c$ edges to cover and the maximum degree a vertex has in $G$ is $2t$.
\end{proof}

Now suppose (to obtain a contradiction) that there is a deterministic comparison-based \ktone{} \congest{} algorithm $\mathcal{A}$ that computes a $c$-approximate vertex cover $C$ in $G \cup G'$ using $o(t^2/c)$ messages. By Lemma~\ref{lem:utilization-message-complexity}, any execution of this algorithm utilizes $o(t^2/c)$ edges.

Lemma~\ref{lem:id-assignment-similarity} implies that $C$ will contain the same vertices in $G$ and $G'$. Thus, $C \cap V$ must be a $c$-approximate vertex cover of $G$ and $C \cap V'$ must be a $c$-approximate vertex cover of $G'$. Therefore, using Claim~\ref{claim:kt1-mvc-size} for $G$ and $G'$ respectively, there are at least $t/2$ vertices in each of the sets $X \setminus C$, $Z \setminus C$, $X' \setminus C$, and $Z' \setminus C$.

\begin{claim}\label{claim:kt1-mvc-triplets}
There exist $x \in X$, $y \in Y$, and $z \in Z$ such that $x$ and $z$ are not in $C$ and the edges $\{x, y\}$ and $\{y, z\}$ are not utilized in execution $EX$ of algorithm $\mathcal{A}$.
\end{claim}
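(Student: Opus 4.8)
The plan is to prove Claim~\ref{claim:kt1-mvc-triplets} by a counting argument that contradicts the assumed message budget. First I would pin down the object I am reasoning about: the edges $\{x,y\}$ and $\{y,z\}$ live entirely inside the $G$-component, and since $G$ and $G'$ are disconnected in the base graph, the utilization status of every edge inside $G$ is determined solely by the $G$-part $\phi$ of the ID assignment $\psi_{e,e'}$ and is therefore independent of the choice of $e,e'$. Hence ``utilized in $EX$'' is well-defined for these edges. From the discussion preceding the claim I would take as given that $|X \setminus C| \ge t/2$ and $|Z \setminus C| \ge t/2$ (each follows from Claim~\ref{claim:kt1-mvc-size}, using $|X|=|Z|=t$ and $|C \cap V| \le t/2$), and that, by Lemma~\ref{lem:utilization-message-complexity}, the number of utilized edges $m_u$ satisfies $m_u = o(t^2/c)$.

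The core step is a contrapositive argument run over $Y$. Assume, toward a contradiction, that no triple $(x,y,z)$ with $x \in X\setminus C$, $z \in Z\setminus C$ and both $\{x,y\}$ and $\{y,z\}$ unutilized exists. Fixing any genuine (non-dummy) node $y \in Y$, the absence of such a triple passing through $y$ means that $y$ cannot simultaneously possess an unutilized edge into $X\setminus C$ and an unutilized edge into $Z\setminus C$. Consequently, for every $y \in Y$, either all $\ge t/2$ edges joining $y$ to $X\setminus C$ are utilized, or all $\ge t/2$ edges joining $y$ to $Z\setminus C$ are utilized; in either case $y$ is incident to at least $t/2$ utilized edges.

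Finally I would aggregate over $Y$. Every edge charged in this way has its $Y$-endpoint equal to the $y$ it is charged to, so the edge sets charged to distinct nodes of $Y$ are pairwise disjoint and no edge is double-counted. Since $|Y| = t/(2c)$, the total number of utilized edges is at least $|Y| \cdot \tfrac{t}{2} = \tfrac{t}{2c}\cdot\tfrac{t}{2} = \tfrac{t^2}{4c} = \Omega(t^2/c)$, contradicting $m_u = o(t^2/c)$. Hence the desired triple must exist. The argument is short, and the only places demanding care — the closest thing to an obstacle — are bookkeeping: making sure $y$ ranges over the $t/(2c)$ real $Y$-nodes and not the isolated padding dummies (which contribute no edges, so the count $|Y| = t/(2c)$ is the right one), and confirming the disjointness of the per-$y$ edge sets so the bound $t/2$ aggregates cleanly. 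Neither poses a genuine difficulty.
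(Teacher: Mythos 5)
Your proof is correct and takes essentially the same route as the paper's: assume no such triple exists, deduce that every $y \in Y$ must then have at least $t/2$ utilized incident edges (either all edges into $X \setminus C$ or all edges into $Z \setminus C$), and sum over the $t/(2c)$ nodes of $Y$ to get $\Omega(t^2/c)$ utilized edges, contradicting the $o(t^2/c)$ message bound via Lemma~\ref{lem:utilization-message-complexity}. Your additional bookkeeping---that utilization of edges inside $G$ is independent of the choice of $(e,e')$ because $G$ and $G'$ are disconnected components, and that the isolated dummy nodes of $Y$ contribute no edges---are correct refinements of details the paper leaves implicit.
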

\begin{proof}
There must be at least $t/2$ nodes in $X$ and at least $t/2$ nodes in $Z$ that are not in $C$. And for at least one such pair $x \in X \setminus C$ and $z \in Z \setminus C$, there must be at least one $y \in Y$ such that no message passes over edges $\{x, y\}$ and $\{y, z\}$. Otherwise, for every $y \in Y$ at least $t/2$ edges incident on $y$ have a message passing over the edge. This implies that $\Omega(t^2/c)$ edges in $G$ have been used to send a message which is a contradiction.
\end{proof}

\begin{lemma}\label{lemma:kt1-mvc-correctness}
Let $x \in X \setminus C$, $y \in Y$, and $z \in Z \setminus C$ be three nodes such that the edges $e = \{y, z\}$ and $e' = \{x', y'\}$ are not utilized in execution $EX$ of algorithm $\mathcal{A}$. Then $\mathcal{A}$ cannot compute a correct vertex cover on $G_{e,e'}$. 
\end{lemma}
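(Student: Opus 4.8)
The plan is to mirror the \ktzero{} argument of Lemma~\ref{lemma:kt0-mvc-correctness} at the level of strategy, but to replace the crossing-indistinguishability result Theorem~\ref{thm:kt-0-crossing-similarity} with the comparison-based execution-similarity machinery of \cite{PaiPPRPODC2021} recorded in Corollary~\ref{cor:utilization-similarity}. The three beats are: (i) show that $\mathcal{A}$ produces the same cover in decoded form on $G \cup G'$ and on $G_{e,e'}$, so that $z$ and $x'$ remain outside the cover; (ii) note that the crossing introduces the edge $\{z,x'\}$; and (iii) conclude that this edge is left uncovered, contradicting correctness of $\mathcal{A}$ on $G_{e,e'}$.

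First I would invoke Corollary~\ref{cor:utilization-similarity} with the two non-utilized edges $e = \{y,z\}$ and $e' = \{x',y'\}$. Since by hypothesis neither edge is utilized during $EX = EX(\mathcal{A}, G\cup G', \psi_{e,e'})$, the corollary yields that $EX$ and $EX_{e,e'} = EX(\mathcal{A}, G_{e,e'}, \psi_{e,e'})$ are similar, i.e.\ have identical decoded representations in every round, including the decoded snapshots of all local states. Because $\mathcal{A}$ is comparison-based, each node's output decision---whether to place itself in the cover---is a function of comparison outcomes among ID-type variables together with arbitrary computation on ordinary variables, all of which are captured by the decoded transcript; hence the decoded cover membership of every node agrees across the two executions.

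Next I would pin down $z \notin C$ and $x' \notin C$ in $EX_{e,e'}$. By hypothesis $z \in Z \setminus C$ and $x \in X \setminus C$ in $EX$, and Lemma~\ref{lem:id-assignment-similarity} (which guarantees that $C$ contains the same vertices in $G$ as in $G'$) upgrades $x \notin C$ to $x' \notin C$ in $EX$. Transporting both non-membership facts across the similarity of $EX$ and $EX_{e,e'}$ established above gives $z \notin C$ and $x' \notin C$ when $\mathcal{A}$ is run on $G_{e,e'}$. Finally, by the definition of the port-preserving crossing, $G_{e,e'}$ contains the replacement edge $\{z,x'\}$; since neither endpoint lies in $C$, this edge is uncovered, so $C$ is not a vertex cover of $G_{e,e'}$---the desired contradiction.

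Given the imported machinery the argument is short, so the only real care lies in the second beat: one must ensure that ``similar executions'' genuinely forces identical output membership for the specific vertices $z$ and $x'$, rather than merely matching message transcripts. This is exactly the place where the comparison-based restriction is indispensable---an algorithm permitted to branch on raw ID values could behave differently on $EX$ and $EX_{e,e'}$ despite identical decoded representations---and it is also where the fact that $C$ agrees on $G$ and $G'$ (Lemma~\ref{lem:id-assignment-similarity}) is needed to obtain $x' \notin C$ from the hypothesis $x \notin C$.
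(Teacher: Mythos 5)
Your proposal is correct and follows essentially the same route as the paper's proof: invoke Corollary~\ref{cor:utilization-similarity} on the two non-utilized edges to transport $z \notin C$ and $x' \notin C$ from $EX$ to $EX_{e,e'}$, then observe that the crossing edge $\{z,x'\}$ is left uncovered. Your additional care in spelling out why execution similarity forces identical cover membership, and in using Lemma~\ref{lem:id-assignment-similarity} to pass from $x \notin C$ to $x' \notin C$, only makes explicit what the paper's terser two-sentence proof (and the text immediately preceding the lemma) leaves implicit.
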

\begin{proof}
Corollary~\ref{cor:utilization-similarity} implies that both $z$ and $x'$ which are not in $C$ in the execution $EX$ of algorithm $\mathcal{A}$ on $G \cup G'$ continue to not be in $C$ in execution $EX_{e,e'}$ of algorithm $\mathcal{A}$ on $G_{e, e'}$. But in $G_{e,e'}$ there is an edge between $z$ and $x'$ and this is a violation of the correctness of algorithm $\mathcal{A}$. 
\end{proof}

Lemma~\ref{lem:id-assignment-similarity} and Claim~\ref{claim:kt1-mvc-triplets} implies the existence of $e$ and $e'$ assumed in Lemma~\ref{lemma:kt1-mvc-correctness}. Thus $\mathcal{A}$ must use $\Omega(t^2/c)$ messages when run on $G \cup G'$. The following theorem formally states this lower bound.

\begin{theorem}
Any deterministic comparison-based \ktone{} \congest{} algorithm $\mathcal{A}$ that computes a $c$-approximate vertex cover on $n$-vertex graphs has $\Omega(n^2/c)$ message complexity.
\end{theorem}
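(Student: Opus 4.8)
The plan is to run a contradiction argument that mirrors the \ktzero{} MVC lower bound of Section~\ref{section:kt0-mvc-lb}, but with the port-preserving crossing replaced throughout by the comparison-based similarity machinery imported from \cite{PaiPPRPODC2021}. Concretely, I would suppose for contradiction that $\mathcal{A}$ sends only $o(t^2/c)$ messages on the base graph $G \cup G'$ under the ID assignment $\psi_{e,e'}$. By Lemma~\ref{lem:utilization-message-complexity} this means the execution $EX = EX(\mathcal{A}, G \cup G', \psi_{e,e'})$ utilizes only $o(t^2/c)$ edges, which is the quantity the counting steps below actually manipulate.

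First I would locate a large supply of vertices outside the computed cover $C$. Lemma~\ref{lem:id-assignment-similarity} makes the executions on the two halves similar, so $C \cap V$ and $C \cap V'$ are $c$-approximate vertex covers of $G$ and $G'$; Claim~\ref{claim:kt1-mvc-size} then bounds each by $t/2$, leaving at least $t/2$ uncovered vertices in each of $X, Z, X', Z'$. I would feed this into Claim~\ref{claim:kt1-mvc-triplets}, whose pigeonhole argument uses that $o(t^2/c)$ utilized edges cannot burden all $t/(2c)$ vertices of $Y$ with $\Omega(t)$ incident utilized edges apiece, to extract a triple $x \in X \setminus C$, $y \in Y$, $z \in Z \setminus C$ for which neither $\{x,y\}$ nor $\{y,z\}$ is utilized in $EX$.

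The step I expect to be the crux of the combining argument is converting these two unutilized edges of the $G$-half into the pair $e = \{y,z\}$, $e' = \{x',y'\}$ demanded by Lemma~\ref{lemma:kt1-mvc-correctness}. Here I would again appeal to the similarity of the two-sided execution under $\psi_{e,e'}$ (Lemma~\ref{lem:id-assignment-similarity}): since $\{x,y\}$ is unutilized on the $G$-side, its mirror edge $\{x',y'\}$ is unutilized on the $G'$-side, so both $e$ and $e'$ are unutilized in $EX$. Lemma~\ref{lemma:kt1-mvc-correctness} then forces $z$ and $x'$ to remain outside $C$ in the similar execution $EX_{e,e'}$ on the crossed graph $G_{e,e'}$, while $\{z,x'\}$ is an edge there; hence $C$ leaves an edge uncovered, contradicting that $\mathcal{A}$ is a correct $c$-approximation.

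It remains only to translate the resulting $\Omega(t^2/c)$ message bound into the vertex count. Using $t = 2cn/(8c+1)$ from Section~\ref{section:kt0-mvc-lb}, one gets $t^2/c = 4cn^2/(8c+1)^2 = \Omega(n^2/c)$ for all $c \ge 1$, which yields the stated lower bound. I would emphasize that essentially all the genuine difficulty — designing the shifted ID assignment $\psi_{e,e'}$ so that a comparison-based algorithm cannot distinguish $G \cup G'$ from $G_{e,e'}$ — is already discharged by Lemma~\ref{lem:id-assignment-similarity} and Corollary~\ref{cor:utilization-similarity}, so the only new obstacle is the bookkeeping that keeps the $Y$-shrinking (to $|Y| = t/(2c)$) consistent with those imported similarity guarantees.
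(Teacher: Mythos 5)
Your proposal is correct and follows essentially the same route as the paper's own proof: Lemma~\ref{lem:utilization-message-complexity} to pass from messages to utilized edges, Lemma~\ref{lem:id-assignment-similarity} plus Claim~\ref{claim:kt1-mvc-size} to find $t/2$ uncovered vertices per side, Claim~\ref{claim:kt1-mvc-triplets} for the pigeonhole, the similarity-based mirroring of $\{x,y\}$ to $\{x',y'\}$ (which the paper leaves implicit in the sentence combining Lemma~\ref{lem:id-assignment-similarity} with Claim~\ref{claim:kt1-mvc-triplets}), and Lemma~\ref{lemma:kt1-mvc-correctness} for the contradiction on $G_{e,e'}$. The only nit is bookkeeping: the \ktone{} construction uses $|Y| = t/(2c)$ rather than the \ktzero{} section's $t/(4c)$, so $t = cn/(4c+1)$ rather than $2cn/(8c+1)$, but either way $t = \Theta(n)$ uniformly in $c \ge 1$ and the stated $\Omega(n^2/c)$ bound follows.
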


\begin{theorem}
Any randomized Monte-Carlo comparison-based \ktone{} \congest{} algorithm $\mathcal{A}$ that computes a $c$-approximate vertex cover on $n$-vertex graphs with constant error probability $0 \le \delta < 1/12 - o(1)$ has $\Omega(n^2/c)$ message complexity.
\end{theorem}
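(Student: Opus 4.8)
The plan is to lift the preceding deterministic theorem to the randomized setting via Yao's minimax principle, mirroring the randomized \KTZero{} argument for MVC from Section~\ref{section:kt0-mvc-lb} but replacing the port-preserving crossing machinery with the comparison-based utilization framework of Section~\ref{section:ktone-technical-preliminaries}. First I would invoke Yao's minimax theorem~\cite{Yao77_probab} to reduce the claim to a message lower bound for \emph{deterministic} comparison-based algorithms against a fixed hard distribution $\mu$. Following the \KTZero{} template, I would let $\mu$ place probability $1/2$ on the base graph $G \cup G'$ (equipped with the ID assignment supplied by Lemma~\ref{lem:id-assignment-similarity}) and spread the remaining $1/2$ uniformly over the family $\mathcal{F} = \{G_{e,e'}\}$ of crossed graphs. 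With $|Y| = |Y'| = t/(2c)$, a short count gives $|\mathcal{F}| = t^4/(4c^2) = \Theta(t^4/c^2)$. Since the error probability is $\delta < 1/2$, a deterministic algorithm that beats $\mu$ cannot err on the weight-$1/2$ atom $G \cup G'$; hence, by Lemma~\ref{lem:id-assignment-similarity} together with Claim~\ref{claim:kt1-mvc-size}, in the base execution $EX$ it outputs a cover $C$ with $|C \cap Z| \le t/2$ and $|C \cap X'| \le t/2$.

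Next I would carry out the counting. Assume for contradiction the algorithm uses $o(t^2/c)$ messages; by Lemma~\ref{lem:utilization-message-complexity} at most $o(t^2/c)$ edges are utilized in $EX$. Then all but $o(t/c)$ of the $t/(2c)$ nodes $y \in Y$ have at most $\delta t$ incident utilized edges, and for each such $y$ there are at least $(1-\delta)t$ nodes $z \in Z$ with $\{y,z\}$ unutilized, hence at least $(1/2-\delta)t$ such nodes in $Z \setminus C$; a symmetric statement holds for $y' \in Y'$ and $x' \in X' \setminus C$. For every resulting pair $(e,e') = (\{y,z\},\{x',y'\})$, Corollary~\ref{cor:utilization-similarity} makes $EX$ and $EX_{e,e'}$ similar, so $z$ and $x'$ stay out of the output on $G_{e,e'}$ even though they are now adjacent there — i.e. the output is an incorrect cover (this is exactly Lemma~\ref{lemma:kt1-mvc-correctness}). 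Multiplying the two counts and dividing by $|\mathcal{F}|$ shows the algorithm errs on at least a $(1/2-\delta)^2 - o(1) \ge 1/4 - \delta - o(1)$ fraction of $\mathcal{F}$, which overshoots the admissible $2\delta$ error mass precisely when $2\delta < 1/4 - \delta - o(1)$, i.e. $\delta < 1/12 - o(1)$. Substituting $t = \Theta(n)$ turns the contradiction into the claimed $\Omega(n^2/c)$ bound.

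The main obstacle is not the arithmetic — which is word-for-word the \KTZero{} computation — but ensuring the utilization/similarity framework applies \emph{uniformly} across $\mu$. The delicate point is that Corollary~\ref{cor:utilization-similarity} is stated relative to the specific ID assignment $\psi_{e,e'}$, so I must verify that a \emph{single} base execution $EX$ simultaneously witnesses non-utilization for all the crossings $(e,e')$ that the count produces, and that its utilized-edge set is a well-defined object independent of which crossing is later inspected. This rests on the comparison-based property: because the $V'$-copy carries only an order-preserving (shifted) relabeling of the $V$-copy and the base graph has no edges between $V$ and $V'$, the decoded execution on $G \cup G'$ — and therefore the set of utilized edges — is invariant under that relabeling, which is precisely what Lemma~\ref{lem:id-assignment-similarity} encapsulates. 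Once this well-definedness is secured, the only substantive change from the \KTZero{} proof is the translation of ``a message is sent over an edge'' into ``an edge is utilized'' via Lemma~\ref{lem:utilization-message-complexity}, after which the same Yao-plus-crossing bookkeeping goes through.
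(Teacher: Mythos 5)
Your overall strategy (Yao's minimax principle, a hard distribution putting mass $1/2$ on the base graph and uniform mass on crossed graphs, utilization counting via Lemma~\ref{lem:utilization-message-complexity}, and incorrectness via Corollary~\ref{cor:utilization-similarity}) is exactly the paper's, and your discussion of why the utilized-edge set of the base execution is well defined across the shifted ID assignments is a legitimate point. However, there is a genuine gap in how you build the family $\mathcal{F}$. You transplant the \KTZero{} family, in which $e=\{y,z\}$ and $e'=\{x',y'\}$ are chosen with all four endpoints free, so that $|\mathcal{F}| = t\cdot t\cdot \frac{t}{2c}\cdot\frac{t}{2c} = t^4/(4c^2)$. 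But in the comparison-based \ktone{} framework of Section~\ref{section:ktone-technical-preliminaries}, the crossed graph $G_{e,e'}$, the ID assignment $\psi_{e,e'}$, Lemma~\ref{lem:id-assignment-similarity}, and Corollary~\ref{cor:utilization-similarity} are all defined only for \emph{aligned} crossings, i.e.\ for a triple $x\in X$, $y\in Y$, $z\in Z$ with $e=\{y,z\}$ and $e'=\{x',y'\}$ where $x'$ and $y'$ are the copies of $x$ and $y$. The alignment is what allows the shifted relabeling to fool a comparison-based algorithm even though endpoints know their neighbors' IDs; for an arbitrary $y'\in Y'$ unrelated to $y$ there is no ID assignment making $EX$ and $EX_{e,e'}$ similar, and the paper never even defines $G_{e,e'}$ for such pairs. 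Hence the correct family is parameterized by triples and has size $t^3/(2c)$, as in the paper's proof.

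This mis-specification is not merely cosmetic: with your family, the crossings your argument actually certifies as erroneous are only the aligned ones --- at most about $\frac{t}{2c}\cdot t\cdot t = t^3/(2c)$ graphs, which is a $\Theta(c/t)=o(1)$ fraction of $t^4/(4c^2)$ --- so no contradiction with the admissible $2\delta$ error mass is reached. Your arithmetic appears to close only because you multiply two \emph{independent} $(y,z)$ and $(y',x')$ counts in the numerator, which presupposes that Corollary~\ref{cor:utilization-similarity} applies to non-aligned quadruples; it does not. The fix is exactly the paper's bookkeeping: take $\mathcal{F}$ over aligned triples with $|\mathcal{F}| = t^3/(2c)$, note that for each of the $(1-o(1))t/(2c)$ good vertices $y$ there are at least $(1/2-\delta)t$ good choices of $z\in Z\setminus C$ and at least $(1/2-\delta)t$ good choices of $x\in X\setminus C$ (with $x',y'$ forced to be the copies), and conclude an error fraction of at least $(1-o(1))(1/2-\delta)^2 \ge 1/4-\delta-o(1)$, which yields the same threshold $\delta < 1/12 - o(1)$.
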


\begin{proof}
By Yao's minimax theorem~\cite{Yao77_probab}, it suffices to show a lower bound on the message complexity of a deterministic algorithm under a hard distribution $\mu$.

The hard distribution $\mu$ consists of the base graph $G \cup G'$ with probability $1/2$ and the remaining probability is distributed uniformly over the set $\mathcal{F}$, where $\mathcal{F} = \{G_{e, e'} \mid e = \{y, z\}, e'=\{x', y'\}, y \in Y, z \in Z, x' \in X', y' \in Y'\}$. Note that $|\mathcal{F}| = t^3/(2c)$, since there are $t$ choices for each $x$ and $z$, and $t/(2c)$ choices for each $y$. Note that a deterministic algorithm cannot produce an incorrect output on $G \cup G'$ as it will have error probability greater than $\delta$. Therefore, we now need to show that any deterministic algorithm that produces an incorrect output on at most a $2\delta$ fraction of graphs in $\mathcal{F}$ has $\Omega(t^2)$ message complexity. So suppose for sake of contradiction that there exists such an algorithm that outputs a $c$-approximate vertex cover $C$ with message complexity $o(t^2)$. By Lemma~\ref{lem:utilization-message-complexity}, the algorithm also utilizes no more than $o(t^2)$ edges in any execution.

Therefore, there can only be $o(t)$ vertices in $Y$ that utilize more than $\delta t$ edges. For each of the remaining $(1 - o(1))t/(2c)$ nodes $y \in Y$ there are at least $(1 - \delta)t$ nodes $z \in Z$, and consequently at least $(1/2 - \delta)t$ nodes $z \in Z \setminus C$ (as $|C| \le t/2$), such that the edge $e = \{y,z\}$ is not utilized. Similarly, there are at least $(1/2 - \delta)t$ nodes $x \in X \setminus C$, such that the edge $e' = \{x',y'\}$ is not utilized. The proof of Lemma~\ref{lemma:kt1-maxIS-correctness} says that the deterministic algorithm is incorrect on all such graphs $G_{e,e'}$. 

The algorithm is incorrect on at least $(1 - o(1))(1/2 - \delta)^2 t^3/(2c)$ graphs in $\mathcal{F}$, that is, on at least a $(1/4 - \delta - o(1))$ fraction of graphs in $\mathcal{F}$. This is a contradiction if $2\delta < (1/4-\delta - o(1))$ or in other words, if $\delta < 1/12 - o(1)$. \end{proof}

\subsubsection{Lower Bound for Exact \mxm{}}
Just as in Section~\ref{section:kt-0-max-matching}, maximum matching (which is maximal) immediately gives us a $2$-approximation to MVC if we pick both end points of each matched edge into the cover. We get the following theorem.

\begin{theorem}
Any randomized Monte-Carlo \ktone{} \congest{} algorithm $\mathcal{A}$ computing an exact solution to \mxm{} with constant error probability $0 \le \delta < 1/8 - o(1)$ where each matched edge is output by at least one of its end points has $\Omega(n^2)$ message complexity.
\end{theorem}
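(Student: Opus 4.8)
The plan is to reduce from the comparison-based minimum vertex cover lower bound established immediately above. The key observation is that an \emph{exact} maximum matching is, in particular, a \emph{maximal} matching: if some edge had both endpoints unmatched it could be added, strictly increasing the matching size. Hence the set $C$ of all endpoints of a maximal matching $M$ is a vertex cover (maximality covers every edge), and since any vertex cover must contain at least one endpoint of each matched edge, $|C| = 2|M| \le 2 \cdot \mathrm{OPT}_{\mathrm{MVC}}$. Thus a \mxm{} algorithm that marks both endpoints of every matched edge yields a $2$-approximate vertex cover, so a randomized comparison-based \ktone{} \congest{} algorithm for exact \mxm{} immediately gives one for $2$-approximate MVC with essentially the same message complexity and error probability. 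Invoking the comparison-based randomized MVC lower bound with $c=2$ then forces $\Omega(n^2/2) = \Omega(n^2)$ messages; the remaining work is only to sharpen the error threshold to the stated $1/8 - o(1)$.

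Because the \mxm{} output convention only guarantees that each matched edge is reported by \emph{one} endpoint, I would not form $C$ explicitly but instead re-run the crossing argument directly on the matching, reusing the base graph $G \cup G'$ and the similarity machinery of Section~\ref{section:ktone-technical-preliminaries}. Concretely, I would choose the base-graph parameters so that $|Y| = |Y'|$ is a small constant fraction of $t$; then in any maximal matching every vertex of $Y$ must be matched (an unmatched $y$ would force all of its neighbors in $X \cup Z$ to be matched, which is impossible since vertices of $X, Z$ are matchable only to $Y$ and $|Y| \ll |X \cup Z|$), so at most $|Y| = o(t)$ vertices of $Z$, and symmetrically of $X'$, can be matched. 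Hence $\Omega(t)$ vertices of each of $Z$ and $X'$ are left \emph{unmatched}. As in Claim~\ref{claim:kt1-mvc-triplets} and Lemma~\ref{lemma:kt1-mvc-correctness}, if the algorithm utilizes $o(t^2)$ edges (Lemma~\ref{lem:utilization-message-complexity}) there exist $y \in Y$, an unmatched $z \in Z$, an unmatched $x' \in X'$, and $y' \in Y'$ with $e = \{y,z\}$ and $e' = \{x',y'\}$ unutilized; by Corollary~\ref{cor:utilization-similarity} the execution on the crossed graph $G_{e,e'}$ is similar, so $z$ and $x'$ remain unmatched there, yet $G_{e,e'}$ contains the edge $\{z,x'\}$, contradicting maximality and hence the exactness of the computed matching on $G_{e,e'}$.

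To promote this to a randomized lower bound I would, as in the MVC and \mxis{} theorems, apply Yao's minimax principle~\cite{Yao77_probab} over the hard distribution that places probability $1/2$ on $G \cup G'$ and spreads the rest uniformly over the crossed graphs in $\mathcal{F}$, and then count the fraction of $\mathcal{F}$ on which a deterministic, $o(t^2)$-message algorithm is forced to err. The improvement over the MVC threshold comes precisely from the larger unmatched mass: instead of only a $(1/2 - \delta)$-fraction of candidate endpoints (as for a $c$-approximate cover of size at most $t/2$), here a $(1 - o(1) - \delta)$-fraction of $Z$ and of $X'$ is unmatched, which pushes the contradiction condition $2\delta < (\text{incorrect fraction})$ down to $\delta < 1/8 - o(1)$.

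The main obstacle I expect is the bookkeeping that makes the crossing contradiction airtight in the comparison-based \ktone{} setting: I must verify that the property ``$z$ and $x'$ are unmatched'' is preserved under the similarity of executions (Lemma~\ref{lem:id-assignment-similarity} and Corollary~\ref{cor:utilization-similarity}) even though the matching output may be reported by either endpoint, via a port number or a neighbor ID, and that the shifted ID assignment $\psi_{e,e'}$ keeps the decoded representations identical after the crossing. Tracking the exact constants through the Yao counting to land on $1/8 - o(1)$, rather than the weaker threshold that a black-box reduction through $2$-approximate MVC would yield, is the other place where care is needed, but it becomes routine once the unmatched-mass bound is established.
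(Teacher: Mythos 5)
Your proposal is correct and is essentially the paper's own argument: the paper proves this theorem in one line, exactly as in your first paragraph (an exact \mxm{} is maximal, so taking both endpoints of matched edges yields a $2$-approximate MVC, and the comparison-based \ktone{} \congest{} MVC lower bound with $c=2$ applies). Your further step of re-running the crossing/Yao argument directly on the matching, exploiting that all but $|Y|$ vertices of $Z$ (and of $X'$) must be unmatched, is the right—and in fact necessary—way to justify the stated error threshold $\delta < 1/8 - o(1)$, since the black-box reduction through the MVC theorem only covers $\delta < 1/12 - o(1)$; the paper glosses over this bookkeeping, which your version makes explicit.
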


\subsubsection{Lower Bound for \mxis{} Approximation}
We use the base graph $G \cup G'$ and the crossed graph $G_{e,e'}$, along with the ID assignment $\psi_{e,e'}$ described in Section~\ref{section:ktone-technical-preliminaries}.

\begin{claim}\label{claim:kt1-maxIS-size}
Any $(1/2 + \epsilon)$-approximate independent set $S$ in $G$ is such that $|S \cap X| \ge \epsilon \cdot t$ and $|S \cap Z| \ge \epsilon \cdot t$.
\end{claim}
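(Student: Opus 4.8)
The plan is to mirror the argument used for the \KTZero{} \mxis{} claim in Section~\ref{section:kt0-maxIS-lb}, but working inside the single copy $G$ rather than $G \cup G'$, which makes the counting cleaner. The first thing I would record is that the largest independent set of $G$ is exactly $X \cup Z$, of size $2t$. Indeed, $X$ and $Z$ are each independent and there are no edges between them, since every edge of $G$ runs either between $X$ and $Y$ or between $Y$ and $Z$; hence $X \cup Z$ is independent. Conversely, any independent set containing a vertex of $Y$ can contain no vertex of $X \cup Z$, because each $Y$-vertex is adjacent to all of $X$ and all of $Z$. Therefore $\alpha(G) = 2t$, and a $(1/2+\epsilon)$-approximate independent set $S$ must satisfy $|S| \ge (1/2+\epsilon)\cdot 2t = (1+2\epsilon)t$.

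The heart of the proof is then two short steps. First, I would show $S \subseteq X \cup Z$: if $S$ contained even a single vertex $y \in Y$, then by the adjacency just described $S$ could contain no vertex of $X \cup Z$, forcing $S \subseteq Y$ and hence $|S| \le |Y| \le t < (1+2\epsilon)t$, contradicting the size bound on $S$. Second, with $S \subseteq X \cup Z$ we have $|S \cap X| + |S \cap Z| = |S| \ge (1+2\epsilon)t$, while $|S \cap X| \le |X| = t$ and $|S \cap Z| \le |Z| = t$. If $|S \cap X| < \epsilon t$, then $|S \cap Z| > (1+2\epsilon)t - \epsilon t = (1+\epsilon)t > t$, which is impossible; hence $|S \cap X| \ge \epsilon t$, and the symmetric computation gives $|S \cap Z| \ge \epsilon t$, establishing the claim.

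I do not expect a serious obstacle here, as the gap between $\alpha(G) = 2t$ and $|Y|$ does all the work. The only point deserving care is the interpretation of the approximation quantity: one must read ``$(1/2+\epsilon)$-approximate'' as $|S| \ge (1/2+\epsilon)\alpha(G)$ and check that $|Y|$ is small enough that no $Y$-vertex can appear in such an $S$. This is where the choice $|Y| = \epsilon t$ is used (padded out to size $t$ by isolated dummy vertices so that the ID assignment $\psi_{e,e'}$ of Section~\ref{section:ktone-technical-preliminaries} stays well defined), although in fact any $|Y| \le t$ already suffices for the claim. This \KTOne{} size bound is the analogue of its \KTZero{} counterpart, and it will subsequently be combined with Lemma~\ref{lem:id-assignment-similarity} and Corollary~\ref{cor:utilization-similarity} exactly as in the \KTZero{} \mxis{} lower bound: the similarity of executions transfers the bounds $|S \cap X| \ge \epsilon t$ and $|S \cap Z| \ge \epsilon t$ to the copy $G'$, after which a counting argument locates an unutilized pair of edges $e, e'$ whose crossing produces the desired contradiction.
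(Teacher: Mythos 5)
Your proof is correct and follows essentially the same route as the paper's: first show $S \subseteq X \cup Z$ because any vertex of $Y$ is adjacent to all of $X \cup Z$ and $|Y|$ is too small to yield a $(1/2+\epsilon)$-approximation, then count to get $|S \cap X|, |S \cap Z| \ge \epsilon t$. One minor correction of fact: in this \ktone{} section the base graph from Section~\ref{section:ktone-technical-preliminaries} has $|Y| = |Y'| = t$ (the choice $|Y| = \epsilon t$ belongs to the \ktzero{} \mxis{} construction, and the isolated-dummy-node padding is used in the \ktone{} MVC section where $|Y| = t/(2c)$), but since, as you note, any $|Y| \le t$ suffices, your argument is unaffected.
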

\begin{proof}
First we have $S \subseteq X \cup Z$ because, even if a single node of $Y$ is in $S$, it will not allow us to put any node of $X \cup Z$ in $S$. Hence, the largest independent set we can create is $Y$ which has size $t$, which is half the size of the largest independent set in $G$ ($X \cup Z$ having size $2t$). So if $S$ is not a subset of $X \cup Z$, then we can only get a $1/2$ approximation. 
Now the claim follows immediately, as otherwise it is not possible to get a $(1/2+\epsilon)$-approximation.
\end{proof}

Now suppose (to obtain a contradiction) that there is a deterministic comparison-based \ktone{} \congest{} algorithm $\mathcal{A}$ that computes a $(1/2 + \epsilon)$-approximate maximum independent set $S$ in $G \cup G'$ using $o(t^2)$ messages. By Lemma~\ref{lem:utilization-message-complexity}, any execution of this algorithm utilizes $o(t^2)$ edges.

\begin{claim}\label{claim:kt1-maxIS-triplets}
There exist $x \in S \cap X$, $y \in Y$, and $z \in S \cap Z$ such that the edges $\{x, y\}$ and $\{y, z\}$ are not utilized in execution $EX$ of algorithm $\mathcal{A}$.
\end{claim}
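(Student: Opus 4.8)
The plan is to prove Claim~\ref{claim:kt1-maxIS-triplets} by a direct counting argument on utilized edges, paralleling the reasoning behind Claim~\ref{claim:kt1-mvc-triplets} and the \ktzero{} version in Claim~\ref{claim:kt0-maxIS-edges}. I would argue by contradiction: assume that no triple $(x,y,z)$ with $x \in S \cap X$, $z \in S \cap Z$, and $y \in Y$ has both of its incident edges $\{x,y\}$ and $\{y,z\}$ unutilized, and then show that this forces $\Theta(t^2)$ utilized edges, contradicting the $o(t^2)$ utilization bound that Lemma~\ref{lem:utilization-message-complexity} provides from the $o(t^2)$ message complexity of $\mathcal{A}$.

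First I would unpack the negated hypothesis into a per-$y$ dichotomy. Fix any $y \in Y$. The absence of a good triple through $y$ means that we cannot simultaneously exhibit an unutilized edge from $y$ into $S \cap X$ and an unutilized edge from $y$ into $S \cap Z$; otherwise those two edges, together with their common endpoint $y$, would form the desired triple. Hence, for each $y \in Y$, either every edge $\{x,y\}$ with $x \in S \cap X$ is utilized, or every edge $\{y,z\}$ with $z \in S \cap Z$ is utilized. Invoking Claim~\ref{claim:kt1-maxIS-size}, which guarantees $|S \cap X| \ge \epsilon t$ and $|S \cap Z| \ge \epsilon t$, in either case $y$ is incident to at least $\epsilon t$ utilized edges.

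Next I would sum this estimate over all $y \in Y$. Because the construction contains no edges internal to $Y$ (the only edges touching $Y$ are those joining $X$ to $Y$ and those joining $Y$ to $Z$), every edge incident to $Y$ has exactly one endpoint in $Y$, so the per-$y$ counts can be added with no double counting. Using $|Y| = t$, this yields at least $t \cdot \epsilon t = \epsilon t^2 = \Theta(t^2)$ utilized edges, which contradicts the standing assumption that $\mathcal{A}$ utilizes only $o(t^2)$ edges. This contradiction establishes the existence of the required triple.

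I do not expect a genuine obstacle in the counting itself; the one point that needs care is structural, namely that we must locate a \emph{single} $y$ serving as the common middle vertex of a two-edge path from some $x \in S \cap X$ through $y$ to some $z \in S \cap Z$ (a ``cherry''), rather than two independent unutilized edges as in the \ktzero{} argument. This cherry shape is exactly what the subsequent similarity and crossing machinery demands: via Lemma~\ref{lem:id-assignment-similarity} the unutilized edge $\{x,y\}$ in $G$ is mirrored by an unutilized edge $\{x',y'\}$ in $G'$, producing the pair $e = \{y,z\}$ and $e' = \{x',y'\}$ to which Corollary~\ref{cor:utilization-similarity} and Lemma~\ref{lemma:kt1-maxIS-correctness} can be applied. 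Keeping the middle vertex $y$ shared, so that $y'$ is its genuine copy, is the only substantive modelling requirement, and the per-$y$ dichotomy above is tailored precisely to deliver it.
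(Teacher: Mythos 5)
Your proposal is correct and takes essentially the same approach as the paper's proof: the paper also argues by contradiction that if no such triple exists, then for every $y \in Y$ either all edges from $y$ into $S \cap X$ or all edges from $y$ into $S \cap Z$ are utilized, which by Claim~\ref{claim:kt1-maxIS-size} forces at least $\epsilon t^2 = \Theta(n^2)$ utilized edges, contradicting the $o(t^2)$ bound from Lemma~\ref{lem:utilization-message-complexity}. Your write-up merely makes explicit the per-$y$ dichotomy and the no-double-counting step that the paper leaves implicit.
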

\begin{proof}
If this is not the case, then for every $y \in Y$, all edges $\{x, y\}$ for $x \in S \cap X$ or all edges $\{z, y\}$ for $z \in S \cap Z$ have a message sent over them. This implies that at least $\epsilon \cdot t^2 = \Theta(n^2)$ edges have a message sent over them, contradicting the assumption that $\mathcal{A}$ has $o(n^2)$ message complexity.
\end{proof}

\begin{lemma}\label{lemma:kt1-maxIS-correctness}
Let $x \in S \cap X$, $y \in Y$, and $z \in S \cap Z$ be three nodes such that the edges $e = \{y, z\}$ and $e' = \{x', y'\}$ are not utilized in execution $EX$ of algorithm $\mathcal{A}$. Then $\mathcal{A}$ cannot compute a correct independent set on $G_{e,e'}$. 
\end{lemma}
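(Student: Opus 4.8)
The plan is to run the standard edge-crossing argument, exactly parallel to the \ktzero{} statement in Lemma~\ref{lemma:kt0-maxIS-correctness} and to the comparison-based MVC statement in Lemma~\ref{lemma:kt1-mvc-correctness}, but now driven by the similarity machinery for comparison-based \ktone{} algorithms (Lemma~\ref{lem:id-assignment-similarity} and Corollary~\ref{cor:utilization-similarity}) rather than by Theorem~\ref{thm:kt-0-crossing-similarity}. The goal is to exhibit two nodes that $\mathcal{A}$ places in the independent set $S$ and that become adjacent once $e$ and $e'$ are crossed, thereby violating independence on $G_{e,e'}$.

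First I would pin down which nodes end up in $S$. We are handed $z \in S \cap Z$ directly. For the node $x$, the hypothesis only asserts $x \in S \cap X$, a fact about the $G$-component; to move this to its copy $x'$ in the $G'$-component I would invoke Lemma~\ref{lem:id-assignment-similarity}, which says the executions $EX_G = EX(\mathcal{A}, G, \phi)$ and $EX_{G'} = EX(\mathcal{A}, G', \phi'_{e,e'})$ have the same decoded representation. Since membership in $S$ is part of the decoded output, $x \in S$ forces $x' \in S$. Hence in the execution $EX$ on the base graph $G \cup G'$ both $z$ and $x'$ are declared to be in the independent set.

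Next I would transfer this output to the crossed graph. Since the hypothesis states that neither $e = \{y,z\}$ nor $e' = \{x',y'\}$ is utilized in $EX$, Corollary~\ref{cor:utilization-similarity} applies and shows that $EX$ and $EX_{e,e'} = EX(\mathcal{A}, G_{e,e'}, \psi_{e,e'})$ are similar; in particular $\mathcal{A}$ outputs the same set on $G_{e,e'}$ as on $G \cup G'$, so $z$ and $x'$ are both declared to be in the independent set of $G_{e,e'}$. But crossing $e$ and $e'$ inserts the edge $\{z, x'\}$ into $G_{e,e'}$, making $z$ and $x'$ neighbors there, which contradicts independence. Therefore $\mathcal{A}$ cannot compute a correct independent set on $G_{e,e'}$.

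The only delicate step, I expect, is the two-sided bookkeeping around similarity: similarity of the $G$- and $G'$-executions (Lemma~\ref{lem:id-assignment-similarity}) is what lets me promote the given fact $x \in S$ to $x' \in S$, while similarity of the base and crossed executions (Corollary~\ref{cor:utilization-similarity}) is what preserves the output across the crossing. Everything else is mechanical. This is also why the un-utilization hypothesis is phrased for $e' = \{x',y'\}$ rather than for the edge $\{x,y\}$ furnished by Claim~\ref{claim:kt1-maxIS-triplets}: Lemma~\ref{lem:id-assignment-similarity} bridges the two, carrying un-utilization of $\{x,y\}$ in the $G$-component to un-utilization of $\{x',y'\}$ in the $G'$-component, which is precisely the form Corollary~\ref{cor:utilization-similarity} requires.
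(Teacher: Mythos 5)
Your proposal is correct and takes essentially the same route as the paper's proof: Corollary~\ref{cor:utilization-similarity} transfers the output of $\mathcal{A}$ from $G \cup G'$ to $G_{e,e'}$, where the crossed edge $\{z,x'\}$ violates independence. You are in fact slightly more explicit than the paper, which silently treats $x' \in S$ as given; your use of Lemma~\ref{lem:id-assignment-similarity} to promote $x \in S$ to $x' \in S$ (and un-utilization of $\{x,y\}$ to un-utilization of $\{x',y'\}$) is exactly the bookkeeping the paper performs in the text surrounding the lemma rather than inside its proof.
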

\begin{proof}
Corollary~\ref{cor:utilization-similarity} implies that both $z$ and $x'$ which are in $S$ in the execution $EX$ of algorithm $\mathcal{A}$ on $G \cup G'$ continue to be in $S$ in execution $EX_{e,e'}$ of algorithm $\mathcal{A}$ on $G_{e, e'}$. But in $G_{e,e'}$ there is an edge between $z$ and $x'$ and this is a violation of the independence of $S$, and hence algorithm $\mathcal{A}$ is incorrect on $G_{e,e'}$. 
\end{proof}

Lemma~\ref{lem:id-assignment-similarity} and Claim~\ref{claim:kt1-maxIS-triplets} implies the existence of $e$ and $e'$ assumed in Lemma~\ref{lemma:kt1-maxIS-correctness}. Thus $\mathcal{A}$ must use $\Omega(t^2)$ messages when run on $G \cup G'$. The following theorem formally states this lower bound.

\begin{theorem}
For any constant $\epsilon > 0$, any deterministic comparison-based \ktone{} \congest{} algorithm $\mathcal{A}$ that computes a $(1/2+\epsilon)$-approximation of \mxis{} on $n$-vertex graphs has $\Omega(n^2)$ message complexity.
\end{theorem}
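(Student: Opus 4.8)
The plan is to push the edge-crossing / indistinguishability argument that the preceding claims have already assembled to a contradiction, and then read off the message bound. Concretely, I would argue by contradiction: suppose some deterministic comparison-based \ktone{} \congest{} algorithm $\mathcal{A}$ computes a $(1/2+\epsilon)$-approximate maximum independent set $S$ on the base graph $G \cup G'$ (with $|Y| = |Y'| = t$ and the ID assignment $\psi_{e,e'}$ of Section~\ref{section:ktone-technical-preliminaries}) while sending only $o(t^2)$ messages. By Lemma~\ref{lem:utilization-message-complexity} this execution $EX$ then utilizes only $o(t^2)$ edges, and the whole goal is to exhibit a crossed graph $G_{e,e'}$ on which $\mathcal{A}$ is forced to output an incorrect answer, contradicting its assumed correctness.

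The key steps, in order, would be the following. First, by the approximation guarantee and Claim~\ref{claim:kt1-maxIS-size}, the output must satisfy $|S \cap X| \ge \epsilon t$ and $|S \cap Z| \ge \epsilon t$; since Lemma~\ref{lem:id-assignment-similarity} makes the executions on $G$ and on its copy $G'$ similar, the identical conclusion transfers to the copy, giving $|S \cap X'| \ge \epsilon t$. Second, I would run the counting argument over the $t$ vertices of $Y$ (Claim~\ref{claim:kt1-maxIS-triplets}): if every $y \in Y$ had all its edges to $S \cap X$ or all its edges to $S \cap Z$ utilized, that would account for $\Omega(\epsilon t^2)=\Omega(n^2)$ utilized edges, a contradiction; hence some $y$ admits an unutilized edge $\{x,y\}$ with $x \in S \cap X$ and an unutilized edge $\{y,z\}$ with $z \in S \cap Z$. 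Third, I would transfer $\{x,y\}$ to the copy via Lemma~\ref{lem:id-assignment-similarity}, so that $e' = \{x',y'\}$ is unutilized in $G'$ while $e = \{y,z\}$ is unutilized in $G$. Finally, forming the crossed graph $G_{e,e'}$ and invoking Corollary~\ref{cor:utilization-similarity}, the executions $EX$ and $EX_{e,e'}$ are similar and no message traverses the new edges $\{y,y'\}$ and $\{z,x'\}$; thus $\mathcal{A}$ still places both $z$ and $x'$ in the independent set even though they are now adjacent (this is exactly Lemma~\ref{lemma:kt1-maxIS-correctness}), violating independence. The resulting contradiction forces $\Omega(t^2)$ messages, and since $n = \Theta(t)$ this is $\Omega(n^2)$.

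The main obstacle — and the reason this is genuinely harder in \ktone{} than in \ktzero{} — is the indistinguishability step: because each node initially knows its neighbors' IDs, a naive crossing is in principle detectable, so a comparison-based algorithm could try to exploit the altered neighbor information. The crux is therefore the carefully ``shifted'' ID assignment $\psi_{e,e'}$ of \cite{PaiPPRPODC2021}, which forces the decoded representations of the two executions to coincide, so that an algorithm observing only comparison outcomes on IDs cannot tell $G \cup G'$ apart from $G_{e,e'}$. Checking that this assignment remains valid under the MaxIS-tuned sizing $|Y| = |Y'| = t$, and that the counting and transfer steps simultaneously secure \emph{unutilized} crossing edges whose endpoints both lie in the solution, is the delicate part; once that is in hand, the passage from $t$ to $n$ and the final contradiction are routine. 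I would close by remarking that, exactly as in the MVC theorem, a hard distribution placing weight $1/2$ on $G \cup G'$ and the remainder uniformly on the crossed family, combined with Yao's minimax principle, extends this deterministic bound to randomized Monte Carlo algorithms with small constant error.
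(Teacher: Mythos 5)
Your proposal is correct and follows essentially the same route as the paper's own proof: the same base graph $G \cup G'$ with $|Y|=|Y'|=t$ and shifted ID assignment $\psi_{e,e'}$, the size claim forcing $|S\cap X|, |S\cap Z| \ge \epsilon t$, the counting argument over $Y$ producing unutilized edges $\{x,y\}$ and $\{y,z\}$ with endpoints in $S$, the transfer to the copy via execution similarity, and the port-preserving crossing plus Corollary on utilization-similarity to contradict independence of $z$ and $x'$. Your closing remark on extending to randomized algorithms via Yao's minimax principle with the half-weight hard distribution likewise matches the paper's follow-up theorem.
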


\begin{theorem}
Any randomized Monte-Carlo comparison-based \ktone{} \congest{} algorithm $\mathcal{A}$ that computes a $(1/2 + \epsilon)$-approximation of \mxis{} on $n$-vertex graphs with constant error probability $\delta < \epsilon^2/8 - o(1)$ has $\Omega(n^2)$ message complexity.
\end{theorem}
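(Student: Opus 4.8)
The plan is to mirror the randomized, comparison-based MVC lower bound of this subsection, invoking Yao's minimax theorem~\cite{Yao77_probab} to reduce the randomized bound to a lower bound on deterministic comparison-based algorithms under a hard distribution $\mu$. I would let $\mu$ place probability $1/2$ on the base graph $G \cup G'$ (with $|Y| = |Y'| = t$, as in the deterministic \mxis{} argument above) and spread the remaining $1/2$ uniformly over $\mathcal{F} = \{G_{e,e'} \mid e = \{y,z\}, e' = \{x',y'\}, x \in X, y \in Y, z \in Z\}$, where $x',y'$ are the copies of $x,y$ fixed by the construction; since $x,y,z$ each range over sets of size $t$, we have $|\mathcal{F}| = t^3$. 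Because the base graph carries probability $1/2 > \delta$, any deterministic algorithm erring with probability at most $\delta$ under $\mu$ must be correct on $G \cup G'$, so it suffices to show that a deterministic comparison-based algorithm erring on at most a $2\delta$ fraction of $\mathcal{F}$ must use $\Omega(t^2)$ messages.

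For the contradiction step I would assume such an algorithm $\mathcal{A}$ sends $o(t^2)$ messages; by Lemma~\ref{lem:utilization-message-complexity} it then utilizes only $o(t^2)$ edges in its execution $EX$ on $G \cup G'$. Hence at most $o(t)$ vertices of $Y$ have more than $\delta t$ utilized incident edges, so for $(1-o(1))t$ choices of $y \in Y$ at least $(1-\delta)t$ vertices $z \in Z$ satisfy that $\{y,z\}$ is unutilized. Combining this with Claim~\ref{claim:kt1-maxIS-size} ($|S \cap Z| \ge \epsilon t$ and $|S \cap X'| \ge \epsilon t$) and the fact that $\delta < \epsilon^2/8 < \epsilon/2$, at least $(\epsilon/2)t$ of these $z$ lie in $S \cap Z$, and symmetrically at least $(\epsilon/2)t$ vertices $x' \in S \cap X'$ have $\{x',y'\}$ unutilized. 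For each resulting triple $(y,z,x')$, Lemma~\ref{lemma:kt1-maxIS-correctness} makes $\mathcal{A}$ incorrect on $G_{e,e'}$, since the crossing renders the two in-independent-set vertices $z$ and $x'$ adjacent. This forces incorrectness on at least $(1-o(1))\,t \cdot (\epsilon/2)t \cdot (\epsilon/2)t = (1-o(1))(\epsilon^2/4)t^3$ members of $\mathcal{F}$, i.e.\ an $(\epsilon^2/4 - o(1))$ fraction, contradicting the $2\delta$ bound exactly when $2\delta < \epsilon^2/4 - o(1)$, that is $\delta < \epsilon^2/8 - o(1)$. Translating $t = \Theta(n)$ then yields the claimed $\Omega(n^2)$ message complexity.

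The one genuinely delicate point, special to comparison-based \ktone{} and absent in \ktzero{}, is that crossing alters the neighbor-ID information at nodes, so one cannot claim $\mathcal{A}$ behaves identically on $G\cup G'$ and $G_{e,e'}$ under a single fixed labeling; the argument must instead run through the shifted assignment $\psi_{e,e'}$ and the notion of \emph{similar} (decoded-identical) executions. Since $G$ and $G'$ are disjoint copies, the base-graph execution splits into independent runs on $G$ (under the fixed $\phi$) and on $G'$, which by Lemma~\ref{lem:id-assignment-similarity} have the same decoded representation; thus the utilized-edge inventory is fixed and symmetric across all of $\mathcal{F}$, and Corollary~\ref{cor:utilization-similarity} transfers non-utilization of $e,e'$ into identical behavior on $G_{e,e'}$. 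I expect the main subtlety to be checking that this similarity machinery applies \emph{uniformly} over the entire family (so that a single inventory of $EX$ simultaneously governs every crossed graph), rather than for one fixed pair $(e,e')$; once that is in place, the remainder is a direct adaptation of the randomized MVC and \ktzero{} \mxis{} arguments already established.
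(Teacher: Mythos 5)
Your proposal is correct and follows essentially the same route as the paper's proof: the identical hard distribution under Yao's minimax theorem (probability $1/2$ on $G \cup G'$, the rest uniform over the $t^3$ crossed graphs), the same utilization-counting argument yielding $(1-o(1))\,t \cdot (\epsilon/2)t \cdot (\epsilon/2)t$ incorrect members of $\mathcal{F}$, and the same invocation of Lemma~\ref{lem:utilization-message-complexity}, Claim~\ref{claim:kt1-maxIS-size}, and Lemma~\ref{lemma:kt1-maxIS-correctness} to force the contradiction when $2\delta < \epsilon^2/4 - o(1)$. Your closing remark about needing the shifted assignment $\psi_{e,e'}$ and decoded-execution similarity to apply uniformly across the family is also the right resolution, and is exactly what the paper's machinery (Lemma~\ref{lem:id-assignment-similarity} and Corollary~\ref{cor:utilization-similarity}) provides.
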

\begin{proof}
By Yao's minimax theorem~\cite{Yao77_probab}, it suffices to show a lower bound on the message complexity of a deterministic algorithm under a hard distribution $\mu$.

The hard distribution $\mu$ consists of the base graph $G \cup G'$ with probability $1/2$ and the remaining probability is distributed uniformly over the set $\mathcal{F}$, where $\mathcal{F} = \{G_{e, e'} \mid e = \{y, z\}, e'=\{x', y'\}, y \in Y, z \in Z, x' \in X', y' \in Y'\}$. Note that $|\mathcal{F}| = t^3$, since there are $t$ choices for each $x$, $y$ and $z$. Note that a deterministic algorithm cannot produce an incorrect output on $G \cup G'$ as it will have error probability greater than $\delta$. Therefore, we now need to show that any deterministic algorithm that produces an incorrect output on at most a $2\delta$ fraction of graphs in $\mathcal{F}$ has $\Omega(t^2)$ message complexity. So suppose for sake of contradiction that there exists such an algorithm that outputs a $(1/2 + \epsilon)$-approximate independent set $S$ with message complexity $o(t^2)$. By Lemma~\ref{lem:utilization-message-complexity}, the algorithm also utilizes no more than $o(t^2)$ edges in any execution.

Therefore, there can only be $o(t)$ vertices in $Y$ that utilize more than $\delta t$ edges. For each of the remaining $(1 - o(1))t$ nodes $y \in Y$ there are at least $(1 - \delta)t$ nodes $z \in Z$, and consequently at least $(\epsilon/2)t$ nodes $z \in Z \cap S$ (as $|Z \cap S| \ge \epsilon t$, and $\delta < \epsilon/2$), such that the edge $e = \{y,z\}$ is not utilized. Similarly, there are at least $(\epsilon/2)t$ nodes $x \in X \cap S$, such that the edge $e' = \{x',y'\}$ is not utilized. The proof of Lemma~\ref{lemma:kt1-maxIS-correctness} says that the deterministic algorithm is incorrect on all such graphs $G_{e,e'}$. 

The algorithm is incorrect on at least $(1 - o(1))(1/2)^2 \epsilon^2 t^3$ graphs in $\mathcal{F}$, that is, on at least a $(\epsilon^2/4 - o(1))$ fraction of graphs in $\mathcal{F}$. This is a contradiction if $2\delta < \epsilon^2/4 - o(1)$. 
\end{proof}

 \end{document}